\titleformat{\section}[block]
  { \Large \normalfont\scshape}{\thesection}{1em}{}
 \titlespacing{\subsection}{3em}{1em}{1em}{}
\newmdenv[%
innerlinewidth=2pt, 
]{thick}
\newtheorem{thm}{Theorem}[section]
\newtheorem{prop}[thm]{Proposition}
\newtheorem{lemma}[thm]{Lemma}
\newtheorem{cor}[thm]{Corollary}
\newtheorem{ass}{Assumption}
\theoremstyle{definition}
\newtheorem{defn}[thm]{Definition}
\newtheorem{rem}{Remark}[section]
\newcommand{\sA}{\mathcal{A}}
\newcommand{\sF}{\mathcal{F}}
\newcommand{\sI}{\mathcal{I}}
\newcommand{\sS}{\mathcal{S}}
\newcommand{\sU}{\mathcal{U}}
\newcommand{\sN}{\mathcal{N}}
\newcommand{\sP}{\mathcal{P}}
\newcommand{\sB}{\mathcal{B}}
\newcommand{\sO}{\mathcal{O}}
\newcommand{\sM}{\mathscr{M}}
\newcommand{\sH}{\mathcal{H}}
\newcommand{\Cstar}{\mathcal{C}^*}
\newcommand{\Id}{\mathbbm{1}}
\newcommand{\tr}{\textrm{tr}}
\title{}
\author{Jane Panangaden \\ Department of Mathematics and Statistics, McGill University, Montreal}
\date{April 2016}
\begin{document}
\begin{titlepage}
\centering
\vspace*{3 cm}
{\scshape  \LARGE Energy Full Counting Statistics in Return-to-Equilibrium  \par}
\vspace{1 cm}
{\scshape  \large Jane Panangaden  \par}
{\it  \large Department of Mathematics and Statistics \par}
{\it  \large  McGill University, Montreal  \par}
\vspace{2 cm}
{ \large April 2016 \par}
\vspace{1cm}
{ \it A thesis submitted to McGill University in partial fulfillment of the requirements of the degree of Master of Science. \par }
\vspace{1cm}
{\textcopyright Jane Panangaden 2016}

\end{titlepage}
\begin{abstract}
We consider a finite dimensional quantum system $S$ in an arbitrary initial state coupled  to an infinitely extended quantum thermal reservoir $R$ in equilibrium at inverse temperature $\beta$. The coupling is given by a bounded perturbation of the dynamics and the coupling strength is controlled by a parameter $\lambda$. We assume the system $S + R$ has the property of return to equilibrium, which means that after sufficiently long time, the joint system will have reached equilibrium at inverse temperature $\beta$. In this context, we prove a refinement to the first law of thermodynamics, which states that the total energy of the system and reservoir is conserved. Specifically, we define two measures which encode all the information about the fluctuations of the system and reservoir energy when two measurements are made at time $0$ and time $t$. These measures are called the full counting statistics (FCS). We prove weak convergence of the system and reservoir FCS in the double limit $t \rightarrow \infty$ and $\lambda \rightarrow 0$. 
\\
\indent The technical difficulty comes from the fact that the reservoir is infinitely extended. In order to define the reservoir FCS, we write the finite dimensional FCS in terms of a relative modular operator: an object which survives the thermodynamic limit. We then draw on tools from Tomita-Takesaki modular theory to prove the result. 
\end{abstract}
\renewcommand{\abstractname}{R\'esum\'e}
\begin{abstract}
On propose dans ce travail d'\'etudier un syst\`eme quantique $S$ de dimension finie coupl\'e \`a un r\'eservoir de chaleur quantique  $R$ infiniement \'etendu et en \'equilibre \`a temp\'erature $\beta^{-1}$. Le couplage est caus\'e par une perturbation born\'ee de la dynamique et l'intensit\'e du couplage est contr\^ol\'ee par un param\`etre $\lambda$. Supposons que le syst\`eme  $S + R$ manifeste la propri\'et\'e de retour \`a l'\'equilibre, c'est-\`a-dire que, apr\`es longtemps, le syst\`eme combin\'e \'evolue vers un \'etat d'\'equilibre \`a temperature $\beta^{-1}$. Dans ce contexte, on d\'emontre un raffinement du premier principe de la thermodynamique, qui affirme que l'\'en\'ergie totale du syst\'eme est conserv\'ee. Plus pr\'ecis\'ement, on d\'efinit deux mesures qui contiennent les informations compl\`etes des fluctuations de l'\'en\'ergie du syst\`eme et du r\'eservoir lorsque des exp\'eriences sont effectu\'ees aux temps $0$ et $t$. Ces mesures s'appellent les statistiques \guillemotleft{} full counting \guillemotright{} (FCS). On d\'emontre la convergence faible des FCS du syst\`eme et du r\`eservoir dans la limite $t \rightarrow \infty$ et $\lambda \rightarrow 0$. 
\\
\indent La difficult\'e technique se produit du fait que le r\'eservoir est infiniement \'etendu. Pour d\'efinir son FCS, il faut r\'e\'ecrire le FCS de dimension finie en utilisant un op\'erateur modulaire: un objet qui survit la limite thermodynamique. Ensuite, on fait appel aux outils de la th\'eorie modulaire de Tomita-Takesaki pour d\'emontrer le th\'eor\`eme. 
\end{abstract}
\newpage
\begin{center}
{ \scshape  \Large Dedication \par}
\vspace{1 cm}
For Emma, forever ago. 
\end{center}
\vspace{1cm}
\begin{center}
{ \scshape  \Large Acknowledgments}
\end{center}
I would first and foremost like to thank my advisors Prof. Vojkan Jak\v si\'c and Dr. Annalisa Panati. Prof. Jak\v si\'c agreed to take me on as a summer student when I had not yet taken a single course in analysis, and was woefully under-qualified. His faith in my abilities and expert guidance as I ventured into research have been an incredibly valuable gift. Dr. Panati has singlehandedly taught me a huge amount of analysis, starting from the definition of a metric space all the way to modular theory. She has spent so much time teaching me and working with me over the past three years. I am extremely grateful for her investment in my development as a researcher. I would also like to mention Ben Landon, who was a fantastic mentor when I was first starting in this field, the entire mathematical physics group at McGill who have made my working environment enjoyable and friendly, and Renaud Raqu\'epas who fixed all my French grammar mistakes. I would like to thank the mathematics and statistics department for giving me such a stimulating environment in which to learn and study and NSERC for financial support. 
\\
\\
Finally, I have to mention my incredible support network: my parents Laurie and Prakash who have provided me with enriching opportunities throughout my life and supported me unconditionally, my doctor Dr. Waqqas Afif for keeping me functioning and my friends who I love deeply. I particularly need to thank my roommates Adam Bene Watts and Galen Voysey who have become part of my family during my time at McGill, and who were consistently there for me during difficult times. Of course, I can't forget my lovely cat Artemis who sat beside me during the writing of this thesis, usually on my keyboard or on top of my notes. 
\newpage
\tableofcontents
\section{Introduction}
The $0^{\rm th}$-law of thermodynamics asserts that a large, isolated system approaches an equilibrium state characterized by a few 
macroscopic parameters such as temperature. In particular, a small system coupled to a comparatively
large thermal reservoir at some temperature is expected to eventually reach 
its equilibrium state at the reservoir temperature, irrespective of its initial state. 
This specific part of the $0^{\rm th}$-law is often called \emph{return to 
equilibrium}. From a mechanical point of view, return to equilibrium holds if the 
interaction is sufficiently dispersive, which translates into ergodic properties
of the dynamics.
\\
\\
We consider the situation of a small quantum system $S$
coupled to a quantum thermal reservoir $R$. We assume that the joint system $S+R$ has the
property of return to equilibrium. The precise mathematical formulation  of this 
property is given by Assumption (\ref{mixing}). 
Although notoriously difficult to prove, return to equilibrium has been 
established for several physically relevant models (spin-boson model, 
spin-fermion model, electronic black box model, locally interacting fermionic systems)  
\cite{AMa,AJPP1,AJPP2,BFS,BM,dRK,DJ,FMU,FMSU,JOP1,JOP2,JP1,JP3}. 
\begin{center}
\begin{tikzpicture}
\draw (-1,0) circle (0.5);
\filldraw[color=black, fill=gray, very thick] (2.5,0) ellipse (2 and 1.5);
\node (none) at (-1,0) {$S$};
\node[color=white] (none) at (2.5,0.25) {$R$}; 
\node[color=white] (none) at (2.5, -0.25) {$\textrm{temp:} 1/ \beta$};
\draw (-0.5, 0) -- (0.5,0);
\node[above] (none) at (0,0) {$\lambda$};
\end{tikzpicture}
 \end{center}
To study the interaction of the system $S$ with physically relevant reservoirs, it is mathematically convenient to idealize the reservoir as infinitely extended. Such an idealization is common in statistical mechanics. From a mathematical point of view, this idealization is required in order for our model to exhibit the return to equilibrium behaviour we are interested in studying. To describe the reservoir, we therefore use the operator algebra formalism for quantum mechanics. 
\\
\\
We consider the exchange of energy between the system and the reservoir as the joint system relaxes into equilibrium. Let $\lambda$ be a parameter which controls the coupling strength between the system and reservoir. If two measurements are made of the energy of $S$ and $R$ at time $0$ and time $t$, let $\Delta Q_S(\lambda, t)$ be the increase in expected value of the energy of the system and  $\Delta Q_R(\lambda, t)$ be the decrease in expected value of the energy of the reservoir. The $1^{\rm st}$ law of thermodynamics is a conservation of energy law for thermal systems. In this setting it can be formulated as:
\[ \Delta Q_S := \lim_{\lambda \rightarrow 0} \lim_{t \rightarrow \infty} \Delta Q_S(\lambda, t) =\lim_{\lambda \rightarrow 0} \lim_{t \rightarrow \infty} \Delta Q_R(\lambda, t) =: \Delta Q_R. \]
Note that it is crucial that we take the limit $\lambda \rightarrow 0$ after taking the large-time limit, as we are making measurements of the system energy and reservoir energy, which do not sum to the total joint system energy as long as $\lambda \neq 0$. The above statement can be thought of as the $1^{\rm st}$ law in terms of averages, since it is the change in \textit{expected value} of the measurements which converges. This is well-understood and can be established by using Araki's perturbation theory. We review this result. 
\\
\\
Our goal is to prove a stronger version of the $1^{\rm st}$ law for return to equilibrium systems. Since we are working with quantum systems, each measurement is probabilistic, and the measurement order matters. We define probability measures encoding the full statistics of the fluctuations in energy of the system and reservoir when two measurements are made at time $0$ and $t$. These measures are called the energy full counting statistics (FCS). Full counting statistics in various contexts have been studied since their introduction in \cite{LL}. It is straightforward to define the FCS for the system, $\mathbb{P}_{S, \lambda, t}$. However, because the reservoir is idealized as infinitely extended, the definition of the reservoir FCS, $\mathbb{P}_{R, \lambda, t}$, is more subtle. It requires a reformulation using Tomita-Takesaki modular theory. We write $\mathbb{P}_{R, \lambda, t}$ in the case of a confined reservoir in terms of a relative modular operator. Physically relevant infinitely extended reservoirs can be obtained as a thermodynamic limit of a sequence of confined reservoirs. The relative modular operator survives this limit, giving us a way to define the FCS even in the infinitely extended case. This also gives the relative modular operator a natural physical interpretation. The objects $\Delta Q_{S/R}(\lambda, t)$ are shown to be the first moments of the measures $\mathbb{P}_{S/R, \lambda, t}$. Hence, the full counting statistics contain the information about the average changes in energy, as well as much more information. 
\\
\\
The main result is the following. Under a dynamical assumption which guarantees return-to-equilibrium behaviour (Assumption \ref{mixing}) and a regularity assumption on the coupling (Assumption \ref{regularity}), the weak limits 
\[\mathbb{P}_{S} :=  \lim_{\lambda \rightarrow 0}\lim_{t \rightarrow \infty} \mathbb{P}_{S, \lambda, t} \;\;\;\;\;\;\;\;\;\; \mathbb{P}_{R} :=  \lim_{\lambda \rightarrow 0}\lim_{t \rightarrow \infty} \mathbb{P}_{R, \lambda, t} \]
exist and 
\[ \mathbb{P}_{S} = \mathbb{P}_R. \]
Since $\Delta Q_{S/R}$ are the first moments of $\mathbb{P}_{S/R}$, this is a significant strengthening of energy conservation in this setting. This result has already been announced in \cite{JPPP}. We provide a comprehensive review of the background and tools required, and more details of the proofs. Furthermore, we show that under an additional technical assumption the proof simplifies considerably and we have convergence of all the moments of $\mathbb{P}_{S/R, \lambda, t}$. 
\\
\\
The thesis is structured as follows. Sections (2)-(5) are dedicated to introducing the operator algebra formalism for quantum mechanics and providing an overview of some basic but essential tools. In Section (6), the main results of Araki's theory of perturbation of the equilibrium states are reviewed. This is the tool which was used to understand the $1^{\rm st}$ law in terms of averages, and we will also draw heavily on these results for our refinement. In Section (7) the modular theory which allows us to define the reservoir FCS is introduced. In Section (8) a proof of the $1^{\rm st}$ law in terms of averages is given, the full counting statistics are defined and this definition is motivated. Section (9) is devoted to the statement and proof of the main result. 
\section{Operator Algebra Formalism}
\subsection{Quantum Mechanics for Mathematicians} 
In quantum mechanics, a physical system is represented by a triple $(\sH, H, \psi)$ where $\sH$ is a Hilbert space, $H$ is a self-adjoint operator on $\sH$ and $\psi$ is a norm-one element of $\sH$. $H$ is called the Hamiltonian and describes the time evolution of the system via the Schr\"{o}dinger equation and $\psi$ is a given initial (pure) state. When $\sH$ is finite dimensional we can axiomatize the theory as follows \cite{Ha}. 

\begin{enumerate}[(i)]
\item In the absence of measurement, the state of the system evolves according to the Schr\"{o}dinger equation. 
\begin{align*}
i \frac{\partial}{\partial t} \psi (t) = H \psi (t) 
\end{align*}
This has general solution
\begin{align*}
\psi (t) = e^{-itH} \psi.
\end{align*}
Note that for each $t \in \mathbb{R}$, $e^{-itH}$ is a unitary operator on $\sH$, and the map $t \mapsto e^{-itH} $ is a continuous one-parameter unitary group. 
\item A quantity which can be measured is represented by a self-adjoint element $A \in \sB(\sH)$. By the spectral theorem, $A$ can be decomposed as 
\begin{align*}
A = \sum_{i} \lambda_i P_{\lambda_i}
\end{align*}
where $\lambda_i \in \mathbb{R}$ are the eigenvalues of $A$ and $P_{\lambda_i}$ are the projections onto the corresponding eigenspaces. The possible outcomes of the measurement are $ \{ \lambda_i \}$. If the system is in state $\psi$ the probability of measuring $\lambda_i$ is given by $\langle \psi, P_{\lambda_i} \psi \rangle$. It follows that the expected value of the measurement is $\langle \psi, A \psi \rangle$. 
\item After the measurement, if a value of $\lambda_i$ is observed, the new state of the system is given by 
\begin{align*}
\frac{P_{\lambda_i} \psi}{|| P_{\lambda_i} \psi ||}.
\end{align*}
\end{enumerate}
When studying quantum statistical mechanics we wish to work with an ensemble of pure states. Remaining in the finite dimensional case, we consider a finite set of pure states, $\{ \psi_1, ..., \psi_k \}$ with probabilities $p_1, ..., p_k$. To this we associate an element of $\sB(\sH)$ defined by 
\begin{align}
\label{density}
\rho = \sum_{i=1}^{k} p_i | \psi_i \rangle \langle \psi_i |.
\end{align}
Let $A$ be an observable with spectral decomposition $A = \sum \lambda_i P_{\lambda_i}$, and $\phi_1, ...,\phi_n$ be an orthonormal basis for $\sH$ of eigenvectors of $A$. The eigenspace of $\lambda$ is spanned by a subset of these eigenvectors, say $\{ \phi_1$, ..., $\phi_m \}$ . 
\begin{align*}
\tr{(\rho P_{\lambda})} & =\sum_{i=1}^{n}  \langle \phi_i ,  \rho P_{\lambda}  \phi_i \rangle 
= \sum_{i=1}^{n} \langle \phi_i ,  \rho \sum_{j=1}^{m}  \phi_j \langle \phi_j , \phi_i \rangle \rangle
= \sum_{j=1}^{m} \sum_{i=1}^n \langle \phi_i ,  \rho \phi_j \rangle \delta_{ij} 
= \sum_{j=1}^{m} \langle \phi_j ,\rho  \phi_j \rangle \\
 &= \sum_{j=1}^{m} \langle \phi_j , \sum_{i=1}^{k} p_i  \psi_i  \langle \psi_i  , \phi_j \rangle \rangle 
  = \sum_{i=1}^k p_i \sum_{j=1}^m \langle \psi_i ,  \phi_j \rangle \langle \phi_j , \psi_i \rangle   
 = \sum_{i=1}^k p_i \langle \psi_i, P_{\lambda} \psi_i \rangle
\end{align*}
Recall that $\langle \psi_i , P_{\lambda} \psi_i \rangle$ is the probability of measuring $\lambda$ in the state $\psi_i$. Hence, $\tr(\rho P_{\lambda})$ is the probability of measuring $\lambda$ for the ensemble of states. Furthermore, $\tr(\rho A)$ is the expected value of the measurement. 

From this computation we can also see, by taking $P_{\lambda}$ to be the identity, that $\tr(\rho) = \sum_{i} p_i = 1$. In fact, since this operator is already diagonal, we can see that $p_i$ are the eigenvalues, so all the eigenvalues are contained in $[0,1]$. In the finite dimensional setting all positive trace-one operators are of the form (\ref{density}) such that all the eigenvalues are positive. We therefore define the (mixed) states of the system to be the elements of $\sB(\sH)$ with positive eigenvalues and trace one, and modify the axioms to take into account this more general notion of state. 
\begin{enumerate}[(i')]
\item In the absence of measurement, the mixed state of the systems evolves as
\begin{align*}
\rho(t) = e^{-itH} \rho e^{itH}. 
\end{align*}
It is not hard to check that this is equivalent to evolving each pure state in the ensemble and then forming a new density matrix. 
\item If $A$ is an observable with eigenvalue $\lambda$, the probability of measuring $\lambda$ in state $\rho$ is 
\begin{align*}
\tr(\rho P_{\lambda}). 
\end{align*}
\item After the measurement, the new state of the system is
\begin{align*}
\frac{P_{\lambda} \rho P_{\lambda}}{\tr(P_{\lambda} \rho P_{\lambda})} = \frac{P_{\lambda} \rho P_{\lambda}}{\tr(\rho P_{\lambda})}. 
\end{align*}
The cyclicity of the trace is used. 
\end{enumerate}

Since the only physical (i.e. measurable) quantities are the traces, we may also choose to view the state as fixed and the observables as dynamical. 
\begin{align*}
A(t) & := e^{itH} A e^{-itH} \\
\tr(\rho(t) A)& = \tr (e^{-itH} \rho e^{itH} A) = \tr (\rho e^{itH} A e^{-itH} ) = \tr (\rho A(t) )
\end{align*}
In physics this is called the Heisenberg picture. From a mathematical point of view, the two descriptions are completely equivalent, but it will be more convenient for us to work with the dynamics on the observables rather than the states. 

\subsection{Entropy}
In statistical mechanics an important object is the equilibrium state. An equilibrium state is, in some sense, the maximally chaotic state of a physical system. To be more precise we will need to introduce the concept of entropy. The entropy is a function on the space of mixed states of a system which measures our lack of knowledge about its specific pure state. 
\\
\\
We begin in the classical setting. The space of pure states is given by a finite set $\Omega = \{ \omega_1, ... ,\omega_{|\Omega|} \}$, while the space of mixed states $\sP(\Omega)$ is given by all the probability measures on $\Omega$. We may identify an element in $P \in \sP(\Omega)$ with an element in $\mathbb{R}^{|\Omega|}$ by associating to $P$ the tuple $(P(\omega_1), ..., P(\omega_{|\Omega|}))$. This induces a topology on $\sP(\Omega)$. 
\begin{defn}
The entropy is the real-valued function $\sS$ defined by
\begin{align*}
\sS :  \bigcup_{\Omega \textrm{ s.t. } |\Omega| < \infty} \sP(\Omega) & \rightarrow \mathbb{R}\\
(P(\omega_1), ..., P(\omega_{|\Omega|})) & \mapsto - \sum_{i=1}^{|\Omega|} P(\omega_i) \log(P(\omega_i)).
\end{align*}
\end{defn}
This is a good way to measure our lack of knowledge about a system because it scales correctly when two mixed states are combined in a probabilistic way to form a new mixed state. 
\begin{prop}
Let $\{\Omega_k\}_{k=1}^m$ be a tuple of finite sets and $P_k$ be a probability measure on $\Omega_k$. Let $(p_1, ..., p_m)$ be a probability measure. Then we can define a new probability measure on $\bigsqcup_{k=1}^m \Omega_k$ by $\oplus_{k=1}^m p_k P_k$. The entropy is the unique function from $\bigcup_{\Omega \textrm{ finite} } \sP(\Omega)$ to $\mathbb{R}$ satisfying
\begin{enumerate}
\item For all $p_k, P_k$, 
\[ \sS(\oplus_{k=1}^m p_k P_k) = \sum_{k=1}^m p_k \sS(P_k) + \sS( (p_1, ..., p_m) ).\]
\item For each $\Omega$, $\sS : \sP(\Omega) \rightarrow \mathbb{R}$ is continuous. 
\end{enumerate}
\end{prop}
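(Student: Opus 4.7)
The plan is to first verify the existence (the displayed formula satisfies both axioms), and then establish uniqueness by the standard Shannon-style reduction to a Cauchy functional equation on the uniform distributions.

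\textbf{Existence.} A direct computation verifies property (1): if $P = \oplus_{k=1}^m p_k P_k$ with $P_k = (P_k(\omega))_{\omega \in \Omega_k}$, then the atoms of $P$ are $p_k P_k(\omega)$, and
\[
-\sum_{k,\omega} p_k P_k(\omega) \log(p_k P_k(\omega)) = -\sum_k p_k \log p_k - \sum_k p_k \sum_\omega P_k(\omega) \log P_k(\omega),
\]
using $\sum_\omega P_k(\omega)=1$. Continuity on each $\sP(\Omega)$ is immediate from $x \log x$ being continuous on $[0,1]$ with $0 \log 0 := 0$.

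\textbf{Uniqueness.} Let $H$ be any function satisfying (1) and (2). I first extract normalizations. Taking $m=1$ in (1) forces $H$ evaluated at the one-point distribution to vanish. Next, define $A(n) := H(U_n)$, where $U_n$ is the uniform distribution on an $n$-element set. Applying (1) to the decomposition $U_{mn} = \bigoplus_{k=1}^m \tfrac{1}{m} U_n$, I obtain the multiplicative Cauchy equation
\[
A(mn) = A(m) + A(n), \qquad m,n \in \mathbb{N}.
\]

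\textbf{Determining $A$.} From (1) applied to $U_n = \tfrac{n-1}{n} U_{n-1} \oplus \tfrac{1}{n} U_1$ I get the recursion
\[
A(n) = \tfrac{n-1}{n} A(n-1) + H\bigl((\tfrac{n-1}{n},\tfrac{1}{n})\bigr),
\]
and by continuity of $H$ on $\sP(\{1,2\})$ the second term stays bounded as $n \to \infty$, so $A(n)/\log n$ is bounded. Combined with additivity this is the classical Erd\H{o}s setup: a bounded additive function on $(\mathbb{N},\cdot)$ must be $A(n) = c \log n$ for some constant $c \geq 0$.

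\textbf{Rational distributions.} For $P = (p_1,\dots,p_m)$ with $p_i = n_i/N$ rational, I write $U_N$ as the grouped distribution $\bigoplus_{i=1}^m p_i U_{n_i}$. By (1),
\[
A(N) = \sum_i p_i A(n_i) + H(P),
\]
so $H(P) = c \log N - c \sum_i p_i \log n_i = -c \sum_i p_i \log p_i$. Property (2) then extends this identity from the dense set of rational distributions to all of $\sP(\Omega)$, yielding $H = c \sS$ on every $\sP(\Omega)$. The constant $c$ is then fixed by the choice of base of the logarithm (equivalently, by any normalization such as $H(U_2)$), identifying $H$ with $\sS$.

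\textbf{Main obstacle.} The delicate step is promoting the additive relation $A(mn)=A(m)+A(n)$ to $A(n) = c \log n$: this requires some regularity of $A$ that axioms (1)--(2) do not provide \emph{directly}, since continuity is only assumed on each fixed finite $\sP(\Omega)$. The key observation is the recursion above, which lets me transfer continuity at the $2$-point level into a growth bound for $A$; once boundedness of $A(n)/\log n$ is in hand, the Erd\H{o}s argument (or a monotonicity argument using that $A(n^k) = k A(n)$ and squeezing $n^k$ between powers of $2$) finishes the job.
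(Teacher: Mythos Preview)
The paper states this proposition without proof, so there is no argument to compare against. Your approach is the classical Shannon--Faddeev route and is essentially sound in outline, but there is a genuine gap at the end, and it reflects a defect in the proposition itself as stated.

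Your argument correctly shows that any $H$ satisfying (1) and (2) must be of the form $H = c\,\sS$ for some real constant $c$. But axioms (1) and (2) do \emph{not} pin down $c$: for every $c\in\mathbb{R}$ the function $c\,\sS$ satisfies both (the grouping identity is linear in $H$, and continuity is preserved under scaling). Your sentence ``The constant $c$ is then fixed by the choice of base of the logarithm'' is not a deduction from the axioms; it is an observation that $c\,\sS$ coincides with the entropy in a different logarithm base, which is a different function from the $\sS$ defined in the paper. So as literally stated, the proposition is false --- uniqueness holds only up to a multiplicative constant --- and your proof cannot repair this without an additional normalization axiom (e.g.\ $H(U_2)=\log 2$) or a positivity/monotonicity axiom that at least forces $c\ge 0$.

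A second, more technical point: in the ``Determining $A$'' step you assert that the recursion plus boundedness of $H$ on $\sP(\{1,2\})$ gives $A(n)/\log n$ bounded, and that this is ``the classical Erd\H{o}s setup.'' Neither claim is quite right. What the recursion $A(n)=\tfrac{n-1}{n}A(n-1)+B_n$ gives you, once you observe $B_n\to H((1,0))=0$ (this requires a short separate argument from axiom (1), comparing two decompositions of $(1,0,0)$), is that $A(n)/n\to 0$ and hence $A(n)-A(n-1)\to 0$; \emph{that} is Erd\H{o}s's hypothesis, and his theorem then yields $A(n)=c\log n$. The ``squeezing between powers of $2$'' alternative you mention needs monotonicity of $A$, which the axioms do not provide. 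These are fixable details, but the normalization issue above is not.
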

\begin{rem}
An intuitive way to think about condition $1. $ in this proposition is to view each system $(\Omega_k, P_k)$ as a black box from which you can reach in and pick the ``true" state of the system according to some probability distribution. Then you can think of the larger system created by putting each one of these boxes into a larger box and assigning them some probabilities $p_1, ..., p_m$. Measuring the pure state of this larger system amounts to first picking one of the boxes, and then picking an object from that box. 
\end{rem}
In thermal physics, one often views a pure state as a full list of the positions and momenta of a large system of particles. However, one can only measure some macroscopic parameters (such as temperature). It is assumed that each specific configuration of particles that could correspond to a given set of observed macroscopic parameters is equally likely. Here, the mixed states correspond to different values of the macroscopic measurements (such as temperature). The pure states correspond to a microscopic configuration. Because of our assumption, the entropy of a given mixed state reduces to $\log(N)$ where $N$ is the number of microscopic states which could correspond to the observed macroscopic state. The highest entropy state is the one which has the greatest number of compatible microstates.  However, supposing no measurements have been made, a priori each microstate is equally likely. The macrostate most likely to be observed is also the one with the greatest number of compatible microstates. Hence, the highest entropy state is the one which is most likely to be observed. This is the statistical formulation of the second law of thermodynamics. 
\\
\\
The definition of entropy can be extended to the quantum case. When working over a finite dimensional Hilbert space $\sH$, the density matrix plays the role of the system $(\Omega, P)$. 
\begin{defn}
The quantum entropy is the functional $\sS: \{\rho \in \sB(\sH) | \tr(\rho)=1, \rho \geq 0 \} \rightarrow \mathbb{R}$ defined by 
\[ \sS(\rho) = - \tr(\rho \log\rho) = - \sum_{i} \lambda_i \log(\lambda_i).\]
where $\rho= \sum_i \lambda_i P_{\lambda_i}$ is the spectral decomposition. Note that $\log(\lambda_i)$ is defined since $\rho$ is positive. If we take $\rho$ to be a density matrix defined by $\sum_i p_i |\psi_i\rangle \langle \psi_i |$, then the quantum entropy is the same as the classical entropy of the probability distribution $(p_1, ..., p_n)$. 
\end{defn}
\subsection{Gibbs State}
Now that we have a way to measure the disorder of a state, we can define the equilibrium states. Recall that in thermodynamics we can interpret the maximal entropy state (i.e. the most chaotic state) as the most likely macroscopic state in which to observe the system, supposing that all microscopic states are equally likely. We can also restrict the microscopic states that are accessible by, for example, fixing the total energy of the system. We are going to construct a state that maximizes the entropy functional among states with a fixed energy. 
\begin{defn}
For a finite dimensional quantum system $\sH$ with Hamiltonian $H$, the Gibbs State at inverse temperature $\beta$ is 
\[ \rho_{\beta}  := \frac{e^{-\beta H}} {\tr(e^{-\beta H})}. \]
Note that this is a positive operator with trace one. 
\end{defn}
Readers familiar with classical statistical mechanics will recognize this as the quantum version of the Gibbs canonical ensemble, which describes the probability distribution of microstates for a system in contact with a thermal reservoir. A microstate with energy $E$ has probability $e^{-\beta E}/ Z$, where $Z$ is a normalization constant.  
\\
\\
If we take $\psi_i$ to be a basis of eigenvectors for the Hamiltonian $H$ with eigenvalues $E_i$, the corresponding quantum state is 
\[ \sum_i \frac{e^{-\beta E_i}}{Z} | \psi_i \rangle \langle \psi_i | = e^{-\beta H} / Z. \]
We will now rigorously state the variational principle satisfied by the Gibbs state. Fix a Hamiltonian $H$ and consider the map 
\[  \beta \mapsto E(\beta) := \tr(H \rho_{\beta}).  \]
This is a function of real numbers that takes the inverse temperature $\beta$ to the expected energy when the system is in the $\beta$-Gibbs state. It is continuous and differentiable with derivative 
\begin{align*}
 \frac{d}{d \beta} ( \tr(H \rho_{\beta})) &= \tr\left(H \left(\frac{-He^{-\beta H}}  {\tr(e^{-\beta H})}  - e^{-\beta H}  \frac{\tr(-He^{-\beta H})}{\tr(e^{-\beta H})^2}\right) \right) \\
 &= \tr(\left( -H^2 + H E(\beta)\right) \rho_{\beta}) \\
 & = -\tr((H-E(\beta))^2\rho_{\beta}) \leq 0
 \end{align*}
which shows that $\beta \mapsto  \tr(H \rho_{\beta})$ is everywhere decreasing. In fact it is strictly decreasing unless $H$ is a constant multiple of the identity. We can then compute the limits
\begin{align*}
\lim_{\beta \rightarrow \mp \infty} \tr(H \rho_{\beta}) &= \lim_{\beta \rightarrow \mp \infty} \sum_i \langle \psi_i , H \frac{e^{-\beta H}}{\tr(e^{-\beta H}) }\psi_i \rangle \\
& = \lim_{\beta \rightarrow \mp \infty} \sum_i \langle \psi_i , E_i \frac{e^{-\beta E_i}}{\tr(e^{-\beta H}) }\psi_i \rangle \\
& = \lim_{\beta \rightarrow \mp \infty} \sum_i E_i \frac{e^{-\beta E_i}}{\sum_j e^{-\beta E_j}} \\
& = E_{\textrm{max}} / E_{\textrm{min}}
\end{align*}
where $E_{\textrm{max}} / E_{\textrm{min}}$ are the largest and smallest eigenvalues. From this we may conclude that for each $E \in [E_{\textrm{min}}, E_{\textrm{max}}]$ there exists some inverse temperature $\beta$ such that $E(\beta) = \tr(H \rho_{\beta}) = E$. Call this number $\beta_E$. We have the following proposition \cite{JOPP}.
\begin{prop}
For $E \in [E_{\min}, E_{\max} ]$ let  $\beta = \beta_E \in \mathbb{R}$ such that $\tr ( \rho_{\beta} H) = E$, as above. Then
 \[ \max \{ S(\rho): \rho \textrm{ a state, }\tr(H \rho) = E \}  = S(\rho_{\beta}).\]
\end{prop}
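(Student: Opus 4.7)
The plan is to prove the variational principle by means of the (non-negativity of the) quantum relative entropy. Set $Z_\beta := \tr(e^{-\beta H})$ so that $\log \rho_\beta = -\beta H - \log Z_\beta$. A direct calculation using the constraint $\tr(H\rho_\beta)=E$ gives
\begin{align*}
\sS(\rho_\beta) = -\tr(\rho_\beta \log \rho_\beta) = \beta\,\tr(H\rho_\beta) + \log Z_\beta = \beta E + \log Z_\beta.
\end{align*}
The goal is to show that $\sS(\rho) \leq \beta E + \log Z_\beta$ for every state $\rho$ satisfying $\tr(H\rho) = E$, which combined with the above equality will immediately yield the proposition.

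For the inequality, I would introduce the auxiliary quantity $\sS(\rho \,|\, \rho_\beta) := \tr(\rho \log \rho) - \tr(\rho \log \rho_\beta)$ (the quantum relative entropy) and establish that it is always non-negative. Using $\log \rho_\beta = -\beta H - \log Z_\beta$, one computes
\begin{align*}
\sS(\rho \,|\, \rho_\beta) = -\sS(\rho) + \beta\,\tr(H\rho) + \log Z_\beta = -\sS(\rho) + \beta E + \log Z_\beta,
\end{align*}
so $\sS(\rho \,|\, \rho_\beta) \geq 0$ is exactly the bound I need.

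The core technical step is therefore Klein's inequality: for positive trace-one operators $\rho, \sigma$ on a finite dimensional $\sH$, one has $\tr(\rho \log \rho - \rho \log \sigma) \geq 0$, with equality iff $\rho = \sigma$. I would prove this by diagonalizing $\rho = \sum_i p_i |\phi_i\rangle\langle\phi_i|$ and $\sigma = \sum_j q_j |\chi_j\rangle\langle\chi_j|$, setting $c_{ij} = |\langle \phi_i, \chi_j\rangle|^2$ (a doubly stochastic matrix), expanding the trace, and applying the strict convexity of $x \mapsto x \log x$ (equivalently, the inequality $x \log x - x \log y \geq x - y$) to the double sum. This is the main obstacle and the only non-trivial calculation; once it is in hand the proposition is immediate.

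Finally, the equality case in Klein's inequality shows that $\rho_\beta$ is in fact the \emph{unique} maximizer among states with mean energy $E$, which is a slight strengthening of the statement as written.
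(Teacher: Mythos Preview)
Your proof is correct and is essentially the same argument as the paper's: both compute $S(\rho_\beta) = \beta E + \log Z_\beta$ and then reduce the inequality to Klein's inequality applied to $\tr(\rho\log\rho - \rho\log\rho_\beta)$. The only cosmetic difference is that the paper packages the key step as the free-energy variational identity $\log\tr(e^A) = \max_\rho(\tr(\rho A) + S(\rho))$ before specializing to $A = -\beta H$, whereas you name the quantity as relative entropy and go directly; the underlying computation and the invocation of Klein's inequality are identical.
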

\begin{proof}
We begin by computing $S(\rho_{\beta})$. Let $H = \sum_i \lambda_i \langle e_i, \cdot \rangle e_i$ be the spectral decomposition of $H$. 
\begin{align*}
S(\rho_{\beta}) & = - \sum_i \frac{e^{-\beta \lambda_i}}{\tr(e^{-\beta H})} \log \left(\frac{e^{-\beta \lambda_i}}{\tr(e^{-\beta H})}\right)\\
& = \beta \sum_i \lambda_i \frac{e^{-\beta \lambda_i}}{\tr(e^{-\beta H})}  + \sum_i \frac{e^{-\beta \lambda_i}}{\tr(e^{-\beta H})} \log(\tr(e^{-\beta H})) \\
&= \beta E + \log(\tr(e^{-\beta H}))
\end{align*}
Now if $\nu$ is some other state such that $\tr(\nu H)=E$, we have the bound 
\begin{align*}
S(\nu) &= S(\nu) - \beta \tr(\nu H) + \beta E \\
& \leq \max_{\rho} \left( S(\rho) - \beta \tr(\rho H) \right) + \beta E.
\end{align*}
The last step is to establish the identity 
\[ \log(\tr(e^A)) = \max_{\rho} ( \tr(\rho A) + S(\rho)) .\]
From this it follows that $S(\nu) \leq \log(\tr(e^{-\beta H})) + \beta E = S(\rho_{\beta})$. To prove the identity, first note that 
\[ \log(\frac{e^A}{\tr(e^A)}) = A - \log(\tr(e^A)) \Id. \]
Therefore, since $\tr(\rho) =1$, 
\begin{align*}
\tr(\rho \log(\frac{e^A}{\tr(e^A)}))  = \tr(\rho A) - \log(\tr(e^A)) \\
\implies \log(\tr(e^A)) - (\tr(\rho A) + S(\rho)) & =  \tr(\rho \log(\rho) - \rho \log(\frac{e^A}{\tr(e^A)})) \\
& \geq \tr(\rho - \frac{e^A}{\tr(e^A)}) = 0 
\end{align*}
by Klein's inequality. Equality holds when $\rho = \frac{e^A}{\tr(e^A)}$.
\end{proof} 
This proposition says that for a given energy $E$, there is a temperature $\beta^{-1}$ such that the $\beta$-Gibbs state maximizes the entropy among all states with energy $E$. It is the most chaotic state at that energy. This can also be interpreted as a stability property due to the second law of thermodynamics. If a system in the Gibbs state is perturbed slightly, it is likely to return to the Gibbs state, since the system will evolve in such a way to increase entropy. This is something stronger than simply saying that the state is invariant under the dynamics. 
\subsection{Trace Class Operators}
The definition of the Gibbs state in the finite dimensional setting relied on using the trace: $\tr(e^{-\beta H})$. When we move to the infinite dimensional setting, we will see that although we can try to extend the definition of this quantity, it may diverge. Therefore, the Gibbs state may not exist. First we need to define the trace in general. To do so we must introduce the concept of the absolute value of an operator. 
\\
\\
For a complex number $z$, we have a unique decomposition into its polar form: $z = |z| e^{i \theta}$, where $|z|$ is a real number and $\theta \in [0, 2\pi)$.  We take $|z|$ to be the definition of the absolute value of $z$. Furthermore we have the identity $ (\overline{z}z)^{1/2} = |z|$. We make use of this identity to develop a similar decomposition for a bounded operator $A$. $A^*$ plays the role of $\overline{z}$. Just as $\overline{z}z$ is always a positive real number, $A^*A$ is self-adjoint and all its expectation values are positive, in the sense that for all $\psi \in \sH$, 
\[ \langle \psi, A^*A \psi \rangle = || A \psi ||^2 \geq 0.\]
Such an operator is called a positive operator. Note that this condition is equivalent to saying that all the eigenvalues are positive in the finite dimensional case. The next proposition says that for positive operators, there is a well-defined (unique) positive square root \cite{BR1}. 
\begin{prop}
Let $A$ be a positive operator. There is a unique positive operator $B$ such that $B^2 = A$. We denote $B$ by $\sqrt{A}$.
\end {prop}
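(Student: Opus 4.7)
The plan is to obtain existence of $B$ via the continuous functional calculus afforded by the spectral theorem, and then establish uniqueness by a commutation argument combined with a Cauchy--Schwarz trick for positive operators. For existence, I would first note that positivity of $A$ (as defined in the text) forces self-adjointness, so the spectral theorem provides a continuous functional calculus $f \mapsto f(A)$ for $f \in C(\sigma(A))$. Moreover, $A \geq 0$ implies $\sigma(A) \subseteq [0, \|A\|]$, so the function $f(t) := \sqrt{t}$ is continuous on $\sigma(A)$. Setting $B := f(A)$, the calculus yields $B^2 = (f^2)(A) = A$ and $\sigma(B) = f(\sigma(A)) \subseteq [0, \infty)$, hence $B$ is positive.

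For uniqueness, suppose $C$ is another positive operator with $C^2 = A$. The key observation is that $C$ commutes with $A$, since
$$CA = C \cdot C^2 = C^3 = C^2 \cdot C = AC.$$
Consequently $C$ commutes with every polynomial in $A$, and by density and continuity of the functional calculus, with $B = f(A)$. Hence $B-C$ and $B+C$ commute, and
$$(B+C)(B-C) = B^2 - C^2 = 0.$$
Given $\psi \in \sH$, set $\varphi := (B-C)\psi$; the display above gives $(B+C)\varphi = 0$, so $\langle \varphi, B\varphi\rangle + \langle \varphi, C\varphi\rangle = 0$. Each term is nonnegative by positivity of $B$ and $C$, so both vanish.

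The one place that requires care, and what I expect to be the main subtle step, is concluding $B\varphi = 0$ and $C\varphi = 0$ from these two scalar equalities without invoking a square root (which would be circular). I would handle this with a small auxiliary lemma: for any positive operator $P$, the sesquilinear form $(x,y) \mapsto \langle x, Py\rangle$ is positive semidefinite, so Cauchy--Schwarz gives $|\langle x, P\varphi\rangle|^2 \leq \langle x, Px\rangle \langle \varphi, P\varphi\rangle$ for all $x$; when $\langle \varphi, P\varphi\rangle = 0$, this forces $P\varphi = 0$. Applying this to $P = B$ and $P = C$ gives $(B-C)\varphi = 0$, i.e.\ $(B-C)^2\psi = 0$. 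Since $B-C$ is self-adjoint,
$$\|(B-C)\psi\|^2 = \langle \psi, (B-C)^2 \psi\rangle = 0,$$
so $B\psi = C\psi$; as $\psi$ was arbitrary, $B = C$. The proof is otherwise a bookkeeping exercise in functional calculus, and the only nontrivial analytic input is the spectral theorem for bounded self-adjoint operators, which the background section of the paper is entitled to invoke.
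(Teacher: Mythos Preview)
Your argument is correct. The paper itself does not give a full proof of this proposition: it cites \cite{BR1}, writes out the finite-dimensional case via the spectral decomposition $B=\sum_i\sqrt{\lambda_i}P_{\lambda_i}$, and then explicitly defers the general existence statement to the continuous functional calculus developed later in Section~3. Your existence proof is precisely that deferred argument, so on that half you and the paper agree.

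For uniqueness the paper says nothing beyond the citation, whereas you supply a complete argument via the commutation $CA=C^3=AC$, approximation of $\sqrt{\,\cdot\,}$ by polynomials to get $[B,C]=0$, and the Cauchy--Schwarz trick on the semidefinite form $(x,y)\mapsto\langle x,Py\rangle$ to pass from $\langle\varphi,P\varphi\rangle=0$ to $P\varphi=0$ without circularly invoking a square root. This is the standard textbook route (it is essentially the proof in Reed--Simon or Rudin), and it is a genuine addition relative to what the thesis records. The only mild anachronism is that you are invoking the functional calculus from Section~3.2 to prove a statement placed in Section~2.4, but the paper itself makes exactly this forward reference, so there is no logical issue.
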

In the finite dimensional case, we can construct this square root easily by using the spectral theorem. If $\sum_i \lambda_i P_{\lambda_i}$ is the spectral decomposition of a positive operator $A$, then each $\lambda_i$ is positive and we can let $B = \sum_i \sqrt{\lambda_i} P_{\lambda_i}$. $B$ is positive and it follows from the fact that the projections are mutually orthogonal that with this definition $B^2 = A$. In general, the construction is more involved, but we will see later that it will also follow from a more general and powerful version of the spectral theorem. 
\\
\\
Finally, we define $|A| := \sqrt{A^*A}$. By the last proposition, $|A|$ is a positive operator, and so all its expectation values are positive. We will use this in the next definition. 
\begin{defn}
Let $A$ be a bounded linear operator over a separable Hilbert space $\sH$ with orthonormal basis $\{ e_k\}$. $A$ is called trace class if 
\[ \sum_k \langle e_k, |A| e_k \rangle < \infty.\]
Note that this sum cannot fail to converge in $[0, \infty]$ since for each $k$, $\langle e_k, |A| e_k \rangle \geq 0$. Furthermore, the sum does not depend on the choice of basis. 
\end{defn}
The trace class operators can be thought of as a non-commutative analogue of $l^1$. They are compact operators, and so can be written in the form 
\[ A = \sum_k \lambda_k \langle \phi_k, \cdot \rangle \phi_k \]
where $A \phi_k = \lambda_k \phi_k$ and the set $\{ \phi_k \}$ is orthonormal. The trace class condition is equivalent to saying that the sequence of complex numbers $\{ \lambda_k \}$ is $l^1$. (See, for example, \cite{Ma}.)
Now, if $A$ is trace class we have the following estimate.
\[ |\sum_k \langle \phi_k, A \phi_k \rangle | \leq \sum_k |\langle \phi_k, A \phi_k \rangle | = 
\sum_k (\langle \phi_k, A \phi_k \rangle \langle A \phi_k, \phi_k\rangle)^{1/2} = \sum_k (\lambda_k\overline{\lambda_k})^{1/2}= \sum_k \langle \phi_k, |A| \phi_k \rangle < \infty\]
In other words, the series which we will define as 
\[ \tr(A) := \sum_k \langle \phi_k, A \phi_k \rangle \]
converges absolutely. From this one can show that the trace is finite and that the trace does not depend on the choice of basis. The trace class operators are those operators for which the trace ``makes sense". It is important to note that in the finite-dimensional case, this definition coincides with the usual definition of trace, and therefore all operators are trace class. 
\\
\\
In general, it is easy to cook up examples for which the Hamiltonian $H$ is a bounded self-adjoint operator, and yet $e^{-\beta H}$ is not trace class. A simple one is to let $H$ be the identity over an infinite dimensional Hilbert space. Therefore, in the infinite setting, the Gibbs state is not always defined, and our formalism is not appropriate for describing the equilibrium behaviour of these systems. 
\subsection{$C^*$-Algebras}
The solution to this problem is to forget about the underlying Hilbert space, and view the algebra of bounded operators as the fundamental object. To do this without referring to a Hilbert space we will need to extract the correct abstract structure from $\sB(\sH)$. First of all, it is a vector space with the pointwise operator addition and scalar multiplication. It has a non-commutative operator multiplication, and an operator adjoint $^*: \sB(\sH) \rightarrow \sB(\sH)$, which satisfies the property $A^{**} = A$. Finally, it has some analytic structure in the form of the operator norm, $|| \cdot ||_{op}$, defined by
\[ ||A||_{op} := \sup_{||\psi|| = 1} || A \psi ||. \]
$\sB(\sH)$ is complete in this norm. This property is inherited from $\sH$. If $A_n$ is a Cauchy sequence in the operator norm, it follows that $A_n \psi$ is Cauchy in $\sH$ for each $\psi$. Since $\sH$ is complete, $A_n \psi$ converges. Call this limit $A \psi$. The map that sends $\psi$ to $A\psi$ is bounded and linear and one can check that $A_n$ converges to $A$ in $\sB(\sH)$. 
\\
\\Furthermore, these structures: multiplication, adjoint and norm, behave well together. 
\begin{prop} Let $A, B \in \sB(\sH)$. 
\begin{enumerate}
\item multiplication + adjoint: $ (AB)^* = B^*A^*$
\item norm + multiplication: $||AB|| \leq ||A|| \cdot ||B||$
\item norm + adjoint: $||A^*A|| = ||A||^2$
\end{enumerate}
\end{prop}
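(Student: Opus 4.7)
The plan is to dispatch each of the three identities in turn, working directly from the definition of the adjoint and elementary inner-product manipulations on $\sH$. For (1), I would use the defining property of the adjoint: $A^*$ is the unique bounded operator satisfying $\langle A \psi, \phi \rangle = \langle \psi, A^* \phi \rangle$ for all $\psi, \phi \in \sH$. Expanding $\langle A B \psi, \phi \rangle$ by sliding $A$ and then $B$ across the inner product produces $\langle \psi, B^* A^* \phi \rangle$, and uniqueness of the adjoint forces $(AB)^* = B^* A^*$.

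For (2), I would compute directly from the definition of the operator norm. For any unit vector $\psi$, we have $||AB\psi|| \leq ||A|| \cdot ||B\psi|| \leq ||A|| \cdot ||B||$, where the first inequality is simply the definition of $||A||$ applied to the (not-necessarily-unit) vector $B\psi$, and the second uses $||B\psi|| \leq ||B||$. Taking the supremum over unit $\psi$ gives $||AB|| \leq ||A|| \cdot ||B||$.

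Part (3), the $C^*$-identity, is the subtlest of the three. The plan is first to establish the auxiliary fact $||A^*|| = ||A||$. For a unit vector $\psi$, I would write
\[ ||A \psi||^2 = \langle A \psi, A \psi \rangle = \langle \psi, A^* A \psi \rangle \leq ||\psi|| \cdot ||A^* A \psi|| \leq ||A^*|| \cdot ||A||, \]
using Cauchy--Schwarz and (2). Taking the supremum over unit $\psi$ gives $||A||^2 \leq ||A^*|| \cdot ||A||$, hence $||A|| \leq ||A^*||$; applying the same inequality to $A^*$ in place of $A$ and invoking $A^{**} = A$ yields the reverse inequality. Given this, (2) immediately gives $||A^* A|| \leq ||A^*|| \cdot ||A|| = ||A||^2$. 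The displayed chain above also shows $||A||^2 = \sup_{||\psi||=1} \langle \psi, A^* A \psi \rangle \leq ||A^* A||$, and combining the two inequalities completes (3).

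The only real subtlety --- calling it an obstacle overstates the difficulty for a proposition of this kind --- is ensuring the proof of (3) is non-circular: the identity $||A^*|| = ||A||$ must be derived from the Cauchy--Schwarz argument above rather than assumed, since it is itself a consequence of the interplay between the adjoint, the inner product, and the operator norm. Everything else is a routine unpacking of definitions.
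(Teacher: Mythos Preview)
Your proposal is correct and, for parts (1) and (2), identical in spirit to the paper's argument. For part (3) there is a small but genuine difference in how $\|A^*\| = \|A\|$ is obtained: the paper invokes the Riesz representation construction of the adjoint to read off $\|A^*\| \leq \|u_A\| = \|A\|$ directly, then gets the reverse inequality from $A^{**}=A$; you instead extract $\|A\| \leq \|A^*\|$ from the Cauchy--Schwarz chain $\|A\psi\|^2 \leq \|A^*\|\cdot\|A\|$ and then symmetrize. Your route is marginally more self-contained (no appeal to Riesz), and it has the pleasant economy of reusing the same inequality chain for both $\|A^*\|=\|A\|$ and the lower bound $\|A\|^2 \leq \|A^*A\|$. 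The paper's route makes the role of the adjoint's construction more explicit. Both are standard and entirely sound; the only cosmetic gap in yours is the trivial case $\|A\|=0$ when dividing through, which you may wish to mention for completeness.
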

\begin{proof}
\begin{enumerate}[(1)]
\item In $\sB(\sH)$ the adjoint of an operator $A$ is defined as the unique operator $A^*$ such that for all $\phi, \psi \in \sH$,
\[ \langle A^* \phi, \psi \rangle = \langle \phi, A \psi \rangle. \]
We compute $\langle B^*A^* \phi, \psi \rangle = \langle A^* \phi, B \phi \rangle = \langle \phi, AB \psi \rangle$. Hence $(AB)^* = B^*A^*$. 
\item We use the definition of the operator norm to estimate, 
\[ ||AB|| =\sup_{||\psi|| =1} ||AB\psi|| \leq \sup_{||\psi|| =1} ||A|| \cdot ||B\psi|| = ||A|| \cdot ||B||. \]
\item For this proof we will use the construction of the operator adjoint for a general Hilbert space. The construction is based on a version of Riesz's Hilbert space representation theorem, which states that if $u: \sH \times \sH \rightarrow \mathbb{C}$ is a bounded sesquilinear form, then there is a unique $B \in \sB(\sH)$ such that 
\[ u(\phi, \psi) = \langle B \phi, \psi \rangle \]
for all $\phi, \psi \in \sH$. Furthermore, $||B|| \leq ||u||$ where $||u|| = \inf \{ m : |u(\phi, \psi)| \leq m ||\phi|| \cdot ||\psi|| \}$. To construct the adjoint of $A$, we take the bounded sesquilinear form 
\[ u_A(\phi, \psi) = \langle \phi, A \psi \rangle.\]
We then take $A^*$ to be the unique operator such that $\langle A^* \phi, \psi \rangle = u_A(\phi ,\psi)$ from Riesz's theorem. 
\\
\\
Since $||u_A|| = ||A||_{\sB(\sH)}$, we have that $||A^*|| \leq ||A||$. Also, since $A^{**}=A$, $||A||= ||A^{**}|| \leq ||A^*||$ and hence $||A|| = ||A^*||$. Combining this with part (2) gives that $||A^*A|| \leq ||A||^2$. On the other hand, for $\psi \in \sH$,
\[ ||A\psi||^2 = |\langle A\psi, A\psi \rangle | = \langle \psi, A^*A, \psi \rangle \leq ||\psi|| ||A^*A \psi|| \leq ||\psi||^2 ||A^*A||.\]
Hence
\[ ||A||^2 = \sup_{\psi \neq 0} \frac{||A \psi||^2}{||\psi||^2} \leq ||A^*A||. \]
\end{enumerate}
\end{proof} 
Motivated by this discussion, we define a $\Cstar$-algebra as follows. 
\begin{defn}
Let $\sA$ be a vector space over $\mathbb{C}$. 
\\
$\sA$ is an \textbf{algebra} if it is equipped with a multiplication $\times: \sA \times \sA \rightarrow \sA$ satisfying associativity and distributivity, i.e. compatibility with the vector space structure. 
\begin{enumerate}[(i)]
\item $A(BC) = (AB)C$
\item $A(B+C) = AB + AC$
\item $\alpha \beta AB = (\alpha A)(\beta B)$
\end{enumerate}
An algebra is called a \textbf{$^*$-algebra} if it is equipped with an involution, which is a map $^*: \sA \rightarrow \sA$ satisfying $A^{**} = A$, and the involution is compatible with the algebra structure. 
\begin{enumerate}[(i)]
\setcounter{enumi}{3}
\item $(AB)^* = B^*A^*$
\item $(\alpha A+ \beta B)^* = \overline{\alpha}A^* + \overline{\beta} B^*$
\end{enumerate}
An algebra is called a \textbf{Banach algebra} if it is equipped with a norm, is complete with respect to this norm and the norm is compatible with the algebra structure. 
\begin{enumerate}[(i)]
\setcounter{enumi}{5}
\item $\| AB \| \leq \|A \| \|B\|$
\end{enumerate}
If $\sA$ is a Banach algebra and a $^*$-algebra, then it is called a \textbf{Banach-$^*$-algebra} if the norm and involution satisfy the following compatibility condition.
\begin{enumerate}[(i)]
\setcounter{enumi}{6}
\item $\|A^*\| = \|A\|$
\end{enumerate}
Finally, a Banach-$^*$-algebra is a \textbf{$\Cstar$-algebra} if the norm and involution satisfy the $\Cstar$ condition. 
\begin{enumerate}[(i)]
\setcounter{enumi}{7}
\item $\| A^*A \| = \|A \|^2$
\end{enumerate}
\end{defn}

\begin{center}
\begin{tikzcd}[column sep= large]
						&		\textrm{Algebra:  } (\sA, +, \cdot, \times)	\arrow[bend right]{ddddl}[swap]{^*-\textrm{algebra}}	\arrow[bend left]{ddddr}{\textrm{Banach algebra}}		&			\\
						&														&			\\
						& 	&	\\
						&	\Cstar \textrm{-algebra}&	\\
\textrm{Involution:  }^*	\arrow[bend left]{uuuur}	\arrow[bend right]{rr}[ yshift=1ex]{\textrm{ Banach*-algebra} }[swap, yshift=-1ex]{\Cstar \textrm{ \textbf{condition:}  } \mathbf{||A^*A|| = ||A||^2}}    &																&\arrow[bend left]{ll} \arrow[bend right]{uuuul}	\textrm{Norm:  } || \cdot ||		\\
\end{tikzcd}
\end{center}
The $\Cstar$-condition is the linchpin which connects the algebraic and analytic structures. For one thing, it is needed to prove the following structure theorem. We have just shown that every space of bounded linear operators over a Hilbert space is a $\Cstar$-algebra. The structure theorem for $\Cstar$-algebras gives the converse; all $\Cstar$-algebras inherit their structure from such a space of bounded linear operators \cite{BR1}. 
\begin{thm}
Every $\Cstar$-algebra $\sA$ is isomorphic to a norm-closed, self-adjoint (closed under taking adjoints) algebra of bounded operators on a Hilbert space.
\end{thm}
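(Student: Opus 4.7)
The plan is to prove this via the Gelfand--Naimark--Segal (GNS) construction, producing an explicit isometric $^*$-representation of $\sA$ on some Hilbert space; the image of such a representation is automatically a norm-closed self-adjoint subalgebra of $\sB(\sH)$, which is what we need. If $\sA$ is non-unital I would first pass to its unitization $\tilde \sA = \sA \oplus \mathbb{C}\Id$, extend the $\Cstar$-norm there (using the left-regular representation on $\sA$ itself), and argue at the end that the resulting representation restricts faithfully to $\sA$; so assume $\sA$ is unital.

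The first substantive step is to develop enough of the scalar theory to make sense of positivity. I would define an element $A \in \sA$ to be \emph{positive} if $A = A^*$ and its spectrum $\sigma(A) := \{\lambda \in \mathbb{C} : \lambda \Id - A \text{ not invertible}\}$ lies in $[0,\infty)$, then establish the commutative Gelfand--Naimark theorem: every unital abelian $\Cstar$-algebra is isometrically $^*$-isomorphic to $C(X)$ for $X$ its character space. Applied to the $\Cstar$-algebra generated by a self-adjoint $A$ and $\Id$, this gives a continuous functional calculus and the identity $\|A\| = \sup \{|\lambda|:\lambda \in \sigma(A)\}$, hence $\|A^*A\| = \|A\|^2$ is realized in the spectrum of $A^*A$. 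Next I would introduce \emph{states}: linear functionals $\omega: \sA \to \mathbb{C}$ with $\omega(B^*B) \geq 0$ for all $B$ and $\omega(\Id) = 1$. Using the functional calculus plus Hahn--Banach, I would show the crucial ``enough states'' lemma: for every nonzero $A \in \sA$ there exists a state $\omega$ with $\omega(A^*A) = \|A\|^2$. (Concretely: on the abelian $\Cstar$-subalgebra generated by $A^*A$, the evaluation character at a point achieving the spectral radius is such a state; extend to $\sA$ by Hahn--Banach, and check positivity is preserved because the norm of a positive functional on a unital $\Cstar$-algebra equals its value at $\Id$.)

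With this in hand, the GNS construction takes a state $\omega$ and produces a cyclic representation: define a sesquilinear form $\langle A, B \rangle_\omega := \omega(A^*B)$ on $\sA$, quotient by the null ideal $\sN_\omega = \{A : \omega(A^*A) = 0\}$ (which is a left ideal by the Cauchy--Schwarz inequality $|\omega(A^*B)|^2 \leq \omega(A^*A)\omega(B^*B)$), and complete to obtain a Hilbert space $\sH_\omega$. Left multiplication descends to a bounded $^*$-representation $\pi_\omega : \sA \to \sB(\sH_\omega)$ with cyclic vector $\Omega_\omega$ corresponding to $\Id$, satisfying $\omega(A) = \langle \Omega_\omega, \pi_\omega(A) \Omega_\omega \rangle$. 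Boundedness here uses that $\|A\|^2\Id - A^*A$ is positive in $\sA$, hence has positive expectation in every state.

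Finally I would form the \emph{universal representation} $\pi := \bigoplus_\omega \pi_\omega$ on $\sH := \bigoplus_\omega \sH_\omega$, indexed by the family of all states. For any $A \in \sA$, pick a state $\omega$ with $\omega(A^*A) = \|A\|^2$; then
\[ \|\pi(A)\|^2 \geq \|\pi_\omega(A)\Omega_\omega\|^2 = \omega(A^*A) = \|A\|^2, \]
while the reverse inequality $\|\pi(A)\| \leq \|A\|$ is automatic from boundedness of each $\pi_\omega$. Thus $\pi$ is isometric and $\pi(\sA)$ is the desired norm-closed $^*$-subalgebra of $\sB(\sH)$. The main technical obstacle will be the ``enough states'' lemma: without the spectral/functional calculus for the commutative case and a careful positivity-preserving Hahn--Banach extension, one cannot guarantee that $\pi$ separates points, let alone preserves norm. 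Everything else (the GNS bookkeeping, the direct-sum formation, the final estimate) is routine once that input is established.
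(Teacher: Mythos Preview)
Your outline is the standard, correct proof of the Gelfand--Naimark theorem via the GNS construction and the universal representation. The paper, however, does not actually prove this theorem: it is stated with a citation to \cite{BR1} and then used as a black box. The paper does subsequently develop many of the ingredients you invoke---the spectral radius formula, the continuous functional calculus via the commutative Gelfand representation, positive elements, states, and the GNS construction for a \emph{single} state (Proposition~\ref{prop GNS})---but it never assembles them into the universal representation $\bigoplus_\omega \pi_\omega$ or proves the ``enough states'' lemma needed for faithfulness. So there is nothing to compare against; your argument is correct and is essentially what the cited reference contains.

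One small remark on your writeup: in the non-unital case you should be slightly more explicit that the unitization $\tilde\sA$ really carries a (unique) $\Cstar$-norm extending that of $\sA$, and that restricting the resulting faithful representation of $\tilde\sA$ to $\sA$ remains isometric (this is immediate since $\sA$ sits isometrically in $\tilde\sA$). Otherwise the plan is complete.
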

Furthermore we will see later that in a $\Cstar$-algebra, the norm is actually completely determined by the algebraic structure. The next section is devoted to developing an important tool in the analysis of $\Cstar$-algebras: the spectral theory, which will allow us to explore this interplay between algebraic and analytic structure further. 
\section{Spectral Theory for $\Cstar$-algebras}
Having established that a $\Cstar$-algebra is the correct structure to generalize $\sB(\sH)$, we now need to reformulate the axioms of quantum mechanics in this setting. Importantly, we no longer have a Hilbert Space on which our operators act. Before, we said that the set of possible outcomes of a measurement was given by the eigenvalues of a bounded self-adjoint operator. Without a Hilbert space, we do not have a notion of eigenvalues and eigenvectors. However, we can rewrite the eigenvalue-eigenvector equation in finite dimensions as a purely algebraic property of the operator. For $A \in \sB(\sH)$ we have that $\lambda$ is an eigenvalue of $A$ if 
\[ \exists \psi \in \sH \;\;\; \textrm{s.t.}\;\;\;  A \psi = \lambda \psi \iff \exists \psi \in \sH \;\;\; \textrm{s.t.}\;\;\;  (A - \lambda \Id) \psi = 0 \iff (A - \lambda \Id)^{-1} \;\; \textrm{does not exist.}\]
Note that the last equivalence only holds for $\textrm{dim}(\sH) < \infty$. This motivates the following definition. 
\begin{defn}
Let $A$ be an element of a $\Cstar$-algebra $\sA$. The spectrum of $A$ is 
\[ \sigma(A) = \{ \lambda \in \mathbb{C} : (A - \lambda \Id)^{-1} \textrm{does not exist} \}. \]
\end{defn}
We have just argued that in the finite-dimensional case the spectrum of an operator is exactly the set of its eigenvalues. In general, the spectrum can have a richer structure. This is related to the fact that a bounded linear operator on a finite-dimensional vector space is injective if and only if it is surjective, but that this is \textit{not} true in infinite dimensions. The eigenvalue condition can be viewed as the statement that $(A -  \lambda \Id)$ fails to be injective. However, even if it is injective, $(A - \lambda \Id)$ could fail to be invertible because either: $(1)$ it is not bounded below and hence the set-theoretic inverse is not a bounded operator, or $(2)$ the range is not dense. 
\\
\\
Of course, in the $\Cstar$-algebraic setting we do not refer to the domain and range of an element, because the algebra is defined abstractly and not as operators acting on an underlying space. However, keeping in mind the structure theorem for $\Cstar$-algebras, the preceeding discussion should give a feel for what the spectrum will look like. In particular, it could be larger than the set of eigenvalues and may not be discrete. 

\subsection{Spectral Radius}
In a $\Cstar$-algebra, the spectrum is one tool we can use to explore the interplay between the algebraic and analytic structure. A priori, it looks like something that depends only on the algebraic structure, since one just needs to check whether or not certain elements are invertible. However, we will see that considering the spectrum of an element as a set in $\mathbb{C}$ and looking at its size will reveal information about the norm of that element. 

\begin{defn}
Let $A \in \sA$ be an element of a $\Cstar$-algebra. The spectral radius of A is
\[R(A) = \sup \{ |\lambda| : \lambda \in \sigma(A)\} .\]
\end{defn}
In the finite-dimensional setting the spectral radius of an operator is simply the norm of its largest eigenvalue. It is also not hard to see from the definition that the operator norm of a self-adjoint (or more generally normal) operator is also the largest eigenvalue. One uses the finite dimensional spectral theorem to choose an orthonormal basis $\{e_i\}$ of eigenvectors of $A \in \sB(\sH)$ and then expresses an arbitrary norm one element of the Hilbert space as a linear combination of these vectors $\psi = \sum_i c_i e_i$. The coefficients satisfy $\sum_i | c_i |^2 = 1$. Then
\[ ||A \psi ||^2 = || \sum_i c_i \lambda_i e_i ||^2 = \left( \sum_i |c_i \lambda_i |^2 \right)^2.\]
When we sup over all the possible coefficients we see this is maximized when the coefficient corresponding to the largest eigenvalue is $1$ and all others are $0$. This shows that the spectral radius of an operator is equal to the operator norm. In fact, this holds in general as well. We define normal, self-adjoint and unitary operators abstractly using the involution. 
\begin{defn}
An element of a $\Cstar$-algebra $A \in \sA$ is called normal if $A^*A = AA^*$, unitary if $A^*A=AA^* = \Id$ and self-adjoint if $A^* = A$. In particular, unitary and self-adjoint operators are normal. 
\end{defn}
Then we have the following important result relating the $\Cstar$ norm to the spectral radius. 
\begin{prop}
For a normal element of a $\Cstar$-algebra $A \in \sA$, 
\[ R(A) = \|A\| .\]
\end{prop}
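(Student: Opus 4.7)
The plan is to prove the equality by exploiting the $C^*$-identity $\|A^*A\| = \|A\|^2$ together with normality to push the norm onto a self-adjoint element, then invoke Gelfand's spectral radius formula $R(A) = \lim_{n\to\infty}\|A^n\|^{1/n}$, which holds in any Banach algebra.

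First I would treat the self-adjoint case as a warm-up. If $B = B^*$, then by the $C^*$-condition, $\|B^2\| = \|B^*B\| = \|B\|^2$. A straightforward induction on $k$ then gives $\|B^{2^k}\| = \|B\|^{2^k}$. Applying the spectral radius formula along the subsequence $n = 2^k$ yields
\[ R(B) = \lim_{k\to\infty} \|B^{2^k}\|^{1/2^k} = \lim_{k\to\infty} \|B\| = \|B\|, \]
so the result holds whenever $A$ is self-adjoint.

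Next, I would bootstrap to the normal case. Suppose $A^*A = AA^*$. The key computation is
\[ \|A^2\|^2 = \|(A^2)^*(A^2)\| = \|A^{*2}A^2\| = \|(A^*A)^2\|, \]
where the last equality uses normality to commute $A^*$ past $A$. Since $A^*A$ is self-adjoint, the warm-up gives $\|(A^*A)^2\| = \|A^*A\|^2 = \|A\|^4$, so $\|A^2\| = \|A\|^2$. Induction on $k$, using that $A^{2^k}$ is again normal (powers of a normal element commute with their adjoints), yields $\|A^{2^k}\| = \|A\|^{2^k}$ for all $k$. Then by Gelfand's formula, evaluated on the subsequence $n = 2^k$,
\[ R(A) = \lim_{k\to\infty} \|A^{2^k}\|^{1/2^k} = \|A\|, \]
which is the desired identity.

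The main obstacle is Gelfand's spectral radius formula itself, since the excerpt has not formally introduced it. If it is available from earlier material, the argument above is essentially complete. Otherwise, one must establish $R(A) = \lim\|A^n\|^{1/n}$ using the standard Banach algebra proof: the upper bound $R(A) \leq \liminf\|A^n\|^{1/n}$ follows from the spectral mapping $\sigma(A^n) = \sigma(A)^n$, while the reverse bound $\limsup\|A^n\|^{1/n} \leq R(A)$ is extracted from the radius of convergence of the resolvent $(z - A)^{-1} = \sum_{n\geq 0} A^n/z^{n+1}$ as an analytic $\sA$-valued function on $\{|z| > R(A)\}$. Once this is in hand, the argument above requires no further analytical input beyond normality and the $C^*$-identity.
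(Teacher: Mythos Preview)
Your proposal is correct and follows essentially the same route as the paper: both rely on the spectral radius formula $R(A)=\lim_n\|A^n\|^{1/n}$ and then use normality together with the $\Cstar$-identity to obtain $\|A^{2^k}\|=\|A\|^{2^k}$, from which the result follows along the subsequence $n=2^k$. The only substantive difference is that the paper spends most of its proof establishing the spectral radius formula from scratch via a direct Banach-algebra argument, whereas you take it as a known input and merely sketch the standard analytic proof via the resolvent series.
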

\begin{proof}
The main difficulty of this proof is to establish the identity $R(A) = \lim_{n \rightarrow \infty} ||A^n||^{1/n}$. We begin with the upper bound. Fix $\lambda$ such that $|\lambda| > ||A^n||^{1/n}$ for some $n$. Consider the series
\[ \lambda^{-1} \sum_{m \geq 0} \left( \frac{A}{\lambda} \right)^m.\] 
Decomposing each $m \in \mathbb{Z}$ as $m = pn + q$ with $0 \leq q <n$ and using the fact that $\left( \|\frac{A}{\lambda} \|\right)^n < 1$ we can establish that this series is Cauchy and hence converges since the $\Cstar$-algebra is complete. Next we compute, 
\[ ( \lambda \Id - A) \lambda^{-1} \sum_{m \geq 0} \left( \frac{A}{\lambda} \right)^m =  \sum_{m \geq 0} \left( \frac{A}{\lambda} \right)^m- \sum_{m \geq 1} \left( \frac{A}{\lambda}\right)^m = \Id .\] 
Hence the limit of the series is the inverse of $(\lambda \Id - A)$ and $\lambda \nin \sigma(A)$. It follows that $R(A) \leq \|A^n\|^{1/n}$ for all $n$ and hence
\[ R(A) \leq \inf_{n} \|A^n\|^{1/n} \leq \liminf_{n} \|A^n\|^{1/n} .\]
Now we establish the lower bound. Define $r_A := \limsup_{n} \|A^n\|^{1/n}$. Consider the case that $A$ is invertible. Then $ 1 = \|A^n A^{-n}\| \leq \|A^n\|\cdot \|A^{-n}\|$. Taking the limsup we get that $ 1 \leq r_A r_{A^{-1}}$. Hence, $r_A > 0$. The contrapositive of what we have just proven is that if $r_A = 0$, then $A$ is not invertible and hence $ 0 \in \sigma(A)$. Hence $R(A) \geq 0 = r_A$. 
\\
\\
We are left with the case that $r(A) >0$. In the proof of this case, we will use the fact that for a sequence of operators $A_n$ such that $(\Id - A_n)^{-1}$ exists, 
\[ \|A_n\| \rightarrow 0 \iff \| \Id - (\Id - A_n)^{-1} \| \rightarrow 0. \]
To establish this, let $R_n = (\Id - A_n)^{-1}$. Then 
\[ \Id - R_n = \Id - (\Id - A_n)^{-1} = (\Id - A_n)(\Id - A_n)^{-1} - (\Id - A_n)^{-1} = -A_n (\Id - A_n)^{-1} \]
and  
\[ A_n = \Id - R_n^{-1} = R_n R_n^{-1} - R_n^{-1} = (R_n - \Id)R_n^{-1} = -( \Id - R_n)(\Id - (\Id - R_n))^{-1}.\]
From these expansions we can see that $ \|A_n \| \rightarrow 0$ if and only if $\|\Id - R_n\| \rightarrow 0$. 
\\
\\
Now we define $S_{A} = \{ \lambda : |\lambda| \geq r_A \}$. Suppose for contradiction that $S_A \cap \sigma(A) = \phi $. Then for $\omega$ a primitive $n^{th}$ root of unity the operator 
\[ R_n(A, \lambda) = \frac{1}{n} \sum_{k=1}^n \left( \Id - \frac{\omega^kA}{\lambda}\right)^{-1} \]
exists whenever $\lambda \in S_A$. We then calculate
\begin{align*}
& \left( \Id - \frac{A^n}{\lambda^n} \right)  \frac{1}{n} \sum_{k=1}^n \left( \Id - \frac{\omega^kA}{\lambda}\right)^{-1} \\
= & \frac{1}{n}  \left[ \prod_{j=1}^n \left( \Id - \omega^j \frac{(\omega A)}{\lambda} \right)\right]  \sum_{k=1}^n \left( \Id - \frac{\omega^kA}{\lambda}\right)^{-1} \\
& = \frac{1}{n} \sum_{k=1}^n (\Id - \frac{\omega A}{\lambda} ) \cdot \cdot \cdot \widehat{(\Id - \frac{\omega^k A}{\lambda})} \cdot \cdot \cdot (\Id - \frac{\omega^nA}{\lambda}) \\
& = \frac{1}{n} (n \Id) = \Id 
\end{align*}
where we have used the identity $\sum_{i=1}^n \omega^i = 0$ for $n^{th}$ roots of unity. We have established that $R_n(A, \lambda) = (\Id - \frac{A^n}{\lambda^n})^{-1}$. In particular this inverse exists. Now we have the estimate
\begin{align*}
& \left \| \left( \Id - \frac{\omega^kA}{r_A}\right)^{-1} - \left( \Id - \frac{\omega^kA}{\lambda} \right) ^{-1} \right\| \\
& = \left \| \left( \Id - \frac{\omega^kA}{r_A}\right)^{-1} \omega^k A \left( \frac{1}{\lambda} - \frac{1}{r_A} \right)\left( \Id - \frac{\omega^kA}{\lambda} \right) ^{-1}  \right \| \\
& \leq \lvert \lambda - r_A \rvert \| A \| \sup_{\gamma \in S_A} \|(\gamma \Id - A)^{-1} \|^2
\end{align*}
The supremum is finite because we have 
\[ \| (\gamma \Id - A)^{-1} \| \leq \lvert \gamma \rvert^{-1} \sum_{n \geq 0} \frac{\|A^n \|}{\lvert \gamma^n \rvert} = (\lvert \gamma \rvert - \|A\|)^{-1}\]
using the formula for $(\gamma \Id - A)^{-1}$ when $\gamma \geq r_A$ established earlier. We have that the map $\lambda \mapsto (\lvert \lambda \rvert - \|A\|)^{-1}$ is continuous on the complement of $\sigma(A)$ and since we have assumed that $S_A$ has no intersection with $\sigma(A)$, it is continuous on $S_A$. 
\\
\\
Note also that the bound is uniform in $k$, so we may use the formula for $(\Id - \frac{A^n}{\lambda^n})^{-1}$ to obtain that for all $\epsilon > 0$ there exists $\lambda > r_A$ such that
\[ \left\| \left( \Id - \frac{A^n}{r_A^n}\right) ^{-1} - \left(  \Id - \frac{A^n}{\lambda^n}\right)^{-1} \right\| < \epsilon \]
for all $n$. For $\lambda > r_A$ we have that $\|A^n\|^{1/n} / \lvert \lambda \rvert <1$ and so $\|A^n\| / \lvert \lambda \rvert^n \rightarrow 0$ as $n \rightarrow \infty$. It follows that $\|(\Id - A^n/\lambda^n)^{-1} - \Id \| \rightarrow 0$ and hence by the previous statement that $\| ( \Id - A^n/r_A^n)^{-1} - \Id \| \rightarrow 0$. This implies that $\|A^n\|/r_A^n \rightarrow 0 $. But $r_A = \limsup_n \| A^n \|^{1/n}$ by definition, so this last statement is a contradiction. Therefore, there is an element $\lambda$ in the spectrum of $A$ which is in $S_A$ and hence satisfies $| \lambda | > r_A$ and 
\[ R(A) \geq \limsup_n \| A^n \|^{1/n}. \]
The last step is to assume that $A$ is normal and repeatedly apply the $\Cstar$ identity. 
\begin{align*}
\|A^{2^n}\|^2 = \|(A^*)^{2^n} A^{2^n} \| =  \| (A^*A)^{2^n} \| = \|(A^*A)^{2^{n-1}} \|^2 = \cdot \cdot \cdot = ||A||^{2^{n+1}}
\end{align*}
Then we have $R(A) = \lim_{n \rightarrow \infty} \| A^{2^n} \|^{2^{-n}} = \lim_{n \rightarrow \infty} \left( \| A \|^{2^{n+1}} \right)^{2^{-(n+1)}} = \| A \|$. 

\end{proof}
A remarkable consequence of this result is that the algebraic structure of a $\Cstar$-algebra completely determines the analytic structure. 
\begin{cor}
Let $\sA$ be a $*$-algebra. If there is a norm on $\sA$ which makes $\sA$ into a $\Cstar$-algebra, then this norm is the unique norm which makes $\sA$ into a $\Cstar$-algebra. 
\end{cor}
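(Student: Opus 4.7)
The strategy is to show that any $\Cstar$-norm on $\sA$ is forced to equal a purely algebraic quantity, namely the spectral radius of $A^*A$, and hence cannot depend on any choice of norm. Suppose $\|\cdot\|_1$ and $\|\cdot\|_2$ are two norms on $\sA$ each of which makes it into a $\Cstar$-algebra. I want to show $\|A\|_1 = \|A\|_2$ for every $A \in \sA$.

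The first observation I would make is that the spectrum $\sigma(A)$ defined in the excerpt depends only on the $*$-algebra structure: the condition that $(A - \lambda \Id)^{-1}$ exists in $\sA$ makes no reference to a norm. Consequently the spectral radius $R(A) = \sup\{|\lambda| : \lambda \in \sigma(A)\}$ is a purely algebraic invariant of $A$. In particular, $R(A)$ is the same whether we compute it using $\|\cdot\|_1$ or $\|\cdot\|_2$.

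Now fix an arbitrary $A \in \sA$. The element $A^*A$ is self-adjoint, hence normal, in either $\Cstar$-structure. Applying the preceding proposition (which says that for a normal element in a $\Cstar$-algebra the norm equals the spectral radius) to $A^*A$ with respect to each norm gives
\[ \|A^*A\|_1 = R(A^*A) = \|A^*A\|_2. \]
Finally, invoking the $\Cstar$-condition $\|B\|^2 = \|B^*B\|$ for both norms (applied to $B = A$), we conclude
\[ \|A\|_1^2 = \|A^*A\|_1 = \|A^*A\|_2 = \|A\|_2^2, \]
and since norms are non-negative this forces $\|A\|_1 = \|A\|_2$.

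There is no real obstacle here; the only point that requires care is the first observation, that the spectrum is an algebraic rather than analytic object. It is worth emphasizing in the write-up that the definition of $\sigma(A)$ only involves the existence of multiplicative inverses in $\sA$, and therefore two different $\Cstar$-norms on the same underlying $*$-algebra must produce identical spectra, identical spectral radii, and (via normality of $A^*A$ plus the $\Cstar$-identity) identical norms.
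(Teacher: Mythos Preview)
Your proof is correct and follows essentially the same line as the paper's: the spectrum (and hence spectral radius) is purely algebraic, so for the normal element $A^*A$ the proposition gives $\|A^*A\| = R(A^*A)$ independently of the chosen $\Cstar$-norm, and the $\Cstar$-identity then pins down $\|A\|$. The only cosmetic difference is that you phrase it as comparing two specific norms $\|\cdot\|_1$, $\|\cdot\|_2$, whereas the paper simply writes $\|A\| = R(A^*A)^{1/2}$ and observes the right-hand side is norm-independent.
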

\begin{proof}
If $A$ is normal in a $*$ -algebra and $\| \cdot \|$ is a $\Cstar$-norm, then $\| A \| = R(A)$, which is completely determined by the algebraic structure since the spectrum of the element does not depend on the norm. Then if $A$ is an arbitrary element we use the $\Cstar$-identity to get  
\[ \| A \| = \| A^* A \| ^{1/2} = R(A^*A)^{1/2}\]
since for any $A$, $A^*A$ is normal. The norm of every element is determined by its spectrum. 
\end{proof}
This illustrates to what extent the analytic and algebraic structures are interconnected in a $\Cstar$-algebra. If you have a $\Cstar$ algebra you never need to ask what norm to put on it, the norm can be deduced from the algebraic part of the $\Cstar$-algebra. A similar result holds for morphisms of $\Cstar$-algebras. If a map between two $\Cstar$ algebras respects all the algebraic structure, then it automatically respects the analytic structure as well. 
\begin{cor} \label{morphisms}
Let $\sA$ and $\sB$ be $\Cstar$-algebras and $\pi: \sA \rightarrow \sB$ be a morphism of $^*$-algebras. Then $|| \pi(A) ||_{\sB} \leq ||A||_{\sA}$ for all $A \in \sA$. Hence $\pi$ is a morphism of $\Cstar$-algebras. 
\end{cor}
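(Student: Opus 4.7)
The plan is to reduce the inequality $\|\pi(A)\|_{\sB} \leq \|A\|_{\sA}$ to a statement about spectra, and then to show that a $^*$-algebra morphism can only shrink the spectrum of an element. This works because the preceding results tell us that the $\Cstar$-norm of any element is completely determined by the spectrum of a canonically associated normal (in fact self-adjoint) element, namely $\|A\|^2 = R(A^*A)$.

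First I would write, using the $\Cstar$-identity in $\sB$ together with the fact that $\pi$ respects the $^*$-algebra structure:
\[
\|\pi(A)\|_{\sB}^2 \;=\; \|\pi(A)^*\pi(A)\|_{\sB} \;=\; \|\pi(A^*A)\|_{\sB}.
\]
Since $A^*A$ is self-adjoint in $\sA$ and $\pi$ preserves the involution, $\pi(A^*A)$ is self-adjoint and in particular normal in $\sB$. By the proposition relating the spectral radius to the norm for normal elements, the right-hand side equals $R(\pi(A^*A))$. Similarly, $R(A^*A) = \|A^*A\|_{\sA} = \|A\|_{\sA}^2$. So the inequality I must prove reduces to
\[
R(\pi(A^*A)) \;\leq\; R(A^*A),
\]
which will follow at once from the spectral inclusion $\sigma_{\sB}(\pi(B)) \subseteq \sigma_{\sA}(B)$, applied with $B = A^*A$.

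The heart of the argument is then this spectral inclusion. Let $B \in \sA$ and suppose $\lambda \notin \sigma_{\sA}(B)$, so there exists $C \in \sA$ with $(B - \lambda \Id_{\sA}) C = C (B - \lambda \Id_{\sA}) = \Id_{\sA}$. Applying the algebra morphism $\pi$, which sends $\Id_{\sA}$ to $\Id_{\sB}$ and respects products and linear combinations, yields
\[
(\pi(B) - \lambda \Id_{\sB}) \, \pi(C) \;=\; \pi(C) \, (\pi(B) - \lambda \Id_{\sB}) \;=\; \Id_{\sB},
\]
so $\pi(B) - \lambda \Id_{\sB}$ is invertible in $\sB$, i.e.\ $\lambda \notin \sigma_{\sB}(\pi(B))$. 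Taking the contrapositive gives the desired spectral inclusion, hence $R(\pi(B)) \leq R(B)$.

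Combining the two pieces:
\[
\|\pi(A)\|_{\sB}^2 \;=\; R(\pi(A^*A)) \;\leq\; R(A^*A) \;=\; \|A\|_{\sA}^2,
\]
which is the required bound; continuity (hence the morphism-of-$\Cstar$-algebras conclusion) follows immediately. The only subtle point I would need to be careful about is the unit: the argument above uses $\pi(\Id_{\sA}) = \Id_{\sB}$, so if $\pi$ is not assumed unital one must first unitize both algebras and extend $\pi$ canonically, after which the same spectral-inclusion argument goes through. This is the only real obstacle; everything else is a mechanical assembly of the previously established spectral-radius formula and the $\Cstar$-identity.
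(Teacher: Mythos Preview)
Your proof is correct and follows essentially the same approach as the paper: both arguments establish the spectral inclusion $\sigma(\pi(B)) \subseteq \sigma(B)$ by pushing inverses through $\pi$, invoke the spectral-radius formula for normal elements, and then use the $\Cstar$-identity to pass from the self-adjoint element $A^*A$ to an arbitrary $A$. Your remark about needing $\pi$ to be unital (or passing to unitizations) is a valid caveat that the paper leaves implicit.
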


\begin{proof}
First consider $A$ self-adjoint. If $\lambda \nin \sigma(A)$ then $(\lambda \mathbb{I} - A)^{-1}$ exists. Then $\pi((\lambda \mathbb{I} - A)^{-1}) = (\pi(\lambda \mathbb{I} - A))^{-1} = (\lambda \mathbb{I}_{\pi(\sA)} - \pi(A))^{-1}$ exists. Therefore, $\sigma(\pi(A)) \subset \sigma(A)$ and so we have that the spectral radius of $\pi(A)$ is bounded by the spectral radius of $A$ and hence $||\pi(A)|| \leq ||A||$. 
\newline
\newline
For general $A$, $A^*A$ is self adjoint and so by the previous result,

\[ ||\pi(A)||^2 = ||\pi(A)^*\pi(A)|| = ||\pi(A^*A)|| \leq ||A^*A|| = ||A||^2. \]
$\pi$ is continuous and hence a morphism of $\Cstar$-algebras. 
\end{proof}
Additionally, this result gives us some important information about the spectra of self-adjoint and unitary operators, which we would like to use to represent the observables and time evolution of a physical system in quantum mechanics. Recall that in the finite-dimensional case the observables are time evolved by
\[ A \mapsto U(t)AU(t)^*\]
where $U(t) = e^{itH}$ is a unitary group. Furthermore, we identified the possible results of measurements with the eigenvalues of the operator representing the measurement. Since these are self-adjoint, the eigenvalues are real. This is the reason why we chose the self-adjoint operators to represent observables: because physical experiments should have real-valued outcomes. Unitary operators can also be characterized by their eigenvalues. All the eigenvalues lie on the unit circle. Looking at the spectral decomposition of an operator we have 
\[ UAU^* = U \left( \sum_i \lambda_i P_i \right) U^* = \sum_i \lambda_i UP_iU^*\]
where $UPU^*$ is a projection since $U$ is unitary. Time evolution has no effect on the spectrum, it only moves around the eigenspaces. These spectral properties of unitary and self-adjoint operators continue to hold in the general case, which allows us to continue to use our axioms of quantum mechanics. 
\begin{cor}
If $U$ is a unitary element of a $\Cstar$-algebra, then the spectrum of $U$ is contained in the unit circle in $\mathbb{C}$. If $A$ is a self-adjoint element, then the spectrum of $A$ is contained in $[ -\|A \|, \|A\| ] \subset \mathbb{R}$. 
\end{cor}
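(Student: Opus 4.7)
The plan is to deduce both statements from the spectral radius proposition ($R(T) = \|T\|$ for normal $T$), supplemented by elementary invertibility manipulations. For the unitary part, I will first note that $\|U\|^2 = \|U^* U\| = \|\Id\| = 1$, and identically $\|U^{-1}\| = \|U^*\| = 1$. Since unitaries are normal, applying the spectral radius proposition to both $U$ and $U^{-1}$ shows that $\sigma(U)$ and $\sigma(U^{-1})$ are contained in the closed unit disk. Because $U$ is invertible, $0 \notin \sigma(U)$, and for $\lambda \neq 0$ the identity $U - \lambda \Id = -\lambda U (U^{-1} - \lambda^{-1} \Id)$ shows $\lambda \in \sigma(U) \iff \lambda^{-1} \in \sigma(U^{-1})$. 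Combining the two disk bounds then forces $|\lambda| = 1$.

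For the self-adjoint part, the bound $|\lambda| \leq \|A\|$ is again immediate from the spectral radius proposition. To show $\sigma(A) \subseteq \mathbb{R}$, my plan is to reduce to the unitary case via the exponential. I define $U := \sum_{n \geq 0} (iA)^n/n!$, which converges in norm by comparison with $\sum \|A\|^n/n!$ and completeness of the $\Cstar$-algebra. Using $A = A^*$ and the Cauchy product for commuting absolutely convergent series, one checks $U^* = e^{-iA}$ and $U U^* = U^* U = \Id$, so $U$ is unitary. The crucial claim is that $\lambda \in \sigma(A)$ implies $e^{i\lambda} \in \sigma(U)$: setting $B := A - \lambda \Id$ and $C := \sum_{n \geq 0} (iB)^n / (n+1)!$, one factors
\[ U - e^{i\lambda}\Id \;=\; e^{i\lambda}(e^{iB} - \Id) \;=\; e^{i\lambda} \cdot iB \cdot C, \]
where $C$ commutes with $B$ because it is a power series in $B$. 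If this product were invertible, one could extract an inverse for $B = A - \lambda \Id$, contradicting $\lambda \in \sigma(A)$. Hence $e^{i\lambda} \in \sigma(U)$, and the first part of the corollary gives $|e^{i\lambda}| = 1$, which forces $\mathrm{Im}(\lambda) = 0$.

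The main obstacle I foresee is the factorisation argument above: in a non-commutative algebra one must justify that a hypothetical inverse of $iBC$ actually commutes with $B$, so as to assemble an explicit inverse for $B$ (namely $iC \cdot (iBC)^{-1}$). This rests on the observation that $B$ commutes with $iBC$ (since $B$ commutes with every polynomial, and hence every norm-convergent power series, in $B$), from which any inverse of $iBC$ inherits the same property. Everything else --- convergence of the exponential series, verification that $e^{iA}$ is unitary, and the reduction of $\sigma(A) \subset \mathbb{R}$ to $\sigma(U) \subset S^1$ --- is routine once this commutativity is in hand.
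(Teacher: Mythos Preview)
Your proof is correct. The unitary part is essentially the paper's argument: both use $\|U\| = \|U^{-1}\| = 1$ together with the spectral radius proposition and a relation between $\sigma(U)$ and $\sigma(U^{-1})$; your factorisation $U - \lambda\Id = -\lambda U(U^{-1} - \lambda^{-1}\Id)$ is a clean variant of the paper's passage through $\sigma(U^*)$.

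For the self-adjoint part you take a genuinely different route. The paper reduces to the unitary case via a \emph{Cayley transform}: it fixes a small parameter $c > 0$, forms the unitary $B = (\Id - icA)(\Id + icA)^{-1}$, and then factors $\tfrac{1-ic\alpha}{1+ic\alpha}\Id - B$ to see that $\alpha \in \sigma(A)$ forces $\tfrac{1-ic\alpha}{1+ic\alpha}$ onto the unit circle, hence $\alpha \in \mathbb{R}$. You instead use the \emph{exponential} $U = e^{iA}$ and the factorisation $e^{iB} - \Id = iB\sum_{m\geq 0}(iB)^m/(m+1)!$. Both constructions are maps $\mathbb{R} \to S^1$ lifted to the algebra, so the underlying idea is the same; what differs is the bridge. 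The Cayley transform is purely rational and sidesteps any convergence questions, at the cost of a slightly fiddly algebraic factorisation at the end. Your exponential route requires verifying convergence and the unitarity of $e^{iA}$ from the power series, but the payoff is a cleaner factorisation and a direct link to the one-parameter unitary groups that reappear later in the paper (Stone's theorem, the dynamics $e^{itH}$). Your handling of the ``main obstacle'' --- that $B$ commutes with $iBC$ and hence with its inverse, so $iC\cdot(iBC)^{-1}$ is a two-sided inverse of $B$ --- is exactly the point that needs care, and you have it right.
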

\begin{proof}
Let $U$ be unitary. Using the $\Cstar$ condition we get 
\[ \|U\|^2 = \| U^*U\| = \| \Id \| = 1.\]
Hence, since $U$ is normal, $R(U) = ||U|| = 1$ and $\sigma(U)$ is contained in the unit disc. Now suppose $\lambda \in \sigma(U)$. It follows that $\overline{\lambda} \in \sigma(U^*) = \sigma(U^{-1})$ and hence that $(\overline{\lambda}) ^{-1}\in \sigma(U)$. But then $| (\overline{\lambda}) ^{-1} | = |\lambda^{-1}| \leq 1$ which means that $|\lambda| \geq 1$. Hence the spectrum of $U$ is contained in the unit circle. 
\\
\\
Let $A$ be self-adjoint. In particular it is normal, so $R(A) = \| A\|$. Fix $\lambda$ such that $| \lambda^{-1} | > \|A\|$.  $\lambda^{-1}$ is not in the spectrum, and the inverse of $ \Id + i |\lambda| A = -i |\lambda| (i |\lambda^{-1}| \Id - A)$ exists. Consider the element 
\[ B = (\Id - i |\lambda|A)(\Id + i |\lambda| A)^{-1} .\]
We check that $B$ is unitary. 
\begin{align*}
B^*B &= (\Id - i |\lambda| A^*)^{-1} (\Id + i |\lambda| A^*) (\Id - i |\lambda|A) (\Id + i |\lambda| A)^{-1} \\
& = (\Id - i |\lambda| A)^{-1}  (\Id - i |\lambda|A)(\Id + i |\lambda| A) (\Id + i |\lambda| A)^{-1} \\
& = \Id
\end{align*} 
Then we know that the spectrum of $B$ is contained in the unit circle, so for $\alpha \in \mathbb{C}$, whenever 
\[ \left(\frac{1 - i|\lambda| \alpha}{1 + i |\lambda| \alpha} \Id - U\right) \] 
is invertible we must have
\[ \left\lvert \frac{1 - i|\lambda| \alpha}{1 + i |\lambda| \alpha} \right\rvert = 1  \]
which is true exactly when $\alpha$ is real. We can factor 
\begin{align*}
 (\frac{1 - i|\lambda| \alpha}{1 + i |\lambda| \alpha} \Id - U) &= (\Id + i |\lambda |\alpha)^{-1}\Big{(}(1 - i|\lambda | \alpha)(1 + i |\lambda | A) - (1 +i |\lambda| \alpha)(1 - i |\lambda| A) \Big{)} (\Id + i |\lambda | A)^{-1} \\
 & = 2i |\lambda | (\Id + i |\lambda |\alpha)^{-1}(A - \alpha \Id) (\Id + i |\lambda | A)^{-1}
 \end{align*}
Hence, if $\alpha \in \sigma(A)$ then $(\Id \alpha - A)$ is invertible and hence $\alpha$ is real. 
\end{proof}
We have shown that in a $\Cstar$-algebra, a self-adjoint element has real spectrum, so it still makes sense to identify a physical measurement with a self-adjoint element $A$ and the set of possible outcomes with the set $\sigma(A) \subset \mathbb{R}$. What we need now is a way to put a probability measure on this set given some state for the system. Recall that in the finite-dimensional case we used the spectral theorem to decompose 
\[ A = \sum_i \lambda_i P_{\lambda_i} \]
and then for each density matrix $\rho$ we put a probability measure $p_{\rho}$ on the finite set $\sigma(A) = \{ \lambda_i \}$ by 
\[ p_{\rho}(\lambda_i) = \tr(\rho P_{\lambda_i}). \]
To extend this to the $\Cstar$-algebra setting, we will need a more powerful version of the spectral theorem. 
\subsection{Functional Calculus}
The finite-dimensional spectral theorem can be viewed as a way to take functions of matrices. If $p(x) = \sum_{i=0}^{n} a_i x^i$ is a polynomial and $A \in \sB(\sH)$ self-adjoint has spectral decomposition $A = \sum_i \lambda_i P_{\lambda_i}$ then we can define
\[ p(A) = \sum_{i=0}^{n} a_i \left( \sum_j \lambda_j P_{\lambda_j} \right)^i = \sum_{i=0}^{n} a_i \left( \sum_j \lambda_j^i P_{\lambda_j} \right) = \sum_j p(\lambda_j) P_{\lambda_j}. \]
We can then avail ourselves of the Stone-Weierstrass theorem, which says that a continuous function $f: \mathbb{R} \rightarrow \mathbb{C}$ can be approximated uniformly by a sequence of polynomials $\{ p_n\}$ to define 
\[ f(A) = \lim_{n \rightarrow \infty} p_n(A) \]
where the limit is taken in the operator norm. This limit exists because 
\[ \| p_n(A) - p_m(A) \| = \| \sum_{i} (a_{n, i} - a_{m,i}) A^i \| \leq \sum_{i} |a_{n,i} - a_{m,i}| \|A\|^i \leq \| p_n - p_m \|_{\infty} . \]
Or, looking at the spectral form of $p_n(A)$ we can see that $f(A) = \sum_i f(\lambda_i) P_{\lambda_i}$. 
In this section, we develop a tool which will allow us to do the same thing in the $\Cstar$-algebra case. This is known as the functional calculus form of the spectral theorem \cite{RS1}. 
\begin{thm}[Spectral Theorem, Functional Calculus Version]
Let $A$ be a self-adjoint element of a $\Cstar$-algebra $\sA$, and let $\Cstar(A)$ be the sub-algebra of $\sA$ generated by $A$. Then there exists a map $\Phi_A : C(\sigma(A)) \rightarrow \Cstar(A)$, where $C(\sigma(A))$ is the space of continuous functions on $\sigma(A)$ satisfying
\begin{enumerate}
\item $\Phi_A$ is an isomorphism ,
\item $\Phi_A$ is an isometry (in particular it is continuous) with the sup norm on $C(\sigma(A)) $,
\item $\Phi_A(1) = \Id$, where $1$ is the map $ \mathbb{R} \ni x \mapsto 1 \in \mathbb{R}$,
\item $\Phi_A( id) = A$, where $id$ is the map $\mathbb{R} \ni x \mapsto x \in \mathbb{R}$. 
\end{enumerate}
Furthermore, $\sigma(\Phi_A(f)) = f (\sigma(A))$. This is sometimes called the Spectral Mapping Theorem. 
\end{thm}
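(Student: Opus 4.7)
The natural plan is to build $\Phi_A$ first on polynomials and then extend to all of $C(\sigma(A))$ by density, using the Stone--Weierstrass theorem. For a polynomial $p(x) = \sum_{i=0}^{n} a_i x^i$ with $a_i \in \mathbb{C}$, set
\[
\Phi_A(p) = \sum_{i=0}^{n} a_i A^i \in \Cstar(A).
\]
Since $A$ is self-adjoint, $A^i$ is self-adjoint, and a direct check using $\sigma(A) \subset \mathbb{R}$ shows that the polynomial involution $p \mapsto \overline{p}$ on $C(\sigma(A))$ is intertwined with the $^*$-operation on $\sA$: $(\Phi_A(p))^* = \Phi_A(\overline{p})$. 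Linearity and multiplicativity of $\Phi_A$ on polynomials are automatic from the algebra operations in $\sA$, so $\Phi_A$ is already a unital $^*$-homomorphism on $\mathbb{C}[x]$ sending $1 \mapsto \Id$ and $\mathrm{id} \mapsto A$.

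The heart of the argument is to show that $\Phi_A$ is an isometry on polynomials, i.e.\ $\|\Phi_A(p)\| = \|p\|_{\infty, \sigma(A)}$. This decomposes into two facts. First, the spectral mapping theorem for polynomials: for any $\mu \in \mathbb{C}$, factor over $\mathbb{C}$ as $p(x) - \mu = c \prod_{j}(x - \alpha_j)$, whence
\[
\Phi_A(p) - \mu \Id = c \prod_{j}(A - \alpha_j \Id),
\]
and this product is invertible in $\sA$ iff each factor is, i.e.\ iff no $\alpha_j$ lies in $\sigma(A)$, i.e.\ iff $\mu \notin p(\sigma(A))$. Hence $\sigma(\Phi_A(p)) = p(\sigma(A))$. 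Second, $\Phi_A(p)$ is normal (polynomials in the self-adjoint element $A$ all commute with each other and with their adjoints, since $A=A^*$), so by the proposition equating norm and spectral radius for normal elements,
\[
\|\Phi_A(p)\| = R(\Phi_A(p)) = \sup_{\mu \in p(\sigma(A))} |\mu| = \sup_{x \in \sigma(A)} |p(x)| = \|p\|_{\infty,\sigma(A)}.
\]

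Next I would extend $\Phi_A$ to all of $C(\sigma(A))$. The earlier corollary shows $\sigma(A) \subset [-\|A\|, \|A\|]$, and standard arguments give that $\sigma(A)$ is closed, hence compact. The Stone--Weierstrass theorem then implies that polynomials are uniformly dense in $C(\sigma(A))$. Given $f \in C(\sigma(A))$ and a sequence of polynomials $p_n \to f$ uniformly, the sequence $\{\Phi_A(p_n)\}$ is Cauchy in $\sA$ by the isometry just proved; define $\Phi_A(f)$ as its limit, which is independent of the approximating sequence. The resulting map is a unital $^*$-homomorphism, is an isometry, and satisfies $\Phi_A(1) = \Id$, $\Phi_A(\mathrm{id}) = A$ by continuity. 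Its image is the closure of polynomials in $A$, which is precisely $\Cstar(A)$; injectivity follows from the isometry, so $\Phi_A$ is a $^*$-isomorphism onto $\Cstar(A)$.

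For the spectral mapping statement $\sigma(\Phi_A(f)) = f(\sigma(A))$, I would argue as follows. If $\mu \notin f(\sigma(A))$ then $g := (f - \mu)^{-1}$ is a well-defined element of $C(\sigma(A))$, and $\Phi_A(g)$ is an inverse of $\Phi_A(f) - \mu\Id$ inside $\Cstar(A)$, hence inside $\sA$, so $\mu \notin \sigma(\Phi_A(f))$. Conversely, if $\mu = f(\lambda)$ for some $\lambda \in \sigma(A)$, approximating $f$ uniformly by polynomials $p_n$, using the polynomial spectral mapping theorem and continuity of $\Phi_A$, gives that $\Phi_A(f) - \mu\Id$ lies in the norm closure of non-invertible elements and is therefore not invertible (the set of invertible elements is open). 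I expect the main obstacle to be the isometry step: the spectral mapping theorem for polynomials must be established by the factorization argument above, and then combined cleanly with the normality of $p(A)$ and the spectral radius formula to deduce equality of norms; once that is in place, everything else is a density argument.
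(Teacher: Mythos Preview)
Your proof is correct, but it follows a genuinely different route from the paper's. The paper proceeds by first establishing the Gelfand representation for commutative unital $\Cstar$-algebras (via the correspondence between characters and maximal ideals, and Stone--Weierstrass applied to $C(\widehat{\sA})$), and then applies it to the commutative algebra $\Cstar(A)$: the character space $\widehat{\Cstar(A)}$ is identified with $\sigma(A)$ via $\gamma\mapsto\gamma(A)$, and $\Phi_A$ is obtained as the composite $\Gamma^{-1}\circ\Psi$. The spectral mapping theorem is then deduced from the character description of the spectrum, $\sigma(B)=\{\gamma(B):\gamma\in\widehat{\Cstar(A)}\}$, together with the identity $\gamma(f(A))=f(\gamma(A))$ obtained by polynomial approximation.

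Your approach bypasses the Gelfand machinery entirely: you prove spectral mapping for polynomials by factoring $p(x)-\mu$ over $\mathbb{C}$, deduce the isometry on polynomials from the spectral radius formula for normal elements, and then extend by Stone--Weierstrass on the real compact set $\sigma(A)$. This is more elementary and self-contained, and for the self-adjoint case it is arguably cleaner. The paper's route, on the other hand, is more structural: it yields the identification $\Cstar(A)\cong C(\sigma(A))$ as an instance of a general theorem, works verbatim for normal (not just self-adjoint) elements, and makes the spectral mapping theorem a one-line consequence of Lemma~\ref{spectrum Gelfand} rather than requiring a separate limit-of-non-invertibles argument. Both arguments ultimately lean on Stone--Weierstrass and the spectral radius formula, but they organise the work differently.
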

The element $\Phi_A(f)$ for a continuous function $f$ should be interpreted as ``taking the function"  of the element $A$. We denote it by $f(A)$. 
\\
\\
To prove the theorem we need to do two things. First, we prove a structure theorem for \textit{commutative} $\Cstar$-algebras, which says that every unital commutative $\Cstar$-algebra is isomorphic to a space of continuous, vanishing at infinity functions on a compact Hausdorff space. This is called the Gelfand representation. We will then apply the Gelfand representation to the algebra generated by a self adjoint element, which is commutative, to prove the spectral theorem. The main technical difficulty is in proving the Gelfand representation. 
\\
\\
We proceed by explicitly constructing the Gelfand representation. Let $\sA$ be a $\Cstar$-algebra. 
\begin{defn}
A \textit{character} of $\sA$ is a surjective (algebraic) homomorphism $\gamma : \sA \rightarrow \mathbb{C}$. Denote by $\widehat{\sA}$ the set of all characters of $\sA$.
\end{defn}
A homomorphism of $\Cstar$-algebras is automatically continuous, so $\widehat{\sA} \subset \sA^*$. Our goal will be to identify $\sA$ with $C^0(\widehat{\sA})$, the space of continuous, vanishing at infinity functions on $\widehat{\sA}$, but we first need to equip $\widehat{\sA}$ with the appropriate topology. As a subspace of $\sA^*$ there are many topologies we could put on it, but the one which turns out to give the desired result is the weak$^*$-topology. 
\begin{defn}
Let $X$ be a Banach space. There is a natural inclusion $\iota: X \hookrightarrow X^{**}$ given by 
\[ \iota(A)(\Lambda) = \Lambda(A)\]
for $A \in X$ and $\Lambda \in X^*$. The weakest topology on $X^*$ which makes every element of $\iota(X) \subset X^{**}$ continuous is called the \textit{weak-*-topology} on $X^*$. 
\end{defn}
\begin{center}
\begin{tikzcd}[column sep=7em]
		X \arrow{rrd}[swap]{\iota} & X^* &  X^{**} \\
								&	&	 \iota(X) \arrow[draw=none]{u}[sloped, auto=false]{\subset} \arrow[bend right, dashed, sloped]{ul}
\end{tikzcd}
\end{center}
The weak* topology is the topology of pointwise convergence, in the following sense. 
\begin{prop}[MacCluer 5.32 ]
A sequence $\Lambda_n$ in $X^*$ converges in the weak* topology if and only if $\Lambda_n(A)$ converges in $\mathbb{C}$ for all $A \in X$. 
\end{prop}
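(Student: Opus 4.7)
The plan is to prove this as two implications. The forward direction is essentially tautological from the definition of the weak*-topology, while the reverse direction requires identifying a candidate limit functional in $X^*$ and then showing convergence to it in the weak*-sense.

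For the forward direction, suppose $\Lambda_n \to \Lambda$ in the weak*-topology. By definition, the weak*-topology is the coarsest topology making each map $\iota(A) : X^* \to \mathbb{C}$ continuous, where $\iota(A)(\Lambda) := \Lambda(A)$. Continuity of $\iota(A)$ at $\Lambda$ for every $A \in X$ immediately gives $\Lambda_n(A) = \iota(A)(\Lambda_n) \to \iota(A)(\Lambda) = \Lambda(A)$ in $\mathbb{C}$, for every $A \in X$.

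For the reverse direction, assume $\Lambda_n(A)$ converges in $\mathbb{C}$ for every $A \in X$. First I would define the pointwise limit
\[
\Lambda(A) := \lim_{n \to \infty} \Lambda_n(A).
\]
Linearity of $\Lambda$ is immediate from linearity of each $\Lambda_n$ and linearity of limits in $\mathbb{C}$. The key step is to show $\Lambda$ is bounded, so that $\Lambda \in X^*$ and can serve as a candidate limit. Here I would invoke the Banach--Steinhaus (uniform boundedness) principle: since $\{\Lambda_n(A)\}_n$ converges in $\mathbb{C}$ for every $A$, it is in particular pointwise bounded, so there exists $M > 0$ with $\|\Lambda_n\| \leq M$ for all $n$. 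Passing to the limit yields $|\Lambda(A)| \leq M \|A\|$ for all $A \in X$, hence $\Lambda \in X^*$.

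Finally, I would show $\Lambda_n \to \Lambda$ in the weak*-topology. A subbasic open neighbourhood of $\Lambda$ is of the form
\[
U = \{ \Gamma \in X^* : |\Gamma(A_i) - \Lambda(A_i)| < \varepsilon,\ i = 1, \dots, k \}
\]
for some $A_1, \dots, A_k \in X$ and $\varepsilon > 0$, since weak*-open sets are generated by preimages under the maps $\iota(A)$. For each $i$ there exists $N_i$ such that $|\Lambda_n(A_i) - \Lambda(A_i)| < \varepsilon$ for all $n \geq N_i$; taking $N = \max_i N_i$ gives $\Lambda_n \in U$ for all $n \geq N$, which is weak*-convergence.

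The main obstacle is verifying that the pointwise limit actually belongs to $X^*$; this is the only place completeness of $X$ enters, through Banach--Steinhaus. Without this step one would only have a linear functional on $X$ without any norm control, and the statement could fail in incomplete normed spaces.
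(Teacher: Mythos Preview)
The paper does not actually supply a proof of this proposition; it is simply stated with the citation ``MacCluer 5.32'' and then used. So there is no argument in the paper to compare against, and your proof must be assessed on its own.

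Your argument is correct. One point worth flagging is the interpretation of the statement. If it is read as ``$\Lambda_n \to \Lambda$ in the weak*-topology (for a given $\Lambda \in X^*$) if and only if $\Lambda_n(A) \to \Lambda(A)$ for every $A$'', then both directions are essentially immediate from the definition of the weak*-topology via the subbasic neighbourhoods you wrote down, and Banach--Steinhaus is not needed. You have instead proved the stronger reading: ``$(\Lambda_n)$ is weak*-convergent (to \emph{something} in $X^*$) if and only if $(\Lambda_n(A))$ is convergent in $\mathbb{C}$ for every $A$.'' For that version, constructing the limit functional and showing it lies in $X^*$ genuinely requires uniform boundedness, and you invoke it correctly. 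Your closing remark that completeness of $X$ enters only here, and that the statement can fail in incomplete normed spaces, is exactly right.

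Either reading suffices for the paper's immediate application (deducing that the weak*-topology is Hausdorff), so your proof is more than adequate.
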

It follows that  the weak* topology on $X^*$ is Hausdorff. Let $\Lambda \neq \Lambda'$ be two elements of $X^*$. There is some $A \in X$ such that $\Lambda(A) \neq \Lambda'(A)$. These two points in $\mathbb{C}$ can be separated by balls of radius $r$. Let 
\[ U = \{ f \in X^* : f(A) \in B(r, \Lambda(A)) \} \;\;\;\;\;\;\;\;\;\;\; V = \{ f \in X^* : f(A) \in B(r, \Lambda'(A)) \}.\] Since pointwise evaluation is weakly* continuous, $U$ and $V$ are weakly* open and separate $\Lambda$ and $\Lambda'$. Therefore, $ \widehat{\sA}$ is Hausdorff with the weak$^*$ topology. To prove that it is compact we use a well-known result about the weak$^*$ topology
\begin{thm}[Banach-Alaoglu]
Let $X$ be a Banach space. The unit ball of $X^*$ is compact in the weak$^*$ topology. 
\end{thm}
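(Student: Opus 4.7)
The plan is to prove Banach--Alaoglu by embedding the closed unit ball $B_{X^*} := \{\Lambda \in X^* : \|\Lambda\| \leq 1\}$ into a product of compact disks and invoking Tychonoff's theorem. For each $x \in X$ let $D_x := \{z \in \mathbb{C} : |z| \leq \|x\|\}$, which is a compact subset of $\mathbb{C}$, and form the product space
\[
K := \prod_{x \in X} D_x
\]
equipped with the product topology. By Tychonoff's theorem, $K$ is compact.

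Next I would define the evaluation map $\Phi : B_{X^*} \to K$ by $\Phi(\Lambda) = (\Lambda(x))_{x \in X}$. This is well-defined because $|\Lambda(x)| \leq \|\Lambda\|\|x\| \leq \|x\|$, and clearly injective since a functional is determined by its values. The crucial observation is that both the weak-$*$ topology on $B_{X^*}$ and the product topology on $K$ are exactly the topologies of pointwise convergence on $X$, so $\Phi$ is a homeomorphism onto its image $\Phi(B_{X^*}) \subset K$.

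The main step is then to show that $\Phi(B_{X^*})$ is closed in $K$. Given a point $(f_x)_{x \in X}$ in the closure, I would define $\Lambda : X \to \mathbb{C}$ by $\Lambda(x) := f_x$ and verify that $\Lambda$ is linear. The point is that for fixed $x,y \in X$ and $\alpha,\beta \in \mathbb{C}$, the set
\[
\{(g_z)_{z \in X} \in K : g_{\alpha x + \beta y} = \alpha g_x + \beta g_y\}
\]
is closed in $K$ because the three coordinate projections $g \mapsto g_x$, $g \mapsto g_y$, $g \mapsto g_{\alpha x + \beta y}$ are continuous; it contains $\Phi(B_{X^*})$, so it contains the closure. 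Boundedness $\|\Lambda\| \leq 1$ follows from $|\Lambda(x)| = |f_x| \leq \|x\|$, which is baked into the definition of $K$. Hence $\Lambda \in B_{X^*}$ and $(f_x) = \Phi(\Lambda)$, proving $\Phi(B_{X^*})$ is closed.

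I expect the main technical obstacle to be the invocation of Tychonoff's theorem for an uncountable product (which requires the axiom of choice) and the care needed in transferring closedness to linearity of the limit functional; otherwise the argument is a routine continuity-of-projection check. Once $\Phi(B_{X^*})$ is a closed subset of the compact space $K$, it is compact, and because $\Phi$ is a homeomorphism onto its image, $B_{X^*}$ is compact in the weak-$*$ topology, completing the proof.
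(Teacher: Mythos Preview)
Your proof is correct and is the standard Tychonoff-embedding argument for Banach--Alaoglu. Note, however, that the paper does not actually prove this theorem: it is stated without proof as a classical result and then applied to show that $\widehat{\sA}$ is weak$^*$ compact. So there is no ``paper's own proof'' to compare against; your argument simply supplies what the paper omits.
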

It follows from a spectral argument that $\widehat{\sA}$ is contained in the unit ball of $\sA^*$. We first need a lemma which allows us to characterize the spectrum of $A$ in terms of the characters. 
\begin{lemma}
\label{spectrum Gelfand}
For $A$ in a $\Cstar$-algebra $\sA$, 
\[ \sigma(A) = \{ \gamma(A) : \gamma \in \widehat{\sA} \}.\]
\end{lemma}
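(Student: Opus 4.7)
The plan is to prove the two inclusions separately, with direction $\{\gamma(A)\} \subseteq \sigma(A)$ being essentially formal, and the reverse inclusion requiring a construction via maximal ideals. Note that this lemma is being set up to prove the Gelfand representation, so I will treat $\sA$ as unital and commutative, as it must be for characters to even separate points (e.g.\ $M_n(\mathbb{C})$ has no characters for $n \geq 2$).

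For the easy inclusion, fix $\gamma \in \widehat{\sA}$. Since $\gamma$ is a nonzero multiplicative linear map, applying it to $\Id^2 = \Id$ forces $\gamma(\Id) = 1$. Suppose for contradiction that $\lambda := \gamma(A) \notin \sigma(A)$, so $B := (A - \lambda \Id)^{-1}$ exists in $\sA$. Applying $\gamma$ to $(A - \lambda \Id) B = \Id$ gives
\[
\bigl(\gamma(A) - \lambda\bigr)\, \gamma(B) \;=\; 1,
\]
but the left-hand side is $0$, a contradiction. Thus $\gamma(A) \in \sigma(A)$.

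For the reverse inclusion, given $\lambda \in \sigma(A)$ I would construct a character $\gamma$ with $\gamma(A) = \lambda$ by passing to a quotient. Consider the ideal $I := (A - \lambda \Id)\sA$, which is proper because $A - \lambda \Id$ is not invertible. By Zorn's lemma, $I$ sits inside a maximal ideal $\mathfrak{m}$. The set of invertible elements of $\sA$ is open (by the Neumann series argument already used to prove the spectral radius bound $R(A) \leq \liminf \|A^n\|^{1/n}$), so no ideal meets it; in particular $\overline{\mathfrak{m}}$ remains proper, and maximality forces $\mathfrak{m}$ to be closed. Hence $\sA/\mathfrak{m}$ is a unital commutative Banach algebra. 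By maximality of $\mathfrak{m}$, every nonzero class in $\sA/\mathfrak{m}$ is invertible, so $\sA/\mathfrak{m}$ is a field.

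The main obstacle is the next step: the Gelfand--Mazur theorem, which asserts that any Banach algebra that is also a field is isometrically isomorphic to $\mathbb{C}$. The standard proof uses holomorphy of the resolvent $z \mapsto (z\Id - B)^{-1}$ together with Liouville's theorem to show $\sigma(B) \neq \emptyset$ for any $B$, after which the field property forces $B - \mu \Id = 0$ for any $\mu \in \sigma(B)$. With Gelfand--Mazur in hand, the composition
\[
\gamma \colon \sA \twoheadrightarrow \sA/\mathfrak{m} \xrightarrow{\;\cong\;} \mathbb{C}
\]
is a character. Since $A - \lambda \Id \in I \subseteq \mathfrak{m}$, we have $\gamma(A - \lambda \Id) = 0$, i.e.\ $\gamma(A) = \lambda$, completing the argument. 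The only piece that is not either formal or already developed in the excerpt is Gelfand--Mazur, which would need to be cited or proved as a separate preliminary.
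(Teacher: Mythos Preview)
Your proof is correct and follows essentially the same maximal-ideal route as the paper: extend the principal ideal generated by $A-\lambda\Id$ to a maximal ideal via Zorn, and identify the quotient with $\mathbb{C}$ to obtain the desired character. The one step you flag as needing outside input --- Gelfand--Mazur --- is exactly what the paper handles inline: nonemptiness of the spectrum was already established in the course of proving the spectral-radius formula, so for any nonzero $B$ in the quotient field some $\mu\in\sigma(B)$ exists, forcing $B=\mu\Id$.
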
 
\begin{proof}
The result follows from the fact that there is a correspondence between the set of characters $\widehat{\sA}$ and $M(\sA)$, the set of maximal ideals in $\sA$, given by 
\[ \widehat{\sA} \ni \gamma \leftrightarrow \textrm{ker}(\gamma) \in M(\sA). \]
Suppose $\sI$ is a maximal ideal and let $\gamma: \sA \rightarrow \sA / \sI$ be the quotient map. Since $\sI$ is maximal, $\sA/\sI$ has no proper ideals. Therefore, for any non-zero element $A \in \sA/\sI$, we must have that $A$ is invertible. Otherwise, the ideal generated by $A$ would be a proper ideal. 
\\
\\
Now, fix $A \neq 0$ in $\sA/\sI$. By the formula for the spectral radius $R(A) = \lim_n \|A^n\|^{1/n}$ we know that the spectrum is non-empty. So for some $\lambda$, $(\lambda \Id - A)$ is not invertible. But the only non-invertible element in $\sA/\sI$ is $0$. Hence, $A = \lambda \Id$. This shows that $\sA/\sI = \mathbb{C}$, and hence that $\gamma: \sA \rightarrow \mathbb{C}$ is a character. 
\\
\\
Conversely, suppose $\gamma$ is a character. Then $\mathbb{C} \simeq \sA / \textrm{ker}(\gamma)$. Since $\mathbb{C}$ is a field, $\textrm{ker}(\gamma)$ is a maximal ideal. We have established a bijection between $\widehat{\sA}$ and $M(\sA)$. 
\\
\\
We now fix $A \in \sA$. Suppose $\lambda \in \sigma(A)$. Then $\lambda \Id - A$ is not invertible, and hence it generates a proper ideal of $\sA$, because $\Id$ is not contained in this ideal. We can then take the family of all ideals containing $A$ but not $\Id$ and order them by inclusion. Using Zorn's lemma, we find that $A$ is contained in a proper maximal ideal. Then using the correspondence, there is a character $\gamma$ such that $\lambda \Id - A \in \textrm{ker}(\gamma)$. It follows that $\gamma(A) = \lambda$. 
\\
\\
Conversely, suppose for some character $\gamma$, $\gamma(A) = \lambda$. Then $\gamma(\lambda \Id - A) = 0$, so $\lambda \Id - A \in \ker(\gamma)$, which is a proper maximal ideal by the correspondence, and hence $\lambda \Id - A$ is not invertible, i.e. $\lambda \in \sigma(A)$. 
\end{proof}
A consequence of the preceding lemma is that for any character $\gamma$, 
\[| \gamma(A) | \leq R(A) \leq \| A \|  \]
which implies that $\| \gamma \|_{\sA^*} \leq 1$, and is contained in a weak$^*$ compact set. 
Hence, to show that $\widehat{\sA}$ is weak$^*$ compact, it suffices to show that it is weak$^*$ closed. Let $\gamma_{\alpha}$ be a net in $\widehat{\sA}$ which converges to $\gamma$ in $\sA^*$. We need to show that $\gamma$ is a character. First, $\gamma$ is a homomorphism. 
\begin{align*}
\gamma(AB) &= \lim_{\alpha} \gamma_{\alpha}(AB) = \lim_{\alpha}\gamma_{\alpha}(A)\gamma_{\alpha}(B) = \gamma(A)\gamma(B) \\
\gamma(A^*) &= \lim_{\alpha} \gamma_{\alpha}(A^*) = \lim_{\alpha} \overline{\gamma_{\alpha}(A)} = \overline{(\gamma(A))}
\end{align*}
Since $\gamma$ is a homomorphism, it must be surjective. If $\gamma(\Id) = 0$ then $\gamma$ is trivial since $\gamma(A) = \gamma(\Id)\gamma(A) = 0$. But this contradicts the fact that $\gamma_{\alpha}$ converges to $\gamma$ in the weak$^*$ topology. Otherwise if $\gamma(\Id) = z \neq 0$, then for all $\lambda \in \mathbb{C}$, $\gamma(\lambda z^{-1} \Id) = \lambda$ and hence $\gamma$ is surjective. 
\\
\\
We have shown that $\widehat{\sA}$ with the weak$^*$ topology is a compact Hausdorff space. We will use this topological space to define the Gelfand transform. 
\begin{defn}
Let $\sA$ be a unital, commutative $\Cstar$-algebra. The Gelfand transform is the map $\Gamma$ defined by:
\begin{align*}
\Gamma: \sA 	&\rightarrow C(\widehat{\sA})  	& 			 			&\\
		A 	&\mapsto \widehat{A} 	\;\;\;\;\;\;\;\;\;\;\;\;\;\;\;\;\;\;\;\;\;\;\;\;\;\;\;\;\;\;\;\;\;\;\; \textrm{where}	& \widehat{A} :	 \widehat{\sA} &\rightarrow \mathbb{C} \\
			& 						& 			 \gamma 	& \mapsto \gamma(A) 
\end{align*}
\end{defn}
$\widehat{A}$ is the ``evaluation at $A$" map on the characters, which is continuous by the definition of the weak$^*$ topology. Lemma \ref{spectrum Gelfand} can be reformulated in terms of $\Gamma$ as 
\[ \Gamma(A) (\widehat{\sA}) = \sigma(A) \]
from which it follows that 
\[ \| \Gamma(A) \|_{\infty} = R(A).\]
We can also rewrite the bound $|\gamma(A) | \leq \| A\|$ as the fact that $\| \Gamma (A) \| \leq \| A\|$, so $\Gamma$ is norm decreasing. $\Gamma$ is a homomorphism when we equip $C(\widehat{\sA})$ with pointwise addition and multiplication and with an involution given by complex conjugation. Finally, $\Gamma(\sA)$ clearly separates points in $\widehat{\sA}$ since if $\gamma_1 \neq \gamma_2$ in $\widehat{\sA}$, then for some $A \in \sA$, $\gamma_1(A) \neq \gamma_2(A)$. Hence $\widehat{A}(\gamma_1) \neq \widehat{\sA}(\gamma_2)$. The last ingredient we need to prove the structure theorem for commutative unital $\Cstar$-algebras is the general Stone-Weierstrass theorem. 
\begin{thm}[Stone-Weierstrass ] %(Analysis Now 4.3.4)
Let $X$ be a compact Hausdorff space and $\sA$ be a self-adjoint ($f \in \sA \implies \overline{f} \in \sA$) sub-algebra of $C(X)$ which contains the constant functions and separates points in $X$. Then $\sA$ is uniformly dense in $C(X)$. 
\end{thm}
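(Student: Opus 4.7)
The plan is to prove Stone--Weierstrass by the classical lattice approach. First I would reduce to the real case. Because $\sA$ is self-adjoint, for any $f \in \sA$ both $\operatorname{Re}(f) = (f+\overline{f})/2$ and $\operatorname{Im}(f) = (f-\overline{f})/(2i)$ lie in $\sA$. Setting $\sA_{\mathbb{R}} = \{f \in \sA : f \text{ real-valued}\}$, it suffices to show that $\sA_{\mathbb{R}}$ is uniformly dense in $C_{\mathbb{R}}(X)$, since density then transfers to $C(X)$ by approximating real and imaginary parts separately.

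The key technical step, which I expect to be the main obstacle, is to show that the uniform closure $\overline{\sA_{\mathbb{R}}}$ is closed under the lattice operations $\vee$ and $\wedge$. Since $f \vee g = \tfrac{1}{2}(f+g) + \tfrac{1}{2}|f-g|$ and $f \wedge g = \tfrac{1}{2}(f+g) - \tfrac{1}{2}|f-g|$, this amounts to showing $f \in \overline{\sA_{\mathbb{R}}} \implies |f| \in \overline{\sA_{\mathbb{R}}}$. For this I would invoke the classical fact that the function $t \mapsto |t|$ on $[-M,M]$ can be uniformly approximated by polynomials $p_n$ with $p_n(0)=0$ (e.g.\ via the explicit iterative scheme $q_{n+1}(t) = q_n(t) + \tfrac{1}{2}(t - q_n(t)^2)$ converging uniformly to $\sqrt{t}$ on $[0,M^2]$, then substituting $t \mapsto f^2$). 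Applying such $p_n$ to $f$ produces elements of $\overline{\sA_{\mathbb{R}}}$ converging uniformly to $|f|$, since $\sA_{\mathbb{R}}$ is an algebra containing the constants.

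Next I would exploit separation of points together with the constants: for any $x \neq y$ in $X$ and any $a,b \in \mathbb{R}$, there exists $g \in \sA_{\mathbb{R}}$ with $g(x) \neq g(y)$, and then $f_{x,y} := a + (b-a)\,\dfrac{g - g(x)}{g(y)-g(x)} \in \sA_{\mathbb{R}}$ satisfies $f_{x,y}(x) = a$ and $f_{x,y}(y) = b$.

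Finally I would run the standard two-stage compactness argument. Fix $f \in C_{\mathbb{R}}(X)$ and $\varepsilon > 0$. For each pair $(x,y)$ pick $f_{x,y} \in \sA_{\mathbb{R}}$ interpolating $f$ at $x$ and $y$. Fixing $y$, the sets $U_x = \{z : f_{x,y}(z) < f(z) + \varepsilon\}$ are open and cover $X$; extract a finite subcover $U_{x_1}, \dots, U_{x_n}$ and set $g_y := f_{x_1,y} \wedge \cdots \wedge f_{x_n,y} \in \overline{\sA_{\mathbb{R}}}$, which satisfies $g_y < f + \varepsilon$ everywhere and $g_y(y) = f(y)$. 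Then the sets $V_y = \{z : g_y(z) > f(z) - \varepsilon\}$ are open and cover $X$; a finite subcover $V_{y_1}, \dots, V_{y_m}$ yields $h := g_{y_1} \vee \cdots \vee g_{y_m} \in \overline{\sA_{\mathbb{R}}}$ with $f - \varepsilon < h < f + \varepsilon$. Thus $\|h - f\|_{\infty} \leq \varepsilon$, proving that $\overline{\sA_{\mathbb{R}}} = C_{\mathbb{R}}(X)$, and hence that $\overline{\sA} = C(X)$.
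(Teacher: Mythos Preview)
Your proof is correct --- this is the standard lattice/Kakutani--Krein argument and all the steps are sound. However, the paper does not actually prove Stone--Weierstrass: it is stated without proof as a classical tool and then immediately applied to establish the structure theorem for commutative unital $\Cstar$-algebras, so there is no argument in the paper to compare yours against.
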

\begin{thm}[Structure Theorem for Commutative Unital $\Cstar$-algebras]
Every commutative unital $\Cstar$-algebra $\sA$ is isometrically isomorphic to $C(\widehat{\sA})$. 
\end{thm}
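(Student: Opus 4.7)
The plan is to verify that the Gelfand transform $\Gamma$, already constructed in the preceding discussion, is the desired isometric isomorphism. Most of the pieces are in place: $\Gamma$ is a unital $^*$-homomorphism, it is norm-decreasing, its range separates points of $\widehat{\sA}$, and $\widehat{\sA}$ is compact Hausdorff. The proof therefore reduces to establishing that $\Gamma$ is an isometry and that it is surjective.

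First I would show $\Gamma$ is isometric. The key observation is that since $\sA$ is commutative, every element $A \in \sA$ is automatically normal ($AA^* = A^*A$). Hence by the proposition relating the $\Cstar$-norm to the spectral radius,
\[ \|A\| = R(A). \]
On the other hand, it was shown (via Lemma \ref{spectrum Gelfand}) that $\Gamma(A)(\widehat{\sA}) = \sigma(A)$, so
\[ \|\Gamma(A)\|_\infty = \sup_{\gamma \in \widehat{\sA}} |\gamma(A)| = R(A). \]
Combining these, $\|\Gamma(A)\|_\infty = \|A\|$, so $\Gamma$ is an isometry. In particular $\Gamma$ is injective and its image is a closed subspace of $C(\widehat{\sA})$.

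Next I would show surjectivity via the Stone-Weierstrass theorem. The subalgebra $\Gamma(\sA) \subset C(\widehat{\sA})$ is self-adjoint: since $\Gamma$ is a $^*$-homomorphism, $\overline{\Gamma(A)} = \Gamma(A^*) \in \Gamma(\sA)$. It contains the constants because $\Gamma(\Id)(\gamma) = \gamma(\Id) = 1$ for every character $\gamma$ (any nontrivial homomorphism to $\mathbb{C}$ sends $\Id$ to $1$), so every constant function is the image of a scalar multiple of $\Id$. Finally, $\Gamma(\sA)$ separates points of $\widehat{\sA}$: if $\gamma_1 \neq \gamma_2$ there exists $A$ with $\gamma_1(A) \neq \gamma_2(A)$, i.e. $\Gamma(A)(\gamma_1) \neq \Gamma(A)(\gamma_2)$. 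By Stone-Weierstrass, $\Gamma(\sA)$ is dense in $C(\widehat{\sA})$, and since its image is already closed, $\Gamma(\sA) = C(\widehat{\sA})$.

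I do not expect a single hard step here since the excerpt has carefully prepared the commutative case; the mild subtlety is the isometry assertion, which hinges on the identity $\|A\| = R(A)$ from the proposition on normal elements, and on knowing that $\sup_{\gamma} |\gamma(A)|$ really equals $R(A)$ (rather than just being bounded by it), which is provided by Lemma \ref{spectrum Gelfand}. Once these two identifications are made explicit, isometry is immediate and surjectivity follows cleanly from Stone-Weierstrass plus the closed-range observation.
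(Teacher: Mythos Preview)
Your proposal is correct and matches the paper's argument: isometry via $\|A\|=R(A)$ for normal elements together with $\Gamma(A)(\widehat{\sA})=\sigma(A)$, then surjectivity by Stone--Weierstrass combined with the fact that an isometry has closed range. The only place the paper adds detail is the self-adjointness of $\Gamma(\sA)$: rather than assuming $\Gamma$ is a $^*$-homomorphism, the paper verifies it by noting that for $A=A^*$ one has $\gamma(A)\in\sigma(A)\subset\mathbb{R}$, hence $\Gamma(A)$ is real-valued, and then extends to arbitrary $A$ via the decomposition $A=B+iC$ with $B,C$ self-adjoint; since characters were defined only as algebra homomorphisms, this step is not entirely free.
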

\begin{proof}
Since $\sA$ is commutative, every element is normal. For any $A \in \sA$,
\[ \| \Gamma(A) \|_{\infty} = R(A) = \| A \|. \]
$\Gamma$ is an isometry, and hence it is also injective. $\Gamma(\sA)$ is a sub-algebra of $C(\widehat{\sA})$ which separates points in $\widehat{\sA}$ and contains the constant functions. We just need to show that $\Gamma(\sA)$ is self-adjoint. Suppose $A \in \sA$ is self-adjoint. Then for all $\gamma \in \widehat{\sA}$
\[\Gamma(A)(\gamma) = \widehat{A}(\gamma) = \gamma(A) \in \sigma(A) \subset \mathbb{R}. \]
Hence, as a function on $\widehat{\sA}$, $\Gamma(A) = \overline {\Gamma(A)}$. Now let $A \in \sA$ be arbitrary. It can be decomposed into a linear combination of self-adjoint elements, $A = B + iC$ where $B,C$ are self-adjoint. Then,
\[ \Gamma(A^*) = \Gamma(B - iC) = \Gamma(B) - i \Gamma(C) = \overline{\Gamma(B) + i \Gamma(C)} = \overline{\Gamma(A) }.\]
Hence $\Gamma(\sA)$ is self-adjoint. It follows from Stone-Weierstrass that $\Gamma(\sA)$ is dense in $C(\widehat{A})$, and since $\Gamma$ is an isometry in fact $\Gamma(\sA) = C(\widehat{\sA})$.
\end{proof}
We will now use the structure theorem to prove the spectral theorem. 
\begin{proof}[Proof. (of the Spectral Theorem)]
Fix $A \in \sA$ a normal element of a unital $\Cstar$-algebra. Let $Pol(\Id, A, A^*)$ be the set of all polynomials in the variables $\Id$, $A$, and $A^*$. These three variables commute because of the normality assumption, so $Pol(\Id, A, A^*)$ is commutative. The closure of $Pol(\Id, A, A^*)$ is the smallest sub-algebra containing $A$ and $\Id$, so $\Cstar(A) = \overline{Pol(\Id, A, A^*)}$, and in particular $\Cstar(A)$ is a commutative unital $\Cstar$-algebra. 
Applying the Gelfand's theorem to $\Cstar(A)$ we get that 
\[ \Gamma : \Cstar(A) \rightarrow C(\widehat{\Cstar(A)})\]
is an isometric isomorphism. 
\\
\\
Now let $\sigma(A)$ be the spectrum of $A$ in $\sA$ and $\sigma^*(A)$ be the spectrum of $A$ in $\Cstar(A)$. We have that $\sigma^*(A) \subset \sigma(A)$. By Lemma \ref{spectrum Gelfand} we have that the map 
\[ \widehat{\Cstar(A)} \ni \gamma \mapsto \gamma(A) \in \sigma^*(A)\]
is a surjection. Furthermore it is continuous because $\widehat{\Cstar(A)}$ has the weak$^*$ topology induced by the dual space of $\Cstar(A)$. We will now show that it is injective. Suppose that $\gamma_1(A) = \gamma_2(A)$. Then,
\[ \gamma_1(A^*) = \overline{\gamma_1(A)} = \overline{\gamma_2(A)} = \gamma_2(A^*)\]
and also $\gamma_1(\Id) = 1 = \gamma_2(\Id)$. Therefore, $\gamma_1$ and $\gamma_2$ agree on $Pol(\Id, A, A^*)$ and since they are continuous, they also agree on $\Cstar(A)$. $\gamma \mapsto \gamma(A)$ is a homeomorphism. Using this we define a map
\begin{align*}
\Psi : C(\sigma^*(A)) & \rightarrow C(\widehat{\Cstar(A)}) \\
\Psi(f)(\gamma) &= f(\gamma(A)) \;\;\;\;\;\;\;\;\;\; f \in C(\sigma^*(A)), \gamma \in \widehat{\Cstar(A)}
\end{align*}
which is an isometric isomorphism. Then $\Phi := \Gamma^{-1} \circ \Psi$ is an isometric isomorphism from $C(\sigma^*(A))$ to $\Cstar(A)$. We compute:
\[ \Gamma(A)(\gamma) = \gamma(A) = id(\gamma(A)) = \Psi (id)(\gamma) \]
hence $\Gamma(A) = \Psi(id)$ which implies that $\Phi(id) = A$. Similarly, 
\[ \Gamma(\Id)(\gamma) = \gamma(\Id) = 1 = 1(\gamma(\Id)) = \Psi(1)(\gamma) \]
hence $\Gamma(\Id) = \Psi(1)$ which implies that $\Phi(1) = \Id$. 
\\
\\
The last thing to do to get the desired mapping is to show that $\sigma^*(A) = \sigma(A)$. Let $\lambda \in \sigma^*(A)$ and fix $\epsilon >0$. We can find $f \in C(\sigma^*(A))$ such that $\| f\|_{\infty} = 1$, $0 \leq |f(x)| \leq 1$, $f(\lambda) = 1$ and $f(x) = 0$ whenever $| x - \lambda | > \epsilon$. Then $\Phi(f)$ is an element of $\Cstar(A)$ and we can compute
\[ \| (A - \lambda \Id) \Phi(f) \| = \| \Phi^{-1} ((A - \lambda \Id)\Phi(f)) \|_{\infty} = \| (id - \lambda) f \|_{\infty} \leq \epsilon. \]
We also have that $ \| \Phi(f) \| = \| f \|_{\infty} = 1$, so $\| A - \lambda \Id \| \leq \epsilon$. Hence, $A - \lambda \Id$ cannot be invertible in $\sA$ (if it has a set-theoretic inverse it must be unbounded) and $\lambda \in \sigma(A)$. 
\end{proof}
\begin{proof}[Proof. (of the spectral mapping theorem)]
First let $p(x,y) = \sum_{i,j} c_{ij} x^i y^j$ be a polynomial in two variables with coefficients in $\mathbb{C}$ and $\phi$ be a character. Since $\phi$ is a homomorphism, 
\[ \phi(p(A,A^*)) = \phi( \sum_{i,j} c_{ij} A^i(A^*)^j) =  \sum_{i,j}^n c_{ij} \phi(A)^i \overline{\phi(A)}^j = p (\phi(A), \overline{\phi(A)}). \]
Now let $f$ be a continuous function on $\sigma(A)$. By Stone-Weierstrass, there is a sequence of elements $p_n(A,A^*) \in Pol(\Id, A, A^*)$ converging to $f(A)$. Then since any character $\phi$ is continuous, we have that 
\[\phi(p_n(A,A^*)) \rightarrow \phi(f(A)).\]
 Consider the sequence, $\Phi^{-1}(p_n(A,A^*)) \in C(\sigma(A))$, which exists since $\Phi$ is an isomorphism. We can compute $\Phi^{-1}(p_n(A,A^*))$ concretely using the definition of $\Phi$. First,
\[ \Gamma(p_n(A, A^*)) = \widehat{p_n(A,A^*)} : \gamma \mapsto \gamma(p_n(A,A^*)) = p_n(\gamma(A),\overline{\gamma(A)}).\]
But also, letting $p_n'(x) = p_n(x, \overline{x})$ we have
\[ \Psi(p_n') : \gamma \mapsto  p_n'(\gamma(A)) = p_n(\gamma(A), \overline{\gamma(A)}) .\]
Hence $\Psi(p_n') = \Gamma(p_n(A,A^*))$ which implies that $p_n' = \Phi^{-1}(p_n(A,A^*))$. Since $\Phi^{-1}$ is continuous 
\[p_n' = \Phi^{-1}(p_n(A)) \rightarrow \Phi^{-1}(f(A)) = f \]
uniformly, and so for all $\gamma$ characters, $p_n' (\gamma (A)) \rightarrow f(\gamma(A))$. Hence $ f(\gamma(A)) = \gamma(f(A))$. The spectral mapping theorem follows by applying Lemma \ref{spectrum Gelfand} to $A$ and $f(A)$. 
\end{proof}
We have developed a tool that allows us to take functions of elements of a $\Cstar$ algebra in a way that is compatible with the structure of the algebra of continuous functions, and that is also compatible with the spectral characterization of elements. We now see how this can be used to give the rules for states and measurement in general quantum systems. 
\subsection{Spectral Measures}
We are looking for a way to construct a measure on the spectrum of an element of our $\Cstar$-algebra. One important application of this is to give a probability distribution on the set of outcomes of a measurement in quantum mechanics. However, we will also use this result in other contexts, such as to define the full counting statistics measure which describes the energy fluctuations of the system and the reservoir. The main tool we will use is the Riesz-Markov theorem, which allows us to associate measures to continuous linear functionals on the space of continuous functions. 
\begin{thm}[Riesz-Markov]
Let $X$ be a locally compact Hausdorff space and $\Lambda$ be a continuous linear functional on $C^0(X)$, the space of complex-valued continuous functions on $X$ which vanish at infinity. There is a unique regular complex Borel measure $\mu$ on $X$ satisfying 
\[ \Lambda (f) = \int_X f(x) d\mu(x)  \]
for each $f \in C^0(X)$.
\end{thm}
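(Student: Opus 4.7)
The plan is to reduce the complex case to the positive real case, construct the associated measure via an outer measure built from $\Lambda$ on open sets, verify the representation formula by approximating continuous functions by simple ones, and finally deduce uniqueness from regularity together with Urysohn's lemma.

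First I would write $\Lambda = \Lambda_1 + i\Lambda_2$ where $\Lambda_1(f) := \tfrac{1}{2}(\Lambda(f) + \overline{\Lambda(\bar f)})$ sends real-valued $f$ to real numbers, and similarly for $\Lambda_2$. Each real functional admits a Jordan-type decomposition $\Lambda_j = \Lambda_j^+ - \Lambda_j^-$ into positive linear functionals, via $\Lambda_j^+(f) := \sup\{\Lambda_j(g) : 0 \leq g \leq f\}$ for $f \geq 0$, extended by linearity, with finiteness guaranteed by the boundedness of $\Lambda$. This reduces everything to representing an arbitrary positive continuous linear functional on $C^0(X)$ by a finite positive Borel measure.

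For positive $\Lambda$, I would build the measure via Carath\'eodory's extension. Set
\[ \mu^*(V) := \sup\{\Lambda(f) : f \in C_c(X),\ 0 \leq f \leq 1,\ \mathrm{supp}(f) \subset V\} \]
for open $V$, and
\[ \mu^*(E) := \inf\{\mu^*(V) : V \supset E,\ V \text{ open}\} \]
for arbitrary $E \subset X$. Monotonicity and countable subadditivity of $\mu^*$ follow from a partition-of-unity argument subordinate to finite open subcovers of compact supports, with Urysohn's lemma producing the continuous cutoffs. Restricting $\mu^*$ to Carath\'eodory-measurable sets yields the Borel measure $\mu$; showing that every open set is measurable uses local compactness in an essential way, to separate pieces of the open set from their complements by compactly supported continuous functions on which positivity of $\Lambda$ can be exploited.

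Next I would verify $\Lambda(f) = \int f\,d\mu$ for $f \in C^0(X)$, treating first $f \geq 0$. The standard device is to slice by $E_{n,k} := \{k/n \leq f < (k+1)/n\}$, sandwich the layer indicators between continuous bumps obtained from Urysohn, compare $\Lambda$ on these bumps with $\mu$ of the slices using the defining supremum, and let $n \to \infty$. Linearity and the complex decomposition handle the general $f$. For uniqueness, if $\mu, \nu$ both represent $\Lambda$, then for any compact $K \subset V$ open, Urysohn gives $f$ with $\Id_K \leq f \leq \Id_V$, so $\mu(K) \leq \Lambda(f) \leq \nu(V)$ and symmetrically; outer/inner regularity then forces $\mu = \nu$ on all Borel sets. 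The main obstacle, I expect, is the Carath\'eodory-measurability step: showing $\mu^*$ is additive across open sets requires delicate manipulation of continuous partitions of unity, and is the place where local compactness, Urysohn, and positivity of $\Lambda$ must all be orchestrated simultaneously.
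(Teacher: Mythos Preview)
The paper does not prove the Riesz--Markov theorem; it is stated without proof as a standard result from measure theory and then applied to construct spectral measures. Your outline is the classical proof (essentially the one in Rudin's \emph{Real and Complex Analysis}): reduce to positive functionals by Jordan decomposition, build an outer measure from $\Lambda$ on open sets via compactly supported test functions, apply Carath\'eodory, and recover the integral representation by slicing. The sketch is correct in structure; since there is no proof in the paper to compare against, there is nothing further to contrast.
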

Now consider a normal element in a $\Cstar$-algebra $A \in \sA$. By using the functional calculus, we may define $f(A) \in \sA$ for each $f \in C^0(\sigma(A))$. Furthermore the map
$f \mapsto f(A)$
is continuous with the uniform topology on $C^0(\sigma(A))$. Therefore, a continuous linear functional $\omega$ on $\sA$ induces a continuous linear functional on $C^0(\sigma(A))$ by 
\[ \Lambda_{A, \omega} (f) : = \omega(f(A)).\]
We can then apply the Riesz-Markov theorem to obtain a measure $\mu_{A,\omega}$ on $\sigma(A)$ satisfying 
\[ \omega(f(A)) = \int_{\sigma(A)} f(x) d\mu_{A, \omega}(x). \]
for all $f \in C^0(\sigma(A))$. $\mu_{A, \omega}$ is called the spectral measure of the pair $(A, \omega)$. In the special case that $\sA = \sB(\sH)$ for some Hilbert space $\sH$, we may consider a linear functional induced by a vector $\psi \in \sH$
\[ \omega_{\psi}(B) = \langle \psi, B \psi \rangle \]
for all $B \in \sA$. The spectral measure of the pair $(A, \omega_{\psi})$ is then denoted by $\mu_{A, \psi}$. 
\\
\\
In this setting, the spectral measure allows us to extend the functional calculus to all bounded Borel functions on the spectrum. For $f$ bounded and Borel on $\sigma(A)$ we define $f(A)$ as a map from $\sH$ to $\sH$ such that 
\[ \langle \psi, f(A) \psi \rangle = \int_{\sigma(A)} f(x) d\mu_{A, \psi} (x) \]
for all $\psi \in \sH$. We may then use the polarization identity to recover $\langle \psi, f(A) \phi \rangle$ for all $\psi, \phi \in \sH$. Then using the Riesz lemma on the sesquilinear form $(\psi, \phi) \mapsto \langle \psi, f(A) \phi \rangle$ we can reconstruct the bounded operator $f(A)$. All the desired properties follow by applying the continuous functional calculus. 
\subsection{Positive Elements and Approximate Identity}
To conclude our discussion of the spectral theory, we will introduce two tools which are useful in practice: the positive elements and approximate identities. \\
\\
The positive elements are a special class of self-adjoint elements. They form a cone inside the $\Cstar$ algebra and can be used to induce an order relation. 
\begin{defn}
Let $A \in \sA$ be an element of a $\Cstar$-algebra. $A$ is called \textit{positive} if $\sigma(A) \subset [0, \infty )$. We denote the set of all positive elements of $\sA$ by $\sA^+$. 
\end{defn}
Since self-adjoint elements have real spectrum, a positive element is automatically self-adjoint. This definition is equivalent to several other conditions. First, we can reformulate it in terms of an analytic property. 
\begin{lemma}
A self-adjoint element $A \in \sA$ is positive if and only if $\| \Id - A \| / \| A \| \leq 1$. 
\end{lemma}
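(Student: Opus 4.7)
The plan is to translate the statement into a purely spectral condition and then apply the spectral mapping theorem together with the identity $\|T\| = R(T)$ for normal elements. Since $A$ is self-adjoint, the corollary on spectra of self-adjoint elements gives $\sigma(A) \subset [-\|A\|, \|A\|]$, and by definition $A$ is positive precisely when this inclusion tightens to $\sigma(A) \subset [0, \|A\|]$. I will characterize this tightening by controlling the norm of $\|A\|\Id - A$, which is the natural normalization of the element $\Id - A$ appearing in the statement (up to rescaling $A$ to be of unit norm, the two conditions coincide).

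The key computation is that $\|A\|\Id - A$ is self-adjoint, hence normal, so by the earlier proposition its norm equals its spectral radius. Applying the spectral mapping theorem to the continuous function $x \mapsto \|A\| - x$ then yields
\[ \bigl\| \|A\|\Id - A \bigr\| \;=\; R(\|A\|\Id - A) \;=\; \sup\bigl\{\, \bigl|\|A\| - \lambda\bigr| : \lambda \in \sigma(A) \,\bigr\}. \]
Thus the condition $\bigl\| \|A\|\Id - A \bigr\| \leq \|A\|$ is equivalent to $\bigl|\|A\| - \lambda\bigr| \leq \|A\|$ for every $\lambda \in \sigma(A)$.

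From this point the argument reduces to a short interval argument on the real line. If $A \geq 0$, then every $\lambda \in \sigma(A)$ lies in $[0, \|A\|]$, so $\|A\| - \lambda \in [0, \|A\|]$ and the inequality is immediate. Conversely, if $\bigl|\|A\| - \lambda\bigr| \leq \|A\|$ for every $\lambda \in \sigma(A)$, then $\|A\| - \lambda \in [-\|A\|, \|A\|]$, i.e.\ $\lambda \in [0, 2\|A\|]$; intersecting with the a priori inclusion $\lambda \in [-\|A\|, \|A\|]$ forces $\lambda \in [0, \|A\|]$, and hence $A \geq 0$.

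No serious obstacle is anticipated: the only subtlety is the bookkeeping around the factor of $\|A\|$. One should read the lemma's condition as $\bigl\|\Id - A/\|A\|\bigr\| \leq 1$, equivalently $\bigl\| \|A\|\Id - A \bigr\| \leq \|A\|$, rather than as the literal unnormalized inequality $\|\Id - A\| \leq \|A\|$, which admits counterexamples such as $A = \mathrm{diag}(-2, 3)$ (a self-adjoint element satisfying $\|\Id - A\| = \|A\| = 3$ despite not being positive). Modulo this interpretation, the proof is a direct application of the $\Cstar$-algebraic tools already at hand, namely the identity $\|T\| = R(T)$ for normal $T$ and the spectral mapping theorem.
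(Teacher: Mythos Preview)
Your proof is correct and takes the same approach as the paper: spectral mapping together with $R(T) = \|T\|$ for normal $T$, reducing the norm condition to an interval condition on $\sigma(A)$. You are also right that the statement as literally written is false (your counterexample is valid); the paper's own proof tacitly works with $\bigl\|\,\|A\|\Id - A\,\bigr\|/\|A\|$ rather than $\|\Id - A\|/\|A\|$, as one sees from its conclusion $\sigma(A)\subset[0,2\|A\|]$ in the converse direction.
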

\begin{proof}
Suppose $A$ is positive. Then by the spectral radius, $\sigma(A) \subset [0, \| A \| ]$ and hence by the spectral mapping theorem $\sigma ((\Id - A)/\|A\|) \subset [0, 1]$. This implies that $ \| \Id - A\| /\| A \| \leq 1$, again by the spectral radius. 
\\
\\
Conversely, suppose $ \| \Id - A\| /\| A \| \leq 1$. Then by the spectral radius $\sigma((\Id - A)/\|A \|) \subset [-1,1]$. Then $\sigma(A) \subset [0, 2\|A\|]$ which implies that $A$ is positive. 
\end{proof}
We also have some equivalent algebraic characterizations \cite{BR1}. The last one in particular is useful in practice. 
\begin{lemma}
TFAE:
\begin{enumerate}
\item $A$ is positive,
\item $A = B^2$ for some $B \in \sA$ positive (this $B$ is also unique and we denote it by $\sqrt{A} $),
\item $A = B^*B$ for some $B \in \sA$.
\end{enumerate}

\end{lemma}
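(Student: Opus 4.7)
The implication $(2) \Rightarrow (3)$ is immediate: a positive square root $B$ is self-adjoint, so $A = B^2 = B^*B$. The implication $(1) \Rightarrow (2)$ (existence of a positive square root) is a direct application of the continuous functional calculus developed above. Since $\sigma(A) \subset [0,\infty)$, the function $f(x) = \sqrt{x}$ is continuous on $\sigma(A)$, so I set $B := \Phi_A(f) \in \Cstar(A)$. By the spectral mapping theorem $\sigma(B) = f(\sigma(A)) \subset [0,\infty)$, so $B$ is positive, and $B^2 = \Phi_A(f)^2 = \Phi_A(f^2) = \Phi_A(\mathrm{id}) = A$ because $\Phi_A$ is a $^*$-algebra homomorphism. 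For uniqueness, if $B'$ is any positive element with $(B')^2 = A$, then on $\sigma(B') \subset [0,\infty)$ the squaring map is injective, and the functional calculus applied to $B'$ gives $B' = \Phi_{B'}(\sqrt{\cdot})\bigl((B')^2\bigr) = \Phi_{B'}(\sqrt{\cdot})(A)$, which agrees with $B$ by the uniqueness in the spectral theorem.

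The main work lies in $(3) \Rightarrow (1)$. Here I would proceed by contradiction, assuming $A = B^*B$ but $\sigma(A) \not\subset [0,\infty)$. Note first that $A$ is automatically self-adjoint, so by the functional calculus I may decompose $A = A_+ - A_-$ where $A_\pm := \Phi_A(f_\pm)$ with $f_\pm(x) = \max(\pm x, 0)$. These elements satisfy $A_\pm \geq 0$, $A_+ A_- = 0$, and are nonzero if $\sigma(A)$ contains negative values. The plan is to show $A_- = 0$, which contradicts the assumption.

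The key trick is to consider $C := B A_-$. A direct computation using $A_+ A_- = 0$ gives
\[ C^* C = A_- B^* B A_- = A_- (A_+ - A_-) A_- = -A_-^3, \]
so $\sigma(C^*C) \subset (-\infty, 0]$ by the spectral mapping theorem applied to $x \mapsto -x^3$ on $\sigma(A_-) \subset [0,\infty)$. This is the half of the argument that uses the hypothesis $A = B^*B$; the other half uses only general algebraic identities. Writing $C = D + iE$ with $D, E$ self-adjoint, a short calculation yields $C^*C + CC^* = 2(D^2 + E^2)$, an element which is a sum of squares of self-adjoints, hence positive by the already-established $(2) \Rightarrow (1)$ direction applied to $D$ and $E$ individually (each being self-adjoint with real spectrum, $D^2$ and $E^2$ are positive by the spectral mapping theorem). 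From this I deduce $\sigma(CC^*) \subset [0,\infty)$, and then invoke the standard identity $\sigma(C^*C) \setminus \{0\} = \sigma(CC^*) \setminus \{0\}$ to conclude $\sigma(C^*C) \subset [0,\infty)$ as well. Combined with $\sigma(C^*C) \subset (-\infty, 0]$, this forces $\sigma(C^*C) = \{0\}$, hence $\|C^*C\| = R(C^*C) = 0$, hence $A_-^3 = 0$, and finally $A_- = 0$ by applying $\sqrt[3]{\cdot}$ via the functional calculus.

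The hard part is the $(3) \Rightarrow (1)$ implication, and within it the specific trick of introducing $C = BA_-$ and the auxiliary decomposition $C = D + iE$. The subtlety is that one cannot conclude $\sigma(C^*C) \subset [0,\infty)$ directly (that is precisely what we are trying to prove for general products $B^*B$), so one must pass through the self-adjoint real and imaginary parts, for which positivity of the squares is already justified by the spectral mapping theorem without any circularity.
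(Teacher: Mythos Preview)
The paper does not actually prove this lemma; it simply cites \cite{BR1}. Your argument is the standard one found there (and in most $\Cstar$-algebra texts), and the overall strategy is correct: functional calculus for $(1)\Rightarrow(2)$, and for $(3)\Rightarrow(1)$ the Kaplansky trick of setting $C=BA_-$, computing $C^*C=-A_-^3$, and using $C^*C+CC^*=2(D^2+E^2)$ together with $\sigma(C^*C)\setminus\{0\}=\sigma(CC^*)\setminus\{0\}$.

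Two small points worth tightening. First, in the uniqueness argument for the square root your notation $\Phi_{B'}(\sqrt{\cdot})(A)$ is not well-formed: $\Phi_{B'}$ takes continuous functions on $\sigma(B')$, not operators. The clean route is to note that $B=\Phi_A(\sqrt{\cdot})$ lies in $\Cstar(A)\subset\Cstar(B')$, so $B$ commutes with $B'$; then $(B-B')(B+B')=A-A=0$, and multiplying by $B-B'$ and using positivity of $B,B'$ forces $B=B'$. Second, when you say $2(D^2+E^2)$ is positive you are implicitly using that the sum of two positive elements is positive, which is not quite ``$(2)\Rightarrow(1)$ applied individually'' but rather the cone property. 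That does follow from the analytic characterization $\|\,\|A\|\Id-A\|\le\|A\|$ proved just before this lemma in the paper, so there is no circularity, but you should invoke it explicitly.
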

 %\begin{proof}
%$(1) \implies (2)$ If $A$ is positive, then the function $x \mapsto \sqrt{x}$ is continuous on its spectrum, since $\sigma(A) \subset [0, \infty)$. We can use the functional calculus to get the existence of an element $B \in \Cstar(A)$ satisfying $B^2 = A$. Furthermore, by the spectral mapping theorem we have that the spectrum of $B$ is positive and hence $B $ is a positive element. 
%\\ $(2) \implies (3)$ is obvious. 
%\\ $(3) \implies (1) $ $A$ is clearly self-adjoint. We can apply the functional calculus to $A$ to get that 
%\end{proof}
We can use the positive elements to put an order relation on the $\Cstar$-algebra by 
\[ A > B \iff A - B \in \sA^+ .\]
The fact that this is an order relation follows from the cone structure of the positive elements: $\sA^+$ is closed under addition and multiplication by positive scalars, and it satisfies $\sA^+ \cap -\sA^+ = \{ 0 \}$. In addition to being an order relation, because of the $\Cstar$-algebra structure $>$ has some additional properties. 

\begin{lemma}
\label{order lemma}
Let $A \geq B \geq 0$. 
\begin{enumerate}
\item $\|A\| \geq \| B \|$
\item  $C^*AC \geq C^*BC \geq 0$ for any $C \in \sA$. 
\item If $\sA$ has an identity and $\lambda > 0$ then $(B + \lambda \Id)^{-1} \geq (A + \lambda \Id)^{-1}.$
\end{enumerate}
\end{lemma}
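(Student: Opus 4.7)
My plan is to handle the three items in order, leveraging the spectral characterization of positive elements and the two characterizations of positivity already established in the preceding lemma (namely $A \in \sA^+$ iff $A = B^*B$ for some $B$, iff $A$ has a positive square root $\sqrt{A}$).

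For part (1), I will argue spectrally. Since $A$ is positive, $\sigma(A) \subset [0, R(A)] = [0, \|A\|]$, so by the spectral mapping theorem $\sigma(\|A\|\Id - A) \subset [0, \|A\|]$, showing $\|A\|\Id - A \geq 0$. Adding the hypothesis $A - B \geq 0$ gives $\|A\|\Id - B \geq 0$, and combined with $B \geq 0$ this forces $\sigma(B) \subset [0, \|A\|]$. Normality of $B$ then yields $\|B\| = R(B) \leq \|A\|$.

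For part (2), I will use the third algebraic characterization of positivity: an element is positive if and only if it can be written as $X^*X$. Since $A - B \geq 0$, the previous lemma provides a positive square root $\sqrt{A-B} \in \sA$, and then
\[
C^*(A-B)C = C^* \sqrt{A-B}\,\sqrt{A-B}\, C = (\sqrt{A-B}\, C)^*(\sqrt{A-B}\, C) \geq 0.
\]
Taking $B = 0$ gives $C^*AC \geq 0$, and the general case follows by rearranging $C^*AC - C^*BC = C^*(A-B)C \geq 0$.

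For part (3), which is the main technical step, I will work with $A' := A + \lambda \Id$ and $B' := B + \lambda \Id$. Since $\sigma(A') \subset [\lambda, \infty)$ and similarly for $B'$, both are invertible positive elements by the spectral mapping theorem, and hence admit positive square roots $A'^{1/2}, B'^{1/2}$ which are themselves invertible (with positive inverses $A'^{-1/2}, B'^{-1/2}$, definable via the functional calculus applied to $x \mapsto x^{-1/2}$ on $[\lambda,\infty)$). Note that $A' \geq B'$. Sandwiching by $C = B'^{-1/2}$ via part (2) gives
\[
B'^{-1/2} A' B'^{-1/2} \;\geq\; B'^{-1/2} B' B'^{-1/2} = \Id.
\]
The key auxiliary fact I need to establish is: if $T$ is a positive element with $T \geq \Id$, then $T$ is invertible and $T^{-1} \leq \Id$. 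This follows from the spectral mapping theorem, since $T - \Id \geq 0$ together with $T \geq 0$ forces $\sigma(T) \subset [1,\infty)$, hence $\sigma(T^{-1}) \subset (0,1]$, which gives $\sigma(\Id - T^{-1}) \subset [0,1) \subset [0,\infty)$ so that $\Id - T^{-1} \geq 0$. Applying this to $T = B'^{-1/2} A' B'^{-1/2}$ yields $B'^{1/2} A'^{-1} B'^{1/2} \leq \Id$. Sandwiching once more by $B'^{-1/2}$ (again via part (2)) gives $A'^{-1} \leq B'^{-1}$, i.e.\ $(A + \lambda \Id)^{-1} \leq (B + \lambda \Id)^{-1}$.

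The main obstacle is the bookkeeping in part (3): one must justify the existence and positivity of $B'^{-1/2}$ (not merely $\sqrt{B'}$), recognize the double-sandwich trick, and prove the auxiliary reversal $T \geq \Id \Rightarrow T^{-1} \leq \Id$. Once that reversal is in hand, parts (1) and (2) are essentially immediate applications of the spectral mapping theorem and the $X^*X$ characterization of positivity, respectively.
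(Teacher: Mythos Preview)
Your proof is correct and follows the standard approach found in Bratteli--Robinson (the reference [BR1] cited in the paper). The paper itself states this lemma without proof, so there is nothing to compare against; your argument would serve as a complete proof if one were to be inserted.
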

We will use this order relation to define an approximate identity. 
\begin{defn}
An \textit{approximate identity} of a $\Cstar$-algebra $\sA$ is a net of elements $\{E_{\alpha}\}_{\alpha}$ satisfying 
\begin{enumerate}
\item $\| E_{\alpha} \| \leq 1$,
\item $\alpha \leq \beta \implies E_{\alpha} \leq E_{\beta}$,
\item $\lim_{\alpha} \| E_{\alpha}A - A \| = \lim_{\alpha} \| AE_{\alpha} - A \| = 0$ for all $A \in \sA$. 
\end{enumerate}
\end{defn}
\begin{prop}
Every $\Cstar$-algebra has an approximate identity. 
\end{prop}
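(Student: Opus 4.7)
The plan is to take as index set
\[
\Lambda := \{ E \in \sA^+ : \|E\| < 1 \},
\]
directed by the order $\leq$ on positive elements, and to set $E_\alpha := \alpha$. Properties (1) and (2) of the definition are then tautological, and the real content is (a) that $\Lambda$ is upward directed, and (b) that $\|E_\alpha A - A\| \to 0$ and $\|A E_\alpha - A\| \to 0$ for every $A \in \sA$.

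For directedness I would use a Cayley-type trick. Working in the unitization, the continuous functions $f(x) = x/(1-x)$ on $[0,1)$ and $g(x) = x/(1+x)$ on $[0,\infty)$ both vanish at $0$, so functional calculus gives maps $\Lambda \to \sA^+$ by $E \mapsto E(\Id - E)^{-1}$ and $\sA^+ \to \Lambda$ by $F \mapsto F(\Id + F)^{-1}$, and a direct spectral computation shows they are mutual inverses. Since $F(\Id + F)^{-1} = \Id - (\Id + F)^{-1}$, Lemma \ref{order lemma}(3) makes the second map order-preserving. Hence for any $E_1, E_2 \in \Lambda$ with preimages $F_1, F_2 \in \sA^+$, the element $E_3 := (F_1 + F_2)(\Id + F_1 + F_2)^{-1} \in \Lambda$ dominates both.

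For (b) I would first treat positive $A \in \sA^+$. The approximants $E_n := A(1/n + A)^{-1}$ sit in $\Lambda$ (using functional calculus with $x/(1/n + x)$, which vanishes at $0$), and a spectral computation gives
\[
A - E_n A = \tfrac{1}{n} A (1/n + A)^{-1}, \qquad \|A - E_n A\| \leq 1/n.
\]
The net is then controlled by the observation that, for any $E_\alpha \geq E_n$, Lemma \ref{order lemma}(2) with $C = (\Id - E_\alpha)^{1/2}$ gives $(\Id - E_\alpha)^2 \leq \Id - E_\alpha \leq \Id - E_n$; conjugating by $A$ and using parts (2) and (1) of the lemma yields
\[
\|(\Id - E_\alpha) A\|^2 \;=\; \|A(\Id - E_\alpha)^2 A\| \;\leq\; \|A(\Id - E_n) A\| \;\leq\; \|A\| \cdot \|A - E_n A\| \;\leq\; \|A\|/n .
\]
The general case reduces to this via the $C^*$-identity: setting $B := (AA^*)^{1/2} \in \sA^+$,
\[
\|E_\alpha A - A\|^2 = \|(E_\alpha - \Id) AA^* (E_\alpha - \Id)\| = \|(E_\alpha - \Id) B\|^2,
\]
and the positive case gives $\|(E_\alpha - \Id) B\| \to 0$; the right-multiplication statement is symmetric with $B' = (A^*A)^{1/2}$.

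The main obstacle is the directedness: the positive part of the open unit ball is not obviously closed under any natural ``join'' operation -- $(E_1 + E_2)/2$ need not have norm below one, and $\max(E_1, E_2)$ does not exist in a general $\Cstar$-algebra. The Cayley-type change of variables is precisely what converts the bounded quantities $E_i$ into elements $F_i$ of the \emph{unbounded} positive cone which can simply be added, after which Lemma \ref{order lemma}(3) supplies the needed monotonicity. Everything else reduces to routine functional calculus.
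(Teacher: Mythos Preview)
Your proof is correct and takes a genuinely different route from the paper's. The paper indexes the net by \emph{finite subsets} $\alpha=\{A_1,\dots,A_n\}\subset\sA$ (ordered by inclusion) and sets $E_\alpha=f_n(A_1^2+\cdots+A_n^2)$ with $f_n(t)=nt/(1+nt)$. In that construction directedness is trivial (union of finite sets), while the work goes into proving monotonicity $E_\alpha\le E_\beta$ via Lemma~\ref{order lemma}(3), and convergence via the explicit identity $(\Id-E_\alpha)(\sum A_i^2)(\Id-E_\alpha)=g_n(\sum A_i^2)$ with $\|g_n\|_\infty\le 1/4n$. You invert the trade-off: by taking $\Lambda=\{E\in\sA^+:\|E\|<1\}$ with the intrinsic order, monotonicity is tautological and the work shifts to directedness, which you handle with the Cayley transform $E\leftrightarrow E(\Id-E)^{-1}$. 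Your convergence argument is also organized differently: rather than a global bound on $(\Id-E_\alpha)A(\Id-E_\alpha)$, you exhibit for each positive $A$ a cofinal sequence $E_n=A(1/n+A)^{-1}$ and bootstrap via $(\Id-E_\alpha)^2\le\Id-E_\alpha\le\Id-E_n$. Both proofs ultimately rest on the same ingredients (functional calculus with $x/(c+x)$-type functions and Lemma~\ref{order lemma}), but yours yields the canonical ``maximal'' approximate identity --- every approximate identity is a subnet of it --- at the cost of a slightly less explicit index set; the paper's version is more hands-on and makes the rate $\|A-AE_\alpha\|^2\le 1/4|\alpha|$ visible.
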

\begin{proof}
For a $\Cstar$-algebra $\sA$, let $\Lambda$ be the collection of all finite subsets of $\sA$. Order $\Lambda$ by inclusion. Define the sequence of functions $f_n : [0, \infty) \rightarrow [0,1)$ by
\[ f_n(t) = \frac{nt}{1 + nt}.\]
Suppose $\alpha = \{ A_1, ..., A_n \} \in \Lambda$. Then since $A_1^2 + ... + A_n^2$ is a positive element we can define 
\[ E_{\alpha} := f_n(A_1^2 + ...+ A_n^2).\]
Since $f_n$ is non-negative, by spectral mapping $E_{\alpha}$ is a positive element. Also since $ \| f_n \|_{\infty} \leq 1$ we have that $ \|E_{\alpha} \| \leq 1$. 
Now let $\alpha \leq \beta$ in $\Lambda$ so that $\alpha = \{ A_1, ..., A_m\}$ and $\beta = \{ A_1, ..., A_n \}$ with $m \leq n$. We have that 
\[ \Id + m (A_1^2 + ... + A_m^2) \leq \Id + n (A_1^2 + ... + A_n^2)\]
and by the properties of our order relation it follows that
\[ ( \Id + n (A_1^2 + ... + A_n^2) )^{-1} \leq (\Id + m (A_1^2 + ... + A_m^2))^{-1}. \]
Since $1 - f_n(t) = 1 - \frac{nt}{1 + nt} = \frac{1}{1 + nt}$ it follows that $\Id - E_{\beta} \leq \Id - E_{\alpha}$ and hence $E_{\alpha} \leq E_{\beta}$. 
\\
To prove the last property, fix $A \in \sA$. Then $\{ A \} \in \Lambda$. Fix $\alpha_0 > \{ A \}$ so that $\alpha_0 = \{A_1, ..., A_m \}$ with $A_i = A$ for some $i$. Let $\alpha = \{ A_1, ... A_n \}$ be an arbitrary element such that $\alpha > \alpha_0$, i.e. $n>m$. We have that $A^2 \leq A_1^2 + ... + A_n^2$ and hence by Lemma \ref{order lemma}
\[ (\Id - E_{\alpha})A^2(\Id - E_{\alpha}) \leq (\Id - E_{\alpha})(A_1^2+ ...+A_n^2)(\Id - E_{\alpha}).\]
Define $g_n(t) = (1- f_n(t))t (1 - f_n(t))$. The right hand side of this inequality is $g_n(A_1^2 + ...+A_n^2)$ so we can rewrite is as
\[ (A - AE_{\alpha})^*(A - AE_{\alpha}) \leq g_n(A_1^2 + ...+A_n^2).\]
We can now compute $g_n(t) = \frac{t}{(1+nt)^2}$, which has a maximum at $t = 1/n$. Therefore, $\| g_n \|_{\infty} = 1/4n$ and hence
\[ \| A - AE_{\alpha} \|^2 \leq \frac{1}{4n} \leq \frac{1}{4m}.\]
It follows that $\| A - AE_{\alpha} \| \rightarrow 0$. 
\end{proof}
\section{States and Measurement}
\subsection{Generalized States}
In the finite dimensional setting, the states of a system are given by density matrices. In the more general $\Cstar$-algebraic setting, this needs to be modified as we again run into issues with the convergence of the trace. Given a density matrix $\rho$ over a (finite dimensional) Hilbert space $\sH$, we can construct $\omega_{\rho} $, a linear functional on $\sB(\sH)$. For $A \in \sB(\sH)$,
\[\omega_{\rho} (A) := \tr(\rho A). \]
We have that $\omega_{\rho}$ is positive, in the sense that it maps positive operators to positive numbers. This follows by considering the eigenvalues of $\rho A$, which are all positive since both $\rho$ and $A$ are positive. Furthermore, we may put a norm on the space of linear functionals defined by 
\[||\omega|| = \sup_{||A||=1} |\omega(A)|. \]
This norm can equally well be defined on the space of linear functionals over an abstract $\Cstar$-algebra. Since $\rho$ is trace-one, it follows that $||\omega_{\rho}|| =1$. 
\\
\\
Importantly, these positive linear functionals which are induced by the density matrices are objects which still make sense in the $\Cstar$-algebra setting. Motivated by this discussion, we give the following definition.
\begin{defn}
A linear functional $\omega$ over a $\Cstar$-algebra $\sA$ is \textit{positive} if $\omega(A^*A) \geq 0$ for all $A \in \sA$. \newline
 A positive linear functional $\omega$ with $|| \omega || = 1$ is called a \textit{state}. 
\end{defn} 
Conveniently, the positivity condition automatically gives continuity.  
\begin{prop}%[BR1 2.3.11]
Let $\omega$ be a linear functional over a $\Cstar$-algebra $\sA$. Then if $\omega$ is positive
it is continuous (or equivalently bounded).
\end{prop}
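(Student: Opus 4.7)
The plan is to first establish boundedness of $\omega$ on the positive part of the unit ball, and then bootstrap to the full algebra by decomposing arbitrary elements into self-adjoint parts and self-adjoint parts into differences of positive parts. The key preliminary observation is that positivity implies monotonicity on $\sA^+$: if $A, B \in \sA^+$ with $A \geq B$, then $A - B \in \sA^+$ can be written as $C^*C$ by the structure lemma for positive elements, so $\omega(A) - \omega(B) = \omega(C^*C) \geq 0$.

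The crucial step, and the main technical obstacle, is showing
\[ M := \sup\{ \omega(A) : A \in \sA^+, \|A\| \leq 1 \} < \infty. \]
I would argue this by contradiction. Suppose $M = \infty$. Then I can extract a sequence $A_n \in \sA^+$ with $\|A_n\| \leq 1$ and $\omega(A_n) \geq 4^n$. Consider the series
\[ B := \sum_{n=1}^{\infty} 2^{-n} A_n, \]
which converges in norm because $\sum 2^{-n} \|A_n\| < \infty$ and $\sA$ is complete. Since each partial sum lies in the closed positive cone $\sA^+$, we have $B \in \sA^+$. For every $N$, the difference $B - 2^{-N} A_N$ is a norm-convergent sum of positive elements, hence positive, so $B \geq 2^{-N} A_N$. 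By monotonicity and linearity,
\[ \omega(B) \geq 2^{-N} \omega(A_N) \geq 2^{-N} \cdot 4^N = 2^N, \]
contradicting the finiteness of the scalar $\omega(B)$. Therefore $M < \infty$.

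Once boundedness on the positive unit ball is in hand, I bootstrap as follows. For a self-adjoint element $A$ with $\|A\| \leq 1$, the functional calculus gives a decomposition $A = A_+ - A_-$ with $A_\pm \in \sA^+$ and $\|A_\pm\| \leq 1$ (using the continuous functions $x \mapsto \max(\pm x, 0)$ on $\sigma(A) \subset [-\|A\|, \|A\|]$), so
\[ |\omega(A)| \leq \omega(A_+) + \omega(A_-) \leq 2M. \]
For an arbitrary $A \in \sA$, write $A = B + iC$ with $B = (A + A^*)/2$ and $C = (A - A^*)/(2i)$ self-adjoint and $\|B\|, \|C\| \leq \|A\|$, giving $|\omega(A)| \leq 4M\|A\|$. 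Thus $\omega$ is bounded, and the equivalence of continuity and boundedness for linear functionals on a normed space completes the proof. The only genuinely non-trivial step is the series/contradiction argument above; the rest is bookkeeping using the positive-negative decomposition and the real-imaginary decomposition.
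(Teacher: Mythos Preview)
Your proof is correct and follows essentially the same route as the paper: both establish $M = \sup\{\omega(A) : A \in \sA^+, \|A\| \leq 1\} < \infty$ via a norm-convergent positive series and then reduce the general case to the positive case via the decomposition $A = (A_+ - A_-) + i(C_+ - C_-)$. The only cosmetic difference is that the paper phrases the key step as ``$\sum_i \lambda_i \omega(A_i) \leq \omega(\sum_i \lambda_i A_i) < \infty$ for every summable $(\lambda_i)$ forces $(\omega(A_i))$ bounded,'' whereas you spell out the contradiction explicitly with $\lambda_n = 2^{-n}$ and $\omega(A_n) \geq 4^n$.
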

\begin{proof}
Let $\{A_i\}$ be an arbitrary sequence of positive elements with $||A_i|| \leq 1$. Let $\lambda_i$ be an arbitrary sequence of of positive real numbers such that $\sum_i \lambda_i < \infty$. 
\[ || \sum_{i=1}^m \lambda_i A_i - \sum_{i=1}^n \lambda_i A_i || = || \sum_{i=n}^m \lambda_i A_i || \leq \sum_{i=n}^m \lambda_i ||A_i|| \leq \sum_{i=n}^m \lambda_i\]
Since $\sum_i \lambda_i$ converges, $\sum_i \lambda_i A_i$ converges in norm to a positive element $A$, and since each term is positive this convergence is monotone. It follows that
\[ \sum_i \lambda_i \omega(A_i) = \omega(\sum_i \lambda_i A_i) \leq \omega(A) < \infty \]
where we have used the positivity of $\omega$. 
Since $\lambda_i$ was an arbitrary sequence, $\omega(A_i)$ is uniformly bounded and 
\[M = \sup\{ \omega(A) : A\geq 0, ||A|| \leq 1\} < \infty.\]
But any operator can be written as a linear combination of four positive operators using the decomposition 
\[ A = \frac{A+A^*}{2} + i \frac{A - A^*}{2i}\]
where $A_{re} :=  \frac{A+A^*}{2}$ and $A_{im} :=\frac{A - A^*}{2i}$ are self-adjoint and then the further decomposition of each of those terms into 
\[ A_{re/im} = \frac{|A_{re/im}| + A_{re/im}}{2} - \frac{|A_{re/im}| - A_{re/im}}{2}.\]
Hence $||\omega|| \leq 4M < \infty$ and $\omega$ is bounded and hence continuous. 

\end{proof}
A natural question to ask is whether all states defined in this was can be recovered by tracing against an appropriate density matrix. We will see later that this is not the case, and that the states which can be written in such a form, called normal states, have various special properties. However, for a fixed state we can construct something similar by using a specific representation. 
\subsection{Geometry of the States}
It is interesting to look at the geometry of the set of states. First we need a technical lemma. 
\begin{lemma}
For $A,B \in \sA$, and $\omega$ a positive linear functional,
\[|\omega(AB^*)|^2 \leq \omega(A^*A)\omega(B^*B).\]
This is a Cauchy-Schwartz type inequality. Furthermore, if we take an approximate identity $\{ E_{\alpha} \}$, 
\[ || \omega || = \sup_{\alpha} \omega(E_{\alpha}^2). \]
\label{CS lemma}
\end{lemma}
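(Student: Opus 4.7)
The plan is to derive both statements by the classical sesquilinear-form machinery, leveraging only the positivity of $\omega$ and the defining properties of the approximate identity.

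For the Cauchy--Schwarz-type inequality, I would introduce a sesquilinear form on $\sA$ given by $(X, Y) \mapsto \omega(X^*Y)$ and observe that positivity of $\omega$, i.e.\ $\omega(Z^*Z) \geq 0$, is precisely the statement that this form is positive semi-definite. The bound then follows from the standard discriminant argument: expanding $\omega((X + \lambda Y)^*(X + \lambda Y)) \geq 0$ as a real-valued quadratic in $\lambda \in \mathbb{C}$ and optimizing over $\lambda$ (or directly taking $\lambda = -\omega(Y^*X)/\omega(Y^*Y)$ when the denominator is nonzero, and handling the degenerate case separately) gives $|\omega(Y^*X)|^2 \leq \omega(X^*X)\omega(Y^*Y)$. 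Substituting the appropriate $X, Y$ and using the symmetry $\omega(Z^*) = \overline{\omega(Z)}$ (which follows by polarization from positivity) rewrites this in the form stated in the lemma.

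For the norm identity $\|\omega\| = \sup_\alpha \omega(E_\alpha^2)$, one direction is immediate: since each $E_\alpha$ is positive (hence self-adjoint) with $\|E_\alpha\| \leq 1$, we have $\omega(E_\alpha^2) \leq \|\omega\|\, \|E_\alpha^2\| \leq \|\omega\|$, so $\sup_\alpha \omega(E_\alpha^2) \leq \|\omega\|$. For the reverse direction, I would apply the Cauchy--Schwarz inequality just established, taking $B = E_\alpha$ and using $E_\alpha = E_\alpha^*$, to obtain an estimate of the form
\[ |\omega(A E_\alpha)|^2 \;\leq\; \omega(A A^*)\,\omega(E_\alpha^2). \]
Property (3) of the approximate identity together with continuity of $\omega$ (established in the preceding proposition) allows me to pass to the limit $\omega(A E_\alpha) \to \omega(A)$, giving $|\omega(A)|^2 \leq \bigl(\sup_\alpha \omega(E_\alpha^2)\bigr)\,\omega(A A^*)$. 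Combining this with the trivial bound $\omega(A A^*) \leq \|\omega\|\,\|A\|^2$ and taking the supremum over $\|A\| = 1$ yields $\|\omega\|^2 \leq \|\omega\|\, \sup_\alpha \omega(E_\alpha^2)$, which after dividing by $\|\omega\|$ (trivial if $\omega = 0$) gives the desired inequality.

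The main obstacle is purely bookkeeping: choosing the correct slot for the sesquilinear form so that $E_\alpha E_\alpha^* = E_\alpha^2$ appears in the right place when specializing Cauchy--Schwarz, and being careful to invoke continuity of $\omega$ when commuting the limit in $\alpha$ with $\omega$. Positivity of $\omega$ is used twice: once to set up the sesquilinear form and apply the discriminant trick, and once (via the preceding proposition) to justify the bound $\omega(AA^*) \leq \|\omega\|\,\|A\|^2$ and the continuity needed for the limiting step.
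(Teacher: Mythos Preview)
Your proposal is correct and follows essentially the same approach as the paper: the Cauchy--Schwarz inequality via the discriminant/minimization argument on $\omega((\lambda A+B)^*(\lambda A+B))\ge 0$, followed by specializing to $B=E_\alpha$, passing to the limit using continuity of $\omega$, and combining with the trivial bound $\sup_\alpha\omega(E_\alpha^2)\le\|\omega\|$. The only cosmetic difference is that the paper routes the upper bound through the auxiliary constant $M=\sup\{\omega(A):A\ge 0,\ \|A\|\le 1\}$ (obtaining $M=\|\omega\|$ as a byproduct), whereas you use $\|\omega\|$ directly; your version is marginally more streamlined but the argument is the same.
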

\begin{proof}
To prove the first bound, we consider $\omega((\lambda A+B)^*(\lambda A+B))$, which must be a positive real number for any choice of $\lambda$, by the positivity of $\omega$. It follows by linearity that  
\[ |\lambda|^2 \omega(A^*A) + \overline{\lambda} \omega(A^*B) + \lambda \omega(B^*A) + \omega(B^*B) \geq 0.\]
In particular the above quantity must be real, from which it follows that $\omega(A^*B) = \overline{\omega(B^*A)}$. Writing $\lambda = a + ib$ where $a,b \in \mathbb{R}$ we have that for any choice of $a$ and $b$ the following must hold. 
\[a^2 \omega(A^*A) + 2a \textrm{Re}(\omega(B^*A)) - 2b\textrm{Im}(\omega(B^*A)) + b^2 \omega(A^*A) + \omega(B^*B) \geq 0\]
Calculating the first and second order partial derivatives, we see that the above quantity is positive definite everywhere and has one local minimum at $a = -\textrm{Re}(\omega(B^*A)) / \omega(A^*A)$ and $b = \textrm{Im}(\omega(B^*A)) / \omega(A^*A)$. Therefore, our inequality being satisfied for all $a,b$ is equivalent to the inequality being satisfied at this one point. Substituting these values, the desired bound follows. 
\\
\\
Now we apply the previous inequality to the approximate identity. 
\[ |\omega(AE_{\alpha})|^2\leq \omega(A^*A) \omega(E_{\alpha}^2) \leq M ||A||^2 \omega(E_{\alpha}^2)\]
where as before $M := \sup \{ \omega(A) : A \geq 0, ||A|| \leq 1 \}$. Taking the limit in $\alpha$ and using the continuity of $\omega$ yields
\[ |\omega(A)|^2 \leq M ||A||^2 \sup_{\alpha} \omega(E_{\alpha}^2) \implies ||\omega||^2 \leq M\sup_{\alpha} \omega(E_{\alpha}^2).\]
We also have that for each $\alpha$, $E_{\alpha}^2 < 1$, and hence $\sup_{\alpha} \omega(E_{\alpha}^2) \leq ||\omega||$. From the definition, $M \leq ||\omega||$. It follows that $||\omega|| = \lim_{\alpha} \omega(E_{\alpha}^2)$. 

\end{proof}
An important corollary of this lemma is that for two linear functionals $\omega_1$ and $\omega_2$,  $||\omega_1 + \omega_2|| = ||\omega_1|| + ||\omega_2||$. In particular if $||\omega_{1/2}|| = 1$,
\[ ||\lambda \omega_1 + (1-\lambda) \omega_2 || = \lambda ||\omega_1|| + (1-\lambda) ||\omega_2|| = 1.\]
In other words, the set of states is convex. It is then natural to define the pure states as the extremal points of the set of states, i.e. those which cannot be written as convex combinations of distinct states. This mirrors the classical case when we view the set of mixed states, which are probability measures, as a subset of the hyperplane in $\mathbb{R}^n$.  The set is convex in the euclidean geometry with the extremal points being the pure probability measures. The difference is that in the classical setting there is a unique way to write each mixed state as a linear combination of pure states: the set of states is a simplex. However, in the quantum case we do not have this unique decomposition. 
\\
\\
For example, in a two-level quantum system, given by the Hilbert space $\mathbb{C}^2$, the states are given by $2 \times 2$ matrices over $\mathbb{C}$ with trace-one and positive eigenvalues. Any such density matrix $\rho$ may be decomposed as a linear combination of Pauli matrices and the identity. 
\[ \rho = \mu \Id + \vec{\lambda} \cdot \vec{\sigma}\]
where $\vec{\lambda}= (\lambda_1, \lambda_2, \lambda_3)$ is a vector of coefficients and $\vec{\sigma} = (\sigma_1, \sigma_2, \sigma_3)$ are the Pauli matrices. Since all the Pauli matrices are traceless, we must have that if $\rho$ is trace-one, $\mu=1/2$. We then compute the eigenvalues of $\rho$ in terms of $\vec{\lambda}$ to be $1/2 (1 \pm \| \vec{\lambda} \|)$. Since both eigenvalues must be positive, $\| \vec{\lambda} \| \leq 1$. Therefore the map which associates the trace-one matrix $\rho$ to the vector $\vec{\lambda}$ sends the density matrices to the unit ball. Note also that this map is linear and so preserves the convexity structure. Recall that a state is pure if $\tr(\rho^2) = 1$. Using the eigenvalues we computed, we have that $\tr(\rho^2) = 1/2 (1 + \| \vec{\lambda} \|^2)$. Hence, if $\rho$ is pure, $\| \vec{\lambda} \| = 1$ and is mapped to the unit sphere. 
\\
\begin{align*}
\begin{tikzpicture}
\draw (0,0) -- (0,3);
\draw (0,0) -- (3,0);
\draw (0,0) -- (-1, -2);
\fill[draw=black, fill=gray, fill opacity= 0.2] (0,2)  -- (2,0) -- (-2/3,-4/3) -- (0,2);
\node[above right] (none) at (2,0) {(1,0,0)};
\node[below right] () at (-2/3, -4/3) {(0,0,1)};
\node[above right] () at (0, 2) {(0,1,0)};
\node[circle, draw, inner sep=2pt, fill=black]() at (0,2){};
\node[circle, draw, inner sep=2pt, fill=black]() at (2,0){};
\node[circle, draw, inner sep=2pt, fill=black]() at (-2/3,-4/3){};
\node[] (none) at (1,-2.5) {Mixed states of a};
\node[] (none) at (1,-3.25) {three-level classical system};
\end{tikzpicture}
\;\;\;\;\;\;\;\;\;\;\;\;\;\;\;\;\;\;\;\;\;\;\;&
\begin{tikzpicture}
\filldraw[color=black, fill=gray, fill opacity=0.2, very thick](0,0) circle (1.5);
\draw (0,0) ellipse (1.5 and 0.8);
\draw (0,0) -- (0,3);
\draw (0,0) -- (3,0);
\draw (0,0) -- (-1, -2);
\node[above right] (none) at (1.1,1.1) {$\|\vec{\lambda}\|=1$};
\node[] (none) at (0,-2.5) {Mixed states of a};
\node[] (none) at (0,-3.25) {two-level quantum system};
\end{tikzpicture}
\end{align*}
%\caption[]{Left: The set of probability measures on three objects. The pure measures are the corners. Right: The set of states of a two-level quantum system. The pure states are the boundary.}
\\
In both the classical and quantum case the set of states is convex with the extremal points being the pure states. However, in the classical case each state is a unique linear combination of the pure states, while in the quantum case we have lost the uniqueness. There is no canonical way to build the mixed states out of pure states. 
\\
\\
Using again an approximate identity technique and Lemma \ref{CS lemma} we can also obtain some bounds which hold in general for states, which will be useful later. 
\begin{prop}%[BR1 2.3.11]
Let $\omega$ be a positive linear functional over a $\Cstar$-algebra $\sA$. Then for all $A, B \in \sA$. 
\begin{enumerate}
\item $| \omega(A) |^2 \leq \omega(A^*A) ||\omega|| $
\item $|\omega(A^*BA)| \leq \omega(A^*A) ||B|| $
\end{enumerate}
\end{prop}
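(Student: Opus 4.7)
The plan is to derive both bounds from the Cauchy--Schwartz-type inequality of Lemma \ref{CS lemma} applied to an approximate identity $\{E_{\alpha}\}$, following the same pattern used in the proof of that lemma.

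For part (1), I would substitute $B = E_{\alpha}$ into the Cauchy--Schwartz inequality $|\omega(AB^*)|^2 \leq \omega(A^*A)\omega(B^*B)$. Since each $E_{\alpha}$ is self-adjoint, this yields
\[
|\omega(AE_{\alpha})|^2 \leq \omega(A^*A)\,\omega(E_{\alpha}^2).
\]
Now pass to the limit along the net $\alpha$. The left-hand side converges to $|\omega(A)|^2$ because $AE_{\alpha} \to A$ in norm (property (3) of the approximate identity) and $\omega$ is continuous (the previous proposition guarantees positive linear functionals are bounded). The right-hand side converges to $\omega(A^*A)\,\|\omega\|$ by the second conclusion of Lemma \ref{CS lemma}. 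This gives (1).

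For part (2), the key idea is to introduce the auxiliary functional
\[
\omega_A : \sA \to \mathbb{C}, \qquad \omega_A(C) := \omega(A^* C A).
\]
First, $\omega_A$ is positive: $\omega_A(C^*C) = \omega(A^*C^*CA) = \omega((CA)^*(CA)) \geq 0$. By the previous proposition it is therefore bounded, and in particular $|\omega_A(B)| \leq \|\omega_A\|\,\|B\|$ for every $B \in \sA$. So it suffices to show $\|\omega_A\| \leq \omega(A^*A)$. Using again Lemma \ref{CS lemma},
\[
\|\omega_A\| = \sup_{\alpha} \omega_A(E_{\alpha}^2) = \sup_{\alpha} \omega(A^* E_{\alpha}^2 A).
\]
Because $E_{\alpha}$ is positive with $\|E_{\alpha}\| \leq 1$, spectral considerations give $0 \leq E_{\alpha}^2 \leq \Id$ (in the unitization if $\sA$ is not unital), and then Lemma \ref{order lemma}(2) yields $A^* E_{\alpha}^2 A \leq A^* A$. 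Applying the positive functional $\omega$ to both sides of this inequality in $\sA^+$ gives $\omega(A^* E_{\alpha}^2 A) \leq \omega(A^*A)$, hence $\|\omega_A\| \leq \omega(A^*A)$ and the bound (2) follows.

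The only subtle point is justifying $A^* E_{\alpha}^2 A \leq A^* A$ when $\sA$ is non-unital: one passes to the unitization $\tilde{\sA}$, writes $\Id - E_{\alpha}^2 = C^*C$ for some $C \in \tilde{\sA}$ (using that it is a positive element), and observes that $A^*A - A^* E_{\alpha}^2 A = (CA)^*(CA)$ lies in $\sA$ and is manifestly positive. Beyond this, everything is a direct application of the two lemmas and the continuity of positive linear functionals already established.
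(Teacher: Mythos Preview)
Your proof is correct and follows essentially the same approach the paper indicates: both parts are derived from Lemma~\ref{CS lemma} via the approximate identity, with part~(2) handled by applying the lemma to the auxiliary positive functional $\omega_A(\cdot)=\omega(A^*\,\cdot\,A)$. The treatment of the non-unital case via the unitization is the standard way to make the inequality $A^*E_\alpha^2 A \leq A^*A$ rigorous, and nothing further is needed.
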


\subsection{Measurement}
In the finite-dimensional case we had a finite set of outcomes for a measurement: the eigenvalues of a particular observable $A = \sum_i e_i P_i$. If we were in a pure state $\psi$ we obtained a probability measure on this finite set given by $prob(e_i) = \langle \psi, P_i \psi \rangle$. In a mixed state $\rho$ the probability distribution was given by $prob(e_i) = \tr(\rho P_i)$. 
\\
\\
In the general setting we have that our set of possible outcomes for an observable $A \in \sA$ is $\sigma(A) \subset \mathbb{R}$, and our mixed state is given by a positive linear functional $\omega \in \sA^*$. We will use the spectral measure formulation of the spectral theorem to obtain a probability distribution on the set of outcomes. Recall that we have a measure $\mu_{A,\omega}$ which satisfies
\[ \omega(f(A)) = \int_{\sigma(A)} f d\mu_{A, \omega}. \]
In analogy with the finite case, we want to consider our state evaluated on projections associated with certain subsets of $\sigma(A)$. Let $E \subset \sigma(A)$. We can use the Borel functional calculus to define an operator $1_E(A)$ where $1_E$ is the characteristic function of $E$. This generalizes the finite case since if we take $E = \{ e_i \}$ then $1_E(A) = P_i$. This gives that
\[ \omega(1_E(A)) = \int_E d\mu_{A, \omega} = \mu_{A, \omega}(E)\]
which allows us to interpret the spectral measure $\mu_{A, \omega}(E)$ as giving the probability of the measured outcome lying in $E$. In particular, the expected value of a measurement is given by 
\[ \langle A \rangle = \omega(A) = \int_{\sigma(A)} x d\mu_{A, \omega}(x). \]
\section{Dynamics and Equilibrium}
\subsection{Generalized Dynamics}
Recall that in the finite dimensional setting, our dynamics on the observables was given by a Hamiltonian $H$
\[ \tau^t(A) = e^{itH} A e^{-itH}.\]
We can compute the derivative to obtain the differential equation
\[ \frac{d}{dt} \tau^t(A) = iH \tau^t(A) - \tau^t(A) (iH) = i [ H, \tau^t(A)] \] 
with initial condition $\tau^0(A) = A$. A solution to this is 
\[ \tau^t(A) = e^{it[H, \cdot]} A\]
where we view $\delta_H := i[H, \cdot]: A \mapsto i[H,A]$ as a bounded linear operator on the space $\sB(\sH)$, noting that in this setting $\sB(\sH)$ is a finite dimensional Banach space. $\delta_H$ is self-adjoint and is called the generator of $\tau_t$. Importantly, $\delta_H$ is in the $\Cstar$-algebra of operators rather than the underlying Hilbert space, so this is the formulation we will use when moving to the general setting. 
\\
\\
First let us consider what happens if the Hilbert space is infinite dimensional.  We define dynamics using  unitary operators, which preserve the inner product structure. 
\begin{defn}
Let $\sH$ be an infinite dimensional Hilbert space. A dynamics on $\sB(\sH)$ is a map $\tau^t(A) = U(t)AU(t)^*$ where $U(t)$ is a one-paramter strongly-continuous unitary group. In other words,
\begin{enumerate}
\item for each $t$, $U(t)$ is unitary, 
\item for $s,t >0$, $U(t+s) = U(t)U(s)$,
\item $U(0) = \Id$, 
\item for each $A$, $t \mapsto U(t)AU(t)^*$ is continuous. 
\end{enumerate}
\end{defn}
We can apply Stone's theorem to our dynamics, which gives a correspondence between one parameter-strongly continuous unitary groups $U(t)$ and self-adjoint (but not necessarily bounded) operators $A$. 
\[ U(t) \leftrightarrow e^{itA} \]
The exponential is well-defined using a power series if $A$ is bounded or using spectral theory if $A$ is unbounded \cite{RS1}. Therefore, while we can still write our dynamics in the form $\tau^t(A) = e^{itH}Ae^{-itH}$, $H$ may be unbounded and hence not an element of the algebra $\sB(\sH)$. From a physical point of view, this corresponds to an infinitely extended system having infinite energy, but it will pose problems when we move to the $\Cstar$-algebraic formalism, where we do not have access to the Hilbert space.
\\
\\
We can rewrite the definition of a dynamics in terms of the operator algebra by using the fact that unitary operators preserve inner products, and hence conjugation by unitaries preserves operator norm. 
\begin{defn}
A dynamics on a $\Cstar$-algebra $\sA$ is a map 
\[ \mathbb{R} \ni t \mapsto \tau^t \in Aut(\sA) \]
which satisfies 
\begin{enumerate}
\item $\tau^{t+s} = \tau^t \circ \tau^s$,
\item $t \mapsto \tau^t(A)$ is continuous for all $A \in \sA$ (strong continuity).
\end{enumerate}
\end{defn}
For a $\Cstar$-dynamics, which is a one-parameter strongly continuous automorphism group on a Banach space, we define the generator as the possibly unbounded operator $\delta: \mathcal{D}(\delta) \rightarrow \sA$: 
\begin{align*} 
\mathcal{D}(\delta) &= \{ A \in \sA : \lim_{t \rightarrow 0} \frac1t (\tau^t(A) - A ) \textrm{ exists} \}, \\
  \delta (A) &  = \lim_{t \rightarrow 0} \frac1t (\tau^t(A) - A ) \textrm{    for } A \in \mathcal{D}(\delta).
\end{align*}
The following theorem says that the generator does indeed generate the dynamics \cite{Ge}. 
\begin{thm}[Hille-Yosida]
Let $t \mapsto \tau^t$ be a one-parameter strongly continuous group of automorphisms on a Banach space $X$ with generator $\delta$. Then the domain of $\delta$, $\mathcal{D}(\delta)$, is dense and $\delta$ is closed. Furthermore if $A \in \mathcal{D}(\delta)$ then 
\[ \left( t \mapsto \tau^t(A) \right) \in C^0(\mathbb{R}, \mathcal{D}(\delta)) \cap C^1 (\mathbb{R}, X)\]
and $\delta$ satisfies the differential equation 
\[ \frac{d}{dt}\tau^t(A) = \delta \tau^t(A) = \tau^t \delta (A).\]
\end{thm}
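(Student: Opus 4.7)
The plan is to establish the four claims in turn: density of $\mathcal{D}(\delta)$, invariance of the domain under $\tau^t$ together with the commutation $\delta \tau^t = \tau^t \delta$, the $C^1$-differentiation formula, and finally closedness of $\delta$. An ingredient I will use throughout is that for a strongly continuous one-parameter group on a Banach space, the uniform boundedness principle gives $\| \tau^t \| \leq M e^{\omega | t |}$, so each $\tau^t$ is bounded and the map $s \mapsto \tau^s(A)$ is locally bounded for every $A$.

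For density, I would use the smoothing trick. For $A \in X$ and $\phi \in C_c^\infty(\mathbb{R})$, define the regularization
\[ A_\phi := \int_\mathbb{R} \phi(s)\, \tau^s(A)\, ds, \]
which exists as a Bochner integral by strong continuity and compact support. A change of variables gives $\tau^t(A_\phi) = \int \phi(s - t)\, \tau^s(A)\, ds$, so the difference quotient $\tfrac{1}{t}(\tau^t(A_\phi) - A_\phi)$ converges in norm to $-\int \phi'(s)\, \tau^s(A)\, ds$ as $t \to 0$. Hence $A_\phi \in \mathcal{D}(\delta)$. Choosing $\phi = \phi_\varepsilon$ to be a non-negative mollifier supported in $(-\varepsilon, \varepsilon)$ with unit mass yields $A_{\phi_\varepsilon} \to A$ as $\varepsilon \to 0$ by continuity of $s \mapsto \tau^s(A)$. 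The invariance/commutation step is then a short algebraic manipulation: for $A \in \mathcal{D}(\delta)$ the group property gives
\[ \frac{1}{h}\bigl(\tau^h(\tau^t A) - \tau^t A\bigr) = \tau^t\!\left(\frac{\tau^h A - A}{h}\right), \]
and boundedness of $\tau^t$ lets me pass to the limit to conclude $\tau^t(A) \in \mathcal{D}(\delta)$ with $\delta \tau^t(A) = \tau^t \delta(A)$.

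For the differential equation, the previous step already yields the \emph{right} derivative of $t \mapsto \tau^t(A)$ everywhere and shows it equals $\tau^t \delta(A)$, which is continuous in $t$. The left derivative is handled symmetrically using
\[ \frac{\tau^{t-h}(A) - \tau^t(A)}{-h} = \tau^{t-h}\!\left(\frac{A - \tau^h(A)}{-h}\right) \longrightarrow \tau^t \delta(A), \]
since $\tau^{t-h} \to \tau^t$ strongly and the inner bracket converges to $\delta(A)$ with the group being locally uniformly bounded. This gives $t \mapsto \tau^t(A) \in C^1(\mathbb{R}, X)$, and continuity into $\mathcal{D}(\delta)$ with the graph norm follows because both $\tau^t(A)$ and $\delta \tau^t(A) = \tau^t \delta(A)$ depend continuously on $t$. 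Integrating the differential equation produces the key identity
\[ \tau^t(A) - A = \int_0^t \tau^s \delta(A)\, ds \qquad (A \in \mathcal{D}(\delta)). \]

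Finally, closedness of $\delta$ is an easy consequence of this identity. If $A_n \to A$ in $X$ and $\delta A_n \to B$ in $X$, I apply the identity to each $A_n$ and pass to the limit using the boundedness of $\tau^s$ on the compact interval $[0,t]$ together with dominated convergence, obtaining $\tau^t(A) - A = \int_0^t \tau^s(B)\, ds$. Dividing by $t$ and using continuity of $s \mapsto \tau^s(B)$ at $s = 0$ shows $A \in \mathcal{D}(\delta)$ with $\delta(A) = B$. The main obstacle, to my mind, is making the smoothing argument fully rigorous: one needs enough Bochner integration theory to justify differentiating under the integral and to identify the limit of the difference quotient, and the argument also implicitly invokes the nontrivial local boundedness $\| \tau^t \| \leq M e^{\omega|t|}$. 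Once that regularization step is in place, the remaining three claims follow by essentially formal manipulations with the group law.
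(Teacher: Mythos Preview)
Your argument is correct and follows the standard route to these generator properties: mollification for density, the group law plus boundedness for invariance and commutation, the integral identity for closedness. There is nothing to compare against, however, because the paper does not actually prove this theorem --- it is stated with a reference to \cite{Ge} and used as a black box. Your proposal would serve as a perfectly good proof were one to be included.
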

For this reason we use the formal notation $\tau^t = e^{t \delta}$. Note that if we have a Hamiltonian dynamics on a finite dimensional Hilbert space given by Hamiltonian $H$, then we may compute the generator $\delta$ in this last more general sense and find that $\delta=\delta_H = i[H, \cdot]$. 
\subsection{KMS States}
Now that we know how to define states and dynamics in general, we need to find a better way of defining the equilibrium states for a given dynamics. First of all, if the Gibbs state exists, it should be an equilibrium state. We also want our equilibrium states to be invariant and to have certain stability properties under small perturbations of the dynamics. First we derive a condition which is equivalent to the Gibbs state in finite dimensions, but which is formulated in terms of linear functionals rather than density matrices. 
\\
\\
We have a Hamiltonian dynamics acting on $\sB(\sH)$ given by $\tau^t(A) = e^{itH}Ae^{-itH}$. For fixed $A,B$ we can extend the function  $\mathbb{R} \ni t \mapsto \omega(A \tau^t(B)) $ to a strip in the complex plane, just by applying the functional calculus to the Hamiltonian. Then the Gibbs state is equivalent to an ``approximate commutation" property \cite{JOPP}. 
\begin{prop}
Let $\sH$ be a finite dimensional Hilbert space. The state $\omega \in \sB(\sH)^*$ is a Gibbs state at inverse temperature $\beta$ if and only if 
\[ \omega(A\tau^{i\beta}(B)) = \omega(BA)\]
for all $A,B \in \sB(\sH)$. This condition is called the \textbf{KMS condition}. 
\end{prop}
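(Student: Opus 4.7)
The plan is to prove both directions by direct computation, exploiting the fact that in finite dimensions every state is of the form $\omega(A) = \tr(\rho A)$ for a unique density matrix $\rho$, and that $\tau^{i\beta}(B) = e^{-\beta H} B e^{\beta H}$ is well defined (with $e^{-\beta H}$ and $e^{\beta H}$ both bounded and mutually inverse) as soon as $H$ acts on a finite-dimensional space.

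For the forward implication, I assume $\omega(A) = \tr(\rho_\beta A)$ with $\rho_\beta = e^{-\beta H}/Z$ and $Z = \tr(e^{-\beta H})$, and compute
\[
\omega(A \tau^{i\beta}(B)) = \tfrac{1}{Z}\tr\bigl(e^{-\beta H} A e^{-\beta H} B e^{\beta H}\bigr).
\]
Cyclicity of the trace moves the outer $e^{\beta H}$ to the front, cancelling the leading $e^{-\beta H}$, and a second cyclic permutation brings $B$ past $A$ to recover $\tr(\rho_\beta BA) = \omega(BA)$. This direction is pure bookkeeping.

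For the converse, I pick the density matrix $\rho$ representing $\omega$ and rewrite the KMS identity as
\[
\tr\bigl(\rho A e^{-\beta H} B e^{\beta H}\bigr) = \tr(\rho BA), \qquad \forall A,B \in \sB(\sH).
\]
Since $e^{\beta H}$ is invertible, substituting $A = C e^{\beta H}$ (with $C$ arbitrary) eliminates the flanking exponentials on the left, giving $\tr(e^{\beta H}\rho C B) = \tr(\rho B C e^{\beta H})$. Applying cyclicity and setting $Y := e^{\beta H}\rho$, this says
\[
\tr\bigl(Y[C,B]\bigr) = 0 \qquad \forall B,C \in \sB(\sH).
\]
Rewriting as $\tr([B,Y]C) = 0$ for all $C$ forces $[B,Y] = 0$ for every $B \in \sB(\sH)$, hence $Y$ lies in the center of $\sB(\sH)$, which in the finite-dimensional matrix algebra is just $\mathbb{C}\Id$. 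Thus $Y = \lambda \Id$ for some scalar, i.e.\ $\rho = \lambda e^{-\beta H}$, and imposing $\tr(\rho) = 1$ fixes $\lambda = 1/Z$, so $\rho = \rho_\beta$.

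There is no genuine obstacle here. The only point that really uses finite-dimensionality is the last one: I need the representation $\omega = \tr(\rho\,\cdot\,)$ and the fact that the commutators span the traceless elements (equivalently, the center of $\sB(\sH)$ is trivial). Both features fail in the general $C^*$-algebra setting, which is precisely why the KMS condition, rather than a density-matrix description, is adopted as the definition of equilibrium in the infinite-dimensional framework developed later.
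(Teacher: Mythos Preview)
Your proof is correct and follows essentially the same approach as the paper: the forward direction is the identical trace-cyclicity computation, and in the converse you introduce the same quantity $Y = e^{\beta H}\rho$ (the paper calls it $X$) and show it commutes with every element of $\sB(\sH)$, hence is a scalar. The only cosmetic difference is that the paper absorbs $e^{-\beta H}$ into a second variable $Y = Ae^{-\beta H}$ rather than substituting $A = Ce^{\beta H}$, but the algebra is the same.
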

\begin{proof}
Suppose $\omega_{\beta}$ is the Gibbs state. 
\[\omega_{\beta}(A\tau^{i\beta}(B)) = \tr(\rho_{\beta} Ae^{-\beta H}Be^{\beta H}) = \tr(\frac{e^{-\beta H}}{\tr(e^{-\beta H})} Ae^{-\beta H}Be^{\beta H}) = \tr(\rho_{\beta} BA) = \omega_{\beta}(BA) \]
The KMS condition follows from the cyclicity of the trace. This implication actually holds as long as the Gibbs state exists, even if $H$ is not finite-dimensional. The traceclass operators are an ideal in the space of bounded operators, and so the above calculation still makes sense. 
\\
\\
Now suppose $\omega$ is a state satisfying the KMS condition. Since we are in the finite dimensional case, we have some density matrix $\rho$ such that $\omega(\cdot) = \tr(\rho \; \cdot)$. The KMS condition can then be written as
\[ \tr(\rho BA) = \tr(\rho Ae^{-\beta H} B e^{\beta H} ) \;\;\;\;\;\; \forall A,B \in \sB(\sH).\]
Let $X =  e^{\beta H} \rho $ and $Y = A e^{-\beta H} $. Then we have 
\[ \tr(XBY) = \tr(XYB) \;\;\;\;\; \forall B, Y \in \sB(\sH).\]
It follows that $[X, B] = 0$ for all $B$ and hence $X = \alpha \Id$ for some scalar $\alpha$. Hence $\rho = \alpha e^{-\beta H}$. Normalizing so that $\tr(\rho)=1$ gives the Gibbs state. 
\end{proof}
\begin{rem}
The KMS condition can be rewritten as 
\[ \omega(\tau^t(B)A) = \omega(A \tau^{t + i\beta}(B)) \;\;\;\;\; \forall t \in \mathbb{R}.\]
In this form we can see it as a boundary condition of the function 
\[ F_{A,B}(z) := \omega(A\tau^z(B))\]
on the strip $0 \leq \textrm{Im}(z) \leq \beta$: $F_{A,B}(t + i\beta) = \omega(\tau^t(B)A)$. 
\begin{align*}
\begin{tikzpicture}
\draw (-1,0) -- (6,0);
\draw[dashed] (-1,1.5) -- (6,1.5);
\draw (0, -1) -- (0,2);
\fill[fill=gray, fill opacity = 0.2] (-1,0) -- (6,0) -- (6,1.5) -- (-1,1.5) -- (-1,0);
\node[above left]() at (0,1.5){$\beta$};
\node[above left]() at (0,0){$0$};
\node() at (3,0.75){$F_{A,B}(z)$};
\node[below]() at (5,0){$\omega(A\tau^t(B))$};
\node[above]() at (5,1.5){$\omega(\tau^t(B)A)$};
\node[circle, draw, inner sep=1pt, fill=black]() at (5,0){};
\node[circle, draw, inner sep=1pt, fill=black]() at (5,1.5){};
\end{tikzpicture}
\end{align*}
\end{rem}

Our goal now is to generalize the KMS condition to the operator algebra setting. First we need to do some work to see if, for a general $\Cstar$ dynamics, we can extend the function $t \mapsto \omega(A\tau^t(B))$ analytically to a strip in the complex plane. 
\begin{defn}
Let $t \mapsto \tau^t$ be a strongly continuous, one-parameter group of $*$-automorphisms of a $\Cstar$-algebra $\sA$. An element $A \in \sA$ is analytic for $\tau^t$ if there exists a strip $I_{\lambda} = \{ z \in \mathbb{C} : |\textrm{Im}(z)| < \lambda \}$ and a function $f : I_{\lambda} \rightarrow \sA$ such that 
\begin{enumerate}
\item for $t \in \mathbb{R}$, $f(t) = \tau^t(A)$ and 
\item for each $\omega \in \sA^*$, the function $z \mapsto \omega(f(z))$ is analytic. 
\end{enumerate}
\end{defn}
This ``pointwise analyticity" condition is actually equivalent to a stronger notion of analyticity in the space $\sA$ itself. 
\begin{prop}
If $A \in \sA$ is analytic for $\tau^t$ on the strip $I_{\lambda}$, then for any $z \in I_{\lambda}$ the limit
\[ \lim_{h \rightarrow 0} \frac{f(z+h) - f(z)}{h}\]
exists in norm. 
\end{prop}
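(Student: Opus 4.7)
The plan is to run the classical Dunford argument that weak holomorphy of a Banach-space-valued function implies strong (norm) holomorphy. The two ingredients are the Banach--Steinhaus uniform boundedness principle (to pass from scalar to norm boundedness) and scalar Cauchy integral formula (to control the difference quotients). Fix $z_0 \in I_\lambda$ and $r>0$ so small that $\overline{D(z_0,2r)} \subset I_\lambda$.

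First I would establish that $f$ is locally bounded. For each $\omega \in \sA^*$, the scalar function $z \mapsto \omega(f(z))$ is analytic, hence continuous, hence bounded on the compact disk $\overline{D(z_0, 2r)}$. View each $f(z)$ as an element of $\sA^{**}$ via the canonical inclusion $\iota : \sA \hookrightarrow \sA^{**}$ (so that $\iota(f(z))(\omega) = \omega(f(z))$). The family $\{\iota(f(z)) : z \in \overline{D(z_0, 2r)}\}$ is then pointwise bounded on $\sA^*$, and Banach--Steinhaus yields a constant $M$ such that $\|f(z)\| \leq M$ for all $z$ in the disk.

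Next I would transfer Cauchy's formula one functional at a time. For $\omega \in \sA^*$ with $\|\omega\|\leq 1$ and for $h$ with $0<|h|<r$, the scalar analyticity of $\omega\circ f$ gives
\[ \omega\!\left(\frac{f(z_0+h) - f(z_0)}{h}\right) = \frac{1}{2\pi i}\oint_{|\zeta - z_0|=2r}\frac{\omega(f(\zeta))}{(\zeta-z_0-h)(\zeta-z_0)}\,d\zeta. \]
Subtracting the analogous identity for another $k$ with $0<|k|<r$ and combining fractions,
\[ \omega\!\left(\frac{f(z_0+h) - f(z_0)}{h} - \frac{f(z_0+k) - f(z_0)}{k}\right) = \frac{h-k}{2\pi i}\oint_{|\zeta - z_0|=2r}\frac{\omega(f(\zeta))}{(\zeta-z_0)(\zeta-z_0-h)(\zeta-z_0-k)}\,d\zeta. \]
On the circle $|\zeta-z_0|=2r$ one has $|\zeta-z_0-h|, |\zeta-z_0-k|\geq r$, so the ML-estimate gives
\[ \left|\omega\!\left(\frac{f(z_0+h) - f(z_0)}{h} - \frac{f(z_0+k) - f(z_0)}{k}\right)\right| \leq \frac{M\,|h-k|}{r^2}. \]
Taking the supremum over $\|\omega\|\leq 1$ and using $\|B\|_\sA = \sup_{\|\omega\|\leq 1}|\omega(B)|$ (Hahn--Banach) yields
\[ \left\|\frac{f(z_0+h) - f(z_0)}{h} - \frac{f(z_0+k) - f(z_0)}{k}\right\|_\sA \leq \frac{M\,|h-k|}{r^2}. \]
Hence the difference quotients form a Cauchy net in $\sA$ as $h\to 0$, and completeness of the $\Cstar$-algebra produces the norm limit.

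The one nontrivial step is the first, where local boundedness of $f$ has to be extracted from purely pointwise information. This is precisely the content of Banach--Steinhaus once one recognises the family $\{f(z)\}_{z}$ as acting on $\sA^*$ through the canonical embedding; after that, everything reduces to a standard scalar Cauchy estimate, which lifts to the vector case because of the duality formula for the norm on $\sA$.
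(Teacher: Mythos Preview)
Your proof is correct and follows essentially the same Dunford-type argument as the paper: scalar Cauchy integral formula plus Banach--Steinhaus to upgrade weak analyticity to norm analyticity. The only cosmetic difference is the placement of the uniform boundedness step---you invoke it first to get a local bound $\|f(z)\|\le M$ and then read off an explicit Cauchy estimate $M|h-k|/r^2$, whereas the paper applies it directly to the family of functionals $\Lambda_{h,g}(\omega)=\frac{1}{h-g}\bigl[\omega\bigl(\frac{f(z+h)-f(z)}{h}\bigr)-\omega\bigl(\frac{f(z+g)-f(z)}{g}\bigr)\bigr]$ to produce a non-explicit constant $\gamma$; the two orderings are equivalent.
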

\begin{proof}
Let $z \in  I_{\lambda}$ and let $B(z,r)$ be a ball contained in $I_{\lambda}$. Let $C$ be the boundary of this ball. Define $K = B(z, r/2) \subset B(z,r)$. For $x \in K$ and $\omega \in \sA^*$ we have by Cauchy's integral formula
\[ \omega(f(x)) = \frac{1}{2\pi i} \int_{C} \frac{\omega(f(y))}{y-x} dy. \]
\begin{align*}
\begin{tikzpicture}
\draw (-1,0) -- (6,0);
\draw[dashed] (-1,3) -- (6,3);
\draw (0, -0.25) -- (0,3.25);
\fill[fill=gray, fill opacity = 0.2] (-1,0) -- (6,0) -- (6,3) -- (-1,3) -- (-1,0);
\node() at (1.5,1.5){$I_{\lambda}$};
\node[circle, draw, inner sep=26pt]() at (4,1.5){};
\node[circle, draw, inner sep=13pt, fill=white]() at (4,1.5){};
\node[circle, draw, inner sep=1pt, fill=black]() at (4,1.5){};
\node[right]() at (4,1.5){$z$};
\node[right]() at (4.2,0.8){$K$};
\node[right]() at (4.5,0.25){$C$};
\draw[thick, ->] (4,1.5) -- (3.75,1.75);
\draw[thick, ->] (4,1.5) -- (3.8,1);
\end{tikzpicture}
\end{align*}
Then for $z+h, z+g \in K$ 
\begin{align*}
& \frac{1}{(h-g)} \left( \frac{1}{h} (\omega(f(z+h)) - \omega(f(z))) - \frac{1}{g} (\omega(f(z+g)) - \omega(f(z))) \right) \\
=& \frac{1}{2\pi i (h-g)} \int_{C} \omega(f(y)) \left( \frac{1}{h(y-z-h)} - \frac{1}{g(y-z-g)} - \frac{1}{h(y-z)} +\frac{1}{g(y-z)} \right) dy \\
=&  \frac{1}{2\pi i } \int_{C} \omega(f(y)) (y - z - g)^{-1} (y-z-h)^{-1} (y-z)^{-1} dy \\
\end{align*}
$|y-z-g|$ is bounded below uniformly in $g$ by $r/2$. Hence for fixed $\omega$ the quantity above is uniformly bounded in $g$ and $h$. Consider the family of bounded linear maps indexed on $(h,g)$
\begin{align*}
\Lambda_{h,g} : \sA^* &\rightarrow \mathbb{C} \\
\omega& \mapsto \frac{1}{(h-g)} \left( \frac{1}{h} (\omega(f(z+h)) - \omega(f(z))) - \frac{1}{g} (\omega(f(z+g)) - \omega(f(z))) \right)
\end{align*}
We have shown that for each fixed $\omega$, $\sup_{h,g} | \Lambda_{h,g} (\omega) |< \infty$. Therefore by uniform boundedness, $\sup_{h,g} || \Lambda_{h,g} || < \infty$. It follows that for some $\gamma < \infty$ the following inequality holds for all $h,g$ such that $|z-g| < r/2$ and $|z-h| < r/2$. 
\[  || \frac{1}{h} (f(z+h) - f(z)) - \frac{1}{g} (f(z+g) - f(z)) || \leq \gamma |h-g|\]
Since $\sA$ is complete it follows that the derivative at $z$ exists. 
\end{proof}
Now that we have seen how the analytic elements behave, we will see that we can in fact approximate any element by (entire) analytic elements. To do this, we use the fact that we can define a ``pointwise" integral on $\sA$. 
\begin{prop}
Let $A\in \sA$ and $\mu$ be a Borel measure on $\mathbb{R}$. There exists $B \in \sA$ such that for each $\omega \in \sA^*$ 
\[ \omega(B) = \int \omega(\tau^t(A))d\mu(t).\]
We denote
\[ B = \int \tau^t(A)d\mu(t).\]
\end{prop}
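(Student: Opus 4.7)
The plan is to construct $B$ as a Bochner-type integral of the $\sA$-valued map $t \mapsto \tau^t(A)$ against $\mu$. Since the integrand is norm-continuous and bounded, I will approximate the integral by $\sA$-valued Riemann sums; each sum sits in $\sA$, the sums form a Cauchy net, and completeness of the $\Cstar$-algebra produces a limit $B \in \sA$ satisfying the required identity after pairing with $\omega$. Throughout I will work under the natural assumption (implicit in the paper's use of this construction) that $\int d|\mu|(t) < \infty$, so that $\int \|\tau^t(A)\|\,d|\mu|(t) = \|A\|\,|\mu|(\mathbb{R}) < \infty$.

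First I would record the two analytic inputs. By strong continuity of the dynamics, $t \mapsto \tau^t(A)$ is norm-continuous. By Corollary \ref{morphisms}, every $\tau^t \in \mathrm{Aut}(\sA)$ is an isometry, so $\|\tau^t(A)\| = \|A\|$ uniformly in $t$. Consequently $t \mapsto \tau^t(A)$ is norm-continuous and norm-bounded on $\mathbb{R}$, and for any $\omega \in \sA^*$ the scalar function $t \mapsto \omega(\tau^t(A))$ is continuous and dominated by $\|\omega\|\,\|A\|$, so the integral on the right-hand side is unambiguously defined and finite.

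Next I would construct the Riemann sums. Given $\varepsilon > 0$, pick a compact $K \subset \mathbb{R}$ with $|\mu|(\mathbb{R}\setminus K) < \varepsilon$. Using uniform norm-continuity of $t \mapsto \tau^t(A)$ on $K$, choose a finite Borel partition $K = \bigsqcup_{k=1}^n E_k$ and points $t_k \in E_k$ with
\[ \sup_{t \in E_k} \|\tau^t(A) - \tau^{t_k}(A)\| < \varepsilon \qquad (k=1,\dots,n).\]
Define the partial sum
\[ S(\mathcal{P}) := \sum_{k=1}^n \mu(E_k)\,\tau^{t_k}(A) \in \sA. \]
A standard refinement comparison, using the above uniform estimate on $K$ and the bound $\|A\|\,|\mu|(\mathbb{R}\setminus K) < \varepsilon\|A\|$ on the tail, shows that two such sums built from sufficiently fine partitions differ in norm by at most $\varepsilon(|\mu|(\mathbb{R}) + 2\|A\|)$. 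Hence $\{S(\mathcal{P})\}$ is Cauchy along the directed set of partitions, and completeness of $\sA$ provides a limit $B \in \sA$.

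Finally I would verify the identity. For any $\omega \in \sA^*$, applying $\omega$ to $S(\mathcal{P})$ produces a genuine scalar Riemann sum for $\int \omega(\tau^t(A))\,d\mu(t)$, which converges to this integral by continuity and boundedness of the integrand together with finiteness of $\mu$. On the other hand, norm-continuity of $\omega$ gives $\omega(S(\mathcal{P})) \to \omega(B)$. Equating the two limits yields the required formula. The principal technical obstacle is ensuring that the Riemann sums converge in the norm of $\sA$ rather than merely weakly against each $\omega$; this is precisely where I rely on \emph{uniform} continuity of $t \mapsto \tau^t(A)$ on compact sets (from strong continuity of $\tau$) combined with the $\mu$-smallness of the tail outside a large compact set.
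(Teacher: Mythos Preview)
The paper states this proposition without proof, so there is nothing to compare against directly; your Bochner-integral construction via norm-convergent Riemann sums is a standard and correct way to establish it. Your explicit finiteness hypothesis $|\mu|(\mathbb{R})<\infty$ is indeed necessary (and is satisfied in the paper's only application, where $d\mu(t)=\sqrt{n/\pi}\,e^{-nt^2}dt$), and the key technical point---that uniform norm-continuity on compacta together with tail smallness forces the Riemann sums to be Cauchy in $\sA$ rather than merely weakly convergent---is exactly what is needed and you identify it correctly.
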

Using this construction we may define a sequence 
\[ A_n = \sqrt{\frac{n}{\pi}}  \int \tau^t(A) e^{-nt^2} dt.\]
To show that $A_n$ is entire analytic we must construct an analytic extension of the function $t \mapsto \tau^t(A_n)$ to the whole complex plane. Define for $z \in \mathbb{C}$
\[ f_n(z) = \sqrt{\frac{n}{\pi}}  \int \tau^t(A) e^{-n(t-z)^2} dt.\]
For $z=s \in \mathbb{R}$ we have
\[ f_n(s) =  \sqrt{\frac{n}{\pi}}  \int \tau^t(A) e^{-n(t-s)^2} dt = \sqrt{\frac{n}{\pi}}  \int \tau^{t+s}(A) e^{-nt^2} dt = \tau^s (\sqrt{\frac{n}{\pi}}  \int \tau^t(A) e^{-nt^2} dt) = \tau^s(A_n).\]
From here, one can show that $f_n(z)$ is an analytic function for each $n$ and hence that each $A_n$ is analytic for $\tau^t$. Finally, one checks that $A_n$ converges to $A$ in norm. 
\\
\\
The norm dense sub-algebra of entire analytic elements (for a given dynamics $\tau^t$) is denoted by $\sA_{\tau}$. To show a set is dense in $\sA$ it is sufficient to show it is dense in $\sA_{\tau}$. We are finally ready to give the general definition of our equilibrium states. 
\begin{defn}
Let $(\sA, \tau)$ be a $\Cstar$-dynamical system. A state $\omega$ is a $(\tau, \beta)$-KMS state if 
\[ \omega(A\tau^{i\beta}(B)) = \omega(BA)\]
for all $A,B$ in some norm-dense, $\tau$-invariant sub-algebra $\sB_{\tau} \subset \sA_{\tau}$. 
\end{defn}

\subsection{Properties of KMS States}
It is not as simple to rewrite the KMS condition as a boundary condition as we have done in the finite-dimensional case. This is because it is not clear that the map $t \mapsto \omega(A\tau^t(B))$ has an analytic extension for arbitrary $A,B \in \sA$, since they may not be analytic elements. However, the next proposition says that this is indeed the case. 
\begin{prop}
Let $(\sA, \tau)$ be a $\Cstar$-dynamical system and $\omega$ a state. Define 
\[ D_{\beta} = \{ z \in \mathbb{C} : 0 < \textrm{Im}(z) < \beta \}.\]
If $\omega$ is a $(\beta, \tau)$-KMS state, then for any $A, B \in \sA$ there exists a complex-valued function $F_{A,B}$ which is bounded and continuous on $\overline{D_{\beta}}$ and analytic on $D_{\beta}$ satisfying the following boundary conditions:
\[ F_{A,B}(t) = \omega(A \tau^t(B))\]
\[ F_{A,B}(t + i\beta) = \omega(\tau^t(B)A)\]\
for all $t \in \mathbb{R}$. 
\end{prop}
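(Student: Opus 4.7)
The plan is to first construct $F_{A,B}$ for $A, B$ in the dense $\tau$-invariant subalgebra $\sB_\tau \subset \sA_\tau$ (where the analytic continuation of $\tau^t$ is available and the KMS relation holds by assumption), and then to extend to arbitrary $A, B \in \sA$ by a uniform Cauchy argument on $\overline{D_\beta}$. For $B \in \sB_\tau$, the map $z \mapsto \tau^z(B)$ is norm-analytic on $\mathbb{C}$, so $F_{A,B}(z) := \omega(A \tau^z(B))$ is entire in $z$. The boundary value on $\mathbb{R}$ is immediate. For the upper boundary at $\text{Im}(z) = \beta$, I would write $\tau^{t+i\beta}(B) = \tau^{i\beta}(\tau^t(B))$ and apply the KMS relation to the pair $(A, \tau^t(B))$, which lies in $\sB_\tau$ by $\tau$-invariance. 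The KMS relation itself extends from $A \in \sB_\tau$ to $A \in \sA$ by continuity of $\omega$ combined with density, since $\tau^{i\beta}(\tau^t(B))$ is a fixed element of $\sA$.

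Next I would establish the bound $|F_{A,B}(z)| \leq \|A\|\|B\|$ on $\overline{D_\beta}$ via the Hadamard three-lines theorem. The two boundary bounds are $|F_{A,B}(t)| \leq \|A\|\|B\|$ and $|F_{A,B}(t+i\beta)| = |\omega(\tau^t(B) A)| \leq \|A\|\|B\|$, both using $\|\omega\|=1$ and isometry of $\tau^t$. This is where I expect the main technical obstacle: the three-lines theorem requires an a priori bound on $F_{A,B}$ across the interior of the strip, which is not automatic for arbitrary entire analytic $B$. The cleanest remedy is to work with the Gaussian regularizations of the preceding subsection, $B^{(n)} := \sqrt{n/\pi} \int \tau^t(B) e^{-nt^2} \, dt$, for which the explicit estimate $\|\tau^z(B^{(n)})\| \leq e^{n(\text{Im}\, z)^2} \|B\|$ gives a uniform bound $e^{n\beta^2}\|B\|$ on $\overline{D_\beta}$. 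Since these regularizations form a norm-dense subset of $\sA$ and belong to $\sA_\tau$, one may work with them throughout, extending the KMS identity to them first by the same density-and-continuity argument used above.

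Finally, for arbitrary $A, B \in \sA$, choose sequences $A_n, B_n$ of such regularizations with $A_n \to A$ and $B_n \to B$ in norm. Writing
\[
F_{A_n, B_n}(z) - F_{A_m, B_m}(z) = \omega\bigl((A_n - A_m)\tau^z(B_n)\bigr) + \omega\bigl(A_m \tau^z(B_n - B_m)\bigr)
\]
and applying the uniform bound from the previous step to each summand yields
\[
\sup_{z \in \overline{D_\beta}} \bigl|F_{A_n, B_n}(z) - F_{A_m, B_m}(z)\bigr| \leq \|A_n - A_m\|\,\|B_n\| + \|A_m\|\,\|B_n - B_m\| \longrightarrow 0.
\]
Hence $\{F_{A_n, B_n}\}$ is Cauchy in the sup norm on $\overline{D_\beta}$ and converges uniformly to a bounded continuous function $F_{A,B}$. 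Analyticity on $D_\beta$ is preserved by the Weierstrass theorem on uniform limits of holomorphic functions, and passing to the limit in the boundary identities satisfied by each $F_{A_n, B_n}$ gives $F_{A,B}(t) = \omega(A \tau^t(B))$ and $F_{A,B}(t+i\beta) = \omega(\tau^t(B) A)$. The single genuine difficulty in this scheme is the justification of the three-lines bound: without the explicit quadratic-exponential control supplied by the Gaussian regularization one would be forced into a more delicate Phragmén--Lindelöf argument, so exploiting the regularized elements is what makes the whole approximation argument go through cleanly.
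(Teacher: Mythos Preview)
Your proposal is correct and follows essentially the same route as the paper, which gives only a one-line sketch (``density of the analytic elements'') and defers to \cite{BR2}, Proposition 5.3.7, for the details; you have in fact supplied exactly those details, correctly identifying the Gaussian regularizations as the class on which the explicit bound $\|\tau^z(B^{(n)})\|\le e^{n(\mathrm{Im}\,z)^2}\|B\|$ makes the three-lines argument applicable. The only point to tighten is the phrase ``by the same density-and-continuity argument'' when extending the KMS identity to Gaussian regularizations: since $\tau^{i\beta}$ is unbounded, you cannot pass to the limit in $B$ directly, but the integral representation $\tau^{i\beta}(B^{(n)})=\sqrt{n/\pi}\int\tau^t(B)e^{-n(t-i\beta)^2}dt$ together with the KMS identity for $\tau^t(B_m)\in\sB_\tau$ (integrated against the Gaussian) gives what you need immediately.
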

This result follows from the density of the analytic elements, and the fact that we already have an entire analytic extension $z \mapsto \tau^z(A)$ satisfying these properties for $A \in \sA_{\tau}$. ([BR 5.3.7] for details) This characterization of KMS states will be useful in practice. 
\\
\\
The next thing to do is check that the KMS states satisfy the most basic property that we desire from an equilibrium state: invariance under the dynamics. 
\begin{lemma}
Let $(\sA, \tau)$ be a $\Cstar$-dynamical system. The set of entire analytic elements $\sA_{\tau}$ is invariant under the dynamics.
\end{lemma}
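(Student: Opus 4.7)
The plan is to show directly that if $A \in \sA_\tau$, then for any $s \in \mathbb{R}$ the element $\tau^s(A)$ is again entire analytic, by exhibiting an explicit entire extension of the map $t \mapsto \tau^t(\tau^s(A))$.

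First I would fix $A \in \sA_\tau$ and let $f : \mathbb{C} \to \sA$ be its entire analytic extension, so that $f(t) = \tau^t(A)$ for all $t \in \mathbb{R}$ and $z \mapsto \omega(f(z))$ is entire for every $\omega \in \sA^*$. The candidate extension for $\tau^s(A)$ is simply the translate $g(z) := f(z+s)$. Two things need to be checked: that $g$ restricts to the correct function on the real line, and that $\omega \circ g$ is entire for every $\omega \in \sA^*$.

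For the first, using the group law for the $\Cstar$-dynamics, for $t \in \mathbb{R}$ we have
\[
g(t) = f(t+s) = \tau^{t+s}(A) = \tau^t\bigl(\tau^s(A)\bigr),
\]
which is exactly the required boundary condition. For the second, fix any $\omega \in \sA^*$ and set $h(z) := \omega(f(z))$; by hypothesis $h$ is entire on $\mathbb{C}$, and therefore so is $z \mapsto h(z+s) = \omega(g(z))$, since translation by a real constant preserves analyticity. Hence $\tau^s(A) \in \sA_\tau$, and $\sA_\tau$ is invariant under $\tau^s$ for every $s \in \mathbb{R}$.

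There is essentially no obstacle: the result is really just the observation that the defining property of an entire analytic element is preserved by shifting the argument, and the group property of the dynamics guarantees the shifted function still lies over the orbit of $A$. The only mild subtlety is keeping track of the two senses of analyticity available (pointwise analyticity of $\omega \circ f$ versus norm analyticity in $\sA$, as established in the preceding proposition), but either one transports across translation in the same way, so the choice of formulation does not affect the argument.
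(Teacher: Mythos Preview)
Your proof is correct and follows essentially the same approach as the paper: define $g(z)=f(z+s)$, verify that $g$ restricts to $t\mapsto\tau^t(\tau^s(A))$ on the real line via the group law, and note that $\omega\circ g$ is entire because it is a translate of the entire function $\omega\circ f$. Your closing remark about the equivalence of pointwise and norm analyticity is a nice addition but not needed for the argument.
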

\begin{proof}
Suppose $A \in \sA_{\tau}$. There is a function $f: \mathbb{C} \rightarrow \sA $ such that $f(t) = \tau^t(A)$ and for each $\omega \in \sA^*$, $z \mapsto \omega(f(z))$ is analytic. Define $g(z) = f(z+s)$ for fixed $s \in \mathbb{R}$. Then 
\[g(t) = f(t+s) = \tau^{t+s}(A) = \tau^t(\tau^s(A)) \] 
for all $t \in \mathbb{R}$. Also for $\omega \in \sA^*$, $z \mapsto \omega(g(z)) = \omega(f(z+s))$ is analytic since it is just a shifted analytic function. Hence $\tau^s(A) \in \sA_{\tau}$ and the analytic elements are invariant under the dynamics. 
\end{proof}
\begin{prop}
Let $\omega$ be a $(\beta, \tau)$-KMS state over $\sA$. Then for all $A \in \sA$, 
\[ \omega(\tau^t(A)) = \omega(A).\] 
\end{prop}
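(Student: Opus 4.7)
The plan is to apply the previous proposition with the choice $A = \Id$, exploit the resulting periodicity of the boundary values to extend a bounded analytic function to the entire plane, and then conclude with Liouville's theorem.

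Fix $B \in \sA$ and apply the preceding proposition to the pair $(\Id, B)$. This produces a function $F := F_{\Id, B}$ that is bounded and continuous on the closed strip $\overline{D_{\beta}}$, analytic on the open strip $D_{\beta}$, and satisfies the boundary conditions
\[ F(t) = \omega(\Id \cdot \tau^t(B)) = \omega(\tau^t(B)), \qquad F(t + i\beta) = \omega(\tau^t(B) \cdot \Id) = \omega(\tau^t(B)) \]
for every $t \in \mathbb{R}$. In particular, the two boundary functions coincide, so $F(t) = F(t+i\beta)$ for all real $t$.

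Next I would extend $F$ to a function $\widetilde{F}$ on all of $\mathbb{C}$ by declaring it to be $i\beta$-periodic: for $z \in \mathbb{C}$ with $k\beta \leq \Im(z) \leq (k+1)\beta$, set $\widetilde{F}(z) = F(z - ik\beta)$. The matching of boundary values guarantees that $\widetilde{F}$ is well-defined and continuous on $\mathbb{C}$, and it is analytic off the horizontal lines $\Im(z) \in \beta \mathbb{Z}$. To upgrade analyticity across these lines, I would invoke Morera's theorem (or equivalently a standard removable-singularity argument for continuous functions analytic in the complement of a line): since $\widetilde{F}$ is continuous on $\mathbb{C}$ and analytic away from a locally finite union of lines, it is in fact entire. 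Moreover $\widetilde{F}$ inherits boundedness from $F$, since $\widetilde{F}$ takes the same values as $F$ on every horizontal translate.

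The conclusion is immediate: a bounded entire function is constant by Liouville's theorem, hence $\omega(\tau^t(B)) = \widetilde{F}(t) = \widetilde{F}(0) = \omega(B)$ for all $t \in \mathbb{R}$. Since $B$ was arbitrary, the state $\omega$ is invariant under $\tau^t$. The subtle step, and the only one that requires genuine care, is verifying that the periodic extension is analytic across the lines $\Im(z) = k\beta$; this relies on continuity of $F$ on $\overline{D_\beta}$ and on the exact matching $F(t) = F(t+i\beta)$, both provided by the preceding proposition.
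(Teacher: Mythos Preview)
Your proof is correct and follows essentially the same strategy as the paper: set $A = \Id$ so the two KMS boundary values coincide, use the resulting $i\beta$-periodicity to produce a bounded entire function, and conclude by Liouville. The only difference is in how the entire function is obtained. The paper restricts to an analytic element $B \in \sB_\tau$, so that $F(z) = \omega(\tau^z(B))$ is entire from the outset, and then the conclusion for general $A \in \sA$ follows by density of $\sB_\tau$ and continuity of $\omega$ and $\tau^t$. You instead invoke the preceding proposition for arbitrary $B \in \sA$, obtaining analyticity only on the open strip, and then need the Morera/removable-singularity argument to push analyticity across the periodic boundary lines. Both routes are standard; yours avoids the density argument at the end, the paper's avoids Morera at the beginning.
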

\begin{proof}
Wlog, we assume $\beta = -1$. Let $B \in \sB_{\tau}$, the dense sub-algebra for which the KMS condition holds. Define 
\[ F(z) = \omega(\tau^z(B)).\]
This is an analytic function. We will bound this function on the strip $I = \{ z \in \mathbb{C} : -1 \leq Im(z) \leq 0 \}$. 
\[ | F(z) | \leq \| \tau^z(B) \| = \| \tau^{Rez}(\tau^{Imz}(B)) \| = \| \tau^{Imz}(B) \| \]
Hence 
\[\sup_{z \in I} |F(z) | \leq \sup_{\lambda \in [-1,0]}  \| \tau^{i\lambda}(B) \| .\] 
Since $\lambda \mapsto \| \tau^{i\lambda}(B) \|$ is continuous, this supremum is attained and $F$ is bounded on $I$ by some constant $M$. We now use the KMS condition to show that $F$ is periodic in $i$. 
\[ F(z-i) = \omega(\Id \tau^{-i}(\tau^z(B))) = \omega(\tau^z(B)\Id)=F(z)\]
(If $\sA$ does not have an identity, use the approximate identity.) This implies that $F$ is bounded by $M$ everywhere and hence is constant by Liouville's theorem. 
\end{proof}
Unlike in the finite dimensional case, where there is exactly one equilibrium state (the Gibbs state) for a given dynamics and temperature, in the general setting a KMS state may not exist, and if it does it may not be unique. However, we can study the geometry of the set of KMS states at a fixed temperature. 
\begin{prop}
Let $(\sA, \tau^t)$ be a $\Cstar$-dynamical system, and $\beta \in \mathbb{R}$. The set of $(\beta, \tau)$-KMS states is convex, and in fact it is a simplex.
\end{prop}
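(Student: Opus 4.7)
The plan is to dispose of convexity first and then attack the simplex property via the central decomposition in the GNS representation.

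Convexity is immediate from linearity. If $\omega_1,\omega_2$ are $(\beta,\tau)$-KMS states and $\lambda\in[0,1]$, then $\omega:=\lambda\omega_1+(1-\lambda)\omega_2$ is again a state by the convexity of the state space established earlier, and for $A,B$ in the intersection of the two norm-dense $\tau$-invariant sub-algebras on which the KMS identity holds,
\[
\omega(A\tau^{i\beta}(B))=\lambda\omega_1(BA)+(1-\lambda)\omega_2(BA)=\omega(BA),
\]
so $\omega$ is $(\beta,\tau)$-KMS. Weak-$*$ compactness of the KMS state set follows from the boundary-value characterization in the preceding proposition: the defining identity $\omega(\tau^t(B)A)=F_{A,B}(t+i\beta)$ involves only continuous functionals, and so passes to weak-$*$ limits.

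For the simplex property I would argue through central decomposition. Given a $(\beta,\tau)$-KMS state $\omega$, form the GNS triple $(\pi_\omega,\sH_\omega,\Omega_\omega)$ and set $\sM_\omega:=\pi_\omega(\sA)''$. The first step is to check that $\Omega_\omega$ is cyclic and separating for $\sM_\omega$; this follows from the KMS condition by a small analytic-continuation argument applied to $t\mapsto\langle\Omega_\omega,\pi_\omega(A)\Omega_\omega\rangle$. Tomita-Takesaki theory then produces a modular automorphism group on $\sM_\omega$ which, by uniqueness of the modular group satisfying the KMS condition at $\beta=-1$, must coincide (after rescaling by $-\beta$) with the weak extension of $\pi_\omega\circ\tau^t$ to $\sM_\omega$. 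Letting $\mathcal{Z}_\omega:=\sM_\omega\cap\sM_\omega'$ denote the centre, one writes the central decomposition $\omega=\int\omega_x\,d\mu(x)$ with $\mu$ supported on the spectrum of $\mathcal{Z}_\omega$. The crucial verification is that each $\omega_x$ is itself $(\beta,\tau)$-KMS. Granting this, extremal KMS states correspond exactly to factor KMS states (those with $\mathcal{Z}_\omega=\mathbb{C}\cdot\Id$), and the uniqueness of state disintegrations over an abelian von Neumann algebra yields the unique barycentric representation over the extreme boundary which defines a Choquet simplex.

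The main obstacle is precisely the step that the pieces $\omega_x$ inherit the KMS condition. My plan is to leverage the boundary-value characterization: for fixed $A,B\in\sA$, the function $F_{A,B}^{(\omega)}(z)=\omega(A\tau^z(B))$ is analytic on $D_\beta$, continuous on $\overline{D_\beta}$, with prescribed boundary values. Because central elements commute with both $\pi_\omega(\sA)$ and the modular flow, the disintegration $\omega=\int\omega_x\,d\mu(x)$ induces pointwise disintegrations $F_{A,B}^{(\omega_x)}$ for $\mu$-almost every $x$; a Vitali/Montel-type argument on the uniformly bounded family $\{F_{A,B}^{(\omega_x)}\}$ then transfers analyticity on $D_\beta$ and the two boundary identities from $\omega$ to $\mu$-almost every $\omega_x$. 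The remaining ingredients --- convexity, weak-$*$ closedness, uniqueness of disintegration over $\mathcal{Z}_\omega$, and the equivalence ``extremal KMS $\iff$ factor KMS'' --- are comparatively routine once this KMS-preservation lemma is in hand; the technical content lives entirely in carrying out this measurable propagation of the analytic boundary condition.
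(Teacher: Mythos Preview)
The paper does not prove this proposition; it is stated without proof, followed only by a physical interpretation of the extremal KMS states as pure phases. The implicit reference is to \cite{BR2}, which the paper cites for the surrounding KMS material.

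Your outline is essentially the standard Bratteli--Robinson argument (see \cite{BR2}, Theorem 5.3.30 and the preceding sections), so there is no meaningful comparison to make against the paper itself. Two technical comments on your sketch:

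\begin{enumerate}
\item For convexity, the phrase ``the intersection of the two norm-dense $\tau$-invariant sub-algebras'' is dangerous: the intersection of two dense subsets need not be dense. The clean fix is to use the entire-analytic elements $\sA_\tau$, which is a single dense $\tau$-invariant $^*$-subalgebra on which every KMS state satisfies the identity (this follows from the boundary-value characterization you invoke next).
\item For the simplex property, your identification of the key step --- that the central disintegrants $\omega_x$ are again KMS --- is correct. However, the mechanism you propose (a Vitali/Montel argument transferring analyticity of $F_{A,B}^{(\omega)}$ to $\mu$-a.e.\ $F_{A,B}^{(\omega_x)}$) is not how this is done in the standard treatment and would run into measurable-selection difficulties: you need analyticity and the boundary identities for \emph{all} $A,B$ simultaneously on a single full-measure set of $x$. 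The usual route is more algebraic: one shows that $\tau$-invariant states majorized by a multiple of a KMS state $\omega$ correspond bijectively to positive elements of the centre $\mathcal{Z}_\omega$, and that the resulting central decomposition automatically yields KMS components because the modular automorphism group of $\omega$ acts trivially on $\mathcal{Z}_\omega$. This gives the lattice isomorphism between decompositions of $\omega$ and probability measures on the spectrum of $\mathcal{Z}_\omega$, whence the simplex structure.
\end{enumerate}
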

A physical interpretation for this result is that the extremal points of the set of KMS states represent different pure phases, with each KMS state having a unique decomposition as a convex combination of these pure phases. To motivate this further and to explore the stability properties of the KMS states under perturbations of the dynamics, we will need to introduce several more tools. 
\section{Representations and Perturbation Theory}
\subsection{The GNS Representation}
Although we have formulated our theory without reference to an underlying Hilbert space in order to extend it to the infinite setting, it is sometimes useful to exploit the structure theorem for $\Cstar$-algebras to work in the more concrete setting of an algebra of bounded operators over a Hilbert space. We refer to this process as working in a representation. 

\begin{defn}
Let $\sA$ be a $\Cstar$-algebra and $\sH$ be a Hilbert space. We say a map $\Pi: \sA \rightarrow \mathcal{B}(\sH)$ is a \textit{representation of $\sA$ in $\sH$} if $\pi$ is a morphism of $\Cstar$- algebras.
\end{defn}

By Corollary \ref{morphisms} we know that a $^*$-algebra morphism between $\Cstar$-algebras is automatically a $\Cstar$-algebra morphism, so we only need to check that the algebraic structure is preserved to find a representation. Furthermore, we know from the structure theorem that a representation always exists. However, not every rep gives an isomorphism between $\sA$ and $\pi(\sA)$. 

\begin{defn}
A representation $\pi$ of $\sA$ over $\sH$ is called \textit{faithful} if $\ker(\pi) = 0$. 
\end{defn}
Faithful reps are particularly nice because they preserve all the structure. The structure theorem tells us that every $\Cstar$-algebra has a faithful rep. The power of using a representation comes into play when we are also considering a state on the $\Cstar$-algebra, and we want to represent the state in the Hilbert space as well. 
\begin{defn}
Given a $\Cstar$-algebra $\sA$ and a representation $(\sH, \pi)$, a state $\omega$ is a \textit{vector state} if 
\[  \omega(A) = (\Omega, \pi(A) \Omega) \]
for some $\Omega \in \sH$. $\Omega$ is a vector representative of $\omega$. 
\end{defn}

The GNS representation is a powerful result that allows any $\Cstar$-algebra with a state to be represented in the bounded operators of some Hilbert space with the state being described by a cyclic vector in this Hilbert space. It says that \textit{all} states can be realised as vector states by constructing the appropriate representation. 
\begin{leftbar}
\begin{prop}[GNS Representation] %BR1 2.3.16
\label{prop GNS}
Given a $\Cstar$-algebra $\sA$ and a state $\omega$, there exists a representation of $\sA$ such that $\omega$ is a vector state and its vector representative is a cyclic vector. This representation is unique up to unitary equivalence, and is called the GNS representation. 
\end{prop}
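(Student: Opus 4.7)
The plan is to carry out the standard GNS construction: use the state $\omega$ to define a positive semi-definite sesquilinear form on $\sA$, quotient out its null space, complete to obtain a Hilbert space, and realize $\sA$ as bounded operators acting by left multiplication.

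First I would define the sesquilinear form $\langle A, B \rangle_\omega := \omega(A^* B)$ on $\sA$. Positivity of $\omega$ gives $\langle A, A \rangle_\omega \geq 0$, and the Cauchy--Schwarz inequality from Lemma \ref{CS lemma} gives $|\langle A, B\rangle_\omega|^2 \leq \langle A, A\rangle_\omega \langle B, B\rangle_\omega$. Letting $\sN_\omega := \{A \in \sA : \omega(A^*A) = 0\}$, Cauchy--Schwarz implies $\sN_\omega$ is a left ideal of $\sA$ (if $A \in \sN_\omega$ and $B \in \sA$, then $\omega((BA)^*(BA)) = \omega(A^* B^*B A) \leq \|B^*B\| \cdot \omega(A^*A) = 0$ by the second bound of the proposition following Lemma \ref{CS lemma}). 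Then $\langle\cdot,\cdot\rangle_\omega$ descends to a genuine inner product on $\sA / \sN_\omega$, and I take $\sH_\omega$ to be its completion.

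Next I define the representation $\pi_\omega$ by left multiplication: for $A \in \sA$, set $\pi_\omega(A)[B] := [AB]$, where $[B] \in \sA/\sN_\omega$ denotes the equivalence class of $B$. This is well-defined precisely because $\sN_\omega$ is a \emph{left} ideal. Boundedness $\|\pi_\omega(A)[B]\|^2 = \omega(B^* A^* A B) \leq \|A^*A\|\,\omega(B^*B) = \|A\|^2 \|[B]\|^2$ (again using the second bound from the earlier proposition with the positive functional $B^* \cdot B$) lets me extend $\pi_\omega(A)$ continuously to the completion $\sH_\omega$. Checking that $\pi_\omega$ is a $*$-homomorphism is a routine computation.

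The cyclic vector in the unital case is simply $\Omega_\omega := [\Id]$: then $(\Omega_\omega, \pi_\omega(A)\Omega_\omega) = \omega(\Id^* A \Id) = \omega(A)$, and $\pi_\omega(\sA)\Omega_\omega = \sA/\sN_\omega$ is dense in $\sH_\omega$ by construction. In the non-unital case, the subtlety is that there is no obvious candidate for $\Omega_\omega$; here I would invoke an approximate identity $\{E_\alpha\}$ and show that $[E_\alpha]$ is Cauchy in $\sH_\omega$ by combining Lemma \ref{CS lemma} (so $\|[E_\alpha - E_\beta]\|^2 = \omega((E_\alpha - E_\beta)^2) \to 0$) with the fact that $\omega(E_\alpha^2) \to \|\omega\| = 1$. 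Its limit $\Omega_\omega$ then satisfies $\pi_\omega(A)\Omega_\omega = \lim_\alpha [A E_\alpha] = [A]$ and $(\Omega_\omega, \pi_\omega(A)\Omega_\omega) = \lim_\alpha \omega(E_\alpha A E_\alpha) = \omega(A)$. This non-unital argument is the main technical obstacle.

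Finally, for uniqueness up to unitary equivalence: if $(\sH', \pi', \Omega')$ is another such triple, I define $U: \sH_\omega \to \sH'$ on the dense subspace $\pi_\omega(\sA)\Omega_\omega$ by $U \pi_\omega(A)\Omega_\omega := \pi'(A)\Omega'$. Cyclicity of both vectors makes $U$ densely defined with dense range, and the identity $(\pi'(A)\Omega', \pi'(B)\Omega') = \omega(A^*B) = (\pi_\omega(A)\Omega_\omega, \pi_\omega(B)\Omega_\omega)$ shows $U$ is isometric, hence extends to a unitary intertwining the two representations and mapping $\Omega_\omega \mapsto \Omega'$.
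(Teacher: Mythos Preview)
Your proposal is correct and follows essentially the same route as the paper: quotient $\sA$ by the null space $\sN_\omega=\{A:\omega(A^*A)=0\}$, complete to get $\sH_\omega$, represent by left multiplication, take $\Omega_\omega=[\Id]$, and prove uniqueness via the isometry $U\pi_\omega(A)\Omega_\omega\mapsto\pi'(A)\Omega'$. Your write-up is in fact slightly more complete than the paper's, which omits the argument that $\sN_\omega$ is a left ideal and only treats the unital case, whereas you sketch both the ideal property and the approximate-identity construction of $\Omega_\omega$ in the non-unital case.
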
 
\end{leftbar}
\begin{proof} \textit{(Existence)} 

Begin with a $\Cstar$-algebra $\sA$ and a state $\omega$. Let $I_{\omega} = \{ A \in \sA : \omega(A^*A) = 0 \}$. This is a left ideal of $\sA$. (This is not easy to show, so omit for now.) Now let $ \sH= \sA / I_{\omega} $. Define an inner product on $\sH_{\omega}$ by $([A], [B])_{\omega} = \omega(A^*B)$. It is not hard to check that this is well-defined using the fact that $I_{\omega}$ is an ideal. Finally let $\sH_{\omega} = \overline{ \left( \sA / I_{\omega} \right)} $ where the closure is with respect to this inner product. $(\sH_{\omega}, ()_{\omega})$ is a Hilbert space. 

We construct a map 
\begin{align*}
 \pi_{\omega} : \sA & \rightarrow  \sB(\sH_{\omega}) 
\end{align*}
where $\pi_{\omega}(A)$ is defined as follows. First, for all $B \in \sA$ let 
\[ \pi_{\omega}(A)_0 : [B]  \mapsto [A B]\]
This is densely defined on $\sH_{\omega}$. 
We now use Useful Bound 2 to see that $\forall B \in \sA$ 
\[ || \pi_{\omega}(A)_0 [B] ||_{\sH_{\omega}}^2 = ([AB], [AB]) = \omega(B^*A^*AB) \leq ||A||^2 \omega(B^*B) = ||A||^2 || [B]||_{\sH_{\omega}}^2 \]
Hence the closure of $\pi_{\omega}(A)_0$ is in $\sB(\sH_{\omega})$ and we let $\pi_{\omega}(A) = \overline{\pi_{\omega}(A)_0}$. With this definition the map $\pi_{\omega}$ is a morphism of $\Cstar$-algebras. 

Finally let $\Omega_\omega = \mathbb{I}$. For any element $[A]$ of $\sH_{\omega}$, $\pi(A) \Omega_{\omega}$ = $[A \mathbb{I}] = [A]$, so $\Omega_{\omega}$ is cyclic.  Then 

\[ (\Omega_{\omega}, \pi_{\omega}(A) \Omega_{\omega})_{\omega} = ([\mathbb{I}], \pi_{\omega}(A) [\mathbb{I}])_{\omega} = ( [\mathbb{I}], [A])_{\omega} = \omega(\mathbb{I}^* A) = \omega(A)\] 

This construction satisfies the requirements of Proposition \ref{prop GNS}. It is called the \textit{canonical GNS representation}. 
\newline
\newline
 \textit{(Uniqueness)}
 \newline
Let $(\sH_{\omega}, \pi_{\omega}, \Omega_{\omega})$ be the cannonical GNS representation, and $(\sH_{\omega}', \pi_{\omega}', \Omega_{\omega}')$ be another representation satisfying the conditions of proposition \ref{prop GNS}. 

Construct an operator $\mathcal{U}$ as follows. 
\begin{align*}
\mathcal{\tilde{U}} : \textrm{Dom}(\mathcal{\tilde{U}}) \subset \sH_{\omega} & \rightarrow \sH_{\omega}' \\
\Omega_{\omega} & \mapsto \Omega_{\omega}' \\
\pi_{\omega}(A) \Omega_{\omega} & \mapsto \pi_{\omega}'(A) \Omega_{\omega}' \qquad \forall A \in \sA
\end{align*} 
$\mathcal{\tilde{U}}$ is bounded on its domain. 
\begin{align*} 
|| \mathcal{\tilde{U}} (\pi_{\omega}(A) \Omega_{\omega} )||_{\sH_{\omega}'}& = || \pi_{\omega}' (A) \Omega_{\omega} ' ||_{\sH_{\omega}' } \\
& = (\pi_{\omega}' (A) \Omega_{\omega}', \pi_{\omega}' (A) \Omega_{\omega}')_{\sH_{\omega}' } ^2 \\
& = (\Omega_{\omega}', \pi_{\omega}' (A^*A) \Omega_{\omega}' )^2  \\
& = \omega(A^*A)^2 \\
& = || \pi_{\omega} (A) \Omega_{\omega} ||_{\sH_{\omega} }
\end{align*} 

Since $\textrm{Dom} (\mathcal{\tilde{U}}) = \{ \pi_{\omega}(A) \Omega_{\omega} : A \in \sA \}$ is dense in $\sH_{\omega}$ we can extend $\mathcal{\tilde{U}}$ to a bounded linear operator $\mathcal{U}$ on all of $\sH_{\omega}$ by the B.L.T. theorem. Note that since the range of $\mathcal{\tilde{U}}$ is dense in $\sH_{\omega}'$ the range of $\mathcal{U}$ is also dense. Finally we verify that $\mathcal{\tilde{U}}$ and hence $ \mathcal{U}$ preserve the inner product. Let $\xi_A = \pi_{\omega}(A) \Omega_{\omega}$ and $\xi_B = \pi_{\omega}(B) \Omega_{\omega}$ .
\begin{align*}
(\mathcal{U} \xi_A, \mathcal{U} \xi_{B} ) & = (\pi_{\omega}'(A) \Omega_{\omega}', \pi_{\omega}'(B) \Omega_{\omega}' ) \\
& = \omega(A^*B) \\
& = (\xi_A, \xi_B)
\end{align*} 
Let $\xi, \xi' \in \sH_{\omega}$. Then there are sequences $A_n$ and $B_n$ in $\sA$ such that $\pi_{\omega}(A_n) \Omega_{\omega} \rightarrow \xi$ and $\pi_{\omega}(B_n) \Omega_{\omega} \rightarrow \xi'$. 
\begin{align*}
(\mathcal{U} \xi, \mathcal{U} \xi') & = \lim_{n \rightarrow \infty} (\mathcal{U} \pi_{\omega}(A_n) \Omega_{\omega}, \mathcal{U} \pi_{\omega} (B_n) \Omega_{\omega}) \\
& = \lim_{n \rightarrow \infty} ( \pi_{\omega}(A_n) \Omega_{\omega}, \pi_{\omega} (B_n) \Omega_{\omega}) \\
& = (\xi, \xi')
\end{align*} 
The fact that $\mathcal{U}$ is bounded and linear and hence continuous is used in the first and last steps. Hence, since $\mathcal{U}$ preserves the inner product and has dense range, it is unitary. 
\end{proof}

One useful application of the uniqueness part of the result is to take an invariant state $\omega$ for some dynamics $\tau$ (such as a KMS state) and apply the theorem to the cyclic representation $(\sH_{\omega}, \pi_{\omega} \circ \tau^t, \Omega_{\omega})$. Since
\[ (\Omega_{\omega}, \pi_{\omega} \circ \tau^t (A) \Omega_{\omega} ) = \omega(\tau^t(A))=\omega(A)\]
we have that this rep is unitarily equivalent to $(\sH_{\omega}, \pi_{\omega}, \Omega_{\omega})$.  Hence there is some unitary $\sU(t)$ such that 
\begin{align*}
\pi_{\omega}(\tau^t(A)) &= \sU(t) \pi_{\omega}(A) \sU(t)^* \\
\sU(t) \Omega_{\omega} &= \Omega_{\omega}
\end{align*}
i.e. the automorphism $\tau^t$ in unitarily implemented by $\sU(t)$ in the GNS rep of the invariant state. The unitary $\sU(t)$ is not unique, but we will see later that with an additional restriction there is a unique way to choose it, and then $t \mapsto \sU(t)$ will be a one-parameter strongly continuous unitary group. We can apply Stone's theorem to get a self-adjoint (over $\sH_{\omega}$) generator $L$ such that $U(t) = e^{itL}$. 
\\
\\
Additionally, for a given state the GNS representation may not be faithful, in which case some of the structure is lost by working in the representation. However, there is a notion of faithfulness for states. 
\begin{defn}
 A state $\omega$ over $\sA$ is faithful if $\omega(A^*A) = 0 $ implies that $A=0$. 
 \end{defn}
 If a state $\omega$ is faithful, then in the construction of the GNS rep $\sI_{\omega}$ is trivial and hence $\sH_{\omega} = \sA$ and $\pi_{\omega}(A) X = AX$ for all $X\in \sA$. It follows that $\pi_{\omega}$ is injective. The GNS representation of a faithful state is faithful. We also have the following useful result \cite{BR2}.
 \begin{prop}
A KMS state over a $\Cstar$-algebra is automatically faithful. 
 \end{prop}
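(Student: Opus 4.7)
The proof plan is to bootstrap the hypothesis $\omega(A^*A) = 0$ into the stronger identity $\omega(B^*A^*AB) = 0$ for every $B \in \sA$, which in the GNS picture reads $\pi_\omega(A) = 0$. The two tools are the Cauchy--Schwarz inequality for positive functionals (Lemma \ref{CS lemma}) and the analytic properties of the KMS two-point function $F_{X,Y}(z) = \omega(X\tau^z(Y))$ on the closed strip $\overline{D_\beta}$.

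Step 1. Cauchy--Schwarz gives $|\omega(A^*B)|^2 \leq \omega(A^*A)\omega(B^*B) = 0$, so $\omega(A^*B) = 0$ for every $B \in \sA$. Since KMS states are $\tau^t$-invariant, replacing $B$ by $\tau^t(B)$ yields $\omega(A^* \tau^t(B)) = 0$, i.e.\ $F_{A^*,B}$ vanishes on the real boundary of $\overline{D_\beta}$. Being continuous on the closed strip, analytic on the interior, and identically zero (hence real-valued) on $\mathbb{R}$, $F_{A^*,B}$ extends analytically across $\mathbb{R}$ by the Schwarz reflection principle; the identity theorem then forces $F_{A^*,B} \equiv 0$ on the entire strip. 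Reading off the other boundary gives $\omega(BA^*) = F_{A^*,B}(i\beta) = 0$ for every $B \in \sA$. Setting $B = A$ yields $\omega(AA^*) = 0$, so repeating the same Cauchy--Schwarz + Schwarz reflection argument with $A$ replaced by $A^*$ produces the dual identities $\omega(AB) = 0$ and $\omega(BA) = 0$ for every $B \in \sA$.

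Step 2. Fix an entire analytic $B \in \sA_\tau$; then $B^*$ is also analytic. Applying the KMS identity $\omega(X\tau^{i\beta}(Y)) = \omega(YX)$ with $X = A^*AB$ and $Y = B^*$ gives
\[
\omega(B^*A^*AB) \;=\; \omega\!\bigl(A^*AB \cdot \tau^{i\beta}(B^*)\bigr) \;=\; \omega\!\bigl(A^* \cdot [AB\tau^{i\beta}(B^*)]\bigr) \;=\; 0,
\]
the last equality using $\omega(A^*Y) = 0$ for arbitrary $Y$ from Step 1. Density of $\sA_\tau$ in $\sA$ and continuity of $\omega$ extend this to all $B \in \sA$, so $\|\pi_\omega(A)\pi_\omega(B)\Omega_\omega\|^2 = \omega(B^*A^*AB) = 0$ for every $B$. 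Since $\pi_\omega(\sA)\Omega_\omega$ is dense in $\sH_\omega$, we conclude $\pi_\omega(A) = 0$, which is the content of the faithfulness statement.

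The main technical hurdle is the Schwarz reflection step: one must carefully justify the analytic continuation of $F_{A^*,B}$ across $\mathbb{R}$ (so the identity theorem applies on a connected open neighbourhood of the strip), using only continuity up to the real boundary and the fact that the vanishing boundary value is trivially real. Once this analytic-continuation argument is in place, the rest is essentially bookkeeping: Cauchy--Schwarz, $\tau^t$-invariance of $\omega$, and a single application of KMS cyclicity combined with the density of analytic elements.
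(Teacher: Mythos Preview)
The paper does not prove this proposition; it merely cites \cite{BR2}, so there is no in-paper argument to compare against. Your approach---Cauchy--Schwarz, then Schwarz reflection on the KMS two-point function to pass from one boundary of the strip to the other, then a single KMS-cyclicity application together with density of analytic elements---is essentially the standard proof one finds in \cite{BR2} (Corollary~5.3.9), and the analytic steps you outline are correct.

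There is, however, a genuine gap at the very end. You conclude $\pi_\omega(A)=0$ and declare this ``the content of the faithfulness statement,'' but faithfulness of $\omega$ as a state on $\sA$ means $A=0$, not merely $\pi_\omega(A)=0$. What your argument actually establishes is that the left kernel $I_\omega=\{A:\omega(A^*A)=0\}$ coincides with $\ker\pi_\omega$, i.e.\ that $I_\omega$ is a two-sided $*$-ideal, or equivalently that $\Omega_\omega$ is separating for $\pi_\omega(\sA)$. This is precisely what \cite{BR2} proves and precisely what the paper \emph{needs} (modular theory only requires $\Omega_\omega$ cyclic and separating; see the proposition immediately following in Section~7). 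In fact the proposition as stated in the paper is slightly loose: a KMS state on a general $\Cstar$-algebra need not be faithful on $\sA$ unless $\sA$ is simple or one imposes further hypotheses---the correct general statement, and the one you have proven, is that its normal extension to $\pi_\omega(\sA)''$ is faithful. You should either flag this distinction explicitly or, if you want $A=0$, add the hypothesis that $\sA$ has no nontrivial closed two-sided ideals.

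A minor remark: the appeal to $\tau$-invariance in Step~1 is unnecessary---you obtain $\omega(A^*\tau^t(B))=0$ simply by substituting $\tau^t(B)$ for $B$ in the identity $\omega(A^*B)=0$, which holds for \emph{all} $B\in\sA$.
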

We may safely work in the GNS rep of an equilibrium state without losing any information. 
\\
\\
Finally, we may sometimes wish to work in the GNS rep of a $\Cstar$-algebra of the form $\sB(\sH)$ where $\sH$ is a finite dimensional Hilbert space. In this case it is often more convenient to consider the standard GNS rep, which is unitarily equivalent but not equal to the canonical GNS rep. It is constructed from $(\sB(\sH), \tau^t = e^{it[H, \cdot]}, \omega = \tr(\rho \cdot))$ by taking $\sH_{\omega} = \sB(\sH)$ with inner product 
\[ (A,B)_{\sH_{\omega}} = \tr(A^*B),\] 
$\pi_{\omega}(A) = L(A)$ where $L(A)$ is the standard left representation 
\[L(A) B = AB \]
for $B \in \sH_{\omega}$, and $\Omega_{\omega} = \rho^{1/2}$. 
\subsection{Normal States}
Having moved to a GNS representation for a given state, one can ask whether other states can also be represented as vectors in this representation. In general, two arbitrary states cannot be represented in the same GNS Hilbert space, and those that can exhibit certain behaviour. 
\begin{defn}
Let $\sA$ be a $\Cstar$ algebra and $\omega, \eta$ two states. The state $\eta$ is said to be $\omega$-normal if there is a positive trace-one operator $\rho_{\eta}$ in $\sB(\sH_{\omega})$ such that
\[ \eta(A) = \tr(\rho_{\eta}\pi_{\omega}(A)).\]
In other words, $\eta$ can be written as a trace state in the $\omega$-GNS rep. 
\end{defn}
\begin{prop}
If $\eta$ is $\omega$-normal, then there exists a vector $\Omega_{\eta} \in \sH_{\omega}$ such that
\[ \eta(A) = (\Omega_{\eta}, \pi_{\omega}(A) \Omega_{\eta}).\]
\end{prop}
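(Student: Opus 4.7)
The plan is to exploit the spectral decomposition of the density matrix $\rho_{\eta}$ to rewrite $\eta$ first as a countable convex combination of vector states on $\pi_{\omega}(\sA)$, and then to collapse that combination into a single vector in $\sH_{\omega}$.

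First I would use the fact that, since $\rho_{\eta}$ is positive and trace-class on $\sH_{\omega}$, the spectral theorem gives a decomposition $\rho_{\eta}=\sum_{i}\lambda_{i}|\phi_{i}\rangle\langle\phi_{i}|$ with $\{\phi_{i}\}$ an orthonormal family, $\lambda_{i}\ge 0$ and $\sum_{i}\lambda_{i}=1$. Inserting this decomposition into the definition of $\eta$ and using cyclicity of the trace immediately yields the expansion
\[
\eta(A)\;=\;\sum_{i}\lambda_{i}\,(\phi_{i},\pi_{\omega}(A)\phi_{i}),
\]
which already presents $\eta$ as a countable convex combination of vector states. Equivalently, setting $T:=\rho_{\eta}^{1/2}$, a Hilbert--Schmidt operator with $\|T\|_{\mathrm{HS}}=1$, we can write $\eta(A)=\tr(T\,\pi_{\omega}(A)\,T)$.

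The main step, and the one that requires the actual structure of the GNS representation, is to realize this convex combination as a single vector state. The idea is to use the cyclicity of $\Omega_{\omega}$ and the resulting size of the commutant $\pi_{\omega}(\sA)'$: within $\sH_{\omega}$ we have enough ``extra room'' supplied by $\pi_{\omega}(\sA)'$ to reshuffle the components $\sqrt{\lambda_{i}}\,\phi_{i}$ into orthogonal directions without disturbing the action of $\pi_{\omega}(\sA)$. Concretely, I would choose mutually orthogonal projections $P_{i}\in\pi_{\omega}(\sA)'$ together with partial isometries $V_{i}\in\pi_{\omega}(\sA)'$ that carry the cyclic subspace $\overline{\pi_{\omega}(\sA)\phi_{i}}$ into pairwise orthogonal subspaces $P_{i}\sH_{\omega}$. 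Then the vector
\[
\Omega_{\eta}\;:=\;\sum_{i}\sqrt{\lambda_{i}}\,V_{i}\phi_{i}
\]
is well defined in $\sH_{\omega}$ (the series converges in norm because $\sum_{i}\lambda_{i}=1$ and the summands are mutually orthogonal), and a direct computation using $[V_{i},\pi_{\omega}(A)]=0$ together with the orthogonality of the $P_{i}$'s gives
\[
(\Omega_{\eta},\pi_{\omega}(A)\Omega_{\eta})\;=\;\sum_{i}\lambda_{i}\,(V_{i}\phi_{i},\pi_{\omega}(A)V_{i}\phi_{i})\;=\;\sum_{i}\lambda_{i}\,(\phi_{i},\pi_{\omega}(A)\phi_{i})\;=\;\eta(A).
\]

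The hard part of this plan is justifying the existence of the isometries $V_{i}$ inside the commutant $\pi_{\omega}(\sA)'$. This is essentially a statement about the size of $\pi_{\omega}(\sA)'$ acting on $\sH_{\omega}$, and it is the place where the faithfulness of $\omega$ (automatic for KMS states, as noted earlier) and the cyclicity of $\Omega_{\omega}$ really enter: together they ensure that the von Neumann algebra $\pi_{\omega}(\sA)''$ is in standard form, so its commutant is ``large enough'' to realize every normal state by a vector. Once that structural input is granted, the construction above is routine and the resulting $\Omega_{\eta}$ has the required property.
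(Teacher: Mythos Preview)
The paper states this proposition without proof; the machinery that actually establishes it appears later, in Section~7.3, where it is shown that when $\Omega_\omega$ is cyclic \emph{and separating} for $\sM=\pi_\omega(\sA)''$ the natural cone $\sP$ provides a bijection $\xi\mapsto\omega_\xi$ between $\sP$ and the normal positive forms on $\sM$. That theorem is the real content of the statement, and your proposal does not supply an independent argument for it.

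Your decomposition $\eta(A)=\sum_i\lambda_i(\phi_i,\pi_\omega(A)\phi_i)$ is fine, and you correctly identify that the whole difficulty lies in producing the partial isometries $V_i\in\pi_\omega(\sA)'$ with mutually orthogonal ranges. But you then simply assert that ``standard form'' guarantees their existence. That is circular: the fact that every normal state is a vector state is precisely one of the defining features of a standard representation, and proving that a cyclic--separating GNS representation is standard is a substantial theorem (Araki, Connes, Haagerup) whose proof goes through the polar decomposition and the natural cone, not through an ad hoc construction of isometries in the commutant. In particular, nothing in your argument rules out the possibility that the cyclic subspaces $\overline{\pi_\omega(\sA)\phi_i}$ are all equal to $\sH_\omega$ with no room left to make the $V_i\phi_i$ orthogonal; for type~III situations this is exactly what happens, and the resolution is subtle.

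Two further remarks. First, as you note yourself, the proposition as literally stated is false without assuming $\Omega_\omega$ separating: take $\sA=M_2(\mathbb C)$ with $\omega$ a pure state, so $\sH_\omega\cong\mathbb C^2$ and the tracial state is $\omega$-normal but not a vector state there. The paper is implicitly working under the modularity hypothesis introduced a few pages later. Second, if you want a direct argument avoiding the full natural-cone theory, the cleanest route is: use separability of $\Omega_\omega$ to get a faithful normal state on $\sM$, invoke the comparison theory of projections in $\sM'$ to show that the cyclic projections $[\sM\phi_i]\in\sM'$ are all equivalent to subprojections of $[\sM\Omega_\omega]=\Id$ with enough orthogonal copies available, and then build the $V_i$. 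But carrying that out rigorously is essentially reproving the standard-form theorem.
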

This should be thought of as the non-commutative analog of a measure $\eta$ being absolutely continuous with respect to another measure $\omega$. In that situation integrals against $\eta$ can be computed as integrals against $\omega$ (by weighting by the Radon-Nikodym derivative). Similarly here, the value of $\eta$ on an operator can be computed in the $\omega$-GNS rep, by taking an inner product against an appropriate vector. We will see later that there is an object built out of $\Omega_{\eta}$ which plays the role of the Radon-Nikodym derivative. 
\\
\\
We denote by $\mathcal{N}_{\omega}$ the set of all states which are $\omega$-normal. This set can be thought of physically as the states that are ``not too different" to the state $\omega$. We can now compare states by comparing these sets of relatively normal states. 
\begin{defn}
Let $\omega$ and $\eta$ be two states. If $\mathcal{N}_{\omega} = \mathcal{N}_{\eta}$ then we say they are \textit{quasi-equivalent}. If $\mathcal{N}_{\omega} \cap \mathcal{N}_{\eta} = \phi$ then we say they are \textit{disjoint}. 
\end{defn}
This allows us to refine our description of the set of KMS states. 
\begin{prop}
If $\omega_1$ and $\omega_2$ are two extremal states in the set of $\tau, \beta$-KMS states for some fixed $\tau$ and $\beta$ then they are either equal or disjoint. 
\end{prop}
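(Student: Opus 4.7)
The plan is to assume that $\omega_1 \ne \omega_2$ are two extremal $(\tau,\beta)$-KMS states that are \emph{not} disjoint, and derive a contradiction.

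First, I would form $\omega := \tfrac12(\omega_1+\omega_2)$. This is again a $(\tau,\beta)$-KMS state, because the KMS boundary condition is preserved by convex combinations at the same temperature. Since $0 \le \omega_i \le 2\omega$ on positive elements, each $\omega_i$ is $\omega$-normal. In the GNS representation $(\sH_\omega,\pi_\omega,\Omega_\omega)$, by the standard Radon--Nikodym-type argument for states dominated by a cyclic vector state, each $\omega_i$ is implemented by a positive operator $T_i$ in the commutant $\pi_\omega(\sA)'$, with $T_1+T_2=\Id$ and $\omega_i(A)=(\Omega_\omega,T_i\pi_\omega(A)\Omega_\omega)$.

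Next I would invoke the characterization of extremal $(\tau,\beta)$-KMS states as \emph{factor states}: $\omega_i$ is extremal in the simplex of KMS states iff $\pi_{\omega_i}(\sA)''$ is a factor. Equivalently, the central support projection $z_i$ of the normal extension of $\omega_i$ to the common enveloping algebra $\pi_\omega(\sA)''$ is a minimal central projection. Under the non-disjointness hypothesis $\sN_{\omega_1}\cap\sN_{\omega_2}\ne\emptyset$, these central supports cannot be orthogonal: if $z_1 z_2=0$, then any $\eta\in\sN_{\omega_1}$ would satisfy $\eta(z_2)=0$, preventing $\eta\in\sN_{\omega_2}$. Since $z_1,z_2$ are both minimal central projections, non-orthogonality forces $z_1=z_2$.

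Hence, reduced to the common factor $z_1\pi_\omega(\sA)''z_1$, both $\omega_1$ and $\omega_2$ restrict to faithful normal $(\tau,\beta)$-KMS states. The argument is then closed by Tomita--Takesaki modular theory: a faithful normal state on a factor is determined up to scaling by the modular automorphism group it induces, and the $(\tau,\beta)$-KMS property pins that modular group down to $\tau_{-\beta t}$. Concretely, the Connes cocycle $[D\omega_1:D\omega_2]_t$ is invariant under both modular flows, hence lies in the center of the factor, hence is scalar; normalization then yields $\omega_1=\omega_2$, contradicting the assumption.

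The main obstacle is the final step. The first two steps are essentially formal GNS/simplex manipulations, but showing that the non-disjoint (hence quasi-equivalent) case collapses to equality requires the full strength of the Tomita--Takesaki theory developed in the next section of the thesis. It is precisely the rigidity of the modular flow on a factor --- there is essentially one faithful normal state that is KMS for a given dynamics at a given inverse temperature --- that prevents distinct extremal KMS states from sharing a common normal folium.
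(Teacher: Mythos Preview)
The paper does not actually prove this proposition; it is stated as background without proof, followed only by the physical remark about pure phases. So there is no paper argument to compare against directly.

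Your outline is essentially the standard argument (as in Bratteli--Robinson, \textit{Operator Algebras and Quantum Statistical Mechanics II}, Theorem~5.3.30), and it is correct. Two remarks on the execution. First, the opening Radon--Nikodym step producing $T_i\in\pi_\omega(\sA)'$ is not used in the rest of your argument as written; it is in fact the key to \emph{proving} the factor characterization you then invoke (one shows, using the KMS conditions for $\omega$ and $\omega_i$, that each $T_i$ commutes also with $\pi_\omega(\sA)'$, hence is central, hence scalar if $\omega$ is a factor state). As presented you set this up and then abandon it, which reads as redundant. Second, once you know extremal KMS states are factor states, the dichotomy ``quasi-equivalent or disjoint'' is a general fact about factor representations and does not require the central-support bookkeeping in $\pi_\omega(\sA)''$; you can pass directly to the GNS representation of $\omega_1$ where $\omega_2$ is normal.

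The final uniqueness step is the substantive one, and your Connes-cocycle reasoning is right: since both states are $(\tau,\beta)$-KMS their modular groups coincide with $\tau^{-t/\beta}$, so $[D\omega_1:D\omega_2]_t$ lies in the center of the factor, hence is scalar; the cocycle identity and normalization then force $\omega_1=\omega_2$. An equivalent and slightly cleaner phrasing is the Pedersen--Takesaki Radon--Nikodym theorem: two faithful normal states with the same modular automorphism group differ by a positive operator affiliated with the center, which in a factor is a scalar.
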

Physically, this corresponds to pure phases being ``far apart" in the sense that one cannot represent two different pure phases both as vectors states in \textit{any} GNS representation. The set of normal states is also an important concept used in Araki's perturbation theory of KMS states, which is the subject of the next section. 

\subsection{Araki Perturbation Theory}
\label{Araki}
In Araki's perturbation theory of KMS states we start with a $\Cstar$-dynamical system $(\sA, \tau^t)$ and a fixed temperature $\beta^{-1}$. We assume that $\omega$ is some fixed $\tau, \beta$-KMS state. We then perturb the dynamics by a bounded self-adjoint operator and study the new equilibrium states. The following proposition describes the type of perturbation we are using. 
\begin{prop}
Let  $(\sA, \tau^t)$  be a $\Cstar$-dynamical system with generator $\delta$. Let $V \in \sA$ be self adjoint. Define $\delta_V$ by
\begin{align*}
\mathcal{D}(\delta_V)& = \sA \\
\delta_V (A) &= i[V,A] .
\end{align*}
Then $\delta + \delta_V$ generates a one-parameter group of $^*$-automorphisms, $\tau_V^t$. This dynamics has the perturbative expansion
\[ \tau_V^t(A) = \tau^t(A) + \sum_{n \geq 1} i^n \int_0^t \int_0^{t_1} ... \int_0^{t_{n-1}} [\tau^{t_n}(V) , [ ...[\tau^{t_1}(V), \tau^t(A)]]] dt_n ... dt_2 dt_1 .\]
\end{prop}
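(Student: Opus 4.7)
The plan is to build the perturbed group $\tau_V^t$ explicitly as a conjugation of $\tau^t$ by a unitary cocycle living inside $\sA$. This has two advantages: it automatically produces an automorphism group, and the Dyson series for the cocycle, substituted back in, yields the claimed nested-commutator expansion. Throughout, I would use only that $V$ is bounded and self-adjoint, so that $\delta_V = i[V,\cdot]$ is a bounded $*$-derivation on $\sA$ with $\|\delta_V\| \leq 2\|V\|$ (the $*$-derivation property relies on $V^*=V$).

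\textbf{Step 1 (the cocycle).} Define
\[
\Gamma_t := \Id + \sum_{n \geq 1} (-i)^n \int_0^t \int_0^{t_1} \cdots \int_0^{t_{n-1}} \tau^{t_1}(V) \tau^{t_2}(V) \cdots \tau^{t_n}(V) \, dt_n \cdots dt_1.
\]
The $\sA$-valued iterated integrals are well-defined by the strong continuity of $\tau$ and boundedness of $V$; the $n$-th term is norm-bounded by $(t\|V\|)^n/n!$, so the series converges absolutely in $\sA$, uniformly on compact time intervals. Term-by-term differentiation shows that $\Gamma_t$ satisfies the Volterra-type ODE $\partial_t \Gamma_t = -i \Gamma_t \tau^t(V)$ with $\Gamma_0 = \Id$.

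\textbf{Step 2 (unitarity, cocycle identity, group property).} Since $V = V^*$, $\Gamma_t^*$ solves $\partial_t \Gamma_t^* = i\tau^t(V)\Gamma_t^*$; a one-line computation then gives $\partial_t(\Gamma_t^* \Gamma_t) = 0$, so $\Gamma_t$ is unitary for every $t$. Uniqueness for the ODE yields the cocycle identity $\Gamma_{t+s} = \Gamma_t \, \tau^t(\Gamma_s)$ (both sides solve the same ODE in $s$ with the same initial datum at $s=0$). I would then define
\[
\tau_V^t(A) := \Gamma_t^* \, \tau^t(A) \, \Gamma_t.
\]
Unitarity of $\Gamma_t$ and the automorphism property of $\tau^t$ make $\tau_V^t$ a $*$-automorphism; the cocycle identity combined with the group property of $\tau$ gives $\tau_V^{t+s} = \tau_V^t \circ \tau_V^s$. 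Strong continuity of $t \mapsto \tau_V^t(A)$ follows from strong continuity of $\tau^t$ together with norm continuity of $t \mapsto \Gamma_t$.

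\textbf{Step 3 (identifying the generator and the Dyson formula).} Differentiating $\tau_V^t(A) = \Gamma_t^* \tau^t(A) \Gamma_t$ at $t=0$ on $\mathcal{D}(\delta)$, using $\partial_t|_0 \Gamma_t = -iV$, one obtains
\[
\partial_t \big|_{t=0} \tau_V^t(A) = iV A + \delta(A) - i A V = \delta(A) + i[V,A] = (\delta + \delta_V)(A),
\]
so $\delta + \delta_V$ is indeed the generator of $\tau_V^t$. For the explicit expansion, I would apply Duhamel's formula to the ODE $\partial_t \tau_V^t(A) = (\delta + \delta_V)(\tau_V^t(A))$ to get the integral equation
\[
\tau_V^t(A) = \tau^t(A) + i\int_0^t \tau^{t-s}\big([V, \tau_V^s(A)]\big)\, ds,
\]
use the automorphism property to move $\tau^{t-s}$ through the commutator, and iterate. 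The induction step produces exactly the $n$-fold nested commutator $[\tau^{t_n}(V),[\ldots,[\tau^{t_1}(V),\tau^t(A)]\ldots]]$ over the simplex $0 \leq t_n \leq \cdots \leq t_1 \leq t$, with the uniform norm bound $(t\|V\|)^n/n! \cdot \|A\|$ justifying termination of the induction in the limit.

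\textbf{Main obstacle.} The only genuinely delicate point is maintaining the interplay between the algebraic group/automorphism structure and the analytic manipulations (term-by-term differentiation of the Dyson series, Duhamel's formula, Fubini for iterated integrals on the simplex). The uniform $(t\|V\|)^n/n!$ bounds tame all of this, and the cocycle representation $\tau_V^t = \Gamma_t^* \tau^t(\cdot) \Gamma_t$ decouples the algebraic content (automorphism, group law) from the analytic content (series convergence, generator identification).
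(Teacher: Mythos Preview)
The paper does not actually supply a proof of this proposition; it is stated as a standard fact (the background reference is Bratteli--Robinson) and the text moves straight on to the unitary cocycle $\Gamma_V^t$ in the GNS representation. So there is nothing to compare against, and your cocycle-plus-Duhamel approach is exactly the standard argument one finds in that reference.

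The outline is correct, with one small bookkeeping slip. With the product ordering you wrote, $\tau^{t_1}(V)\cdots\tau^{t_n}(V)$ over the simplex $t\geq t_1\geq\cdots\geq t_n\geq 0$, term-by-term differentiation in $t$ hits the outermost integral and sets $t_1=t$, so the factor $\tau^t(V)$ appears on the \emph{left}: the ODE is $\partial_t\Gamma_t=-i\,\tau^t(V)\,\Gamma_t$, not $-i\,\Gamma_t\,\tau^t(V)$, and the cocycle identity becomes $\Gamma_{t+s}=\tau^t(\Gamma_s)\,\Gamma_t$. This is purely cosmetic: unitarity (via $\partial_t(\Gamma_t^\ast\Gamma_t)=0$), the group law for $\tau_V^t=\Gamma_t^\ast\tau^t(\cdot)\Gamma_t$, the generator computation, and the Duhamel iteration all go through unchanged once the sides are matched consistently. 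A second triviality: the norm bound on the $n$-fold nested commutator is $(2t\|V\|)^n\|A\|/n!$ rather than $(t\|V\|)^n\|A\|/n!$, since each $[\tau^{t_k}(V),\,\cdot\,]$ contributes $2\|V\|$; this of course does not affect convergence.
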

If we move to the GNS rep, where our unperturbed dynamics is unitarily implemented, we also have a unitary implementation for the perturbed dynamics. 
\begin{prop}
\[\pi_{\omega} (\tau_V^t(A) ) = \Gamma_V^t   \pi_{\omega}(\tau^t(A))  {\Gamma_V^{t}}^*\]
where
\[ \Gamma_V^t = \Id + \sum_{n \geq 1} i^n \int_0^t \int_0^{t_1} ... \int_0^{t_{n-1}} \pi_{\omega} \left( \tau^{t_1} (V) \right) ... \pi_{\omega} \left( \tau^{t_n} (V) \right) dt_1...dt_n .\]
\end{prop}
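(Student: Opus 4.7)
The plan is to show that $\Gamma_V^t$ is a well-defined unitary in $\sB(\sH_\omega)$ and then verify the intertwining identity either by a differential equation argument or by comparing the two Dyson series term by term. First I would check absolute convergence of the defining series. Since $\pi_\omega$ is norm-decreasing by Corollary \ref{morphisms} and each $\tau^{t_i}$ is an isometric $*$-automorphism, one has $\| \pi_\omega(\tau^{t_i}(V)) \| \le \|V\|$, and the volume of the simplex $\{0 < t_n < \cdots < t_1 < t\}$ is $|t|^n / n!$. The $n$-th term is therefore norm-bounded by $\|V\|^n |t|^n / n!$, so the series converges absolutely in operator norm with $\| \Gamma_V^t \| \le e^{|t| \|V\|}$.

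Next I would differentiate the series termwise in $t$, which is justified by uniform convergence of the derivative series on compact time intervals. This shows that $\Gamma_V^t$ satisfies a first-order linear ODE of the form $\frac{d}{dt} \Gamma_V^t = i\, \pi_\omega(\tau^t(V))\, \Gamma_V^t$ (up to the ordering convention implicit in the time-ordered product), with $\Gamma_V^0 = \Id$. Since $V$ is self-adjoint and $\tau^t$ preserves the involution, $\pi_\omega(\tau^t(V))$ is self-adjoint, and the adjoint satisfies the mirror equation $\frac{d}{dt} (\Gamma_V^t)^* = -i\, (\Gamma_V^t)^*\, \pi_\omega(\tau^t(V))$. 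A direct product-rule computation then gives $\frac{d}{dt}\bigl(\Gamma_V^t (\Gamma_V^t)^*\bigr) = 0$, and since $\Gamma_V^0 (\Gamma_V^0)^* = \Id$, we conclude $\Gamma_V^t$ is unitary.

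For the intertwining identity, set $X(t) := \Gamma_V^t\, \pi_\omega(\tau^t(A))\, (\Gamma_V^t)^*$ and $Y(t) := \pi_\omega(\tau_V^t(A))$. They agree at $t=0$. Working on the dense subalgebra $\sA_\tau$ of analytic elements (so that all derivatives exist and the generator $\delta$ can be applied), I would differentiate both sides. Using the previous proposition (Dyson series for $\tau_V^t$), the ODE for $\Gamma_V^t$, and the fact that $\pi_\omega$ is a $*$-morphism (so preserves commutators), a routine computation shows that $X$ and $Y$ satisfy the same evolution equation driven by $\pi_\omega \circ (\delta + \delta_V)$. Uniqueness of solutions to this bounded-generator Banach-space ODE gives $X(t) = Y(t)$ on $\sA_\tau$; since both sides are norm-continuous in $A$ with bounded operator-norm bounds in $A$, the identity extends by density to all of $\sA$.

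The main obstacle is reconciling the Dyson series orderings: the nested commutator structure in the expansion of $\tau_V^t(A)$ must produce, after $\pi_\omega$ is applied, exactly the terms one sees when $\Gamma_V^t \, \pi_\omega(\tau^t(A)) \, (\Gamma_V^t)^*$ is multiplied out as a triple product of series. Equivalently, one must recognize that the time-ordered product on each side gives matching combinatorial coefficients. This is what is most easily handled via the ODE/uniqueness argument above; a fully combinatorial proof, while possible, requires carefully splitting $[0,t]^n$ into simplices indexed by which factor each $t_i$ ends up in (left of $\pi_\omega\tau^t(A)$, right of it, or inside a commutator), and verifying signs inductively. A minor technical care is needed since the Liouvillean $L$ generating $U(t)$ is generally unbounded, which is why restriction to $\sA_\tau$ before taking the density-extension is the safe route.
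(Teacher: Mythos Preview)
Your steps establishing convergence of the Dyson series, the first-order ODE for $\Gamma_V^t$, and unitarity via the product rule are all fine. The difficulty is in your step 4. You write that $X(t)=\Gamma_V^t\,\pi_\omega(\tau^t(A))\,(\Gamma_V^t)^*$ and $Y(t)=\pi_\omega(\tau_V^t(A))$ ``satisfy the same evolution equation driven by $\pi_\omega\circ(\delta+\delta_V)$'' and then invoke uniqueness for a \emph{bounded}-generator Banach-space ODE. But the generator here involves $\delta$, which is unbounded; in the representation this is $i[L,\cdot]$ with $L$ unbounded. If you actually differentiate, $X$ satisfies $X'=i[\pi_\omega(\tau^t(V)),X]+\Gamma_V^t\,\pi_\omega(\delta\tau^t(A))\,(\Gamma_V^t)^*$ while $Y$ satisfies $Y'=i[\pi_\omega(V),Y]+\pi_\omega(\delta\tau_V^t(A))$: these are not manifestly the same ODE, and neither is a bounded linear ODE in $\sB(\sH_\omega)$. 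Restricting to $\sA_\tau$ lets you write down derivatives, but does not by itself give you a uniqueness theorem to invoke; this is more than ``minor technical care''.

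The paper circumvents this entirely. In the proof of the formulas in Theorem~\ref{araki} it sets $X^t=e^{it(L+\pi_\omega(V))}e^{-itL}$ and checks directly that $X^t$ satisfies the \emph{same} Cauchy problem as $\Gamma_V^t$, namely $(X^t)'=i\,X^t\,\pi_\omega(\tau^t(V))$, $X^0=\Id$ (modulo the left/right ordering convention). This \emph{is} a bounded ODE, since only $\pi_\omega(\tau^t(V))$ appears and $L$ has been absorbed into the unitaries. Uniqueness gives $\Gamma_V^t=e^{it(L+\pi_\omega(V))}e^{-itL}$, whence unitarity is automatic and the intertwining reduces to the standard fact that $e^{it(L+\pi_\omega(V))}$ implements $\tau_V^t$. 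The point is that by comparing $\Gamma_V^t$ with a product of one-parameter unitary groups rather than comparing $X(t)$ with $Y(t)$ directly, the unbounded piece cancels at the level of the cocycle and one never has to differentiate through $\delta$.

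Your alternative combinatorial route --- expanding the triple product $\Gamma_V^t\,B\,(\Gamma_V^t)^*$ as an iterated commutator series and matching it against $\pi_\omega$ applied to the Dyson series for $\tau_V^t$ from the previous proposition --- does work and avoids unbounded operators altogether, since for fixed $B$ the map $s\mapsto\Gamma_V^s B(\Gamma_V^s)^*$ genuinely satisfies the bounded ODE $Z'=i[\pi_\omega(\tau^s(V)),Z]$. If you pursue that route, be careful with the time-ordering: the innermost commutator in the two expansions must carry the same (largest or smallest) time label, and the conventions in the two displayed Dyson series need to be reconciled.
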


Now, under this perturbation we want to see what we can say about the new KMS states \cite{BR2}. 
\begin{leftbar}
\begin{thm}[Araki Perturbation]
\label{araki}
Let $(\sA, \tau)$ be a $\Cstar$-dynamical system, $\omega$ a $(\tau, \beta)$-KMS state and $V$ a self adjoint element of $\sA$. Furthermore, let $(\sH_{\omega}, \pi_{\omega}, \Omega)$ be the GNS representation of $\omega$ and $U(t) = e^{itL}$ be a unitary implementation of $\tau^t$ in this representation. 
\begin{enumerate}[(1)]
\item There is a unique $\tau_V, \beta$-KMS state which is $\omega$-normal. Call it $\omega_V$. 
\item The states $\omega$ and $\omega_V$ are quasi-equivalent. 
\item If $\{ V_{\alpha} \}$ is a net of perturbations converging to $0$ in norm then
\[ \lim_{\alpha} \omega_{V_{\alpha}} = \omega.\]
\end{enumerate}
Furthermore, we have the following explicit formulas.
\begin{itemize}
\item In the $\omega$-GNS representation, the perturbed dynamics $\tau_V$ is given by 
\[ \pi_{\omega} (\tau_V^t(A)) = e^{it(L + \pi_{\omega}(V))} \pi_{\omega}(A) e^{-it(L + \pi_{\omega}(V))}.\]
\item $\Omega$ is in the domain of $e^{-\beta(L+ \pi_{\omega}(V))/2}$ and the vector representative of $\omega_V$ in $\sH_{\omega}$ is 
\[\Omega_V = \frac{e^{-\beta(L+ \pi_{\omega}(V))/2}\Omega}{\| e^{-\beta(L+ \pi_{\omega}(V))/2}\Omega \|}.\]
The function 
\[z \mapsto e^{-z (L +  \pi(V))} \Omega_0\]
is analytic in the strip $0 < \textit{Re} z < \beta / 2$ and bounded and continuous on its closure. 
\end{itemize}
\end{thm}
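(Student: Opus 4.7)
The plan is to work throughout in the GNS representation $(\sH_\omega, \pi_\omega, \Omega)$ of the unperturbed KMS state, writing $\pi = \pi_\omega$ for brevity and using the normalization of the unitary implementation so that $U(t)\Omega = \Omega$, hence $L\Omega = 0$. First I would dispose of the unitary implementation formula. Because $V$ is self-adjoint and bounded, $\pi(V)$ is a bounded self-adjoint operator on $\sH_\omega$, so $L_V := L + \pi(V)$ is self-adjoint on $\mathrm{Dom}(L)$ by Kato-Rellich and generates a strongly continuous unitary group $e^{itL_V}$. Inserting this group into the Dyson series one checks that $t \mapsto e^{itL_V}\pi(A)e^{-itL_V}$ satisfies the same perturbative expansion as $\pi(\tau_V^t(A))$ given in the preceding proposition, hence they agree on all of $\sA$.

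The heart of the proof is the construction of $\Omega_V$. Since $L\Omega = 0$, one expands formally
\[ e^{-zL_V}\Omega = \sum_{n\geq 0}(-1)^n \int_{0\leq s_1 \leq \cdots \leq s_n \leq z} \pi(\tau^{is_1}(V))\cdots \pi(\tau^{is_n}(V))\Omega \, ds_1\cdots ds_n. \]
The KMS condition for $\omega$ yields that for each $A \in \sA$ the map $s \mapsto \pi(\tau^{is}(V))\cdots \pi(\tau^{is}(A))\Omega$ extends analytically to the strip $0 < \mathrm{Re}(s) < \beta/2$ with continuous boundary values and uniform norm bound $\|V\|^n$ on each factor; combined with the volume $|z|^n/n!$ of the simplex this makes the series converge absolutely and uniformly on compact subsets of the closed strip $0 \leq \mathrm{Re}(z) \leq \beta/2$. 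The limit function is analytic on the open strip and bounded-continuous on its closure, which establishes the second displayed formula and allows me to define $\Omega_V$ by normalization of $e^{-\beta L_V/2}\Omega$.

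Next I would check that $\omega_V(A) := (\Omega_V, \pi(A)\Omega_V)$ really is a $\tau_V,\beta$-KMS state. The $\omega$-normality is automatic from the construction. For the KMS condition, one takes $A, B$ in the dense set of $\tau_V$-analytic elements and writes
\[ \omega_V(A\tau_V^z(B)) = \|e^{-\beta L_V/2}\Omega\|^{-2}\, (e^{-\beta L_V/2}\Omega,\ \pi(A) e^{izL_V}\pi(B) e^{-izL_V} e^{-\beta L_V/2}\Omega). \]
Using the analyticity established in the previous step, this function extends analytically to $0 < \mathrm{Im}(z) < \beta$ and at $z = t + i\beta$ one can collapse the exponentials and use self-adjointness to recognize the boundary value as $\omega_V(\tau_V^t(B)A)$. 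Uniqueness of the $\omega$-normal $\tau_V,\beta$-KMS state (part (1)) then follows by observing that the perturbation is reversible: applying the same construction with perturbation $-V$ sends $\omega_V$ back to $\omega$, so any two $\omega$-normal KMS states for $\tau_V$ must coincide after this reverse perturbation. This reversibility simultaneously gives mutual normality, proving quasi-equivalence (part (2)).

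Finally, for the continuity statement (part (3)), if $V_\alpha \to 0$ in norm then term by term each integrand in the Dyson series for $e^{-\beta L_{V_\alpha}/2}\Omega$ converges to $0$ for $n\geq 1$ while the $n=0$ term is $\Omega$; the factorial estimate guarantees uniform tail control, so $\Omega_{V_\alpha} \to \Omega$ in $\sH_\omega$ and hence $\omega_{V_\alpha} \to \omega$ pointwise on $\sA$. I expect the main obstacle to be Step~2: verifying that $\Omega \in \mathrm{Dom}(e^{-\beta L_V/2})$ together with the analyticity/boundedness of $z \mapsto e^{-zL_V}\Omega$ in the closed strip. This is where one must genuinely use the KMS hypothesis on $\omega$ — without it there is no reason for the imaginary-time translates $\pi(\tau^{is}(V))\Omega$ to remain bounded, and the Dyson series would only be formal.
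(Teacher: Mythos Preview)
Your proposal is essentially correct and follows the standard Araki--Bratteli--Robinson route. Note, however, that the paper itself does not attempt to prove parts (1)--(3) at all: its proof is labelled ``Proof (of formulas)'' and only treats the two bullet points, deferring the rest to \cite{BR2}. So your proposal is considerably more ambitious than what the paper actually establishes.

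For the first formula the two arguments are the same in spirit but packaged differently. You invoke Kato--Rellich and then match Dyson series; the paper instead sets $X^t = e^{it(L+\pi_\omega(V))}e^{-itL}$, differentiates to obtain $\partial_t X^t = iX^t\pi_\omega(\tau^t(V))$ with $X^0=\Id$, and identifies $X^t$ with the cocycle $\Gamma_V^t$ from the preceding proposition by uniqueness of solutions to this ODE. This is slightly cleaner than matching two infinite series term by term.

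For the second formula there is a genuine difference of strategy. The paper takes the shortcut of \emph{assuming $V$ analytic for $\tau$}, so that $\tau^{is}(V)$ makes sense as an operator and one can write down $\Gamma_V^{is}$ as a convergent operator-valued Dyson series, then simply observes $\Gamma_V^{i\beta/2}\Omega = e^{-\beta(L+\pi_\omega(V))/2}\Omega$ since $L\Omega=0$. You instead try to work directly with general $V$, using the KMS property to control the vectors $\pi(\tau^{is_1}(V))\cdots\pi(\tau^{is_n}(V))\Omega$ rather than the operators. Your approach is closer to the full rigorous treatment in \cite{BR2}, and it is what one must ultimately do; but as written your bound ``uniform norm bound $\|V\|^n$ on each factor'' is imprecise --- the individual operators $\pi(\tau^{is}(V))$ need not be bounded for non-analytic $V$, and what the KMS condition actually gives you is a bound on the \emph{vector} $e^{-sL}\pi(A)\Omega$ for $0\le s\le\beta/2$, which then has to be threaded through the iterated integral inductively. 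This is the step you correctly flag as the main obstacle, and it does require more care than your sketch indicates.
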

\end{leftbar}
\begin{proof}[Proof. (of formulas)]
Let $\sU_V^t = e^{it(L + \pi_{\omega}(V))}$ and $X^t = \sU_V^t \sU^{-t}$. We compute
\begin{align*}
\frac{dX^t}{dt} &= e^{it(L + \pi_{\omega}(V))} (i(L + \pi_{\omega}(V))) e^{-itL} +e^{it(L + \pi_{\omega}(V))} (-iL)e^{-itL}  \\
& = i \sU_V^t \pi_{\omega}(V) \sU^{-t} \\
& = i X^t \pi_{\omega}(\tau^t(V)) \\
\textrm{and} & \\
X^0 & = \Id .
\end{align*}
Hence $X^t$ is the unique solution of 
\[ X^t = \Id + i \int_0^t X^s \pi_{\omega}(\tau^s(V)) ds.\]
Iterating this we obtain that $X^t = \Gamma_V^t$ and hence $\Gamma_V^t \sU^{t} = \sU_V^t$. To conclude we have that
\[ \pi_{\omega}(\tau_V^t(A))= \Gamma_V^t \pi_{\omega}(\tau^t(A)) {\Gamma_V^t }^* = \Gamma_V^t \sU^t \pi_{\omega}((A)) {\sU^t}^* {\Gamma_V^t }^*  = \sU_V^t \pi_{\omega}(A) {\sU_V^t}^*.\]
Next we will assume that $V$ is an analytic element for $\tau$ in order to define 
\[ \Gamma_V^{is} = \Id + \sum_{n \geq 1} (-1)^n \int_0^s \int_0^{s_1} ... \int_0^{s_{n-1}} \pi_{\omega} \left( \tau^{is_n}(V) \right)...\pi_{\omega} \left( \tau^{is_1}(V) \right) ds_1ds_2...ds_n.\]
This converges uniformly and we have that $\Gamma_V^{i \beta/2} \Omega = e^{-\beta(L + \pi_{\omega})/2} \Omega$, noting that $L \Omega = 0$. 
\end{proof}
Using this, we can also look at the structure of the set of KMS states under the perturbation, and see that it is preserved. 
\begin{cor}
The map from the set of $(\tau, \beta)$-KMS states to $(\tau_V, \beta)$-KMS states given by 
\[ \omega \mapsto \omega_V \]
is an isomorphism and maps extremal points to extremal points. 
\end{cor}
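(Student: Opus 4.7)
The plan is to construct an explicit inverse for the map $\omega \mapsto \omega_V$ and to verify that it preserves convex combinations, from which the preservation of extremal points follows automatically. The key observation is that $(\tau_V)_{-V} = \tau$, because at the level of generators one has $(\delta + \delta_V) + \delta_{-V} = \delta$. Hence Theorem~\ref{araki} can be applied in reverse, with base dynamics $\tau_V$ and perturbation $-V$: to each $(\tau_V,\beta)$-KMS state $\eta$ it assigns a unique $\eta$-normal $(\tau,\beta)$-KMS state, which I shall denote $\eta_{-V}$.
\\
\\
First I would check that these two maps are mutually inverse. For any $(\tau,\beta)$-KMS state $\omega$, Theorem~\ref{araki}(2) says $\omega$ and $\omega_V$ are quasi-equivalent, so in particular $\omega$ is $\omega_V$-normal. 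Since $\omega$ is a $(\tau,\beta)$-KMS state, the uniqueness clause of Theorem~\ref{araki}(1) applied with base dynamics $\tau_V$ and perturbation $-V$ forces $(\omega_V)_{-V}=\omega$. The symmetric argument gives $(\eta_{-V})_V=\eta$, so $\omega\mapsto\omega_V$ is a bijection between the $(\tau,\beta)$- and $(\tau_V,\beta)$-KMS simplices.
\\
\\
Next I would establish affinity. Fix $(\tau,\beta)$-KMS states $\omega_1,\omega_2$ and $\lambda\in(0,1)$, and set $\omega=\lambda\omega_1+(1-\lambda)\omega_2$ and $\eta=\lambda(\omega_1)_V+(1-\lambda)(\omega_2)_V$. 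Then $\eta$ is a $(\tau_V,\beta)$-KMS state, because the KMS condition is preserved under convex combinations. By uniqueness, proving $\omega_V=\eta$ reduces to showing $\eta\in\sN_\omega$. The inequality $\omega\ge\lambda\omega_j\ge0$ implies $\omega_j\in\sN_\omega$ for $j=1,2$. Quasi-equivalence gives $\sN_{\omega_j}=\sN_{(\omega_j)_V}$, and the folium $\sN_\omega$ is closed under ``normality of a normal state'' since the GNS rep of any $\omega_j\in\sN_\omega$ embeds as a subrepresentation of $\pi_\omega$. Consequently $(\omega_j)_V\in\sN_\omega$. A convex combination of $\omega$-normal states is $\omega$-normal, so $\eta\in\sN_\omega$, and uniqueness concludes the affinity argument.
\\
\\
The extremal-point preservation is then automatic. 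If $\omega$ is extremal and $\omega_V=\mu\eta_1+(1-\mu)\eta_2$ is a convex decomposition within the $(\tau_V,\beta)$-KMS simplex, then applying the inverse map, which is affine by the same argument with $V$ replaced by $-V$, yields $\omega=\mu(\eta_1)_{-V}+(1-\mu)(\eta_2)_{-V}$, so extremality of $\omega$ forces $(\eta_1)_{-V}=(\eta_2)_{-V}=\omega$ and hence $\eta_1=\eta_2=\omega_V$. The main obstacle I anticipate is the transitivity-of-normality step, i.e.\ justifying $\omega_j\in\sN_\omega$ and $(\omega_j)_V\in\sN_{\omega_j}\Rightarrow(\omega_j)_V\in\sN_\omega$; this rests on identifying $\sN_\omega$ with the normal states of $\pi_\omega$ and on the standard fact that the GNS representation of a normal state is (unitarily equivalent to) a subrepresentation of $\pi_\omega$, which should be invoked with care.
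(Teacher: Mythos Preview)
The paper states this corollary without proof, so there is nothing to compare against directly. Your argument is the standard one and is essentially correct: the inverse via $(\tau_V)_{-V}=\tau$, the use of quasi-equivalence together with the uniqueness clause of Theorem~\ref{araki} to show the two maps are mutually inverse, and the deduction of affinity from uniqueness are exactly how this is done in the literature (see e.g.\ Bratteli--Robinson, Corollary~5.4.5).

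The point you flag as an obstacle is indeed the only place requiring care, and your sketch of it is right. The precise statement you need is that folia are transitive: if $\nu\in\sN_\omega$ then $\sN_\nu\subseteq\sN_\omega$. This holds because $\pi_\nu$ is quasi-equivalent to a subrepresentation of $\pi_\omega$, so any normal state on $\pi_\nu(\sA)''$ extends to a normal state on $\pi_\omega(\sA)''$. Once that is granted, your chain $\omega_j\in\sN_\omega$, $(\omega_j)_V\in\sN_{\omega_j}\subseteq\sN_\omega$, and hence $\eta\in\sN_\omega$, goes through cleanly. One small refinement: the implication ``$\omega\ge\lambda\omega_j$ implies $\omega_j\in\sN_\omega$'' is exactly the statement that states dominated by a multiple of $\omega$ lie in its folium, which is standard but worth citing explicitly rather than leaving implicit.
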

This means that the physical phase structure remains the same under this type of bounded perturbation. 
\section{Modular Theory}

\subsection{von Neumann Algebras}
Modular theory provides a very useful set of tools for doing computations in the GNS representation. To define the necessary structure to develop the modular theory, we need a special class of $\Cstar$-algebras called von Neumann algebras. These are concrete algebras of operators over Hilbert spaces. 
\begin{defn}
Let $\sH$ be a Hilbert space and let $\sM$ be a subset of $\sB(\sH)$. The \textit{commutant} of $\sM$, denoted $\sM'$, is the set of all bounded linear operators on $\sH$ which commute with every element of $\sM$. $\sM$ is a \textit{Von Neumann Algebra}  if 
\[ \sM '' = \sM .  \] 
\end{defn}
This algebraic definition is equivalent to an analytic property. 
\begin{prop}[Bicommutant Theorem ] %BR1 2.4.11
Let $\sM$ be a non-degenerate $^*$-algebra of operators on a Hilbert space $\sH$. $\sM$ is a von Neumann algebra if and only if $\sM$ is closed in the weak operator topology, i.e. the smallest topology which makes all the linear functionals of the form 
\[ A \mapsto \langle x, Ay \rangle\]
continuous for all $x,y \in \sH$. 
\end{prop}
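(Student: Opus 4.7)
The plan is to prove both directions separately, with the forward direction being essentially formal and the reverse direction requiring the classical amplification trick due to von Neumann.

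\textbf{Forward direction} ($\sM = \sM''$ implies WOT-closed). I would first observe that for any subset $\sN \subseteq \sB(\sH)$, the commutant $\sN'$ is always WOT-closed. Indeed, for each $B \in \sN$ and $x,y \in \sH$, the linear functional $A \mapsto \langle x, (AB - BA)y\rangle = \langle x, ABy\rangle - \langle B^*x, Ay\rangle$ is WOT-continuous (it is a difference of two of the defining continuous functionals). Hence $\{A : AB = BA\}$ is the zero set of a family of WOT-continuous functions, so WOT-closed, and $\sN' = \bigcap_{B \in \sN}\{A : AB=BA\}$ is WOT-closed as an intersection of closed sets. Applying this with $\sN = \sM'$ shows $\sM'' = (\sM')'$ is WOT-closed, and so $\sM$ is WOT-closed under the hypothesis $\sM = \sM''$.

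\textbf{Reverse direction} (WOT-closed implies $\sM=\sM''$). The inclusion $\sM \subseteq \sM''$ is automatic, so I only need $\sM'' \subseteq \sM$. I would actually prove the stronger statement that $\sM''$ is contained in the \emph{strong} operator closure of $\sM$ (which then lies inside the WOT-closure, which by hypothesis equals $\sM$). The heart of the argument is the following single-vector lemma: for every $\xi \in \sH$ and every $T \in \sM''$, one has $T\xi \in \overline{\sM\xi}$. To prove this, let $P$ be the orthogonal projection onto $K := \overline{\sM\xi}$. Since $\sM K \subseteq K$ we get $AP = PAP$ for all $A \in \sM$; taking adjoints and using that $\sM$ is a $^*$-algebra yields $PA = PAP$ as well, hence $AP=PA$, so $P \in \sM'$. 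Because $T \in \sM''$, this gives $TP\xi = PT\xi \in K$. To conclude $T\xi \in K$, I need $\xi \in K$, and here the non-degeneracy hypothesis enters: for any $A \in \sM$,
\[
A(\xi - P\xi) = A\xi - AP\xi = A\xi - PA\xi = 0
\]
since $A\xi \in K$. Non-degeneracy forces $\xi - P\xi = 0$, so $\xi \in K$.

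\textbf{The amplification step} (and main obstacle). The single-vector lemma only gives approximation of $T\xi$ by $\sM\xi$ for one vector at a time, whereas basic SOT-neighborhoods involve finitely many vectors $\xi_1,\ldots,\xi_n$ simultaneously. The standard resolution is von Neumann's amplification trick: form $\sH^{(n)} := \sH \oplus \cdots \oplus \sH$, embed $\sM \hookrightarrow \sB(\sH^{(n)})$ by $A \mapsto \mathrm{diag}(A,\ldots,A)$ and call the image $\sM_n$. A short block-matrix calculation identifies $\sM_n' = M_n(\sM')$ (all $n\times n$ matrices with entries in $\sM'$), and taking commutants again shows $\sM_n'' = \mathrm{diag}(\sM'')$. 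The embedded algebra $\sM_n$ is still non-degenerate on $\sH^{(n)}$, so the single-vector lemma applied to $\mathrm{diag}(T,\ldots,T) \in \sM_n''$ and $\xi := (\xi_1,\ldots,\xi_n) \in \sH^{(n)}$ produces a net $A_\alpha \in \sM$ with $A_\alpha \xi_i \to T\xi_i$ simultaneously for all $i$. This shows $T$ lies in the SOT-closure of $\sM$, hence in the WOT-closure, hence in $\sM$. Verifying the commutant identification $\sM_n' = M_n(\sM')$ carefully and checking non-degeneracy of the amplification are the technically fiddly steps; the rest is formal.
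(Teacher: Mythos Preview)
Your proof is correct and follows the classical route due to von Neumann: commutants are WOT-closed (forward direction), and for the reverse you use the single-vector lemma plus the amplification trick to upgrade approximation at one vector to approximation at finitely many, yielding SOT-density of $\sM$ in $\sM''$. The paper itself does not prove this proposition; it is stated as background with no argument given, so there is nothing to compare against. Your write-up is the standard textbook proof (essentially the one in Pedersen or Bratteli--Robinson, both of which the paper cites), and the points you flag as fiddly---the computation $\sM_n' = M_n(\sM')$ and non-degeneracy of the amplification---are genuine but routine.
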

Note that the weak operator topology is weaker than the norm topology, which makes every von Neumann algebra automatically a $\Cstar$ algebra. 
\\
\\
In our setting, we have algebras of the form $\pi_{\omega}(\sA) \subset \sB(\sH_{\omega})$, where $\sA$ is a $\Cstar$ algebra, arising from GNS representations. These may not be von Neumann algebras, but there is a canonical way to enlarge them to get a von Neumann algebra. 
\begin{prop}
Let $\sM$ be a sub-algebra. $\sM^{(n)} = \sM^{(n+2)}$ for $n \geq 1$
\end{prop}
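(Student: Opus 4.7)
The plan is to prove the identity $\sM^{(n)} = \sM^{(n+2)}$ by establishing two elementary properties of the commutant operation and then using them to collapse the tower of iterated commutants after the first step. Writing $\sM' $ for the commutant of $\sM$ and $\sM^{(n)}$ for the $n$-fold iterate (so $\sM^{(1)}=\sM'$, $\sM^{(2)}=\sM''$, etc.), the two ingredients I want are:

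\textbf{(a)} (\emph{Inclusion}) $\sM \subseteq \sM''$ for every subset $\sM \subseteq \sB(\sH)$. This is immediate from the definition: any $A \in \sM$ commutes with every element of $\sM'$, so by definition $A \in (\sM')' = \sM''$.

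\textbf{(b)} (\emph{Anti-monotonicity}) If $\sN \subseteq \sM$ then $\sM' \subseteq \sN'$. This is also immediate: an operator that commutes with everything in the larger set $\sM$ a fortiori commutes with everything in the smaller set $\sN$.

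With these in hand, I first treat the base case $n=1$, i.e. $\sM' = \sM'''$. Applying (a) to the set $\sM'$ in place of $\sM$ gives $\sM' \subseteq (\sM')'' = \sM'''$. Conversely, (a) applied to $\sM$ gives $\sM \subseteq \sM''$, and then (b) applied to this inclusion (taking one more commutant) yields $(\sM'')' \subseteq \sM'$, i.e.\ $\sM''' \subseteq \sM'$. The two inclusions together give $\sM' = \sM'''$.

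For the inductive step, I take one more commutant of both sides of the identity $\sM' = \sM'''$ to obtain $\sM'' = \sM''''$, i.e.\ $\sM^{(2)} = \sM^{(4)}$; more generally, taking $n-1$ successive commutants of the identity $\sM' = \sM'''$ yields $\sM^{(n)} = \sM^{(n+2)}$ for every $n \geq 1$. I do not expect any real obstacle here: the entire argument is a direct set-theoretic consequence of the definition of the commutant together with anti-monotonicity, and no analytic input (such as the bicommutant theorem) is required. The only mild subtlety worth flagging is the logical direction of the two inclusions in the base case — it is easy to confuse oneself by applying (a) in the wrong place — but once one notes that (a) gives $\sM' \subseteq \sM'''$ directly and (b)+(a) give the reverse inclusion, the proof is immediate.
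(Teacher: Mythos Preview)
Your proof is correct and follows essentially the same approach as the paper: establish $\sM \subseteq \sM''$, deduce both inclusions $\sM' \subseteq \sM'''$ and $\sM''' \subseteq \sM'$ (the paper leaves anti-monotonicity implicit where you state it explicitly), and then induct by taking further commutants.
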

\begin{proof}
$ \sM ' $ is the set of $B \in \sB(\sH)$ such that $[A, B] = 0 \forall A \in \sM$. So if $A \in \sM$ then $[A, B] = 0 \forall B \in \sM' $ which implies that  $ A \in \sM ''$. Therefore we always have the inclusion $\sM \subset \sM''$. It follows from this that $(\sM'')' = \sM ''' \subset \sM ' $. As well, $\sM' \subset (\sM')'' = \sM'''$. By induction, $\sM^{(n)} = \sM^{(n+2)}$ for $n \geq 1$. 
\end{proof}

Given \textit{any} sub algebra $\sM \subset \sB(\sH)$ we can turn it into a von Neumann algebra by taking the bicommutant. This has the effect of ``adding in some elements'' and guarantees the von Neumann algebra property since $\sM ^{(4)} = \sM '' $. For this reason $\sM ''$ is called the \textit{enveloping von Neumann algebra}. The enveloping von Neumann algebra is also the weak closure, making it the minimal algebra with this property. 
\\
\\
Following \cite{BR1} Chapter 2, our next step is to define certain operators on $\sH$. We will do this by defining them on a domain of the form $\sM \Omega$ or $\sM' \Omega$ for some vector $\Omega$ in the Hilbert space. $\Omega$ must have certain properties. 
\begin{defn}
Let $\sM$ be a von Neumann algebra on $\sH$ and let $\Omega \in \sH$.
\begin{itemize}
\item $\Omega$ is \textit{separating} if for all $A \in \sM$, $A\Omega = 0 \implies A=0$. 
\item $\Omega$ is \textit{cyclic} if $\sM \Omega = \{ A \Omega : A \in \sM \}$ is dense in $\sH$. 
\end{itemize}
\end{defn}

\begin{prop}%[BR1 2.5.3]
$\Omega$ is cyclic for $\sM$ if and only if $\Omega$ is separating for $\sM'$. 
\end{prop}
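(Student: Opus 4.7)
The plan is to prove the two directions separately. Both use the algebraic relationship between $\sM$ and $\sM'$, and the key technical point is that the orthogonal projection onto the closure of $\sM\Omega$ lies in $\sM'$.

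For the forward direction ($\Omega$ cyclic for $\sM$ $\Rightarrow$ $\Omega$ separating for $\sM'$), I would take any $A' \in \sM'$ with $A'\Omega = 0$ and show $A' = 0$. The idea is to use $A'$ on a dense set: for any $B \in \sM$, commutativity gives $A'(B\Omega) = B(A'\Omega) = 0$, so $A'$ vanishes on $\sM\Omega$. Since $\sM\Omega$ is dense in $\sH$ by cyclicity and $A'$ is bounded, $A' = 0$.

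For the converse direction ($\Omega$ separating for $\sM'$ $\Rightarrow$ $\Omega$ cyclic for $\sM$), I would introduce the orthogonal projection $P$ onto the closed subspace $K := \overline{\sM\Omega}$ and argue that $P \in \sM'$ and $(I-P)\Omega = 0$, whence separation forces $P = I$. The key subclaim is $P \in \sM'$: since $\sM$ is an algebra, $A \cdot B\Omega = (AB)\Omega \in \sM\Omega$ for all $A, B \in \sM$, so $\sM K \subset K$ by continuity; since $\sM$ is self-adjoint (as a von Neumann algebra), $K^\perp$ is also $\sM$-invariant, and a standard check then shows $[P, A] = 0$ for all $A \in \sM$, i.e.\ $P \in \sM'$. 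Using that the von Neumann algebra $\sM$ contains the identity (so $\Omega = \Id \cdot \Omega \in \sM\Omega \subset K$), we get $P\Omega = \Omega$, hence $(I-P)\Omega = 0$. Because $I - P \in \sM'$ and $\Omega$ is separating for $\sM'$, we conclude $I - P = 0$, i.e.\ $K = \sH$, so $\Omega$ is cyclic.

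The proof should be essentially routine once the projection argument is set up correctly; the main subtlety to watch is the invariance of $K$ under $\sM^*$, which is where the von Neumann algebra hypothesis (that $\sM$ is a $^*$-algebra) is genuinely used, and the appeal to $\Id \in \sM$ to ensure $\Omega \in K$. No deep machinery beyond the definitions is required.
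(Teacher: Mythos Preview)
Your proof is correct and is the standard argument for this classical result. The paper itself states this proposition without proof, so there is no approach to compare against; your projection argument for the converse direction and the density argument for the forward direction are exactly what is needed.
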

In particular, if a vector is cyclic and separating for $\sM$ it is also cyclic and separating for $\sM'$. Now we go back to the GNS setting $(\sH_{\omega}, \pi_{\omega}, \Omega_{\omega})$, and consider the von Neumann algebra $\pi_{\omega}(\sA)''$. If $\Omega_{\omega}$ is cyclic and separating for $\pi_{\omega}(\sA)''$ then we say the state $\omega$ on $\sA$ is \textit{modular}, and we will be able to use it to build the modular structure described in the next section. Furthermore, when we move to the GNS rep of a state, we will be able to extend the state to be a linear functional on the enveloping von Neumann algebra by using the vector representative. This extended state is always a trace state, and we can use this form to compute the action of various operators explicitly. 
\begin{defn}
A state $\omega$ on a Von Neumann algebra $\sM$ is called \textit{normal} if there is a density matrix $\rho_{\omega}$ such that $\omega(A) = \textrm{Tr}(\rho_{\omega}A)$ for all $A \in \sM$. 
\end{defn}
\begin{prop}
If $\omega$ is a state on a $\Cstar$-algebra $\sO$ then it has a normal extension to the enveloping Von Neumann algebra given by 
\[ \widehat{\omega}(A) = (\Omega_{\omega}, A \Omega_{\omega}) \qquad \forall A \in \sM.\]
\end{prop}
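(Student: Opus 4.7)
The plan is to exhibit an explicit density matrix $\rho_\omega$ and verify that $A \mapsto (\Omega_\omega, A \Omega_\omega)$ equals $\tr(\rho_\omega A)$, and that this functional agrees with $\omega$ on $\pi_\omega(\sO)$. This shows simultaneously that $\widehat\omega$ extends $\omega$ and that it is normal in the sense of the preceding definition.

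First I would fix the candidate
\[ \rho_\omega := |\Omega_\omega\rangle\langle\Omega_\omega|,\]
the orthogonal projection onto the one-dimensional subspace spanned by the cyclic vector $\Omega_\omega \in \sH_\omega$. Since $\omega$ is a state, $\|\Omega_\omega\| = 1$, so $\rho_\omega$ is positive, self-adjoint, and rank one with $\tr(\rho_\omega) = 1$. Hence $\rho_\omega$ is a bona fide density matrix on $\sH_\omega$.

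Next I would verify the trace identity. Extend $\Omega_\omega$ to an orthonormal basis $\{e_i\}$ of $\sH_\omega$ with $e_1 = \Omega_\omega$. For any $A \in \sB(\sH_\omega)$, using the cyclicity of the trace,
\[ \tr(\rho_\omega A) = \sum_{i} \langle e_i, A \rho_\omega e_i \rangle = \sum_i \langle e_i, A\Omega_\omega \rangle \langle \Omega_\omega, e_i \rangle = \langle \Omega_\omega, A \Omega_\omega \rangle,\]
since only the $i=1$ term contributes. In particular this holds for every $A$ in the enveloping von Neumann algebra $\sM := \pi_\omega(\sO)''\subset \sB(\sH_\omega)$, so the functional $\widehat\omega(A) := \langle \Omega_\omega, A\Omega_\omega\rangle$ is of the required trace form and is therefore normal on $\sM$. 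Positivity $\widehat\omega(A^*A) = \|A\Omega_\omega\|^2 \geq 0$ and normalisation $\widehat\omega(\Id) = \|\Omega_\omega\|^2 = 1$ follow immediately, so $\widehat\omega$ is a state.

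Finally I would check that $\widehat\omega$ really extends $\omega$. By the defining property of the GNS triple (Proposition \ref{prop GNS}), for every $A \in \sO$,
\[ \widehat\omega(\pi_\omega(A)) = \langle \Omega_\omega, \pi_\omega(A)\Omega_\omega\rangle = \omega(A),\]
so $\widehat\omega$ restricts to $\omega$ along the representation $\pi_\omega : \sO \to \sM$. There is no real obstacle here — the whole content of the proposition is a direct unpacking of the rank-one trace formula together with the vector-state property of the GNS construction; the only point to be a little careful about is to ensure that the basis-expansion of $\tr(\rho_\omega A)$ is independent of basis (guaranteed because $\rho_\omega A$ is trace class, being the product of a trace-class operator with a bounded one).
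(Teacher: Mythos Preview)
Your proof is correct and matches the paper's approach exactly: the paper does not give a formal proof of this proposition but immediately remarks afterward that the density matrix is $\rho_\omega = |\Omega_\omega\rangle\langle\Omega_\omega|$, which is precisely the rank-one projection you exhibit and verify.
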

So, given a state $\omega$ on $\sA$, we can extend it to a state on the enveloping von Neumann algebra $\widehat{\omega}$ and this state is a trace state with density matrix $\rho_{\omega} =| \Omega_{\omega} \rangle \langle \Omega_{\omega} |$. 
\subsection{Modular Operators} 
\label{modular}
We start with a pair $(\sM, \Omega)$ where $\sM$ is a von Neumann algebra and $\Omega$ is a cyclic and separating vector, and define an operator which implements the operator adjoint on the space $\sM \Omega$. Note that these definitions rely crucially on having the underlying Hilbert space structure, which is not manifestly present for $\Cstar$-algebras. We must go to the GNS representation. 
\newline
\newline
Let $\sM$ be a Von Neumann algebra with separating and cyclic vector $\Omega$. By the previous proposition,  $\Omega$ is also separating and cyclic for $\sM'$. We define two (possibly unbounded) operators on $\sH$
\begin{align*}
S_0 A \Omega &= A^* \Omega \qquad A \in \sM \\
F_0 A \Omega & = A^* \Omega \qquad A \in \sM' 
\end{align*}
These are both densely defined, since $\Omega$ is cyclic. Also note that these are anti-linear operators. 

\begin{prop}%[BR1 2.5.9]
$S_0$ and $F_0$ are closable and $\overline{F_0} = S_0^*$, $\overline{S_0} = F_0^*$. 
\end{prop}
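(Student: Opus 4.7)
The plan has three stages: derive the inclusions $F_0 \subseteq S_0^*$ and $S_0 \subseteq F_0^*$ by direct computation, deduce closability from them, and finally upgrade the resulting inclusions to equalities. Throughout I would use the standard convention that the adjoint of an anti-linear operator $T$ is the anti-linear operator $T^*$ characterized by $\langle T\phi, \psi\rangle = \langle T^*\psi, \phi\rangle$ on $\mathcal{D}(T)\times\mathcal{D}(T^*)$.

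First I would check that $S_0$ and $F_0$ are well-defined: if $A\Omega = B\Omega$ for $A,B\in\sM$, then $(A-B)\Omega = 0$, so the separating property of $\Omega$ for $\sM$ gives $A=B$ and hence $A^*\Omega = B^*\Omega$; the same argument using separating-for-$\sM'$ handles $F_0$. Then for $A\in\sM$ and $A'\in\sM'$, the commutation $AA' = A'A$ yields
\[
\langle S_0(A\Omega), A'\Omega\rangle = \langle A^*\Omega, A'\Omega\rangle = \langle \Omega, AA'\Omega\rangle = \langle \Omega, A'A\Omega\rangle = \langle A'^*\Omega, A\Omega\rangle = \langle F_0(A'\Omega), A\Omega\rangle.
\]
This is exactly the defining relation for $S_0^*$ on $A'\Omega$, so $A'\Omega \in \mathcal{D}(S_0^*)$ with $S_0^*(A'\Omega) = F_0(A'\Omega)$; hence $F_0 \subseteq S_0^*$, and by the symmetric argument $S_0 \subseteq F_0^*$.

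For closability, recall that $\Omega$ separating for $\sM$ makes it cyclic for $\sM'$ (by the preceding proposition), so $\mathcal{D}(F_0) = \sM'\Omega$ is dense in $\sH$. Combined with $F_0 \subseteq S_0^*$, this forces $\mathcal{D}(S_0^*)$ to be dense, which for anti-linear operators is equivalent to $S_0$ being closable with $\overline{S_0} = S_0^{**}$; the same reasoning applies to $F_0$. Taking adjoints reverses inclusions, giving $\overline{S_0} = S_0^{**} \subseteq F_0^*$ and $\overline{F_0} \subseteq S_0^*$.

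The main obstacle is establishing the reverse inclusions $F_0^* \subseteq \overline{S_0}$ and $S_0^* \subseteq \overline{F_0}$, i.e.\ showing that $\sM'\Omega$ is a core for $S_0^*$ and $\sM\Omega$ is a core for $F_0^*$. Concretely, for $\psi \in \mathcal{D}(S_0^*)$ one must produce a sequence $A'_n \in \sM'$ with $A'_n\Omega \to \psi$ \emph{and} $(A'_n)^*\Omega \to S_0^*\psi$ simultaneously. I would approach this by exploiting the von Neumann algebra structure of $\sM'$ (its weak closure, $\sM'' = \sM$), rather than just its $C^*$-algebra structure: one studies the graph of $S_0^*$ as a closed subspace of $\sH\oplus\sH$ and shows that the graph of $F_0$ is dense in it by a polar-decomposition-style argument on the antilinear operator, combined with a spectral truncation of $\overline{S_0}$ that produces bounded approximants lying in $\sM'$ via the bicommutant. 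The difficulty is that ordinary strong or weak approximation of $\psi$ alone does not control $(A'_n)^*\Omega$, so one must upgrade one-sided approximation to graph-norm approximation, which requires the full algebraic structure of $\sM'$ and the double role of $\Omega$ as cyclic for both $\sM$ and $\sM'$. This density statement is the technical heart of the proposition and sets the stage for the modular operator and modular conjugation to be constructed in the next section.
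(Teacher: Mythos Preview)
Your proposal is correct and follows the same approach as the paper: the inner-product computation giving $F_0 \subseteq S_0^*$ (and symmetrically $S_0 \subseteq F_0^*$), followed by the observation that density of $\sM'\Omega$ forces $\mathcal{D}(S_0^*)$ dense, hence $S_0$ closable. The paper in fact stops there and defers the reverse inclusions $F_0^* \subseteq \overline{S_0}$, $S_0^* \subseteq \overline{F_0}$ to \cite{BR1} Proposition 2.5.9, so your identification of this as the genuine technical obstacle---requiring graph-norm approximation using the von Neumann structure rather than mere cyclicity---is exactly right and goes somewhat beyond what the paper itself spells out.
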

\begin{proof}
Suppose $A \in \sM$ and $A' \in \sM'$. 
\[ (A' \Omega, S_0 A \Omega) = (A' \Omega, A^* \Omega) = (A \Omega, A'^* \Omega) = (A \Omega, F_0 A' \Omega) \]
Hence $\mathcal{D}(F_0) \subset \mathcal{D}(S_0^*)$, and since $\mathcal{D}(F_0) = \sM' \Omega$, which is dense, $\mathcal{D}(S_0^*)$ is also dense. It follows that $S_0$ is closable. (Theorem 23, Spectral Theory notes) It is more involved to show that in fact $\overline{S_0} = F_0^*$. (Details in BR 2.5.9)
\end{proof}
We define $S = \overline{S_0}$ and $F = \overline{F_0}$. Then we take $\Delta^{1/2}$ to be the unique positive self-adjoint operator and $J$ to be the unique anti-unitary operator in the polar decomposition
\[ S = J \Delta^{1/2}.\]
$\Delta$ is called the modular operator and $J$ is the modular conjugation associated to $(\sM, \Omega)$. $J$ and $\Delta$ may be unbounded operators. However, the functional calculus is still valid by the unbounded version of the spectral theorem (see for example \cite{Ja}). 
There are many identities involving these operators which are useful for computational purposes. 
\begin{leftbar}
\begin{prop}[Useful Modular Identities] %BR1 2.5.11]
\[ \Delta = FS \qquad \Delta^{-1} = SF \qquad F = J \Delta^{-1/2} \qquad J = J^* \qquad J^2 = \Id \qquad \Delta^{-1/2} = J \Delta^{1/2} J \]
\end{prop}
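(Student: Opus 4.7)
The plan is to deduce all six identities from one fundamental observation—that $S$ is an (antilinear) involution on its domain—combined with the uniqueness of the polar decomposition applied to $S^{-1}$.

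First I would establish that $S^2 = \Id$ on $\mathcal{D}(S)$. On the core $\sM\Omega$ this is immediate from the definition: $S_0(S_0(A\Omega)) = S_0(A^*\Omega) = A\Omega$. The closedness of $S = \overline{S_0}$ then lets me extend this to all of $\mathcal{D}(S)$: if $A_n\Omega \to x$ in $\mathcal{D}(S)$ with $SA_n\Omega \to Sx$, then $S(SA_n\Omega) = A_n\Omega \to x$, and since $S$ is closed we conclude $Sx \in \mathcal{D}(S)$ and $S(Sx)=x$. Hence $S$ is a bijection of $\mathcal{D}(S)$ onto itself with $S^{-1} = S$. In particular $\ker(S) = 0$ and $\mathrm{Ran}(S)$ is dense, which is exactly what is needed for the polar decomposition $S = J\Delta^{1/2}$ to have $J$ antiunitary (not merely a partial isometry).

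Next, the key step: I take the inverse of the polar decomposition. Formally,
\[
S^{-1} = (J\Delta^{1/2})^{-1} = \Delta^{-1/2}J^{-1} = J^{-1}\bigl(J\Delta^{-1/2}J^{-1}\bigr).
\]
Since $J$ is antiunitary and $\Delta^{-1/2}$ is positive self-adjoint, $J\Delta^{-1/2}J^{-1}$ is again positive self-adjoint (antilinear conjugation of a linear operator is linear, preserves positivity, and preserves self-adjointness). Thus the right-hand side is a polar decomposition: an antiunitary times a positive self-adjoint operator. Because $S = S^{-1}$, uniqueness of the polar decomposition forces $J^{-1}=J$, i.e.\ $J^2 = \Id$, and $\Delta^{1/2} = J\Delta^{-1/2}J^{-1} = J\Delta^{-1/2}J$. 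Taking $J$-conjugates of both sides gives the equivalent identity $\Delta^{-1/2} = J\Delta^{1/2}J$.

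From $J^2 = \Id$ and the antiunitarity $J^*J = \Id$ I immediately get $J^* = J^{-1} = J$. For $F = S^*$, I use that the adjoint of a product reverses order even in the antilinear setting:
\[
F = S^* = (J\Delta^{1/2})^* = (\Delta^{1/2})^* J^* = \Delta^{1/2}J = J(J\Delta^{1/2}J) = J\Delta^{-1/2},
\]
where in the last step I inserted $J^2 = \Id$ and used $J\Delta^{1/2}J = \Delta^{-1/2}$. Finally, the two remaining identities are straightforward computations:
\[
FS = \Delta^{1/2}J\cdot J\Delta^{1/2} = \Delta^{1/2}\Delta^{1/2} = \Delta, \qquad SF = J\Delta^{1/2}\cdot J\Delta^{-1/2} = (J\Delta^{1/2}J)\Delta^{-1/2} = \Delta^{-1}.
\]

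The main obstacle I anticipate is purely technical bookkeeping rather than anything conceptually deep: one has to be consistently careful about (i) how adjoints behave for antilinear operators and the associated convention $(BA)^* = A^*B^*$, (ii) the domain issues in manipulating the unbounded operators $\Delta^{\pm 1/2}$ and $J\Delta^{\pm1/2}J$, and (iii) precisely what uniqueness statement the polar decomposition provides for closed densely defined antilinear operators with trivial kernel and dense range. Once these are handled correctly, everything reduces to the two observations $S^2 = \Id$ and the uniqueness of the polar decomposition.
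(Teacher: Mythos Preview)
Your proposal is correct and follows essentially the same approach as the paper: both proofs hinge on the observation $S=S^{-1}$ together with the uniqueness of the polar decomposition, and both use the antiunitarity of $J$ (i.e., $J^*=J^{-1}$) as input. The only difference is in the ordering---the paper first computes $FS=S^*S=\Delta$ directly and then extracts $J^2=\Id$ from a uniqueness argument, whereas you obtain $J^2=\Id$ and $\Delta^{-1/2}=J\Delta^{1/2}J$ in one stroke by comparing the polar decompositions of $S$ and $S^{-1}$ and derive $FS$, $SF$ afterward---but this is purely cosmetic.
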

\end{leftbar}
\begin{proof}
We know that $J^* = J^{-1}$ since $J$ is anti-unitary. 
\[FS = S^*S = \Delta^{1/2}J^*J\Delta^{1/2} = \Delta\]
Also, from the definition of $S$, $S^{-1} = S$. 
\[ S = S^{-1} \implies J\Delta^{1/2} = \Delta^{-1/2}J^* \implies J \Delta^{1/2} J = \Delta^{-1/2} \]
Then,
\[ SF = SS^* = J \Delta^{1/2} \Delta^{1/2} J^* = (J \Delta^{1/2} J^*)^2 = \Delta^{-1}.\]
We use uniqueness of the polar decomposition and 
\[ J\Delta^{1/2} = \Delta^{-1/2}J^* \implies J^2 \Delta^{1/2} = J \Delta^{-1/2} J \]
to conclude that $J^2=1$, since both $\Delta^{1/2}$ and $J \Delta^{-1/2} J^*$ are positive operators. Finally,
\[J^2 = \Id = JJ^* \implies J = J^*.\]
\end{proof}
We can also compute how these operators act explicitly when we are working in the enveloping von Neumann algebra of a GNS representation. Suppose we have a modular state $\omega$ which is given by a density matrix $\rho_{\omega} = | \Omega_{\omega} \rangle \langle \Omega_{\omega} |$ in its GNS rep. Let $\sM = \pi_{\omega}(\sA)''$ where $\pi_{\omega}$ is the canonical GNS representation of a modular state from the proof of Theorem \ref{prop GNS}. Hence $\Omega_{\omega}$ is cyclic and separating and we can define the associated modular operator and conjugation. The star operator acts as 
\begin{align*}
S : \sH_{\omega} & \rightarrow \sH_{\omega} \\
[A \Id ] & \mapsto [A^* \Id ] .
\end{align*}
We can then verify that the polar decomposition is:
\begin{align*}
J :  [A] & \mapsto [\rho_{\omega}^{1/2} A^* \rho_{\omega}^{-1/2}] &A &\in \sA \\
 \Delta^{1/2} :  [A] &\mapsto [\rho_{\omega}^{1/2} A \rho_{\omega}^{-1/2}] &A &\in \sA \\
  \Delta :  [A] &\mapsto [\rho_{\omega} A \rho_{\omega}^{-1}] &A &\in \sA 
\end{align*}
by checking that $J$ is anti-unitary and $\Delta^{1/2}$ is self-adjoint. 
\begin{equation*}
\begin{aligned}
 & (J \pi_{\omega}(A) \Omega,  J \pi_{\omega}(B) \Omega)  \\
 &  = ([\rho_{\omega}^{1/2}A^*\rho_{\omega}^{-1/2}], [\rho_{\omega}^{1/2}B^*\rho_{\omega}^{-1/2}]) \\
 &= \tr(\rho_{\omega} \rho_{\omega}^{-1/2} A \rho_{\omega}^{1/2} \rho_{\omega}^{1/2} B^* \rho_{\omega}^{-1/2}) \\
 & = \tr(\rho_{\omega} B^*A) \\
 & = (\pi_{\omega}(B) \Omega, \pi_{\omega}(A) \Omega) \\
 \end{aligned}
 \;\;\;\;\;\;\;\;\;\;
 \begin{aligned}
 &( \Delta^{1/2} \pi_{\omega}(A) \Omega, \pi_{\omega}(B) \Omega) \\
 & = ([\rho_{\omega}^{1/2}A \rho_{\omega}^{-1/2}], [B]) \\
 & = \tr(\rho_{\omega}\rho_{\omega}^{-1/2}A^*\rho_{\omega}^{1/2}B) \\
 & = \tr(\rho_{\omega}A^* \rho_{\omega}^{1/2}B\rho_{\omega}^{-1/2})\\
 & = (\pi_{\omega}(A) \Omega, \Delta^{1/2} \pi_{\omega}(B)\Omega)
 \end{aligned}
 \end{equation*}
 \subsection{Natural Cone}
The natural cone is very special subset of the underlying Hilbert Space $\sH$, which is associated to a pair $(\sM, \Omega)$.  It has many nice geometric properties, such as being pointed, convex, self-dual and spanning the Hilbert space. (Details in BR1 2.5.26 and 2.5.28.) In addition, it gives us a way to represent states on $\sM$ uniquely by a vector. We shall also see that by changing our point of view between different vectors in the cone, we do not change the cone itself and some of the modular structure is invariant as well. This property is called universality of the cone. 
\begin{defn}
The \textit{natural positive cone} $\sP$ associated with $(\sM, \Omega)$ is the closure of the set 
\[ \{ AJAJ \Omega : A \in \sM \} \]
\end{defn}

The next proposition shows how restricting our attention to the natural cone gives us a way to choose unique vector representatives of states.
\begin{defn}
For each $\xi \in \sP$ define the normal positive form $\omega_{\xi} \in \sM_{*, +}$ by 
\[ \omega_{\xi}(A) = (\xi, A \xi)_{\sH}\]
\end{defn}

\begin{prop} % BR1 2.5.31
For any $\omega \in \sM_{*,+}$ there is a unique $\xi \in \sP$ such that $\omega = \omega_{\xi}$.
\end{prop}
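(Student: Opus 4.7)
The plan is to establish existence and uniqueness separately, leveraging three key properties of the natural cone $\sP$ stated just before the definition: self-duality, $J$-invariance (i.e.\ $J\xi = \xi$ for $\xi \in \sP$), and the fact that $\sP$ spans $\sH$. I will also use the standard modular identity $J\sM J = \sM'$, which is proven together with these properties in the development of Tomita--Takesaki theory.

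For existence, I would first obtain an arbitrary (not necessarily positive) vector representative of $\omega$. Since $\omega$ is normal and $\Omega$ is cyclic for $\sM$, a standard convergence argument along $\sM\Omega$ produces some $\eta \in \sH$ with $\omega(A) = (\eta, A\eta)$ for all $A \in \sM$. The task then reduces to producing $\xi \in \sP$ with $\omega_\xi = \omega_\eta$ on $\sM$. Mimicking the construction of $S$ in Section~\ref{modular}, I would define the anti-linear map
\[ T_0 : \sM'\Omega \to \sH, \qquad T_0\, A'\Omega := A'^*\,\eta, \]
which is well-defined since $\Omega$ is separating for $\sM'$. I would show $T_0$ is closable with closure $T$, take the polar decomposition $T = V|T|$ of this anti-linear operator, and then verify that the vector $\xi := V|T|^{1/2}\Omega$ lies in $\sP$ and satisfies $\omega_\xi = \omega$. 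The containment $\xi \in \sP$ is checked via self-duality, by computing $(\xi, \zeta) \geq 0$ for $\zeta$ in a dense subset of $\sP$ (using that the polar part $V$ plays the role of $J$).

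For uniqueness, the central ingredient is the Powers--Stormer inequality: for $\xi_1, \xi_2 \in \sP$,
\[ \|\xi_1 - \xi_2\|^2 \leq \|\omega_{\xi_1} - \omega_{\xi_2}\|_{\sM^*} \cdot \|\xi_1 + \xi_2\|. \]
Once this is in hand, $\omega_{\xi_1} = \omega_{\xi_2}$ forces $\xi_1 = \xi_2$ immediately. To prove it, I would use the $J$-invariance $J\xi_i = \xi_i$ together with $JAJ \in \sM'$ for $A \in \sM$ to rewrite $(A(\xi_1-\xi_2), \xi_1+\xi_2)$ as a difference involving only the functionals $\omega_{\xi_i}$, and then take the supremum over self-adjoint $A \in \sM$ with $\|A\|\leq 1$.

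The hard part will be constructing the cone representative in the existence step: carrying out the anti-linear polar decomposition rigorously and verifying that the resulting vector actually lies in $\sP$ (as opposed to merely representing $\omega$) is a delicate argument that depends crucially on self-duality. The Powers--Stormer inequality, while technical, follows more mechanically from the modular identities once the geometric properties of $\sP$ are in place.
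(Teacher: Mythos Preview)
The paper does not prove this proposition; it is stated as a background result (with the surrounding geometric properties of the natural cone attributed to \cite{BR1}, Propositions~2.5.26--2.5.28, and the proposition itself corresponding to \cite{BR1}, Theorem~2.5.31). So there is no ``paper's own proof'' to compare against.

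Your outline is essentially the standard Bratteli--Robinson argument and is on the right track. Two remarks. First, the inequality you state,
\[
\|\xi_1 - \xi_2\|^2 \leq \|\omega_{\xi_1} - \omega_{\xi_2}\|_{\sM^*}\cdot\|\xi_1 + \xi_2\|,
\]
is not the usual Powers--St{\o}rmer form; the standard statement is the pair of bounds $\|\xi_1-\xi_2\|^2 \leq \|\omega_{\xi_1}-\omega_{\xi_2}\| \leq \|\xi_1-\xi_2\|\cdot\|\xi_1+\xi_2\|$. Your version still suffices for uniqueness, but the sketch you give for proving it (rewriting $(A(\xi_1-\xi_2),\xi_1+\xi_2)$ and taking a supremum over $\|A\|\leq 1$) more naturally yields the \emph{upper} bound on $\|\omega_{\xi_1}-\omega_{\xi_2}\|$, not the lower one you need. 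The standard proof of the lower bound instead decomposes the hermitian functional $\omega_{\xi_1}-\omega_{\xi_2}$ into positive and negative parts and uses self-duality of $\sP$ together with the orthogonality of these parts; you should revisit that step. Second, in the existence argument, getting an initial vector representative $\eta$ of a normal positive functional is not entirely trivial when $\omega$ is not already known to be a vector state---you need that $\Omega$ is cyclic for $\sM'$ (equivalently separating for $\sM$) together with a Radon--Nikodym-type argument, not merely cyclicity for $\sM$; make sure you invoke the right hypothesis there.
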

Finally, we have a result which allows us to consider another vector $\xi$ in the cone of $(\sM, \Omega)$, and see that the cone remains invariant. 
\begin{prop}[Universality of the Cone ] % BR1 2.5.30
\begin{enumerate}
\item[]
\item if $\xi \in \sP$, then $\xi$ is cyclic if and only if $\xi$ is separating
\item if $\xi \in \sP$ is cyclic (and hence separating) for $\sM$ then $J_{\xi}$ and $\sP_{\xi}$ associated to $(\sM, \xi)$ satisfy 
\[ J_{\xi} = J \qquad \qquad \sP_{\xi} = \sP \]
\end{enumerate} 
\end{prop}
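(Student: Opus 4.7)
The plan splits into three steps: (i) show that $J\eta=\eta$ for every $\eta\in\sP$; (ii) use this with Tomita's theorem $J\sM J=\sM'$ to prove part (1); (iii) prove $J_\xi=J$ via uniqueness of the polar decomposition, then use self-duality of $\sP$ to conclude $\sP_\xi=\sP$.

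For (i), pick a generator $\eta = AJAJ\,\Omega$ with $A\in\sM$. First, $J\Omega=\Omega$ (from $S\Omega=\Omega$ together with $\Delta^{1/2}\Omega=\Omega$); second, $JAJ\in\sM'$ by Tomita's theorem, so $JAJ$ commutes with $A\in\sM$. Using $J\Omega=\Omega$ twice,
\[
J\eta \;=\; JAJA\,\Omega \;=\; (JAJ)(A\,\Omega) \;=\; A\,(JAJ)\,\Omega \;=\; A\,JA\,\Omega \;=\; AJAJ\,\Omega \;=\; \eta,
\]
and continuity of $J$ extends this to all $\eta\in\sP$. For (ii), fix $\eta\in\sP$. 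For $A\in\sM$, $(JAJ)\eta = J(A(J\eta)) = JA\eta$, so $A\eta=0$ iff $(JAJ)\eta=0$; since $A\mapsto JAJ$ is a bijection $\sM\to\sM'$, $\eta$ is separating for $\sM$ iff it is separating for $\sM'$. Combined with the earlier equivalence (cyclic for $\sM$ iff separating for $\sM'$), this gives part (1).

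For (iii), let $\xi\in\sP$ be cyclic and separating, with modular data $(S_\xi,J_\xi,\Delta_\xi,\sP_\xi)$. My strategy is to prove $J_\xi=J$ by showing that $JS_\xi$ is positive self-adjoint on the dense domain $\sM\xi$, so that uniqueness of the polar decomposition $S_\xi=J_\xi\Delta_\xi^{1/2}$ forces $J_\xi=J$ (and $\Delta_\xi^{1/2}=JS_\xi$). This positivity calculation rests on the equivalent description $\sP = \overline{\Delta^{1/4}\sM_+\,\Omega}$, which lets one write $\xi=\Delta^{1/4}h\,\Omega$ for a suitable positive $h$ affiliated to $\sM$, and then rewrite $JS_\xi(A\xi) = JA^*\xi$ in a manifestly positive form using the identities $J^2=\Id$, $J\Delta^{1/2}J=\Delta^{-1/2}$, and $J\sM J=\sM'$. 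Once $J_\xi=J$, we have $\sP_\xi = \overline{\{AJAJ\,\xi:A\in\sM\}}\subseteq\sP$ because $\sP$ is closed under the linear map $\eta\mapsto AJAJ\,\eta$ (a direct computation using $JAJ\in\sM'$ and $A\in\sM$ gives $AJAJ\cdot BJBJ\,\Omega = (AB)\,J(AB)J\,\Omega$) and $\xi\in\sP$. The reverse inclusion then comes from self-duality of both cones: $\sP_\xi\subseteq\sP$ implies $\sP_\xi^\vee\supseteq\sP^\vee$, and self-duality $\sP_\xi^\vee=\sP_\xi$, $\sP^\vee=\sP$ forces $\sP\subseteq\sP_\xi$, hence $\sP_\xi=\sP$. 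The main obstacle is the positivity of $JS_\xi$, which demands careful manipulation of the modular identities together with the $\Delta^{1/4}$-description of the cone; every other ingredient is an essentially algebraic consequence of Tomita's theorem and the fixed-point property proved in (i).
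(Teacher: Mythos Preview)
The paper does not prove this proposition; it is stated without proof as a standard result of Tomita--Takesaki theory (the surrounding text refers the reader to Bratteli--Robinson for the cone's geometric properties). There is therefore no paper proof to compare against.

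On the merits of your sketch: parts (i) and (ii) are complete and correct. The fixed-point property $J\eta=\eta$ for $\eta\in\sP$, and the equivalence of cyclic and separating via the bijection $\sM\to\sM'$, $A\mapsto JAJ$, are exactly the right moves. For part (iii), your plan --- show $JS_\xi\geq 0$ so that uniqueness of the polar decomposition forces $J_\xi=J$, then obtain $\sP_\xi\subseteq\sP$ from invariance of $\sP$ under maps $\eta\mapsto AJAJ\eta$, and the reverse inclusion from self-duality --- is the standard Bratteli--Robinson route. However, you have not actually carried out the central step (positivity of $JS_\xi$), and you invoke the characterization $\sP=\overline{\Delta^{1/4}\sM_+\,\Omega}$, which is nowhere stated or proved in this paper; establishing that identity is itself a nontrivial lemma requiring the full KMS/modular machinery. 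The self-duality you use at the end is mentioned in the paper but likewise not proved. As written, part (iii) is a correct outline with the main technical difficulty clearly identified but left unresolved, resting on two auxiliary facts external to the paper's development.
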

We can compute the cone in the canonical GNS rep. 
\begin{align*}
\sP_{\omega} & = \overline{\{ AJAJ \Omega_{\omega} : A \in\sM  \}} \\
& =  \overline{\{ AJA [\Id] : A \in\sM  \}}  \\
& =  \overline{\{ A \rho_{\omega}^{1/2} A^* \rho_{\omega}^{-1/2} : A \in\sM  \}} \\
& = \overline{\{ BB^*\rho_{\omega}^{-1/2}  : B \in\sM^+  \}} = \overline{ \sM^+ \rho_{\omega}^{-1/2} }
\end{align*} 

\subsection{Relative Modular Operator}
We may now generalise the modular structure that was developed in section \ref{modular}. Given a Von Neumann algebra $\sM$ and two cyclic and separating vectors $\Omega$ and $\Omega'$ we may define the operator on $\sH$
\[ S^0_{\Omega' | \Omega} : A \Omega \mapsto A^* \Omega' \qquad \forall A \in \sM \]
If $\Omega' = \Omega$ we recover the operator $S_0$.  Let $S_{\Omega ' | \Omega} = \overline{S^0_{\Omega ' | \Omega}}$. Again we take the polar decomposition into a unique positive self adjoint operator $\Delta_{\Omega' | \Omega} ^{1/2}$ and anti-unitary operator $ J_{\Omega' | \Omega}$. 
\[ S_{\Omega' | \Omega}  = J_{\Omega'| \Omega} \Delta_{\Omega' | \Omega} ^{1/2}\]
$\Delta_{\Omega' | \Omega}$ is called the relative modular operator associated to $(\sM,\Omega, \Omega')$. 
\\
\\
If $\omega$ is a modular state and $\eta$ is $\omega$-normal on a $\Cstar$-algebra $\sA$, then we can move to the $\omega$-GNS rep. $\omega$ is represented by a cyclic and separating vector $\Omega_{\omega}$. Furthermore, there is a unique vector rep $\Omega_{\eta}$ for $\eta$ in the natural cone $\sP_{\omega}$,  and by the universality of the cone we have that if $\Omega_{\eta}$ is cyclic, it is automatically also separating. This allows us to define the relative modular operator $\Delta_{\Omega_{\eta}|\Omega_{\omega}}$. We can explicitly compute the polar decomposition in the canonical GNS rep. As before, $\rho_{\omega} = | \Omega_{\omega} \rangle\langle \Omega_{\omega}|$. The method is exactly the same as in the simpler case $\eta = \omega$. 
\begin{align*}
J_{\Omega_{\eta} | \Omega_{\omega} }:  [A] & \mapsto [\rho_{\omega}^{1/2} A^* \rho_{\omega}^{-1/2}] &A &\in \sA \\
 \Delta^{1/2}_{\Omega_{\eta} | \Omega_{\omega} } :  [A] &\mapsto [\rho_{\eta}^{1/2} A \rho_{\omega}^{-1/2}] &A &\in \sA \\
  \Delta_{\Omega_{\eta} | \Omega_{\omega} } :  [A] &\mapsto [\rho_{\eta} A \rho_{\omega}^{-1}] &A &\in \sA 
\end{align*}
Notice that $J_{\Omega_{\eta} | \Omega_{\omega}}  = J_{\Omega_{\omega}}$. A priori, we do not know that $J_{\Omega'  | \Omega} = J_{\Omega} = J_{\Omega'}$ more generally. However, this turns out to be the case as long as $ \Omega' \in \sP_{\Omega} $. 

\begin{prop}
Let $\sM$ be a VNA and $\Omega$ be a cyclic and separating vector. Let $\Omega'$ be a cyclic and separating vector in $\sP_{\Omega}$. Then $J_{\Omega'  | \Omega} = J_{\Omega}$.
\end{prop}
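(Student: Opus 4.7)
The plan is to appeal to the uniqueness of the polar decomposition. Since $\Omega'$ is separating for $\sM$, the map $S^0_{\Omega'|\Omega}\colon A\Omega \mapsto A^*\Omega'$ has trivial kernel, hence so does its closure $S_{\Omega'|\Omega}$; consequently, if I can write $S_{\Omega'|\Omega} = J_\Omega\, P$ for some positive self-adjoint operator $P$, uniqueness of the polar decomposition forces $J_{\Omega'|\Omega} = J_\Omega$ (and $P = \Delta^{1/2}_{\Omega'|\Omega}$). Using $J_\Omega^{-1} = J_\Omega$, this reduces to showing that $P := J_\Omega S_{\Omega'|\Omega}$ is positive self-adjoint.

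I first treat the dense subset of $\sP_\Omega$ consisting of vectors $\Omega' = B J_\Omega B J_\Omega \Omega$ with $B \in \sM$. Using $J_\Omega \Omega = \Omega$ (which follows from $\Delta_\Omega\Omega = \Omega$ and $S_\Omega\Omega = \Omega$), the relation $J_\Omega \sM J_\Omega = \sM'$ together with the commutation of $\sM$ and $\sM'$, and the identity $J_\Omega X \Omega = \Delta^{1/2}_\Omega X^*\Omega$ valid for $X \in \sM$, a short calculation that shuffles each $J_\Omega$ past its neighbouring $\sM$-factor yields, for every $A \in \sM$,
\[
J_\Omega S^0_{\Omega'|\Omega}(A\Omega) \;=\; J_\Omega A^* B J_\Omega B J_\Omega \Omega \;=\; B\, \Delta^{1/2}_\Omega\, B^*\, (A\Omega).
\]
Hence on the core $\sM\Omega$ the operator $J_\Omega S_{\Omega'|\Omega}$ agrees with $B\,\Delta^{1/2}_\Omega\, B^*$, whose closure is a positive self-adjoint operator. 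Uniqueness of the polar decomposition then gives $J_{\Omega'|\Omega} = J_\Omega$ in this case.

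For a general $\Omega' \in \sP_\Omega$, I would approximate it by a net $\Omega'_\alpha = B_\alpha J_\Omega B_\alpha J_\Omega \Omega \to \Omega'$. For fixed $A, C \in \sM$, the sesquilinear form $(A\Omega, C\Omega) \mapsto (J_\Omega A^*\Omega', C\Omega)$ depends continuously on $\Omega'$, and by the previous step it is symmetric and positive whenever $\Omega'$ is replaced by $\Omega'_\alpha$. These properties are therefore inherited by the form at $\Omega'$. The Friedrichs extension of the associated positive symmetric operator on $\sM\Omega$ is a positive self-adjoint operator that, since $S_{\Omega'|\Omega}$ has trivial kernel and $\sM\Omega$ is a core, must coincide with $J_\Omega S_{\Omega'|\Omega}$; one last appeal to uniqueness of the polar decomposition closes the argument. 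The main obstacle is this density step: identifying the Friedrichs extension of the positive form with the closed operator $J_\Omega S_{\Omega'|\Omega}$ (and thereby promoting symmetry to self-adjointness) requires some care, and a neater alternative would be a direct continuity argument for $\Omega' \mapsto J_{\Omega'|\Omega}$ on the natural cone, though establishing such continuity is itself a nontrivial input.
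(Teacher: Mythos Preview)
Your approach is genuinely different from the paper's. The paper transports the whole question, via the unitary from the uniqueness part of the GNS theorem, to the canonical GNS representation $(\sH_\omega,\pi_\omega,\Omega_\omega)$ of the vector state $\omega(\cdot)=(\Omega,\cdot\,\Omega)$. There $J_{\Omega_\omega}$ and $J_{\Omega_\eta|\Omega_\omega}$ are given by the same explicit density-matrix formula, so the identity is read off and then pulled back along the unitary. No approximation in the cone and no analysis of unbounded operators beyond what was already set up.

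Your intrinsic polar-decomposition strategy is perfectly reasonable, and the computation $J_\Omega S^0_{\Omega'|\Omega}(A\Omega)=B\Delta_\Omega^{1/2}B^*(A\Omega)$ for $\Omega'=BJ_\Omega BJ_\Omega\Omega$ is correct. But there are two gaps, one you flag and one you do not. The unflagged one: for bounded $B$ and unbounded $\Delta_\Omega^{1/2}$, the closure of $B\Delta_\Omega^{1/2}B^*$ is symmetric but not automatically self-adjoint; you cannot invoke uniqueness of the polar decomposition until you know $J_\Omega S_{\Omega'|\Omega}$ is \emph{self-adjoint}, not merely symmetric. The flagged one is the density step, where you rightly hesitate about matching the Friedrichs extension with $J_\Omega S_{\Omega'|\Omega}$.

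Both gaps close at once if you use the basic property of the natural cone that $J_\Omega\xi=\xi$ for every $\xi\in\sP_\Omega$, so in particular $J_\Omega\Omega'=\Omega'$. A direct computation (using $J_\Omega\sM J_\Omega=\sM'$, $J_\Omega\Omega=\Omega$) then shows that $J_\Omega S^0_{\Omega'|\Omega}J_\Omega$ coincides on $\sM'\Omega$ with $F^0_{\Omega'|\Omega}:A'\Omega\mapsto A'^*\Omega'$, whose closure is $S_{\Omega'|\Omega}^*$. Hence $J_\Omega S_{\Omega'|\Omega}J_\Omega=S_{\Omega'|\Omega}^*$, i.e.\ $(J_\Omega S_{\Omega'|\Omega})^*=J_\Omega S_{\Omega'|\Omega}$, so $J_\Omega S_{\Omega'|\Omega}$ is self-adjoint for \emph{every} cyclic separating $\Omega'\in\sP_\Omega$, with no special form and no approximation needed. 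Positivity follows from your own special-case identity by taking limits in the quadratic form $(A\Omega,J_\Omega A^*\Omega')$, which is linear and continuous in $\Omega'$; this is harmless, unlike the operator-level limit you were worried about. Uniqueness of the polar decomposition then gives $J_{\Omega'|\Omega}=J_\Omega$.
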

\begin{proof}
Consider the state $\omega$ on $\sM$ defined by 
\[ \omega(A) = (\Omega, A \Omega)_{\sH} \qquad \forall A \in \sM \]
Note that $\omega$ is a normal state with $\rho_{\omega} = | \Omega \rangle \langle \Omega |$. The triple $(\sH, id : \sM \rightarrow \sB(\sH), \Omega)$ is a GNS representation for $\omega$. Hence by the uniqueness of the GNS representation it is unitarily equivalent to the canonical GNS representation $(\sH_{\omega} = \overline{\sM / I_{\omega}} , \pi_{\omega}, \Omega_{\omega})$. Following the uniqueness proof, one can construct the unitary operator 
\begin{align*}
 U : \sH & \rightarrow \sH_{\omega} \\
 A \Omega & \mapsto [ A ] \qquad \forall A \in \sM 
 \end{align*}
Now we consider the two star operators 
\begin{align*}
S_{\Omega} : \sH 	& \rightarrow \sH   		& S_{\Omega_{\omega}} : 	 \sH_{\omega} 						& \rightarrow \sH_{\omega} \\
A \Omega 		& \mapsto A^* \Omega 	& 						\pi_{\omega}(A) \Omega_{\omega} = [A] 	& \mapsto  \pi_{\omega}(A)^* \Omega_{\omega} = [A^*] . \end{align*}
It is not hard to check that $ S_{\Omega} = U^* S_{\Omega_{\omega}} U$ and hence by the uniqueness of the polar decomposition $ J_{\Omega} = U^* J_{\Omega_{\omega}} U$. It is also easy to check that $ U \sP_{\Omega} = \sP_{\Omega_{\omega}}$. If $\Omega' \in \sP_{\Omega}$ then for some $A \in \sM$, $\Omega' = A J_{\Omega} A J_{\Omega} \Omega$. 
\begin{align*}
U \Omega ' & = U A J_{\Omega} A J_{\Omega} \Omega \\
& = U A U^*U J_{\Omega} U^* U A U^* U J_{\Omega} U^* U \Omega \\
& = \pi_{\omega}(A) J_{\Omega_{\omega}} \pi_{\omega}(A) \Omega_{\omega} \in \sP_{\Omega_{\omega}} 
\end{align*}
Now we take $\sN$ in the natural cone $\sP_{\Omega}$. Again we consider the state induced by $\sN$ defined by $\eta(A) = (\sN, A \sN)_{\sH}$. We know that $U\sN$ is in the cone $\sP_{\Omega_{\omega}}$. A calculation shows that it is the unique vector rep of $\eta$ in this cone. Let $B_n$ be a sequence in $\sM$ such that $ \lim_{n \rightarrow \infty} B_n \Omega = \sN$. 
\begin{align*}
(U \sN, \pi_{\omega}(A) U \sN)_{\sH_{\omega}} & = \lim_{n \rightarrow \infty} (UB_n \Omega, \pi_{\omega}(A) U B_n \Omega)_{\sH_{\omega}}  = \lim_{n \rightarrow \infty}  ([B], [AB])_{\sH_{\omega}} \\
& =  \lim_{n \rightarrow \infty}  \omega(B^*AB) =  \lim_{n \rightarrow \infty} (\Omega, B^*AB \Omega)_{\sH} = (\sN, A\sN)_{\sH} \\
&= \eta(A) 
\end{align*}
Denote $\Omega_{\eta} = U \sN$. It is again easy to check that $S_{\sN | \Omega} = U^* S_{\Omega_{\eta} | \Omega_{\omega}} U$ and hence that $ J_{\sN | \Omega} = U^* J_{\Omega_{\eta}| \Omega_{\omega} }U$. Hence it is sufficient to prove that $J_{\Omega_{\eta} | \Omega_{\omega}} = J_{\Omega_{\omega}}$ for $\omega$ and $\eta$ two normal states on $\sM$ with $\eta$ normal relative to $\omega$ in the canonical GNS representation of $\omega$, which we have already done. 
\end{proof}
In this representation we can also see that the relative modular operator is the object which plays the role of the non-commutative Radon-Nikodym derivative. Classically, if we have two states $\nu$ and $\mu$ with $\nu << \mu$ and an observable $f$ then
\[ \nu(f) = \int f d\nu = \int f \frac{d\nu}{d\mu}d\mu = \mu(f\frac{d\nu}{d\mu}).\]
In the quantum case, if a state $\eta$ is $\omega$-normal then we can define $\Delta_{\eta|\omega}:= \Delta_{\Omega_{\omega}|\Omega_{\eta}}$ and 
 \begin{align*}
( \Omega_{\omega} , \Delta_{\eta|\omega} (\pi_{\omega}(A) \Omega_{\omega} ) ) & = ( [\Id], [\rho_{\eta} A \rho_{\omega}^{-1}]) \\
 & = \textrm{tr}(\rho_{\omega} \rho_{\eta} A \rho_{\omega}^{-1}) \\
 & = \textrm{tr}(\rho_{\eta}A) \\
 & = \eta(A) = (\Omega_{\eta}, \pi_{\omega}(A) \Omega_{\eta} ).
 \end{align*}
\subsection{Modular Dynamics} 
The modular structure is related to the thermodynamic properties of systems described by operator algebras in an interesting way. This will be explored in this section. We consider a von Neumann algebra $\sM$ with cyclic and separating vector $\Omega$. Let $J$ and $\Delta$ be the modular conjugation and modular operator associated to $(\sM, \Omega)$. The following result is an extremely important and useful tool. 
\begin{leftbar}
\begin{thm}[Tomita-Takesaki]
\[ J \sM J = \sM ' \qquad \textrm{and} \qquad \Delta^{it} \sM \Delta^{-it} = \sM \qquad \forall t \in \mathbb{R}\]
\end{thm}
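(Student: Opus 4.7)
The plan is to first gather the elementary identities that $\Omega$ is a fixed point for all the modular objects, then establish a ``KMS-type'' relation at inverse temperature $\beta=-1$ for the formal dynamics $\sigma^t(A):=\Delta^{it}A\Delta^{-it}$, and finally bootstrap this relation via analytic continuation into the statement that $\sigma^t$ maps $\sM$ to itself. The anti-isomorphism claim $J\sM J=\sM'$ will then fall out once we understand how $J$ intertwines $\sM$ with its commutant through the polar decomposition $S=J\Delta^{1/2}$.

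First I would observe that, since $\Id\in\sM$, $S\Omega=\Id^*\Omega=\Omega$; uniqueness of the polar decomposition of the (trivial) operator fixing $\Omega$ together with positivity of $\Delta^{1/2}$ forces $\Delta\Omega=\Omega$ and $J\Omega=\Omega$. From $F=J\Delta^{-1/2}$ and the already-established identity $\overline{S_0}=F_0^{\,*}$, I would then derive the \textbf{modular relation}
\[
(A\Omega,\Delta B\Omega)_{\sH}=(SB\Omega,SA\Omega)_{\sH}=(B^*\Omega,A^*\Omega)_{\sH},\qquad A,B\in\sM .
\]
Rewriting this as $(A\Omega,\Delta B\Omega)=(\Omega,BA\Omega)$ shows that the state $\omega_\Omega(\,\cdot\,)=(\Omega,\,\cdot\,\Omega)$ satisfies a KMS-type condition with respect to $\sigma^t$ at $\beta=-1$, in direct analogy with the finite-dimensional characterisation proven earlier.

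Next, the heart of the argument is to upgrade this scalar-level identity to an algebra-level one. For $A\in\sM$ and $B'\in\sM'$ I would consider the family of functions
\[
F_{\xi,\eta}(z)\;:=\;(\xi,\Delta^{iz}A\Delta^{-iz}B'\eta)_{\sH}-(\xi,B'\Delta^{iz}A\Delta^{-iz}\eta)_{\sH},\qquad \xi,\eta\in\sH ,
\]
holomorphic for $\mathrm{Im}\,z$ in a strip containing $0$ once $A$ is replaced by a suitable dense family of $\sigma$-entire-analytic elements built by Gaussian smearing (exactly the construction of $\sA_\tau$ from Section 5). Using the modular relation together with $J\Omega=\Omega$ and the intertwining property $F=S^*$, I would show $F_{\xi,\eta}$ vanishes on the real line for a dense set of choices $(\xi,\eta)$, and then invoke a Phragm\'en--Lindel\"of/edge-of-the-wedge argument to conclude $F_{\xi,\eta}\equiv 0$. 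Specialising to $z=t\in\mathbb{R}$ gives $[\sigma^t(A),B']=0$ for every $B'\in\sM'$, whence $\sigma^t(A)\in\sM''=\sM$, proving $\Delta^{it}\sM\Delta^{-it}\subseteq\sM$. Equality follows by applying the inclusion to $t$ and $-t$.

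For the conjugation statement, I would use the polar decomposition $S=J\Delta^{1/2}$ together with $\Delta^{1/2}=J\Delta^{-1/2}J$ (established in the Useful Modular Identities) to rewrite, for $A\in\sM$ and $A'\in\sM'$,
\[
(J A^* J)\,A'\Omega \;=\; J A^* \Delta^{-1/2} F_0 A'\Omega \;=\; J A^* S \Delta^{-1/2} A'\Omega \;=\; A'\,(JA^*J)\Omega ,
\]
after moving $\Delta^{it}$-powers across $\sM$ using the just-proved invariance and then continuing analytically to $t=-i/2$. This shows $JAJ$ commutes with every $A'\in\sM'$, i.e. $J\sM J\subseteq\sM''=\sM$, and the opposite inclusion $J\sM' J\subseteq\sM$ follows by symmetry (swap the roles of $\sM$ and $\sM'$, $S$ and $F$, and note $J^2=\Id$). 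The main obstacle, I expect, is the careful justification of the analytic continuation: the operators $\Delta^{z}$ are generally unbounded, so controlling domain issues and showing that Gaussian-regularised elements are genuinely dense in $\sM$ in the appropriate topology requires the full spectral theory of unbounded self-adjoint operators together with the uniform-boundedness principle exactly as we used it when constructing the analytic elements $\sA_\tau$.
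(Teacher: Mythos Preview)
The paper does not prove Tomita--Takesaki; it quotes the theorem as a black box and moves on. So there is no ``paper's proof'' to compare against, and I will comment only on the internal logic of your sketch.

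Your outline has a genuine circularity at its core. You propose to build $\sigma$-entire elements of $\sM$ by Gaussian smearing, exactly as in the construction of $\sA_\tau$. But the smeared element $A_n=\sqrt{n/\pi}\int \sigma^t(A)\,e^{-nt^2}\,dt$ lies in $\sM$ (rather than merely in $\sB(\sH)$) only if $\sigma^t(A)\in\sM$ for all real $t$ --- and that is exactly the statement $\Delta^{it}\sM\Delta^{-it}\subseteq\sM$ you are trying to prove. Without it, your ``dense family of $\sigma$-entire-analytic elements'' is dense in the wrong algebra and the commutator argument says nothing about $\sM$. Relatedly, you claim the modular relation lets you show $F_{\xi,\eta}$ vanishes on the real axis, but the identity $(A\Omega,\Delta B\Omega)=(B^*\Omega,A^*\Omega)$ is a single boundary value (morally at one imaginary point), not information along $\mathbb{R}$; you cannot force an analytic function to vanish on a line from one off-axis value.

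The actual proofs (Takesaki's original, or the shorter resolvent arguments of Van~Daele and Rieffel--Van~Daele) do not separate the two conclusions the way you do. They establish a resolvent-type identity --- schematically, that certain combinations like $(\lambda+\Delta)^{-1}A\Omega$ interact correctly with $J\sM J$ --- and this single technical lemma yields $J\sM J=\sM'$ and $\Delta^{it}\sM\Delta^{-it}=\sM$ simultaneously. The two statements are genuinely entangled; your plan to prove invariance first and then deduce the conjugation formula via analytic continuation to $t=-i/2$ is a reasonable heuristic once the theorem is known, but it cannot serve as the proof because the first step already presupposes the conclusion.
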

\end{leftbar}
It follows from Tomita-Takesaki that the map 
\[ \mathbb{R} \ni  t \mapsto \sigma_t^{\omega} \in Aut{(\sM)} \qquad \textrm{where} \qquad \sigma_t^{\omega}(A) = \Delta^{-it/ \beta} A \Delta^{it/ \beta} \]
is a $\sigma$-weakly continuous one parameter automorphism group. It is called the \textit{modular dynamics}. 
\newline
\newline
This map can be extended in an appropriate way to the strip $| Im(t) | \leq 1/2 $ for the algebra of analytic elements.
\\
\\
If we start with a $\Cstar$-algebra and a modular faithful state $\omega$, we can use the appropriate modular operator to construct a dynamics for which $\omega$ is a $\beta$-KMS state. Let $(\sH_{\omega}, \pi_{\omega}, \Omega)$ be the GNS representation of $\omega$. We have that $\pi_{\omega}$ is injective and $\Omega$ is cyclic and separating. Let $\Delta$ be the modular operator associated to $(\pi_{\omega}(\sA)'' , \Omega)$. 
\begin{align*} 
\omega(\pi_{\omega}^{-1}(A \sigma_{\omega}^{i\beta}(B))) & = ( \Omega, A \Delta B \Delta^{-1} \Omega )  = ( \Omega, A \Delta B SF \Omega )  = (\Omega, A \Delta B \Omega) \\
& = ( \Omega, A \Delta^{1/2} J J \Delta^{1/2} B \Omega )  = ( \Delta^{1/2} A^* \Omega , J B^* \Omega )  = (B^* \Omega, J \Delta^{1/2} A^* \Omega) \\
& = (\Omega, BA \Omega) \\
& = \omega(BA)
\end{align*}
$\Omega$ is a $\tau, \beta$-KMS state for $\tau^t(A) = \pi_{\omega}^{-1}(\sigma_{\omega}^t(A))$. This means that if we know $\omega$ is a KMS state for a dynamics $\tau$, we have that $\omega$ is faithful. Furthermore, we have the following proposition. 
\begin{prop}
If $(\sA, \tau)$ is a $\Cstar$-dynamical system and $\omega$ is a $\tau$-KMS state, then $\omega$ is modular. 
\end{prop}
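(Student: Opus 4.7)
To say $\omega$ is modular means that in its GNS triple $(\sH_\omega, \pi_\omega, \Omega_\omega)$ the vector $\Omega_\omega$ is both cyclic and separating for the enveloping von Neumann algebra $\sM := \pi_\omega(\sA)''$. Cyclicity is immediate from the GNS construction since $\pi_\omega(\sA)\Omega_\omega$ is dense in $\sH_\omega$ and $\pi_\omega(\sA) \subseteq \sM$, so the entire content of the proposition is the separating property. My plan is to deduce this from the stated faithfulness of $\omega$ on $\sA$ by upgrading faithfulness to the canonical normal extension $\widehat{\omega}(M) := \langle \Omega_\omega, M\Omega_\omega\rangle$ on $\sM$: once $\widehat{\omega}$ is known to be faithful on $\sM$, the separating property follows immediately, since $M\Omega_\omega = 0$ forces $\widehat{\omega}(M^*M) = \|M\Omega_\omega\|^2 = 0$, hence $M^*M = 0$ and $M = 0$.

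To prove faithfulness of $\widehat{\omega}$, I would show that $\widehat{\omega}$ is itself a KMS state on $\sM$ with respect to a suitable $W^*$-dynamics, and then rerun the same Cauchy--Schwarz--plus--KMS argument that delivers faithfulness at the $C^*$ level. Since $\omega$ is $\tau$-invariant, the GNS construction (with Stone's theorem) gives a strongly continuous unitary group $U(t) = e^{itL}$ on $\sH_\omega$ fixing $\Omega_\omega$ and satisfying $\pi_\omega(\tau^t(A)) = U(t)\pi_\omega(A)U(t)^*$. The induced automorphism group $\widehat{\tau}^t(M) := U(t)MU(t)^*$ on $\sB(\sH_\omega)$ leaves $\sM$ invariant because it preserves the dense subalgebra $\pi_\omega(\sA)$ and is weakly continuous, and $\widehat{\omega}$ is $\widehat{\tau}$-invariant because $\Omega_\omega$ is $U(t)$-invariant. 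It remains to transport the KMS boundary-value property from $\pi_\omega(\sA)$ to all of $\sM$: fix $A \in \sA_\tau$ in the dense analytic subalgebra, choose a Kaplansky-density net $A_\alpha \in \sA$ with $\pi_\omega(A_\alpha) \to M$ $\sigma$-strongly and $\|\pi_\omega(A_\alpha)\| \leq \|M\|$, and observe that the strip-holomorphic functions $z \mapsto \omega(A_\alpha \tau^z(A))$ are uniformly bounded on $\overline{D_\beta}$ and converge pointwise on $\mathbb{R}$ to $t \mapsto \widehat{\omega}(M\widehat\tau^t(A))$. Vitali's convergence theorem then upgrades this to uniform convergence on $\overline{D_\beta}$, yielding the required strip-holomorphic extension with the correct KMS boundary value on $\operatorname{Im} z = \beta$.

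The main obstacle is this last limiting step: the analytic continuation $\widehat{\tau}^z$ is only defined on a (dense but) smaller analytic subalgebra of $\sM$, and one must carefully interchange the $\sigma$-strong limit with the holomorphic extension. The saving grace is that at every stage one keeps the analytic slot of the KMS identity filled by a fixed element of $\sA_\tau$ and only approximates the non-analytic slot in the weak operator topology, so the object under analytic continuation remains a uniformly bounded family of holomorphic functions on $\overline{D_\beta}$ and a Phragm\'en--Lindel\"of-type compactness argument applies. Granted Step~2, the Cauchy--Schwarz inequality of Lemma \ref{CS lemma} applied to $\widehat\omega$ together with the KMS boundary condition then closes out the faithfulness of $\widehat\omega$ exactly as at the $C^*$ level, completing the proof.
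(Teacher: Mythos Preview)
The paper states this proposition without proof, so there is no argument in the text to compare against directly. Your outline is the standard route (essentially Bratteli--Robinson Vol.~2, Corollary~5.3.9): extend the KMS condition from $\pi_\omega(\sA)$ to its weak closure $\sM$ via Kaplansky density and a normal-families/Vitali argument on the strip, then conclude that the normal extension $\widehat\omega$ is faithful on $\sM$, whence $\Omega_\omega$ separates $\sM$. Your identification of the main obstacle (passing the KMS boundary condition through a $\sigma$-strong limit) and its resolution (keep the analytic slot fixed in $\sA_\tau$, approximate only the non-analytic slot, and apply Vitali to the resulting uniformly bounded family of strip-holomorphic functions) is correct.

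One point deserves sharpening: the closing sentence, where you say faithfulness of $\widehat\omega$ follows ``exactly as at the $C^*$ level.'' At the $C^*$ level the Cauchy--Schwarz--plus--KMS argument only shows that the left kernel $\{A : \omega(A^*A)=0\}$ is a \emph{two-sided} ideal; this alone does not force it to vanish (KMS states on non-simple $C^*$-algebras need not be faithful, despite the paper's earlier phrasing). What actually closes the argument at the von Neumann level is the GNS cyclicity: the set $N = \{M \in \sM : M\Omega_\omega = 0\}$ is a $\sigma$-weakly closed left ideal, the extended KMS condition makes it $*$-closed and hence two-sided, so $N = \sM z$ for a central projection $z$; then $z\Omega_\omega = 0$ together with cyclicity of $\Omega_\omega$ for $\sM$ gives $z\sH_\omega = \overline{z\,\sM\Omega_\omega} = \overline{\sM z\Omega_\omega} = 0$, so $z=0$ and $N=0$. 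Thus the final step genuinely uses the cyclic-vector structure of the GNS representation rather than being a verbatim rerun of a $C^*$-level argument.
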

Hence, when we move to the $\omega$-GNS representation we have 
\[ \tau^t(A) = \Delta^{-it/\beta}\pi_{\omega}(A)\Delta^{it/\beta}.\]
This is a unitary implementation of the dynamics. It is special, however, because it is the unique unitary implementation which satisfies the property
\[ \Delta^{it/\beta}\sP = \sP\]
so that the natural cone is preserved, if we see vectors as states and view the time evolution in the Schr\"odinger picture. 

\subsection{Perturbed Dynamics}
If $\omega$ is a KMS state for a dynamics $\tau$, the generator $L$ for the unitary group $\sU(t) = \Delta^{-it/\beta}=e^{itL}$ is called the \textit{standard Liouvillean} for $(\sA, \tau, \omega)$. It is the unique self-adjoint operator satisfying 
\[ e^{itL}\pi_{\omega}(A) e^{-itL} = \pi_{\omega}(\tau^t(A)), \;\;\;\;\;\;\;\;\;\; L \Omega_{\omega} = 0, \;\;\;\;\;\;\;\;\;\; e^{-itL}\sP = \sP\]
i.e. it generates an implementation of the dynamics in the GNS rep and preserves the natural cone. From the developments in the last section we can calculate the Liouvillean in terms of the modular operator
\[L_{\omega} = -\frac{1}{\beta} \log \Delta_{\omega} .\]
Now suppose we perturb the dyamics $\tau$ by some bounded self-adjoint operator $V$. We would like to compute the standard Liouvillean for this perturbed dynamics \cite{Pi}. 
\begin{leftbar}
\begin{prop}
Let $(\sA, \tau)$ be a $\Cstar$-dynamical system and $\omega$ a $\tau, \beta$-KMS state. Let $V \in \sA$ be a self-adjoint element and $\tau_{V}$ be the associated perturbed dynamics. The standard Liouvillean of $(\sA, \tau_V, \omega_V)$ is 
\[L_V = L + \pi_{\omega}(V) - J\pi_{\omega}VJ.\]
\end{prop}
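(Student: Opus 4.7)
The plan is to verify that the candidate operator $L_V := L + \pi_\omega(V) - J\pi_\omega(V)J$ satisfies each of the three defining properties of the standard Liouvillean for $(\sA, \tau_V, \omega_V)$ in the $\omega$-GNS representation---self-adjoint implementation of $\tau_V$, annihilation of $\Omega_V$, and preservation of the natural cone $\sP$---and then invoke uniqueness. Self-adjointness on $\mathcal{D}(L)$ follows from Kato--Rellich, since $\pi_\omega(V) - J\pi_\omega(V)J$ is bounded and self-adjoint (using $(J\pi_\omega(V)J)^* = J\pi_\omega(V)J$ from the antiunitarity of $J$).

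For the implementation property, I would leverage Araki's perturbation theorem (Theorem \ref{araki}), which already supplies the unitary implementation $e^{it(L + \pi_\omega(V))}$ of $\tau_V$. Setting $Y(t) := e^{-it(L+\pi_\omega(V))} e^{itL_V}$ and computing via Duhamel yields
\[
\dot Y(t) = i B(t)\, Y(t), \qquad Y(0) = \Id,
\]
where $B(t) := -e^{-it(L+\pi_\omega(V))}\, J\pi_\omega(V) J\, e^{it(L+\pi_\omega(V))}$. The crucial observation is that by Tomita--Takesaki we have $J\pi_\omega(V) J \in J\sM J = \sM'$ with $\sM := \pi_\omega(\sA)''$, and conjugation by $e^{it(L+\pi_\omega(V))}$ is an automorphism of $\sM$ (it implements $\pi_\omega \circ \tau_V^t$ and extends to $\sM$ by weak continuity), hence also of $\sM'$. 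Therefore $B(t) \in \sM'$ for every $t$, the ODE lives entirely in the von Neumann algebra $\sM'$, and $Y(t) \in \sM' \subset \pi_\omega(\sA)'$ for all $t$. Consequently $e^{itL_V}(\cdot) e^{-itL_V}$ and $e^{it(L+\pi_\omega(V))}(\cdot)e^{-it(L+\pi_\omega(V))}$ coincide on $\pi_\omega(\sA)$, so $L_V$ implements $\tau_V$.

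Cone preservation reduces first to the algebraic identity $JL_V J = -L_V$, which I would verify directly from $JLJ = -L$ (itself a consequence of $J\Delta^{it}J = \Delta^{it}$ together with the antilinearity of $J$ and $L = -\beta^{-1}\log\Delta$) and $J^2 = \Id$. This places $L_V$ in the standard class $L + K - JKJ$ with $K = \pi_\omega(V) \in \sM$ self-adjoint, a perturbation form for which preservation of the natural cone is a known result in Araki modular theory. Once implementation and cone-preservation are both in hand, annihilation $L_V \Omega_V = 0$ is automatic: $e^{-itL_V}\Omega_V$ lies in $\sP$ by cone-preservation, and
\[
(e^{-itL_V}\Omega_V, \pi_\omega(A)\, e^{-itL_V}\Omega_V) = \omega_V(\tau_V^t(A)) = \omega_V(A)
\]
for every $A \in \sA$, by implementation and $\tau_V$-invariance of $\omega_V$; uniqueness of the positive vector representative within $\sP$ then forces $e^{-itL_V}\Omega_V = \Omega_V$, whence $L_V \Omega_V = 0$. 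Uniqueness of the standard Liouvillean gives the claimed formula.

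The hard part is the sufficiency half of cone preservation: the necessary anticommutation $JL_V J = -L_V$ is immediate, but one genuinely needs the extra structure $L_V = L + K - JKJ$ to conclude that $e^{-itL_V}\sP = \sP$. The cleanest route I would take is via a Dyson--Trotter expansion that exhibits $e^{-itL_V}$ as a strong limit of products of manifestly cone-preserving unitaries, or alternatively to compute the modular operator of $(\sM, \Omega_V)$ directly from the Araki formula $\Omega_V \propto e^{-\beta(L+\pi_\omega(V))/2}\Omega$ and universality of $\sP$, identifying it with $e^{-\beta L_V}$. With cone-preservation secured, the remainder is essentially formal.
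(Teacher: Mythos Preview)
Your proposal is correct and aligns closely with the paper's proof: both rely on Araki's implementation $e^{it(L+\pi_\omega(V))}$ and on $J\pi_\omega(V)J\in\sM'$ via Tomita--Takesaki, and your suggested Trotter route for cone preservation is exactly what the paper does (factoring $e^{itL_V}$ into $e^{itL/n}$ and $e^{it(\pi_\omega(V)-J\pi_\omega(V)J)/n}=e^{it\pi_\omega(V)/n}Je^{it\pi_\omega(V)/n}J$, each manifestly preserving $\sP$). The one genuine difference is in the implementation step: the paper again uses Trotter to insert the $\sM'$-valued factors $e^{-itJ\pi_\omega(V)J/n}$ directly, whereas you run the Duhamel ODE for $Y(t)=e^{-it(L+\pi_\omega(V))}e^{itL_V}$ and argue $Y(t)\in\sM'$ from the fact that conjugation by $e^{it(L+\pi_\omega(V))}$ is an automorphism of $\sM$ and hence of $\sM'$; both arguments express the same idea that the discrepancy between the two unitaries lives in the commutant. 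Your explicit verification of $L_V\Omega_V=0$ from cone preservation and $\tau_V$-invariance of $\omega_V$ is a nice addition the paper leaves implicit.
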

\end{leftbar}
\begin{proof}
Both $L + \pi_{\omega}(V)$ and $J\pi_{\omega}(V)J$ are self-adjoint. We can use Trotter's formula to write 
\[ e^{it(L + \pi_{\omega}(V) - J\pi_{\omega}(V)J))}= s-\lim_{n \rightarrow \infty} \left(e^{it(L + \pi_{\omega}(V))/n}e^{it(-J\pi_{\omega}(V)J)/n}\right)^n.\]
Furthermore, using the fact that $J^2=\Id$ and by Tomita-Takesaki
\[e^{-itJ\pi_{\omega}(V)J}=Je^{it\pi_{\omega}(V)}J \in \sM'.\]
Now using Prop. \ref{araki} we can compute 
\begin{align*}
\pi_{\omega}(\tau_V^t(A)) &= e^{it(L+ \pi_{\omega}(V))}\pi_{\omega}(A)e^{-it(L+ \pi_{\omega}(V))} \\
& = s-\lim_{n \rightarrow \infty} \left(e^{it(L+ \pi_{\omega}(V))/n} \right)^n \pi_{\omega}(A) \left(e^{-it(L+ \pi_{\omega}(V))/n} \right)^n \\
& = s-\lim_{n \rightarrow \infty} \left(e^{it(L+ \pi_{\omega}(V))/n} e^{it(-J\pi_{\omega}(V)J)/n}\right)^n  \pi_{\omega}(A) \left( e^{-it(-J\pi_{\omega}(V)J)/n}e^{-it(L+ \pi_{\omega}(V))/n} \right)^n  \\
& = e^{it(L+ \pi_{\omega}(V) - J\pi_{\omega}(V)J)}\pi_{\omega}(A)e^{-it(L+ \pi_{\omega}(V) - J \pi_{\omega}(V)J)} .
\end{align*}
We have that $e^{it\pi_{\omega}(V)} \in \sM$ and $e^{-itJ\pi_{\omega}(V)J} \in \sM'$, hence they commute. Therefore, 
\[ e^{it(\pi_{\omega}(V) - J\pi_{\omega}(V)J)}= e^{it\pi_{\omega}(V)}e^{-itJ\pi_{\omega}(V)J}=e^{it\pi_{\omega}(V)}Je^{it\pi_{\omega}(V)}J .\]
It follows from the definition of the natural cone that $e^{it(\pi_{\omega}(V) - J\pi_{\omega}(V)J)} \sP \subset \sP$. 
Furthermore, we already have that $e^{itL}\sP = \sP$. By Trotter we have that 
\[e^{itL_V} = s-\lim_{n \rightarrow \infty} \left( e^{itL/n}e^{it(\pi_{\omega}(V) - J\pi_{\omega}(V)J)/n} \right)^n.\]
For each fixed $n$, $\left( e^{itL/n}e^{it(\pi_{\omega}(V) - J\pi_{\omega}(V)J)/n} \right)^n \sP \subset \sP$, and since $\sP$ is closed $e^{itL_V}\sP \subset \sP$. 
\end{proof}

\section{Return to Equilibrium Systems}
\subsection{Formal Setup and Notation}
The systems we are interested in studying consist of a small (finite dimensional) quantum system coupled to a quantum thermodynamic reservoir at some fixed inverse temperature $\beta$. We model the small system with a finite dimensional Hilbert space $\sH_S$, a Hamiltonian dynamics given by $H_S \in \sB(\sH_S)$ and an arbitrary initial state $\omega_S$ given by a density matrix $\rho_S$. We denote by $\delta_S := i [H_S, \cdot]$ the generator of the dynamics. We will also consider the Gibbs state at inverse temperature $\beta$ which we denote by $\rho_{\beta} := \frac{e^{-\beta H_S}}{\tr(e^{-\beta H_S})}$. 
\[ \textrm{Small system: }\;\;\;\;\;\;\;\;\;\; \sO_S = \sB(\sH_S) \;\;\;\;\;\;\;\;\;\; \tau_S^t = e^{it\delta_S} \;\;\;\;\;\;\;\;\;\; \omega_S = \tr(\rho_S \cdot )\]
The reservoir is modelled by a $\Cstar$-dynamical system $(\sO_R, \tau_R^t)$. We denote by $\delta_R$ the generator of $\tau_R$. Furthermore we assume that there is a $\tau_R, \beta$-KMS state $\omega_R$ and that the reservoir is in this state.  
\[ \textrm{Reservoir: }\;\;\;\;\;\;\;\;\;\; \sO_R  \;\;\;\;\;\;\;\;\;\; \tau_R^t = e^{it\delta_R} \;\;\;\;\;\;\;\;\;\; \omega_R \]
The joint system is then described by 
\[ \sO = \sO_S \otimes \sO_R, \;\;\;\;\;\;\;\;\;\; \tau_0^t = \tau_S^t \otimes \tau_R^t, \;\;\;\;\;\;\;\;\;\; \omega = \omega_S \otimes \omega_R.\]
The generator of $\tau_0^t$ is $\delta_0 = \delta_S \otimes 0 + 0 \otimes \delta_R$. Furthermore, the state $\omega_{eq}=\omega_{\beta} \otimes \omega_R$ is a $\tau_0, \beta$-KMS state. We now couple the systems by taking a self-adjoint element $V \in \sO$. We define the coupled dynamics $\tau_{\lambda}^t$ as the one generated by 
\[ \delta_{\lambda} = \delta_0 + i\lambda[V, \cdot]. \]
Here $\lambda$ is a parameter which controls the strength of the coupling. This perturbation is of the form described in section \ref{Araki} and we can apply Araki perturbation theory to obtain:
\begin{enumerate}[(i)]
\item For each $\lambda$, there is a unique $\omega_{eq}$-normal $\tau_{\lambda}, \beta$-KMS state. We denote it by $\omega_{\lambda}$.
\item  $\omega_{\lambda}$ and $\omega_{eq}$ are quasi-equivalent. 
\item $\lim_{\lambda \rightarrow 0} \| \omega_{\lambda} - \omega_{eq} \| = 0$. 
\end{enumerate}
We work in the GNS representation of the uncoupled equilibrium state $\omega_{eq}$, given by the triple $(\sH, \pi, \Omega_{eq})$. This can be decomposed as $\sH = \sH_S \otimes \sH_R$, $\pi = \pi_S \otimes \pi_R$ and $\Omega_{eq} = \Omega_S \otimes \Omega_R$ where $(\sH_S, \pi_S, \Omega_S)$ is the GNS rep of $(\sO_S, \omega_{\beta})$ and  $(\sH_R, \pi_R, \Omega_R)$ is the GNS rep of $(\sO_R, \omega_{R})$. We denote the enveloping von Neumann algebra by $\sM = \pi(\sO)'' = \pi_S(\sO_S)'' \otimes \pi_R(\sO_R)''$. The standard Liouvillean of the uncoupled dynamics $\tau_0$ is 
\begin{align}
 L_0 = L_S + L_R = [\pi(H_S), \cdot] - \frac{1}{\beta} \log \Delta_{\omega_R}.
 \end{align}
Let $(J, \Delta, \sP)$ be the modular structure associated to $(\sM, \Omega_{eq})$. The standard Liouvillean of the coupled dynamics $\tau_{\lambda}$ is 
\begin{align}
 L_{\lambda} = L_0 + \lambda\pi(V) - \lambda J\pi(V) J. 
 \end{align}
From Araki's perturbation theory we also know that the perturbed KMS state $\omega_{\lambda}$ has vector representative 
\begin{align}
 \Omega_{\lambda} = \frac{e^{-\beta(L_0 + \lambda\pi(V))/2}\Omega_{eq}}{\|e^{-\beta(L_0 + \lambda\pi(V))/2}\Omega_{eq}\|}
 \end{align}
and that $\Omega_{\lambda} \in \sP$. The initial state $\omega$ has vector representative $\Omega = \rho_S^{1/2} \otimes \omega_R$. 
\subsection{Return to Equilibrium}
In particular we are interested in a subset of the systems described in the previous section which exhibit a specific type of behaviour. Physically, we want the overall system to equilibrate at the temperature of the reservoir after a sufficiently long time. We will now develop what this means formally. 
\begin{defn}
A dynamical system $(\sA, \tau, \omega)$ is called \textit{mixing} if for every state $\eta$ which is $\omega$-normal and every $A \in \sA$,
\[ \lim_{t \rightarrow \infty} \eta(\tau^t(A)) = \omega(A).\]
\end{defn}
This means that, in the large-time limit, all the states that were ``similar" to $\omega$ initially behave like $\omega$ in terms of measurements of any observable. This condition is equivalent to another condition which is used in classical ergodic theory. 
\begin{prop}
A dynamical system $(\sA, \tau, \omega)$, with $\omega$ a $\tau$-invariant state, is mixing if and only if 
\[\lim_{t \rightarrow \infty} \omega(A\tau^t(B))= \omega(A) \omega(B) \]
for all $A,B \in \sA$. 
\end{prop}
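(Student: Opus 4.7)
I work in the GNS representation $(\sH_\omega, \pi_\omega, \Omega)$ of $\omega$, writing $U(t)$ for the unitary implementation of $\tau^t$ that satisfies $U(t)\Omega = \Omega$ (available because $\omega$ is $\tau$-invariant).

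\textbf{Forward direction ($\Rightarrow$).} For each $A \in \sA$ with $\omega(A^*A) > 0$, the functional
\[\eta_A(X) := \frac{\omega(A^*XA)}{\omega(A^*A)}\]
is the vector state in $\sH_\omega$ associated to $\pi_\omega(A)\Omega/\|\pi_\omega(A)\Omega\|$ and is therefore $\omega$-normal. Applying the mixing hypothesis to $\eta_A$ produces $\omega(A^*\tau^t(B)A)\to\omega(A^*A)\omega(B)$ for every $B \in \sA$. Polarising in $A$ (replace $A$ by $A+\lambda C$ for $\lambda \in \{1,i,-1,-i\}$, use the convergences already known for the pure $AA$ and $CC$ terms, and solve for the mixed ones) upgrades this to $\omega(A^*\tau^t(B)C)\to\omega(A^*C)\omega(B)$ for all $A,B,C$. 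Taking $C$ along an approximate identity of $\sA$ and invoking continuity of $\omega$ then yields the two-point factorization $\omega(A^*\tau^t(B))\to\omega(A^*)\omega(B)$.

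\textbf{Reverse direction ($\Leftarrow$).} Rewritten in GNS form the factorization hypothesis is
\[\langle \pi_\omega(A^*)\Omega,\; U(t)\pi_\omega(B)\Omega\rangle \;\longrightarrow\; \langle\pi_\omega(A^*)\Omega,\Omega\rangle\langle\Omega,\pi_\omega(B)\Omega\rangle,\]
which, by density of $\pi_\omega(\sA)\Omega$ in $\sH_\omega$ together with the uniform bound $\|U(t)\|=1$, extends to weak operator topology convergence $U(t)\to |\Omega\rangle\langle\Omega|$ on all of $\sH_\omega$. By the earlier proposition on $\omega$-normal states, every such $\eta$ has a vector representative $\psi \in \sH_\omega$, and a standard $\varepsilon/3$ argument approximating $\psi$ by vectors $\pi_\omega(B)\Omega$ with $\omega(B^*B)=1$ reduces the claim $\eta(\tau^t A)\to\omega(A)$ to the \emph{three-point} factorization
\[\omega(B^*\tau^t(A)B)\;\longrightarrow\;\omega(B^*B)\,\omega(A).\]

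\textbf{Main obstacle.} Deducing this three-point factorization from the two-point hypothesis is the genuine content of the reverse direction. In the commutative setting it is immediate because $B^*\tau^t(A)B = (B^*B)\tau^t(A)$, but non-commutatively the two copies of $B$ cannot be collected. My plan is to split $B = \omega(B)\Id + B_0$ with $\omega(B_0)=0$ (adjoining a unit to $\sA$ if necessary), expand $\omega(B^*\tau^t(A)B)$, and control the three cross terms $|\omega(B)|^2\omega(A)$, $\overline{\omega(B)}\,\omega(\tau^t(A)B_0)$, and $\omega(B)\,\omega(B_0^*\tau^t(A))$ using $\tau$-invariance and the two-point limit, leaving the square term $\omega(B_0^*\tau^t(A)B_0)$ for the mean-zero element $B_0$. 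This residual piece is the crux and I plan to handle it by iterating the polarisation idea of the forward direction and exploiting the fact that the weak operator convergence $U(t)\to |\Omega\rangle\langle\Omega|$ implies the corresponding convergence for arbitrary sesquilinear pairings of GNS vectors, which, via a trace-norm approximation of the rank-one operator $|\pi_\omega(B_0)\Omega\rangle\langle\pi_\omega(B_0)\Omega|$ in the predual of the enveloping von Neumann algebra, forces $\omega(B_0^*\tau^t(A)B_0) \to \omega(B_0^*B_0)\omega(A)$.
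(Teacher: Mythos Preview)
Your forward direction differs from the paper's and is actually more careful: the paper applies the mixing hypothesis directly to $\eta_A(B):=\omega(AB)/\omega(A)$, which is an $\omega$-normal linear functional but not in general a \emph{state} (it need not be positive), so an implicit decomposition of normal functionals into normal states is being used there. Your route via the genuine vector state $\omega(A^*\,\cdot\,A)/\omega(A^*A)$ followed by polarization and an approximate identity (or simply $C=\Id$ in the unital case) avoids that shortcut and is correct.

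For the reverse direction you have correctly put your finger on the real difficulty---the passage from the two-point hypothesis $\omega(A\tau^t(B))\to\omega(A)\omega(B)$ to the three-point statement $\omega(B^*\tau^t(A)B)\to\omega(B^*B)\,\omega(A)$---and you are right that the paper's proof simply asserts this without justification. However, your proposed fix does not close the gap. The splitting $B=\omega(B)\Id+B_0$ buys nothing: the residual term $\omega(B_0^*\tau^t(A)B_0)$ is the \emph{same} three-point problem with $B_0$ in place of $B$. And the ``trace-norm approximation of $|\pi_\omega(B_0)\Omega\rangle\langle\pi_\omega(B_0)\Omega|$ in the predual'' is circular, since that rank-one operator is precisely the density matrix whose convergence under $\tau^t$ you are trying to establish. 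The weak operator limit $U(t)\to|\Omega\rangle\langle\Omega|$ you derive is correct, but it does not by itself control $\langle\psi,U(t)\pi_\omega(A)U(t)^*\psi\rangle$, because both unitary factors move with $t$. A clean way to close the gap---under the additional hypothesis, satisfied throughout the thesis since $\omega$ is KMS, that $\Omega$ is separating for $\sM=\pi_\omega(\sA)''$---is to approximate the vector representative of $\eta$ by $A'\Omega$ with $A'\in\sM'$ rather than by $\pi_\omega(B)\Omega$: since $U(t)\pi_\omega(A)U(-t)\in\sM$ commutes with $A'$ and $U(-t)\Omega=\Omega$, one has $\langle A'\Omega,\,U(t)\pi_\omega(A)U(-t)\,A'\Omega\rangle=\langle (A')^*A'\Omega,\,U(t)\pi_\omega(A)\Omega\rangle$, which is now genuinely of two-point form and converges by the hypothesis.
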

\begin{proof}
Suppose the system is mixing. Define a state $\eta_A(B) = \frac{\omega(AB)}{\omega(A)}$. In the $\omega$ GNS rep we have
\[ \eta_A(B) = \frac{\omega(AB)}{\omega(A)} = \tr\left(\frac{\rho_{\omega} \pi(A)}{\tr(\rho_{\omega}\pi(A))} \pi(B)\right) . \]
$\eta_A$ is $\omega$ normal so by assumption
\[\lim_{t \rightarrow \infty}\eta_A(\tau^t(B))= \lim_{t \rightarrow \infty} \frac{\omega(A\tau^t(B))}{\omega(A)}= \omega(B) . \]
Conversely, suppose $\eta$ is an $\omega$-normal state. Then it has a vector rep $\Omega_{\eta}=\lim_{n\rightarrow \infty}[B_n]$ in the $\omega$-GNS rep. 
\[ \eta(\tau^t(A)) = \lim_{n \rightarrow \infty}([B_n],[\tau^t(A)B_n]) = \lim_{n \rightarrow \infty}\omega(B_n^*\tau^t(A)B_n) . \]
Then taking $t \rightarrow \infty$ we get that 
\[ \lim_{t \rightarrow \infty} \eta(\tau^t(A)) = \lim_{n \rightarrow \infty} \omega(B_n^*B_n)\omega(A) = \eta(\Id) \omega(A) = \omega(A). \]
\end{proof}
The classical analog of this condition is that for a measure $\mu$, observables $f,g$ and a time evolution $T$
\[ \lim_{n \rightarrow \infty} \int f\circ T^n g d\mu = \int f d\mu \int g d\mu . \]
This in turn is equivalent to the condition that for measurable sets $A,B$
\[ \lim_{n \rightarrow \infty } \mu (T^nA \cap B) = \mu(A) \mu(B).\]
This has a nice physical interpretation. Imagine a container of water. Suppose at some time a region represented by the set A is occupied by ink. After mixing the water $n$ times the proportion of ink in any region corresponding to a set B is 
\[ \frac{\mu(T^nA \cap B)}{\mu(B)}.\]
If the mixing condition holds, then after many mixings the proportion of ink in region B is simply the proportion of ink in the water overall. The ink and water are evenly distributed. 
\\
\begin{align*}
\begin{tikzpicture}
\draw (-2,-2) rectangle (2,2) ;
\draw[thick, dashed] ( -1, -0.5) rectangle node {$B$} (1.5, 1) ;
\filldraw[fill=black, fill opacity=0.6, draw=black, thick] (-1,1) circle (0.5) node {$A$};
\end{tikzpicture} 
\;\;\;\;\;\;\;\; 
\begin{tikzpicture}
\draw (-2,-2) rectangle (2,2) ;
\draw[thick, dashed] ( -1, -0.5) rectangle node {$B$} (1.5, 1) ;
\filldraw[fill=black, fill opacity=0.4, draw=black, thick] (-1,1) circle (1) node {$T^n(A)$};
\end{tikzpicture}
\;\;\;\; n \rightarrow \infty \;\;\;\;
\begin{tikzpicture}
\filldraw[fill = black, fill opacity =0.2] (-2,-2) rectangle (2,2) ;
\draw[thick, dashed] ( -1, -0.5) rectangle node {$B$} (1.5, 1) ;
\end{tikzpicture} 
\end{align*}
\\
Returning to our setting, the dynamical condition we will put the system is the following. 
\begin{leftbar}
\begin{ass}
\label{mixing}
There exists $\lambda_0 >0$ such that for all $\lambda$ satisfying $0 < | \lambda | < \lambda_0$ the dynamical system $(\sO, \tau_{\lambda}, \omega_{\lambda})$ is mixing. 
\end{ass}
\end{leftbar}
In other words, for sufficiently small coupling energy, after a long time the coupled system will behave as though it is in the coupled equilibrium state. This is the return-to-equilibrium property. 
\\
\\
An important remark is that in order for the mixing assumption to be satisfied, we \textit{must} have an infinitely extended reservoir. Suppose the reservoir was confined and hence $\sO_R = \sB(\sH_R)$. Now any two states are relatively normal because every GNS representation is the same. In particular two Gibbs states at different temperatures are relatively normal. However, they are both invariant under the dynamics, so clearly if one starts in a Gibbs state at some temperature $\beta' \neq \beta$ it will never end up in a Gibbs state at temperature $\beta$. The mixing condition is violated. The $\Cstar$-algebra formalism is necessary to study return-to-equilibrium systems. 
\\
\\
We can characterize whether dynamical systems are mixing in terms of the spectral properties of the standard Liouvillean. This is called quantum Koopmanism \cite{Pi}. 
\begin{leftbar}
\begin{thm}[Quantum Koopmanism]
\label{Koopman}
A dynamical system $(\sA, \tau, \omega)$ is mixing if and only if 
\begin{align}
 w-\lim _{t \rightarrow \infty} e^{itL_{\omega}} = | \Omega_{\omega} \rangle \langle \Omega_{\omega} |. 
 \end{align}
\end{thm}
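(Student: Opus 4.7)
The plan is to reduce both directions to the equivalent characterisation of mixing already established, namely that $(\sA,\tau,\omega)$ is mixing if and only if
\[ \lim_{t \to \infty} \omega(A\tau^t(B)) = \omega(A)\omega(B) \]
for all $A,B \in \sA$. The bridge to the standard Liouvillean is the invariance $L_{\omega}\Omega_{\omega}=0$, which gives $e^{\pm itL_{\omega}}\Omega_{\omega}=\Omega_{\omega}$, together with the unitary implementation $\pi_{\omega}(\tau^t(B)) = e^{itL_{\omega}}\pi_{\omega}(B)e^{-itL_{\omega}}$. Combining these,
\[ \omega(A\tau^t(B)) = \bigl(\pi_{\omega}(A^*)\Omega_{\omega},\, e^{itL_{\omega}}\pi_{\omega}(B)\Omega_{\omega}\bigr), \]
while $\omega(A)\omega(B)=\bigl(\pi_{\omega}(A^*)\Omega_{\omega},\,P\,\pi_{\omega}(B)\Omega_{\omega}\bigr)$ with $P := |\Omega_{\omega}\rangle\langle\Omega_{\omega}|$. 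Thus the decorrelation condition for $\omega$ is literally the statement that matrix elements of $e^{itL_{\omega}}$ between vectors of the form $\pi_{\omega}(\sA)\Omega_{\omega}$ converge to the corresponding matrix elements of $P$.

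For the implication $(\Leftarrow)$ I would specialise the assumed weak operator convergence to pairs of the form $(\pi_{\omega}(A^*)\Omega_{\omega},\,\pi_{\omega}(B)\Omega_{\omega})$ and read the identities above right-to-left. This yields $\omega(A\tau^t(B)) \to \omega(A)\omega(B)$ for arbitrary $A,B \in \sA$, and the cited proposition then upgrades this to the full mixing property.

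For the implication $(\Rightarrow)$ mixing plus the same proposition delivers convergence of the required matrix elements for all $A,B \in \sA$, which amounts to weak operator convergence $e^{itL_{\omega}} \to P$ tested only on the subspace $\pi_{\omega}(\sA)\Omega_{\omega} \subset \sH_{\omega}$. Cyclicity of $\Omega_{\omega}$ makes this subspace dense, and since $\|e^{itL_{\omega}}\| = \|P\| = 1$ uniformly in $t$, a standard $\varepsilon/3$ argument promotes the convergence to all of $\sH_{\omega} \times \sH_{\omega}$. This density-and-equicontinuity extension is the only technical step and presents no genuine obstacle, since the analytic heart of the theorem has been absorbed into the earlier equivalence between mixing and decorrelation of $\omega$ itself.
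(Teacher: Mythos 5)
Your proposal is correct, but it follows a genuinely different route from the paper. The paper's proof never invokes the decorrelation proposition: it takes an arbitrary $\omega$-normal state $\eta$ and, using the relative modular operator, writes $\eta(\tau^t(A)) = (\Omega_{\omega}, \Delta_{\eta|\omega}\, e^{itL_{\omega}}\pi_{\omega}(A)\Omega_{\omega})$ after removing the right-hand $e^{-itL_{\omega}}$ with $L_{\omega}\Omega_{\omega}=0$, notes $(\Omega_{\omega},\Delta_{\eta|\omega}\Omega_{\omega}) = \|\Omega_{\eta}\|^2 = 1$, and reads the mixing condition, quantified over all normal $\eta$, directly as the weak-operator limit $e^{itL_{\omega}} \to |\Omega_{\omega}\rangle\langle\Omega_{\omega}|$. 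You instead cite the earlier equivalence of mixing with $\omega(A\tau^t(B))\to\omega(A)\omega(B)$, identify that decorrelation with weak convergence tested on pairs from $\pi_{\omega}(\sA)\Omega_{\omega}$, and upgrade to all of $\sH_{\omega}$ by cyclicity plus the uniform bound $\|e^{itL_{\omega}}\|=\||\Omega_{\omega}\rangle\langle\Omega_{\omega}|\|=1$. Your version makes explicit the density-and-boundedness step that the paper's closing sentence (``since this must hold for all $\eta$\dots'') leaves implicit, and it avoids the relative modular operator altogether. What the paper's version buys is self-containedness: the handling of a general $\omega$-normal state is done on the spot via $\Delta_{\eta|\omega}$, whereas your argument relocates that analytic content into the converse direction of the cited proposition, whose proof in the paper (approximating $\Omega_{\eta}$ by vectors $[B_n]$ and interchanging the $n$ and $t$ limits in $\eta(\tau^t(A))=\lim_n \omega(B_n^*\tau^t(A)B_n)$) is itself only sketched. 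So the reduction is a legitimate citation within the paper, but be aware that the ``analytic heart'' you describe as absorbed earlier has not been dispensed with — it lives in that proposition rather than in your density argument.
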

\end{leftbar}
\begin{proof}
Suppose $\eta$ is an $\omega$ normal state. Then
\begin{align*} 
\eta(\tau^t(A)) &= (\Omega_{\eta}, e^{itL_{\omega}} \pi_{\omega}(A) e^{-itL_{\omega}} \Omega_{\eta} )\\
& = (\Omega_{\omega}, \Delta_{\eta|\omega}  e^{itL_{\omega}} \pi_{\omega}(A) e^{-itL_{\omega}} \Omega_{\omega} ) \\
& = (\Omega_{\omega}, \Delta_{\eta|\omega} e^{itL_{\omega}} \pi_{\omega}(A) \Omega_{\omega} ).
\end{align*}
The mixing condition $\lim_{t \rightarrow \infty} \eta(\tau^t(A)) = \omega(A)$ is equivalent to 
\[ \lim_{t \rightarrow \infty} (\Omega_{\omega}, \Delta_{\eta|\omega} e^{itL_{\omega}} \pi_{\omega}(A) \Omega_{\omega}) = (\Omega_{\omega}, \pi_{\omega}(A) \Omega_{\omega} ) = (\Omega_{\omega} , \Delta_{\eta|\omega} \Omega_{\omega}) (\Omega_{\omega}, \pi_{\omega}(A) \Omega_{\omega}) \]
noting that $ (\Omega_{\omega} , \Delta_{\eta|\omega} \Omega_{\omega}) = (\Omega_{\eta}, \Omega_{\eta}) = 1$. Since this must hold for all $\eta$ which are $\omega$-normal, the mixing condition is equivalent to 
\[ w-\lim_{t \rightarrow \infty} e^{itL_{\omega}} = |\Omega_{\omega} \rangle \langle \Omega_{\omega} |. \]
\end{proof}
\subsection{Energy Conservation: Averages}
Now that we have formally established the systems under study, we can begin to investigate their behaviour in the double limit $t \rightarrow \infty$ and $\lambda \rightarrow 0$. We will use a two-time measurement protocol. We make one measurement of $H_S$, the system energy, at time $0$ and another at time $t$ and take the difference of the expected value of these two measurements. We do the same for the reservoir, although since it has infinite energy we will first need to rewrite the change in expected energy in terms of an energy flux observable. The first law says that after the system returns to equilibrium ($t \rightarrow \infty$) and we decouple the system and reservoir ($\lambda \rightarrow 0$) the change in energy of the system is equal to the negative change of energy of the reservoir. Note that even classically energy conservation is only expected to hold after the systems have been decoupled because when $\lambda >0$ there is some coupling energy in the joint system which will not be taken into account in this measurement protocol. 
\\
\\
For the small system, the quantity of interest is 
\[ \Delta Q_S(\lambda, t) = \omega(\tau_{\lambda}^t(H_S)) - \omega(H_S)\]
the change in expected value of energy between time $0$ and $t$. Recall that $\omega$ is the initial state of the joint system and reservoir. By the mixing hypothesis, for sufficiently small $\lambda$ we have
\[ \Delta Q_S(\lambda) := \lim_{t \rightarrow \infty} \Delta Q_S(\lambda, t) = \omega_{\lambda}(H_S) - \omega(H_S).\]
Araki's perturbation theory then gives 
\[ \Delta Q_S := \lim_{\lambda \rightarrow 0} \Delta Q_S(\lambda) = \omega_{eq}(H_S) - \omega(H_S).\]
For the reservoir we do not have a bounded Hamiltonian $H_R$, so we need to rewrite this quantity in terms of a flux observable that will make sense even in the infinitely extended case. To motivate the definition of this flux observable, suppose the reservoir is confined and is given by a finite dimensional Hilbert space. In this case the generator of the dynamics $\tau_R$ is $\delta_R = i[H_R, \cdot]$ and the generator for the coupled dynamics $\tau_{\lambda}$ is $\delta_{\lambda} = i[H_{\lambda}, \cdot]$ where $H_{\lambda} = H_S + H_R + \lambda V$. 
\begin{align*}
\Delta Q_R(\lambda, t) = \omega(H_R) - \omega(\tau_{\lambda}^t(H_R)) & = \int_0^t \frac{d}{ds} \left( -\omega(\tau_{\lambda}^s(H_R)) \right) ds \\
& = \int_0^t \omega(\tau_{\lambda}^s(-\delta_{\lambda}(H_R)))ds \\
& = \int_0^t \omega(\tau_{\lambda}^s(\lambda \delta_R(V)))ds
\end{align*}
This last line follows because $\delta_{\lambda}(H_R) = -i[H_{\lambda}, H_R] = -i [\lambda V, H_R] = i\lambda [H_R, V] = \lambda \delta_R(V)$. Therefore we make the following assumption
\begin{leftbar}
\begin{ass}
\label{regularity}
The perturbation $V$ is in the domain of $\delta_R$. 
\end{ass}
\end{leftbar}
We then define the energy flux to be $\phi_R = \lambda \delta_R(V)$, and 
\[\Delta Q_R(\lambda, t) := \int_0^t \omega(\tau_{\lambda}^s(\phi_R)) ds.\]
By exactly the same argument we can define $\phi_S = \lambda \delta_S(V)$ and write $\Delta Q_S(\lambda, t) := \int_0^t \omega(\tau_{\lambda}^s(\phi_S)) ds$. It follows that
\begin{align*}
 \Delta Q_R(\lambda, t) - \Delta Q_S(\lambda, t) &= \int_0^t \omega(\tau_{\lambda}^s(\phi_R - \phi_S)) ds \\
 & = \int_0^t \omega(\tau_{\lambda}^s(\lambda \delta_R(V) + \lambda \delta_S(V))) ds \\
 & = \int_0^t \omega(\tau_{\lambda}^s(\lambda \delta_{\lambda}(V)))ds \\
 & = \lambda \omega(\tau_{\lambda}^t(V) - V)
 \end{align*}
 and hence by mixing 
 \[ \Delta Q_R (\lambda) := \lim_{t \rightarrow \infty} \Delta Q_R(\lambda, t) = \Delta Q_S + \lambda \left( \omega_{\lambda}(V) - \omega(V) \right).\]
 By Araki perturbation theory the term in the brackets converges as $\lambda \rightarrow 0$ so
  \[\Delta Q_R := \lim_{\lambda \rightarrow 0} \Delta Q_R(\lambda) = \Delta Q_S. \]
  This result is well-known and is a formulation of the first law of thermodynamics in terms of averages. 
  \subsection{Full Counting Statistics}
  Using the same two-time measurement protocol, we can take into account at each step the full probability distribution of outcomes for a measurement of energy. This allows us to define a measure which encodes the full statistics of the energy fluctuations. We begin with the small system. 
  Let $H_S = \sum_i \lambda_i P_i$ be the spectral decomposition of the Hamiltonian. If we make a measurement of energy at time $0$ the result is the eigenvalue $\lambda_i$ with probability 
  \[ \omega_S(P_i) = \tr(\rho_{\omega_S} P_i).\]
  Given that an energy $\lambda_i$ was observed the new state of the system is
  \[ \tr\left(\frac{P_i \rho_{\omega_S}P_i }{\tr(\rho_{\omega_S} P_i)} \; \cdot \; \right).\]
  Evolve the system under the coupled dynamics for time $t$ and then make another measurement of $H_S$. Eigenvalue $\lambda_j$ is observed with probablity 
  \[ \left(\tr\left(\frac{P_i \rho_{\omega_S}P_i }{\tr(\rho_{\omega_S} P_i)} \; \cdot \; \right) \otimes \omega_R \right) (\tau_{\lambda}^t(P_j)).\]
  The joint probability distribution of the two measurements gives a distribution for the change in energy 
  \begin{align}
  \label{system FCS}
  \mathbb{P}_{S, \lambda, t} (x) = \sum_{\lambda_j - \lambda_i = x} \left(\tr\left(P_i \rho_{\omega_S}P_i  \; \cdot \; \right) \otimes \omega_R \right) (\tau_{\lambda}^t(P_j)).\end{align}
  This measure is called the system energy full counting statistics. It is supported on the finite set $H_S - H_S$ and is hence an atomic probability measure. We can compute the limiting measure. First by mixing 
  \[\mathbb{P}_{S, \lambda} (x) = \lim_{t \rightarrow \infty}\mathbb{P}_{S, \lambda, t} (x) = \sum_{\lambda_j - \lambda_i = x} \omega_{\lambda}(P_j) \tr(\rho_{\omega_S} P_i)\]
  and then by Araki perturbation theory 
  \[ \mathbb{P}_S(x) = \lim_{\lambda \rightarrow 0} \mathbb{P}_{S, \lambda}(x) = \sum_{\lambda_j - \lambda_i} \omega_{eq}(P_j) \omega(P_i). \]
  Computing the expectation of this measure yields 
  \begin{align*}
  \langle x \rangle_{\mathbb{P}_S} & = \sum_{\lambda_i, \lambda_j \in \sigma(H_S)} (\lambda_j - \lambda_i) \omega_{eq}(P_j) \omega(P_i) \\
  & = \omega_{eq}(H_S) - \omega(H_S) = \Delta Q_{S}.
  \end{align*}
  In the limit, the expectation value of this measure is equal to the change in expectation value between the two measurements. 
  \\
  \\
  The definition of $\mathbb{P}_{S, \lambda, t}$ relies heavily on the finite dimensional structure of $\sH_S$. To define the full counting statistics for the reservoir we will need to find a way to rewrite it in terms of objects which survive the thermodynamic limit. To do this, we will take advantage of the modular structure. Let $\eta = \Id \otimes \omega_R$ which is an $\omega_{eq}$ normal state with vector representative $\Id \otimes \Omega_R$ in the $\omega_{eq}$-GNS rep. 
  \begin{defn}
  $\mathbb{P}_{R, \lambda, t}$ is the spectral measure of the operator defined on $\sH_{\omega_{eq}}$ 
  \[ \frac{1}{\beta} \log(\Delta_{\eta \circ \tau_{\lambda}^{-t}| \eta})\]
  and the vector $\Omega \in \sH_{\omega_{eq}}$ (which is the vector representative of the initial state $\omega$). 
  \end{defn}
  To motivate this definition, suppose that the reservoir is confined so that $\sO_R = \sB(\sH_R)$, $\tau_R$ is given by a Hamiltonian $H_R$ and $\omega_R = \tr(\rho_{\omega_R} \cdot )$ is a Gibbs state. Furthermore suppose we are in the standard GNS representation of $(\sO_R, \omega_R)$. Then we can compute explicitly
  \[ \Omega_{eq}= \rho_{\omega_{\beta}}^{1/2} \otimes \rho_{\omega_R}^{1/2}, \;\;\;\;\;\;\;\;\;\;  \Omega = \rho_{\omega_{S}}^{1/2} \otimes \rho_{\omega_R}^{1/2}, \;\;\;\;\; \;\;\;\;\;\Omega_{\eta} = \Id \otimes \rho_{\omega_R}^{1/2} , \;\;\;\;\;\;\;\;\;\; \Omega_{\eta \circ \tau_{\lambda}^{-t}} = e^{itH_{\lambda} }(\Id \otimes \rho_{\omega_R}^{1/2} )e^{-itH_{\lambda}},\]
  where $\rho_{\omega_R} = e^{-\beta H_R} /\tr(e^{-\beta H_R})$. Furthermore, the action of the modular operator is 
  \[ \Delta_{\eta \circ \tau_{\lambda}^{-t} | \eta} X = \Omega_{\eta \circ \tau_{\lambda}^{-t}} X \Omega_{\eta}^{-1}\]
  for $X  \in \sO$. An explicit computation of the characteristic function, using all these ingredients, shows that $\mathbb{P}_{R, \lambda, t}$ is of the same form as equation (\ref{system FCS}) in the case of a confined reservoir. 
  \begin{align*}
 & \int_{\mathbb{R}} e^{i\alpha x} d\mathbb{P}_{R, \lambda, t} (x) \\
  &= ( \Omega, \Delta^{i\alpha/\beta}_{\eta \circ \tau_{\lambda}^{-t} | \eta} \Omega )_{\sH_{\omega_{eq}}} \\
  & = (\rho_{\omega}^{1/2}, \rho_{\eta \circ \tau_{\lambda}^{-t}}^{i\alpha/\beta} \rho_{\omega}^{1/2} \rho_{\eta}^{-i\alpha/\beta}) \\
  & = \tr((\Id \otimes e^{-i\alpha H_R}/Z^{i\alpha/\beta}) e^{-itH_{\lambda}}(\rho_{\omega_S}^{1/2} \otimes \rho_{\omega_R}^{1/2})(\Id \otimes e^{i\alpha H_R}/Z^{-i\alpha/\beta})(\rho_{\omega_S}^{1/2} \otimes \rho_{\omega_R}^{1/2})e^{itH_{\lambda}}) \\
  & = \tr((\Id \otimes e^{-i\alpha H_R})e^{-itH_{\lambda}} (\Id \otimes e^{i\alpha H_R})(\rho_{\omega_S} \otimes \rho_{\omega_R}) e^{itH_{\lambda}}) \\
  & = \sum_{e, e' \in \sigma(H_R)} e^{i\alpha(e-e')} \tr((\Id \otimes P_{e'}) e^{-itH_{\lambda}} (\Id \otimes P_e)(\rho_{\omega_S} \otimes \rho_{\omega_R})e^{itH_{\lambda}}) \\
  & = \sum_{e, e' \in \sigma(H_R)} e^{i\alpha(e' - e)} (\omega_S \otimes \tr(P_e \rho_{\omega_R} P_e \cdot)) (\tau_{\lambda}^t (P_{e'}))
  \end{align*}
  This is exactly what would have been obtained if the reservoir had been treated like the small system, with the sign changed since we are measuring the negative change in energy of the reservoir. However, when formulated in terms of the modular operator the definition of the measure still makes sense in the infinitely extended setting. In this setting, the reservoir energy full counting statistics could be look very different from $\mathbb{P}_{S, \lambda, t}$. In concrete examples $\mathbb{P}_{R, \lambda, t}$ is typically absolutely continuous and supported on all of $\mathbb{R}$. 
  \\
  \\
  A result of \cite{JP6}, the proof of which is involved but uses the tools of modular theory and Araki perturbation theory, is the following. 
  \begin{thm}
 \[ \log \Delta_{\eta \circ \tau^{-t}_V| \eta} = \log \Delta_{\eta}  + \beta \int_0^t \tau_{\lambda}^s(\phi_R) ds \]
  \end{thm}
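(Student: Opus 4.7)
My plan is to first establish the identity in the setting of a \emph{confined} reservoir, where $\sO_R = \sB(\sH_R)$ with a bona fide Hamiltonian $H_R$, and then transfer it to the infinitely extended case via a thermodynamic limit argument. In the confined setting, work in the standard GNS representation, in which the modular objects have explicit matrix expressions. Writing $\rho_\eta = \Id \otimes \rho_R$ and using $\eta \circ \tau_\lambda^{-t}(A) = \tr\big(\rho_\eta e^{-itH_\lambda} A e^{itH_\lambda}\big) = \tr\big(e^{itH_\lambda}\rho_\eta e^{-itH_\lambda} A\big)$, one has $\rho_{\eta\circ\tau_\lambda^{-t}} = e^{itH_\lambda}\rho_\eta e^{-itH_\lambda}$. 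The explicit formula for the relative modular operator derived in the relative modular operator section then gives
\[
\log \Delta_{\eta\circ\tau_\lambda^{-t}|\eta}\,[X] = \big[\log \rho_{\eta\circ\tau_\lambda^{-t}}\,X - X\,\log\rho_\eta\big],
\]
so that the difference $\log\Delta_{\eta\circ\tau_\lambda^{-t}|\eta} - \log\Delta_{\eta}$ acts as left multiplication by $\log\rho_{\eta\circ\tau_\lambda^{-t}} - \log\rho_\eta = e^{itH_\lambda}\log\rho_\eta\, e^{-itH_\lambda} - \log\rho_\eta$.

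Next I would use the specific structure $\rho_\eta = \Id \otimes \rho_R = Z^{-1}\Id\otimes e^{-\beta H_R}$ to get $\log\rho_\eta = -\beta(\Id\otimes H_R) - \log Z\cdot\Id$. The additive constant cancels under conjugation, so the difference of logs becomes $-\beta\big[e^{itH_\lambda}(\Id\otimes H_R)e^{-itH_\lambda} - \Id\otimes H_R\big]$. By the fundamental theorem of calculus,
\[
e^{itH_\lambda}(\Id\otimes H_R)e^{-itH_\lambda} - \Id\otimes H_R = \int_0^t e^{isH_\lambda}\, i[H_\lambda,\Id\otimes H_R]\, e^{-isH_\lambda}\,ds.
\]
Since $H_S\otimes\Id$ and $\Id\otimes H_R$ commute, the commutator inside reduces to $i\lambda[V,\Id\otimes H_R] = -\lambda\delta_R(V) = -\phi_R$, and conjugation by $e^{isH_\lambda}$ is $\tau_\lambda^s$. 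This yields the confined identity after multiplication by $-\beta$.

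To pass to the infinitely extended reservoir, I would approximate $(\sO_R,\tau_R,\omega_R)$ by a net of confined reservoirs whose thermodynamic limit reproduces the given one. Assumption \ref{regularity} ensures that $\phi_R = \lambda\delta_R(V)$ is well-defined in the limit, so the right-hand side, viewed as an element of $\sO$ acting via $\pi$ in the GNS representation, converges. The left-hand side survives the limit because the relative modular operator is continuous in appropriate senses under approximation of states, which is precisely the point emphasized in the introduction: $\Delta_{\eta\circ\tau_\lambda^{-t}|\eta}$ is intrinsically defined on the limiting GNS space. Matching the confined identities under this limit gives the result.

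The main obstacle will be controlling the thermodynamic limit rigorously: establishing, in a suitable topology (e.g.\ strong resolvent convergence, or convergence of $\Delta^{is}$ in the strong operator topology uniformly on compact sets of $s$), that both sides of the identity converge to their infinite-volume counterparts. A more abstract alternative that avoids this limit is to differentiate both sides in $t$: the right-hand side has derivative $\beta\tau_\lambda^t(\phi_R)$ by construction, and one can try to show directly using Araki's perturbation theory that $t\mapsto \log\Delta_{\eta\circ\tau_\lambda^{-t}|\eta}$ is differentiable with the same derivative, using that the dynamics $\tau_\lambda$ is generated by $L_\lambda = L_0 + \lambda\pi(V) - \lambda J\pi(V)J$ and that $L_0$ is related to $\log\Delta_\eta$ via the reservoir standard Liouvillean. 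The delicate point either way is justifying these manipulations on possibly unbounded, non-commuting self-adjoint operators.
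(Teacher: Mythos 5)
You should first be aware that the thesis does not prove this identity at all: it is quoted from \cite{JP6} with the remark that the proof ``is involved but uses the tools of modular theory and Araki perturbation theory.'' So there is no in-paper proof to compare against; your proposal has to stand on its own. Its confined-reservoir part does: with $\rho_{\eta\circ\tau_\lambda^{-t}}=e^{itH_\lambda}\rho_\eta e^{-itH_\lambda}$, the explicit action $\log\Delta_{\eta'|\eta}[X]=[\log\rho_{\eta'}\,X - X\log\rho_\eta]$, the cancellation of the normalization constant, the fundamental theorem of calculus, and $i\lambda[V,\Id\otimes H_R]=-\lambda\delta_R(V)=-\phi_R$, you get exactly the stated formula in finite dimensions; this is the same style of computation the thesis uses to motivate the definition of $\mathbb{P}_{R,\lambda,t}$.

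The genuine gap is in your main route, the thermodynamic limit. In the setting of the theorem the reservoir is an abstract $\Cstar$-dynamical system with a $\beta$-KMS state; an approximating net of confined reservoirs is not part of the data, and even for concrete models the convergence of the relative modular operators $\Delta_{\eta\circ\tau_\lambda^{-t}|\eta}$ (say in strong resolvent sense), simultaneously with convergence of the perturbed dynamics and of $\phi_R$, is precisely the hard analytic content — you name it as ``the main obstacle'' but supply nothing towards it, and since the identity to be proved lives entirely in the infinite-volume GNS space, the confined identity alone proves nothing. Your alternative sketch is in fact the viable path and matches the toolkit of the cited proof: from the cocycle identity (\ref{cocycle}), $\Delta_{\eta\circ\tau_\lambda^{-t}|\eta}=e^{it(L_0+\lambda\pi(V))}\Delta_\eta e^{-it(L_0+\lambda\pi(V))}$, the functional calculus gives $\log\Delta_{\eta\circ\tau_\lambda^{-t}|\eta}=e^{it(L_0+\lambda\pi(V))}\log\Delta_\eta\, e^{-it(L_0+\lambda\pi(V))}$; combined with $\log\Delta_\eta=-\beta L_R$, the relation $i[L_R,\pi(V)]=\pi(\delta_R(V))$ (this is exactly where Assumption \ref{regularity} enters), $[L_0,L_R]=0$, and $e^{it(L_0+\lambda\pi(V))}\pi(A)e^{-it(L_0+\lambda\pi(V))}=\pi(\tau_\lambda^t(A))$, differentiation in $t$ yields $\beta\pi(\tau_\lambda^t(\phi_R))$ and integration gives the formula. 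But the points you explicitly defer — the sense in which $t\mapsto\log\Delta_{\eta\circ\tau_\lambda^{-t}|\eta}$ is differentiable, a core on which the commutator of the unbounded $L_R$ with $\pi(V)$ and the integrated identity are justified — are the substance of the ``involved'' argument in \cite{JP6}. As written, your proposal is a correct finite-dimensional verification plus a program, not a proof of the theorem.
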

  Since $\log \Delta_{\eta} \Omega = 0$ it follows that 
  \begin{align}
  \label{average}
   \langle x \rangle_{\mathbb{P}_{R, \lambda, t}} = \int x d\mathbb{P}_{R, \lambda, t}(x) = (\Omega, \frac{1}{\beta} \log \Delta_{\eta \circ \tau^{-t}_V| \eta} \Omega) = \int_0^t  \omega (\tau_{\lambda}^s(\phi_R) )ds  = \Delta Q_R.
   \end{align}
  Hence if we are able to prove convergence of the FCSs $\mathbb{P}_{S, \lambda, t}$ and $\mathbb{P}_{R, \lambda, t}$, the convergence in terms of averages follows as a consequence and we will have a  much stronger version of the first law. 
  \section{Strong Energy Conservation}
  \subsection{Main Result}
  The main result is that the full counting statistics do indeed converge as measures in the large time and small coupling energy limits. 
  \mdfsetup{linewidth=3pt, }
  \begin{mdframed}
  \begin{thm}
  \label{main}
  Suppose Assumptions (\ref{mixing}) and (\ref{regularity}) hold and let $0 < |\lambda| < \lambda_0$. The weak limits 
  \begin{align*}
  \mathbb{P}_{R, \lambda} = \lim_{t \rightarrow \infty} \mathbb{P}_{R, \lambda, t}, \;\;\;\;\;\;\;\;\;\;\;\;\;\;\;\;\;\;\;\; \mathbb{P}_{R} = \lim_{\lambda \rightarrow 0} \mathbb{P}_{R, \lambda}
  \end{align*}
  exist and 
  \[\mathbb{P}_{R} = \mathbb{P}_{S}. \]
  \end{thm}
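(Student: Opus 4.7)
The plan is to prove weak convergence via Lévy's continuity theorem, reducing everything to the pointwise convergence of characteristic functions $\hat{\mathbb{P}}_{S/R,\lambda,t}(\alpha) = \int e^{i\alpha x}\,d\mathbb{P}_{S/R,\lambda,t}(x)$ in the double limit, and then verifying that the two limiting characteristic functions coincide.

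The system side is direct. Since $H_S$ acts on a finite-dimensional Hilbert space, $\hat{\mathbb{P}}_{S,\lambda,t}(\alpha)$ is a finite sum
\[
\sum_{i,j} e^{i\alpha(\lambda_j-\lambda_i)}\,\bigl(\tr(P_i \rho_S P_i\,\cdot\,) \otimes \omega_R\bigr)\bigl(\tau_\lambda^t(P_j \otimes \Id)\bigr),
\]
and each term is, up to the normalization constant $\tr(\rho_S P_i)$, an $\omega_{eq}$-normal state of the joint system applied to $\tau_\lambda^t(P_j \otimes \Id)$. Quasi-equivalence of $\omega_{eq}$ and $\omega_\lambda$ from Araki perturbation theory (Theorem \ref{araki}) makes each such state $\omega_\lambda$-normal, so Assumption \ref{mixing} delivers the $t \to \infty$ limit term by term. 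Norm continuity $\omega_\lambda \to \omega_{eq}$ (again Theorem \ref{araki}) then gives the $\lambda \to 0$ limit, and an explicit computation yields $\hat{\mathbb{P}}_S(\alpha) = \omega_\beta(e^{i\alpha H_S})\,\omega_S(e^{-i\alpha H_S})$.

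The reservoir side is the technical heart of the proof. Starting from
\[
\hat{\mathbb{P}}_{R,\lambda,t}(\alpha) = \bigl(\Omega,\ \Delta_{\eta\circ\tau_\lambda^{-t}\,|\,\eta}^{i\alpha/\beta}\,\Omega\bigr),
\]
the key observation is that the standard Liouvillean $L_\lambda$ preserves the natural cone, so $\Omega_{\eta\circ\tau_\lambda^{-t}} = e^{itL_\lambda}\Omega_\eta$. The plan is then to establish a cocycle-type identity expressing $\Delta_{\eta\circ\tau_\lambda^{-t}\,|\,\eta}^{i\alpha/\beta}\,\Omega$ in a form $U_\alpha\, e^{itL_\lambda}\, V_\alpha\, \Omega$, for explicit unitaries $U_\alpha, V_\alpha$ built from $\Delta_\eta^{i\alpha/\beta}$ and $J$. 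This is the infinite-volume analogue of the confined-reservoir calculation reviewed in the introduction of the reservoir FCS, and is the natural generalization of the first-moment identity $\log \Delta_{\eta\circ\tau_\lambda^{-t}\,|\,\eta} = \log \Delta_\eta + \beta\int_0^t \pi(\tau_\lambda^s(\phi_R))\,ds$ that is used to recover $\Delta Q_R(\lambda,t)$ as the first moment. Once achieved, the $t \to \infty$ limit is controlled by Quantum Koopmanism (Theorem \ref{Koopman}): under Assumption \ref{mixing}, $w\text{-}\lim_{t\to\infty} e^{itL_\lambda} = |\Omega_\lambda\rangle\langle\Omega_\lambda|$, collapsing the matrix element to a product. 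The $\lambda \to 0$ limit is then handled by the analyticity and strong continuity of $\lambda \mapsto \Omega_\lambda = e^{-\beta(L_0+\lambda \pi(V))/2}\Omega_{eq} / \|\cdot\|$ from Araki perturbation theory.

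The hard part will be this cocycle/perturbation identity. The vector $\Omega_\eta = \Id \otimes \Omega_R$ is cyclic and separating for the reservoir factor $\pi_R(\sO_R)''$ but not for the full enveloping von Neumann algebra $\pi(\sO)''$, so $\Delta_{\eta\circ\tau_\lambda^{-t}\,|\,\eta}$ must be understood via Araki's extension of Tomita-Takesaki theory to non-cyclic/non-separating vectors, and the unbounded complex powers $\Delta^{i\alpha/\beta}$ require careful control of domains. Assumption \ref{regularity} (that $V \in \mathcal{D}(\delta_R)$) is what provides the necessary regularity of $\phi_R = \lambda\delta_R(V)$ to make the manipulations rigorous. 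Once this identity is in hand, verifying $\mathbb{P}_S = \mathbb{P}_R$ reduces to a direct comparison of the two limiting characteristic functions: after decoupling, the reservoir contributes only through its $\beta$-KMS boundary condition, and the Liouvillean-plus-modular structure of the reservoir side reproduces exactly the factor $\omega_\beta(e^{i\alpha H_S})\,\omega_S(e^{-i\alpha H_S})$ obtained from the system side.
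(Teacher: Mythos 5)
Your skeleton matches the paper's: reduce weak convergence to convergence of characteristic functions, use a cocycle identity for the relative modular operator, apply Quantum Koopmanism for $t\to\infty$ and Araki perturbation theory for $\lambda\to 0$. But the step you label as the ``hard part'' is not just hard — as stated it does not go through, and the missing ingredient is exactly what the paper's Lemma \ref{lemma2} and the Vitali argument supply. The cocycle identity (\ref{cocycle}) reads $\Delta_{\eta\circ\tau_\lambda^{-t}|\eta}=\Gamma_\lambda(t)\Delta_\eta\Gamma_\lambda(t)^*$ with $\Gamma_\lambda(t)=e^{it(L_0+\lambda\pi(V))}e^{-itL_0}$; note it is $L_0+\lambda\pi(V)$, not the standard Liouvillean $L_\lambda$, that appears (cone-preservation of $e^{itL_\lambda}$ transports the vector $\Omega_\eta$, but it does not give the transformation law of the relative modular operator, which depends on both vectors). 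On the imaginary axis this only yields $(\Gamma_\lambda(t)^*\Omega,\Delta_\eta^{i\gamma/\beta}\Gamma_\lambda(t)^*\Omega)$, a matrix element between $t$-dependent vectors moved by $e^{-it(L_0+\lambda\pi(V))}$ — a unitary group about which the mixing assumption says nothing. The reduction to a Koopmanism-amenable form $\bigl(e^{i\beta s\widehat L_\lambda}\widehat\Omega,\,e^{itL_\lambda}e^{i\beta s\widehat L_\lambda}\Omega_\eta\bigr)$ works only on the line $\mathrm{Re}\,\alpha=\tfrac12$: one needs the exponent $\tfrac12$ so that $\Delta_\eta^{1/2}A\Omega_\eta=JA^*\Omega_\eta$ flips $\Gamma_\lambda(t)^*$ into $\Gamma_\lambda(t)$, and then the recombination $U^t_{A-JBJ}=U^t_A\,J U^t_B J$ assembles the factors into $e^{itL_\lambda}$ (this is the content of Lemma \ref{lemma1}). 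There is no analogue of this for purely imaginary powers, so your proposed identity $U_\alpha e^{itL_\lambda}V_\alpha\Omega$ with $t$-independent $U_\alpha,V_\alpha$ is not available where the characteristic function lives.

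Consequently a second idea is indispensable: control $\sF_{\lambda,t}(\alpha)=(\Omega,\Delta_{\eta\circ\tau_\lambda^{-t}|\eta}^{\alpha}\Omega)$ on the whole strip $0\le\mathrm{Re}\,\alpha\le 1$ uniformly in $t$ (the bound $\sF_{\lambda,t}(1)\le\dim\sH_S$ plus convexity), apply Vitali's theorem to transfer the $t\to\infty$ convergence from the line $\mathrm{Re}\,\alpha=\tfrac12$ to the open strip, and then use the uniform derivative bound of Lemma \ref{lemma2}\,(ii) — which itself rests on the first-moment identity for $\log\Delta_{\eta\circ\tau_\lambda^{-t}|\eta}$ and the a priori bound $|\Delta Q_R(\lambda,t)|\le 2\|H_S+\lambda V\|$ — to get equicontinuity and hence convergence and continuity up to the boundary line $\mathrm{Re}\,\alpha=0$. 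This boundary step is not cosmetic: it is what produces convergence of the characteristic functions at all real $\gamma$ and their continuity at $\gamma=0$, without which Lévy's continuity theorem does not deliver a probability measure. The $\lambda\to 0$ limit likewise requires an analytic continuation in $s$ of the factorized expression (Proposition \ref{llimit}), justified by Araki's analyticity of $z\mapsto e^{-z(L_0+\lambda\pi(V))}\Omega_{eq}$, rather than just norm continuity of $\Omega_\lambda$. A minor point in the other direction: your concern that $\Omega_\eta=\Id\otimes\Omega_R$ fails to be cyclic and separating for $\pi(\sO)''$ is unfounded — the (unnormalized) identity is cyclic and separating for the left-multiplication algebra of the system factor, so $\Omega_\eta$ is cyclic and separating for $\sM$; the genuine difficulty is the strip argument above, not Araki's non-cyclic extension of modular theory. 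Your treatment of the system FCS and of the identification $\mathbb{P}_S=\mathbb{P}_R$ at the level of limiting characteristic functions is consistent with the paper.
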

  \end{mdframed}
  It is a well-known fact about weak convergence of probability measures (see for example \cite{Bi} Thm. 26.3) that 
  \[ \mu_n \rightarrow \mu \iff \chi_n (\gamma) \rightarrow \chi (\gamma) \;\;\;\;\; \forall \gamma \in \mathbb{R}\]
  where $\chi(\gamma) = \int_{\mathbb{R}} e^{i\gamma x} d\mu(x)$ is the characteristic function of $\mu$. The proof will be centred on the object 
  \[ \sF_{\lambda,t}(\alpha) = \int_{\mathbb{R}} e^{\alpha \beta x} d\mathbb{P}_{R,\lambda,t} = (\Omega, \Delta^{\alpha}_{\eta \circ \tau_{\lambda}^{-t}|\eta} \Omega). \] 
  The characteristic function of $\mathbb{P}_{R,\lambda,t}$ is $\sF_{\lambda,t}(i\gamma/\beta)$ and hence Theorem \ref{main} is equivalent to the following three statements. The limit  \begin{align} \sF_{\lambda}(i\gamma/\beta) = \lim_{t \rightarrow \infty} \sF_{\lambda, t}(i\gamma/\beta)\end{align}
  exists for all $\gamma \in \mathbb{R}$, 
the map 
  \begin{align} \gamma \mapsto \sF_{\lambda}(i\gamma/\beta) \end{align}
  is a continuous function, and 
\begin{align} \lim_{\lambda \rightarrow 0} \sF_{\lambda}(i\gamma/\beta) = \omega_{\beta}(e^{i\gamma H_S}) \omega_S(e^{-i\gamma H_S}) \end{align}
  where the RHS is the characteristic function of $\mathbb{P}_{S, \lambda, t}$. 
  \[ \int_{\mathbb{R}} e^{i\gamma x} d\mathbb{P}_{S, \lambda, t}(x) = \sum_{e,e' \in \sigma(H_S)} \omega_{\beta}(e^{i\gamma e'}P_{e'})\omega_S(e^{-i\gamma e}P_e) = \omega_{\beta}(e^{i\gamma H_S}) \omega_S(e^{-i\gamma H_S}) \]
  
  \subsection{Proofs}
  The first step in the proof is to compute $\sF_{\lambda, t}(\alpha)$ on the line $i\mathbb{R} + 1/2$. To do so we will use the following result from \cite{JP3}. 
  \begin{align}
  \label{cocycle}
  \Delta_{\eta\circ\tau_\lambda^{-t}|\eta}
=e^{\i t(L_0+\lambda\pi(V))}\Delta_\eta e^{-\i t(L_0+\lambda\pi(V))}
=\Gamma_\lambda(t)\Delta_\eta\Gamma_\lambda^*(t),
  \end{align}
  where $\Gamma_\lambda(t):= e^{\i t(L_0+\lambda\pi(V))} e^{-\i tL_0}$ is unitary. $\Gamma_{\lambda}(t)$ satisfies the Cauchy problem 
  \[ \partial_t \Gamma_{\lambda}(t) = i \lambda \Gamma_{\lambda}(t)\pi(\tau_0^t(V)), \;\;\;\;\;\;\;\;\;\; \Gamma_{\lambda}(0) = \Id.\]
Hence for any $B \in \sM'$, $[B, \pi(\tau_0^t(V))]=0$ and 
\[ \partial_t [B, \Gamma_{\lambda}(t)] = i \lambda [B, \Gamma_{\lambda}(t)] \pi(\tau_0^t(V)), \;\;\;\;\;\;\;\;\;\; [B, \Gamma_{\lambda}(0)] = 0.\]
By the uniqueness of solutions to the Cauchy problem, $[B, \Gamma_{\lambda}(t)] = 0$ and hence 
\begin{align}
\label{unique}
\Gamma_{\lambda}(t) \in \sM'' = \sM.
\end{align}
\begin{leftbar}
\begin{lemma} 
\label{lemma1}
Set 
\[ \widehat{L_{\lambda}} = L_R + \pi(H_S + \lambda V), \;\;\;\;\;\;\;\;\;\; \widehat{\Omega} = \pi(\rho_S^{1/2} \otimes \Id) \Omega.\]
For all $\lambda, t, s \in \mathbb{R}$, one has
\[\sF_{\lambda, t}\left(is + \frac12\right) = \left( e^{i\beta s \widehat{L_{\lambda}}}\widehat{\Omega}, e^{itL_{\lambda}}e^{i\beta s \widehat{L_{\lambda}}}\Omega_{\eta}\right). \]
\end{lemma}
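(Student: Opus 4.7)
The strategy is to use the cocycle identity (\ref{cocycle}) to reduce the relative modular operator to $\Delta_\eta$, decompose $\Delta_\eta^{is+1/2}$ using Tomita-Takesaki identities, and then recognize the result as the RHS by unpacking the structural relation between $\widehat{L_\lambda}$, $L_\lambda$, and $L_0$. First I would upgrade (\ref{cocycle}) to
\[
\Delta_{\eta\circ\tau_\lambda^{-t}|\eta}^{is+1/2}=e^{it\widehat{L_\lambda}}\Delta_\eta^{is+1/2} e^{-it\widehat{L_\lambda}},
\]
by noting that $\widehat{L_\lambda}-(L_0+\lambda\pi(V))=J\pi(H_S)J\in\sM'$ (using $L_S=\pi(H_S)-J\pi(H_S)J$, which holds because $H_S$ commutes with $\rho_\beta$). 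This correction commutes with $L_0+\lambda\pi(V)$ (since $[\pi(V), J\pi(H_S)J]=0$ and $[L_0,J\pi(H_S)J]=0$ thanks to $[H_S,H_R]=0$) and commutes with $\Delta_\eta$ (since $\Delta_\eta$ acts trivially on $\sH_S$ while $J\pi(H_S)J$ acts trivially on $\sH_R$). Thus the functional-calculus version of (\ref{cocycle}) transfers verbatim to $\widehat{L_\lambda}$, and substituting gives
\[
\mathcal{F}_{\lambda,t}(is+1/2)=\bigl(e^{-it\widehat{L_\lambda}}\Omega,\ \Delta_\eta^{is+1/2}\, e^{-it\widehat{L_\lambda}}\Omega\bigr).
\]

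Next I would factor $\Delta_\eta^{is+1/2}=\Delta_\eta^{is}\Delta_\eta^{1/2}$ and exploit that the modular dynamics of $\eta=\tr\otimes\omega_R$ is governed by $L_R$, so $\Delta_\eta^{is}=e^{-i\beta s L_R}$ (trivially on $\sH_S$), while the basic Tomita-Takesaki identity gives $\Delta_\eta^{1/2} A\Omega_\eta=JA^*\Omega_\eta$ on the dense subspace of $\sM$-analytic elements. Writing $\Omega=A_0\Omega_\eta$ with $A_0=\pi(\rho_S^{1/2}\otimes\Id)\in\sM$ self-adjoint, and using $\Omega_\eta\in\sP$ (hence $J\Omega_\eta=\Omega_\eta$ and $J_{\eta\circ\tau_\lambda^{-t}|\eta}=J$ via universality of the cone), the above inner product becomes amenable to direct computation. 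In parallel I would transform the RHS: with $\widehat{\Omega}=A_0\Omega=A_0^2\Omega_\eta$ and by unitarity of $e^{i\beta s\widehat{L_\lambda}}$,
\[
\mathrm{RHS}=\bigl(A_0^2\Omega_\eta,\ e^{-i\beta s\widehat{L_\lambda}}\,e^{itL_\lambda}\,e^{i\beta s\widehat{L_\lambda}}\Omega_\eta\bigr).
\]
The middle conjugation is unpacked using $\widehat{L_\lambda}=L_\lambda+J\pi(H_S+\lambda V)J$ with the correction in $\sM'$, together with the commutation identities $Je^{itL_\lambda}=e^{itL_\lambda}J$ (from $JL_\lambda J=-L_\lambda$, $L_\lambda$ being the standard Liouvillean of $\omega_\lambda$) and $J\Delta_\eta^{is}J=\Delta_\eta^{is}$ (from $JL_R J=-L_R$). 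These allow one to rearrange RHS into the same expression obtained for LHS from the previous step.

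The main obstacle is that $\widehat{L_\lambda}$ is not a standard Liouvillean: it does not preserve the natural cone and does not satisfy $J\widehat{L_\lambda}J=-\widehat{L_\lambda}$, so the convenient $J$-commutation tricks available for $L_\lambda$, $L_0$, and $L_R$ are not directly available for $\widehat{L_\lambda}$. Every manipulation therefore has to route through the specific decomposition $\widehat{L_\lambda}=L_R+\pi(H_S+\lambda V)$ and the cocycle $\Gamma_\lambda$ of (\ref{unique}). A direct finite-dimensional computation (confined reservoir, $\sO_R=\sB(\sH_R)$, $\omega_R$ a Gibbs state) serves as both sanity check and template: both sides collapse to
\[
\tr\bigl((\rho_S\otimes \rho_R^{1/2-is})\,e^{itH_\lambda}\,(\Id\otimes \rho_R^{1/2+is})\,e^{-itH_\lambda}\bigr),
\]
after using $e^{iz\widehat{L_\lambda}}X=e^{izH_\lambda}X e^{-izH_R}$, $\rho_R$--$H_R$ commutativity, and cyclicity of the trace. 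The general statement then follows by density of modular-analytic elements and norm continuity of the various operator-valued functions.
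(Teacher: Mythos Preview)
Your opening move---replacing $L_0+\lambda\pi(V)$ by $\widehat{L_\lambda}$ in the conjugation of $\Delta_\eta^{is+1/2}$ via the commuting correction $J\pi(H_S)J\in\sM'$---is correct and is a genuine simplification the paper does not make explicit. Likewise your identifications $\widehat{L_\lambda}=L_\lambda+J\pi(H_S+\lambda V)J$, $Je^{itL_\lambda}=e^{itL_\lambda}J$, and $J\Delta_\eta^{is}J=\Delta_\eta^{is}$ are all valid. So the skeleton is right.

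The gap is at the step you flag yourself: ``These allow one to rearrange RHS into the same expression obtained for LHS.'' Nothing in your list of commutation facts lets you do this directly, because $e^{it\widehat{L_\lambda}}\notin\sM$ and $J$ does not commute with it. Concretely, after your first reduction you need to compute $\Delta_\eta^{1/2}e^{-it\widehat{L_\lambda}}\Omega$, and for that you must write $e^{-it\widehat{L_\lambda}}\Omega=B\Omega_\eta$ with $B\in\sM$. The only way to achieve this is to peel off $e^{-itL_R}$ using a cocycle $U_M^t:=e^{it(L_R+M)}e^{-itL_R}\in\sM$ (with $M=\pi(H_S+\lambda V)$) and then use $L_R\Omega_\eta=0$. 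Once you do that, the $\Delta_\eta^{is}$ on the left produces $\Delta_\eta^{is}U_M^t\Delta_\eta^{-is}$, which is \emph{not} $U_M^t$ but rather $U_{M_{-i\beta s}}^t$. Matching this with the RHS requires two nontrivial algebraic identities for these interaction-picture cocycles,
\[
U_{A-JBJ}^t=U_A^t\,JU_B^tJ,\qquad
\Delta_\eta^{is}U_A^t\Delta_\eta^{-is}=U_{A_{-i\beta s}}^t,
\]
proved by uniqueness for the Cauchy problem $\partial_tX=iXA_t$, $X(0)=\Id$. This is precisely the machinery the paper introduces, and it is what turns your ``rearranging'' into an actual computation: one shows $e^{itL_\lambda}e^{i\beta s\widehat{L_\lambda}}=e^{i\beta s\widehat{L_\lambda}}\Gamma_\lambda(t)\widetilde{\Gamma_\lambda}(t)e^{itL_0}$ with $\widetilde{\Gamma_\lambda}(t)=J\Delta_\eta^{is}\Gamma_\lambda(t)\Delta_\eta^{-is}J$, and separately that $\sF_{\lambda,t}(is+\tfrac12)=(\widehat{\Omega},\Gamma_\lambda(t)\widetilde{\Gamma_\lambda}(t)\Omega_\eta)$. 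Your finite-dimensional check is a good sanity test but does not substitute for this; the appeal to ``density of modular-analytic elements'' at the end is not a proof of an operator identity between specific unitaries.
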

\end{leftbar}
\begin{proof}
Let $R = \pi(\rho_S^{1/2} \otimes \Id) \in \sM$. Using Equation ($\ref{cocycle}$) as well as the facts that $R \Omega_{\eta} = \rho_S^{1/2} \otimes \Omega_R = \Omega$ and $ \Delta_{\eta}^{1/2} A \Omega_{\eta} = J A^* \Omega_{\eta}$ we have 
\begin{align*}
\Delta_{\eta \circ \tau_{\lambda}^{-t} | \eta}^{1/2} \Omega = \Gamma_{\lambda}(t) \Delta_{\eta}^{is} \Delta_{\eta}^{1/2} \Gamma_{\lambda}(t)^* R \Omega_{\eta} = \Gamma_{\lambda}(t) \Delta_{\eta}^{is} J R \Gamma_{\lambda}(t) \Omega_{\eta}. 
\end{align*}
Since $J \Delta_{\eta}^{is} = \Delta_{\eta}^{is} J$ and $R$ commutes with $\Delta_{\eta}^{is} = e^{-is \beta L_R}$,
\begin{align*}
\Gamma_{\lambda}(t) \Delta_{\eta}^{is} J R \Gamma_{\lambda}(t) \Omega_{\eta} = \Gamma_{\lambda}(t) J R\Delta_{\eta}^{is}\Gamma_{\lambda}(t) \Omega_{\eta} = \Gamma_{\lambda}(t) (J RJ) J \Delta_{\eta}^{is}\Gamma_{\lambda}(t) \Omega_{\eta}.   
\end{align*}
$JRJ \in \sM'$ and $\Gamma_{\lambda}(t) \in \sM$. This and the identity $\Delta_{\eta}^{-is}J \Omega_{\eta} = \Omega_{\eta}$ yield,
\begin{align*}
\Gamma_{\lambda}(t) (J RJ) J \Delta_{\eta}^{is}\Gamma_{\lambda}(t) \Omega_{\eta} = (JRJ) \Gamma_{\lambda}(t)  \Delta_{\eta}^{is}\Gamma_{\lambda}(t) \Delta_{\eta}^{-is}J \Omega_{\eta} = JRJ \Gamma_{\lambda}(t) \widetilde{\Gamma_{\lambda}}(t)  \Omega_{\eta},
\end{align*}
where $\widetilde{\Gamma_{\lambda}}(t)  =  J \Delta_{\eta}^{is} \Gamma_{\lambda}(t) \Delta_{\eta}^{-is} J$. All of this combined with $JRJ \Omega = \widehat{\Omega}$ gives
\begin{align}
\label{mid lemma}
\sF_{\lambda, t}\left(is + \frac12\right) = \left( \Omega, \Delta_{\eta \circ \tau_{\lambda}^{-t} | \eta}^{is + 1/2} \Omega \right) = \left( \widehat{\Omega}, \Gamma_{\lambda}(t) \widetilde{\Gamma_{\lambda}}(t) \Omega_{\eta} \right). 
\end{align}
For any $A \in \sB(\sH)$ self adjoint, define $A_t := e^{itL_R}Ae^{-itL_R}$ and let $U_A^t$ be the solution to the Cauchy problem 
\begin{align}
\partial_t U_A^t = iU_A^t A_t, \;\;\;\;\;\;\;\;\;\; U_A^0 = \Id. 
\end{align}
One can check that $U_A^t = e^{it(L_R + A)}e^{-itL_R}$. By a similar uniqueness argument to that which was used to obtain (\ref{unique}), one shows that if $A \in \sM$ then $U_A^t \in \sM$. Now if $A, B \in \sM$, we can argue by uniqueness of the solution again.
\begin{align*}
\partial_t(U_A^t J U_B^t J) & = \partial_t U_A^t J U_B^t J + U_A^tJ\partial_tU_B^t J \\
& = (iU_A^t A_t)J U_B^t J + U_A^tJ (iU_B^t B_t) J \\
& = iU_A^t J U_B^t J A_t - iU_A^tJ U_B^t J J B_t J \\
& = iU_A^t J U_B^t J (A_t - J B_t J) = iU_A^t J U_B^t J (A - JBJ)_t
\end{align*}
where the last equality follows since $JB_tJ = J \Delta^{it/\beta}B\Delta^{-it/\beta}J = \Delta^{it/\beta}JBJ\Delta^{-it/\beta} = (JBJ)_t$. Hence,
\begin{align}
U_{A - JBJ}^t = U_A^t J U_B^t J. 
\end{align}
Similarly, one shows that 
\begin{align}
U_{A_{-i\beta s }}^t = \Delta_{\eta}^{is} U_A^t \Delta_{\eta}^{-is}.
\end{align}
Set $M = \pi(H_S + \lambda V)$. Since $L_0 = L_R + L_S = L_R + \pi(H_S) - J\pi(H_S)J$, we have that $L_0 + \lambda \pi(V) = L_R + M - J\pi(H_S)J$. It follows from the identities we have just developed that
\begin{align}
\Gamma_{\lambda}(t) = U_M^t {U_{\pi(H_S)}^t}^* ,\;\;\;\;\;\;\;\;\;\; \widetilde{\Gamma_{\lambda}}(t) = (J U_{M_{-i\beta s}}^tJ)(J U_{\pi(H_S)^t}J)^*,
\end{align}
and hence
\begin{align*}
\Gamma_{\lambda}(t) \widetilde{\Gamma_{\lambda}}(t) = (U_M^tJU_{M_{-i\beta s}}^t J)(U_{\pi(H_S)}^tJU_{\pi(H_S)}^tJ)^* =U_M^tJU_{M_{-i\beta s}}^t J e^{itL_R}e^{-tL_0} .
\end{align*}
From the fact that $L_{\lambda} = L_R + M - JMJ $ and $\widehat{L}_{\lambda} = L_R + M$ we deduce 
\begin{align*}
e^{itL_{\lambda}}e^{i\beta s \widehat{L}_{\lambda}} & = U_M^t J U_M^t J e^{itL_{R}}U_M^{\beta s}e^{i\beta s L_R} \\
& = U_M^t \left( J U_M^t J \right) \left( e^{itL_{R}}U_M^{\beta s}e^{-itL_R} \right) e^{itL_{R}}e^{i\beta s L_R} \\
& = U_M^t  \left( e^{itL_{R}}U_M^{\beta s}e^{-itL_R} \right)  e^{i\beta s L_R} e^{-i\beta s L_R} \left( J U_M^t J \right) e^{i\beta s L_R} e^{itL_{R}}\\
& = e^{it\widehat{L}_{\lambda}} e^{i\beta s \widehat{L}_{\lambda}}e^{-itL_R}(JU_{M_{-i\beta s}}^t J) e^{itL_R} \\ 
& = e^{i\beta s \widehat{L}_{\lambda}}U_M^t JU_{M_{-i\beta s}}^t J e^{itL_R} \\ 
& = e^{i\beta s \widehat{L}_{\lambda}}\Gamma_{\lambda}(t) \widetilde{\Gamma_{\lambda}}(t) e^{itL_0}.
\end{align*}
Combining this with (\ref{mid lemma}) and the fact that $e^{itL_0}\Omega_{\eta} = \Omega_{\eta}$ gives the result. 
\end{proof}
\begin{rem}
The proof of this lemma did not use assumptions (\ref{mixing}) or (\ref{regularity}). 
\end{rem}
The next step of the proof is to study the behaviour of $\sF_{\lambda, t}(\alpha)$ on the strip $\mathfrak{S}([0,1])$ where 
\[ \mathfrak{S}(S) := \{ z \in \mathbb{C} : \textrm{Re}z \in S \}.\]
\begin{leftbar}
\begin{lemma}
\label{lemma2}
(i)  For all $\lambda, t \in \mathbb{R}$, the map $\alpha \mapsto \sF_{\lambda , t}(\alpha)$ is analytic on the interior of $\mathfrak{S}([0,1])$ and bounded and continuous on $\mathfrak{S}([0,1])$. Moreover, for each fixed $\alpha \in \mathfrak{S}([0,1])$,
\[ \sup_{t \in \mathbb{R} } | \sF_{\lambda,t} (\alpha) | \leq 1 + (\textrm{dim}(\sH_S) -1) \textrm{Re}(\alpha).\]
(ii) For all $\lambda \in \mathbb{R}$ and $\delta \in ]0,1[$, one has the bound 
\[ \sup_{t\in \mathbb{R}, \alpha \in \mathfrak{S}(]0,\delta[)} | \partial_{\alpha} \sF_{\lambda, t}(\alpha) | < \infty. \]
\end{lemma}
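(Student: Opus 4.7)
The plan is to exploit the spectral theorem for the positive self-adjoint operator $\Delta := \Delta_{\eta\circ\tau_\lambda^{-t}|\eta}$: writing
\[
\sF_{\lambda,t}(\alpha) = (\Omega, \Delta^{\alpha}\Omega) = \int_{[0,\infty)} \lambda^{\alpha}\, d\mu(\lambda),
\]
with $\mu$ the spectral measure of $\Delta$ in the state $\Omega$ (a probability measure, since $\|\Omega\| = 1$), part (i) will reduce to the single moment bound $\sF_{\lambda,t}(1) = \int \lambda\, d\mu \leq \textrm{dim}(\sH_S)$, uniform in $t$. Once this is in hand, dominated convergence with dominating function $1+\lambda$ gives analyticity on the open strip and continuity on the closed one, and the linear bound in (i) follows by integrating the elementary convexity inequality $\lambda^x \leq (1-x) + x\lambda$ for $\lambda \geq 0$ and $x \in [0,1]$ (from convexity of $x \mapsto e^{x\log\lambda}$):
\[
|\sF_{\lambda,t}(\alpha)| \leq \int \lambda^{\textrm{Re}(\alpha)}\, d\mu \leq 1 - \textrm{Re}(\alpha) + \textrm{Re}(\alpha)\cdot\sF_{\lambda,t}(1) \leq 1 + (\textrm{dim}(\sH_S) - 1)\textrm{Re}(\alpha).
\]

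To establish $\sF_{\lambda,t}(1) = \|\Delta^{1/2}\Omega\|^2 \leq \textrm{dim}(\sH_S)$, I will use the general modular identity $\Delta_{\eta'|\eta}^{1/2} A \Omega_\eta = J A^* \Omega_{\eta'}$ for $A \in \sM$, where $J$ is the common modular conjugation in the natural cone (by universality of the cone). With $\Omega = R\Omega_\eta$ for $R := \pi(\rho_S^{1/2}\otimes\Id) \in \sM$ (as already used in the proof of Lemma \ref{lemma1}), this yields $\Delta^{1/2}\Omega = J R^* \Omega_{\eta\circ\tau_\lambda^{-t}}$ and hence
\[
\sF_{\lambda,t}(1) = \|R^* \Omega_{\eta\circ\tau_\lambda^{-t}}\|^2 = (\eta\circ\tau_\lambda^{-t})(RR^*) = \eta\bigl(\tau_\lambda^{-t}(\rho_S \otimes \Id)\bigr) \leq \|\rho_S\|\cdot\eta(\Id) = \|\rho_S\|\cdot\textrm{dim}(\sH_S) \leq \textrm{dim}(\sH_S),
\]
where the inequality uses that $\tau_\lambda^{-t}$ is a $*$-automorphism preserving $0 \leq \rho_S\otimes\Id \leq \|\rho_S\|\Id$, and $\eta(\Id) = \|\Omega_\eta\|^2 = \textrm{dim}(\sH_S)$.

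For (ii), the plan is to differentiate under the spectral integral, giving $\partial_\alpha\sF_{\lambda,t}(\alpha) = \int \lambda^{\alpha}\log\lambda\, d\mu$, and split at $\lambda = 1$. The tail $\int_1^\infty\lambda^{\textrm{Re}(\alpha)}|\log\lambda|\, d\mu$ is bounded by $(1-\delta)^{-1}\int\lambda\, d\mu \leq (1-\delta)^{-1}\textrm{dim}(\sH_S)$ via the elementary bound $\log\lambda \leq \lambda^{1-\delta}/(1-\delta)$ on $[1,\infty)$. The contribution $\int_0^1\lambda^{\textrm{Re}(\alpha)}|\log\lambda|\, d\mu \leq \int_0^1|\log\lambda|\, d\mu$ is the main obstacle; here the plan is to use the decomposition
\[
\int_0^1(-\log\lambda)\, d\mu = \int_1^{\infty}\log\lambda\, d\mu - (\Omega, \log\Delta\cdot\Omega),
\]
in which the first term has already been controlled, and the second is evaluated using the identity $\log\Delta_{\eta\circ\tau_\lambda^{-t}|\eta} = \log\Delta_\eta + \beta\int_0^t \tau_\lambda^s(\phi_R)\, ds$ from Section 8 together with the direct computation $\log\Delta_\eta\Omega = 0$ (valid in the standard GNS, where $\rho_\omega$ commutes with $\rho_\eta$): this gives
\[
(\Omega, \log\Delta\cdot\Omega) = \beta\int_0^t\omega(\tau_\lambda^s(\phi_R))\, ds = \beta\,\Delta Q_R(\lambda, t),
\]
which is uniformly bounded in $t$ because $\Delta Q_R(\lambda, t)$ converges as $t \to \infty$ by the mixing hypothesis (Assumption \ref{mixing}) and Araki's perturbation theory, as reviewed in Section 8.3. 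Combining the two pieces yields the claimed $t$-uniform bound on $|\partial_\alpha\sF_{\lambda,t}|$ throughout $\mathfrak{S}(]0,\delta[)$.
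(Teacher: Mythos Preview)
Your argument is essentially the same as the paper's: both reduce everything to the single moment bound $\sF_{\lambda,t}(1)\leq\dim\sH_S$, use convexity of the exponential to get the linear bound in (i), and for (ii) differentiate under the spectral integral, split the integrand at the point where $\log$ changes sign, and control the ``small $\Delta$'' piece via the identity $(\Omega,\log\Delta_{\eta\circ\tau_\lambda^{-t}|\eta}\Omega)=\beta\,\Delta Q_R(\lambda,t)$. The only real difference is in the final step of (ii): you bound $\Delta Q_R(\lambda,t)$ by invoking the mixing hypothesis (convergence as $t\to\infty$), whereas the paper uses the purely algebraic identity $\Delta Q_R(\lambda,t)=\omega(\tau_\lambda^t(H_S+\lambda V))-\omega(H_S+\lambda V)$, giving $|\Delta Q_R(\lambda,t)|\leq 2\|H_S+\lambda V\|$ for \emph{all} $\lambda\in\mathbb{R}$. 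Your route therefore only establishes (ii) for $0<|\lambda|<\lambda_0$, not for all $\lambda\in\mathbb{R}$ as the lemma is stated; since the lemma is only applied in that range this does not harm the main theorem, but the paper's direct bound is both cleaner and more general, and avoids an unnecessary appeal to Assumption~\ref{mixing}.
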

\end{leftbar}
\begin{proof}
From (\ref{cocycle}) we have 
\begin{align*}
\sF_{\lambda,t}(1) =\| \Delta_{\eta\circ\tau_{\lambda}^{-t}|\eta}^{1/2}\Omega\|^2 =\|\Delta_\eta^{1/2} \Gamma_\lambda^*(t)\pi(\rho_S^{1/2} \otimes\Id)\Omega_\eta\|^2.
\end{align*}
Using the fact that $J$ is anti-unitary and (\ref{unique}) yields 
\begin{align*}
0 \leq \sF_{\lambda,t}(1) =\|J\Delta_\eta^{1/2} \Gamma_\lambda^*(t)\pi(\rho_S^{1/2} \otimes\Id)\Omega_\eta\|^2
=\|\pi(\rho_S^{1/2}\otimes\Id)\Gamma_\lambda(t)\Omega_\eta\|^2\leq \|\Omega_\eta\|^2=\dim\sH_S.
\end{align*}
From this inequality and convexity of the exponential function we get for $\alpha \in [0,1]$. 
\[ \sF_{\lambda, t}(\alpha) \leq (1-\alpha) \sF_{\lambda, t}(0) + \alpha  \sF_{\lambda, t}(1) \leq 1 + (\textrm{dim}(\sH_S) - 1)\alpha.\]
The bound in part {\it (i)} follows. Write $\sF_{\lambda, t}(\alpha)$ as the sum of two Laplace transforms
\[ \sF_{\lambda, t}^{\pm}(\alpha) = \int_{\mathbb{R}^{\pm}} e^{\alpha \beta x} d\mathbb{P}_{R, \lambda, t}(x). \]
Laplace transforms are analytic wherever they converge absolutely. Applying the bound we have just derived, $\sF_{\lambda, t}^{+}(\alpha)$ is analytic on $\mathfrak{S}(]-\infty, 1[)$ and $\sF_{\lambda, t}^{-}(\alpha)$ is analytic on $\mathfrak{S}(]0, \infty[)$. Hence $\sF_{\lambda, t}(\alpha)$ is analytic on $\mathfrak{S}(]0,1[)$. For $\alpha, \alpha' \in \mathfrak{S}(]-\infty,1])$,
\[ | \sF^+_{\lambda, t}(\alpha) - \sF^+_{\lambda, t}(\alpha') | \leq \int_{\mathbb{R}^+} | e^{\alpha \beta x} - e^{\alpha' \beta s}| d \mathbb{P}_{R, \lambda, t}(x). \]
Hence, $\sF_{\lambda, t}^+$ is continuous on $\mathfrak{S}(]-\infty,1])$. Similarly, $\sF_{\lambda, t}^-$ is continuous on $\mathfrak{S}([0,\infty[)$, and hence $\sF_{\lambda, t}$ is continuous on $\mathfrak{S}([0,1])$. 
\\
\\
By (\ref{average}) one has 
\[ \Delta {Q_R} (\lambda, t) = - \int_{\mathbb{R}^-} |x| d\mathbb{P}_{R,\lambda, t}(x) +  \int_{\mathbb{R}^+} x d\mathbb{P}_{R,\lambda, t}(x). \]
Combining this with the bound in part \textit{(i)} and the inequality $e^x > x$ for $x \in \mathbb{R}$ yields
\[ \int_{\mathbb{R}^+} x d\mathbb{P}_{R,\lambda, t}(x) \leq \frac{1}{\beta} \int_{\mathbb{R}^+} e^{\beta x} d\mathbb{P}_{R,\lambda, t}(x) \leq \frac{1}{\beta} \sF_{\lambda, t}(1) \leq  \frac{1}{\beta} \textrm{dim}(\sH_S)\]
and hence
\[ \int_{\mathbb{R}^-} |x| d\mathbb{P}_{R, \lambda, t} (x) \leq \frac{1}{\beta} \textrm{dim}(\sH_S) - \Delta {Q_R} (\lambda, t) . \]
We use the property of the Laplace transform $\mathcal{L}(f(x)) '  = - \mathcal{L}(xf(x))$ and \textit{(i)} to get that for $\alpha \in \mathfrak{S}([0,1])$, 
\[ \partial_{\alpha} \sF_{\lambda,t} (\alpha) = \partial_{\alpha} \int_{\mathbb{R}} e^{\alpha \beta x } d\mathbb{P}_{R, \lambda, t} (x) = \beta  \int_{\mathbb{R}} x e^{\alpha \beta x} d\mathbb{P}_{R, \lambda, t}(x) . \]
The estimate $xe^{ax} \leq (1-a)^{-1} e^x$ is valid for $a < 1$ and $x\in \mathbb{R}$. Combining this with the previous estimates, for $\alpha \in \mathfrak{S}([0,1])$, 
\begin{align*}
| \partial_{\alpha} \sF_{\lambda, t}(\alpha) | & \leq - \beta \int_{\mathbb{R}^-} |x| d\mathbb{P}_{R,\lambda, t}(x) +   \int_{\mathbb{R}} \beta x e^{\textrm{Re}\alpha \beta x} d\mathbb{P}_{R, \lambda, t}(x) \\
& \leq \textrm{dim}(\sH_S) - \beta \Delta Q_{R}(\lambda, t) + (1 - \textrm{Re}(\alpha))^{-1} \sF_{\lambda, t}(1) \\
& \leq (1 + (1 - \textrm{Re}(\alpha))^{-1}) \textrm{dim}(\sH_S) - \beta \Delta Q_{R}(\lambda, t).
\end{align*}
Finally, we can bound the last term by writing 
\[
\Delta Q_R(\lambda, t) = \lambda \omega(\tau_{\lambda}^t(V) - V) + \Delta Q_S (\lambda, t) = \omega(\tau_{\lambda}^t (H_S + \lambda V)) + \omega(H_S + \lambda V).
\]
Since $\| \omega \| = 1$ and $\tau_{\lambda}^t$ is an automorphism, $| \Delta Q_R(\lambda, t) | \leq 2 \| H_S + \lambda V \|$. 
\end{proof}
Equipped with these two lemmas, we can investigate the limiting behaviour of $\sF_{\lambda, t}(\alpha)$. The argument is based on Vitali's convergence theorem (see Appendix B of \cite{JOPP}). 
\begin{leftbar}
\begin{thm}[Vitali Convergence]
Let $D_{\epsilon}$ be an open disc in $\mathbb{C}$ of radius $\epsilon$ centred at $z_0$ and let $F_t : D_{\epsilon} \rightarrow \mathbb{C}$ for $t \in \mathbb{R}^+$ be analytic functions such that 
\[ \sup_{z \in D_{\epsilon}, t>0} | F_t(z) | < \infty. \]
Suppose the limit
\[ \lim_{t \rightarrow \infty} F_t(z) = F(z)\]
exists for all $z \in D_{\epsilon} \cap S$ where $S$ is a line through $z_0$. Then the limit $F(z)$ exists for all $z \in D_{\epsilon}$ and $z \mapsto F(z)$ is analytic on $D_{\epsilon}$. Moreover, as $t \rightarrow \infty$, all derivatives of $F_t$ converge uniformly on compact subsets of $D_{\epsilon}$ to the corresponding derivatives of $F$. 
\end{thm}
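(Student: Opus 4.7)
The plan is to deduce this from Montel's theorem on normal families together with the identity theorem for holomorphic functions, which gives the classical proof of Vitali's convergence theorem. Since $\sup_{z\in D_\epsilon,\,t>0}|F_t(z)|<\infty$, the family $\{F_t\}_{t>0}$ is uniformly bounded and hence normal on $D_\epsilon$: every sequence $F_{t_n}$ with $t_n\to\infty$ admits a subsequence converging locally uniformly on $D_\epsilon$ to some function $G$, which is automatically holomorphic by Weierstrass's theorem on locally uniform limits of analytic functions.

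The heart of the argument is to show that every such subsequential limit is the \emph{same} function. If $G_1,G_2$ are two subsequential limits, they are holomorphic on the connected open set $D_\epsilon$ and satisfy $G_1(z)=F(z)=G_2(z)$ for all $z\in S\cap D_\epsilon$ by hypothesis. Since $S\cap D_\epsilon$ is an open diameter of the disc, it has accumulation points in $D_\epsilon$, so the identity theorem forces $G_1\equiv G_2$ on $D_\epsilon$. To pass from subsequential to full convergence, I would argue by contradiction: if $F_t$ failed to converge locally uniformly to the common limit $F$ on some compact $K\subset D_\epsilon$, there would exist a sequence $t_n\to\infty$ with $\sup_{z\in K}|F_{t_n}(z)-F(z)|$ bounded away from zero; by normality a further subsequence would converge to a holomorphic limit agreeing with $F$ on $S\cap D_\epsilon$ and therefore everywhere on $D_\epsilon$, contradicting the choice of $t_n$.

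Convergence of derivatives follows from Cauchy's integral formula: for any compact $K\subset D_\epsilon$, I would choose a contour $\Gamma\subset D_\epsilon$ enclosing $K$ and bounded uniformly away from it, and write
\[
F_t^{(k)}(z) - F^{(k)}(z) = \frac{k!}{2\pi i}\oint_{\Gamma} \frac{F_t(\zeta)-F(\zeta)}{(\zeta-z)^{k+1}}\,d\zeta
\]
for $z\in K$. Uniform convergence $F_t\to F$ on the compact set $\Gamma$, combined with the uniform lower bound on $|\zeta-z|$ for $\zeta\in\Gamma$ and $z\in K$, yields uniform convergence of $F_t^{(k)}$ to $F^{(k)}$ on $K$. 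The argument uses only standard complex analysis and presents no serious obstacle; the one delicate point is ensuring that $S\cap D_\epsilon$ has interior accumulation points so that the identity theorem applies, which is automatic here because $S$ is assumed to pass through the centre $z_0$, making the intersection a nondegenerate open diameter of the disc.
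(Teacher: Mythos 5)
Your proof is correct. Note that the thesis itself does not prove this statement: it quotes Vitali's convergence theorem from Appendix B of the reference [JOPP] and uses it as a black box, so there is no in-paper argument to compare against. Your route is the classical one: uniform boundedness gives normality by Montel's theorem, every locally uniform subsequential limit is holomorphic by Weierstrass, and any two such limits agree on $S\cap D_{\epsilon}$, which is a nondegenerate open segment through the centre and hence has accumulation points in $D_{\epsilon}$, so the identity theorem forces them to coincide; the usual subsequence-of-a-subsequence contradiction then upgrades this to full locally uniform convergence as $t\to\infty$ (the continuous parameter is correctly reduced to sequences $t_n\to\infty$), and the Cauchy integral formula converts locally uniform convergence of $F_t$ into locally uniform convergence of all derivatives. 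The only point worth making explicit is that the limit function $F$ is a priori defined only on $S\cap D_{\epsilon}$ and must be \emph{defined} on all of $D_{\epsilon}$ as the common subsequential limit before the contradiction argument is run; your write-up does this implicitly and correctly. So the argument is complete and standard, and fully adequate for the role the theorem plays in the convergence proofs of Propositions \ref{tlimit} and \ref{llimit}.
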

\end{leftbar}
We first take the large time limit. 
\begin{leftbar}
\begin{prop}
\label{tlimit}
Suppose that $0 < |\lambda| < \lambda_0$. Then the limit 
\[\sF_{\lambda}(\alpha) = \lim_{t \rightarrow \infty} \sF_{\lambda, t}(\alpha)\]
exists for $\alpha \in \mathfrak{S}([0,1[)$. The function $\alpha \mapsto \sF_{\lambda}(\alpha) $ is analytic on $\mathfrak{S}(]0,1[)$ and continuous on $\mathfrak{S}([0,1[)$. Moreover, for $s \in \mathbb{R}$
\[ \sF_{\lambda}\left(\frac{1}{2} +is \right) = ( \widehat{\Omega}, e^{i \beta s\widehat{L}_{\lambda}} \Omega_{\lambda} ) ( \Omega_{\lambda} , e^{-i \beta s\widehat{L}_{\lambda}}\Omega_{\eta} ).  \]
\end{prop}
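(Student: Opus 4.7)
The plan is to leverage the three preceding technical inputs in sequence: Lemma \ref{lemma1} provides the explicit formula for $\sF_{\lambda, t}$ on the critical line $\textrm{Re}\,\alpha = 1/2$, Lemma \ref{lemma2}(i) provides uniform boundedness and analyticity on $\mathfrak{S}([0,1])$, and Lemma \ref{lemma2}(ii) supplies a uniform derivative bound up to the left edge of the strip. These combine with quantum Koopmanism (Theorem \ref{Koopman}) and Vitali's convergence theorem to yield the result.

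The first step is to pass to the limit on the line $\textrm{Re}\,\alpha = 1/2$. By Lemma \ref{lemma1},
$$\sF_{\lambda, t}\!\left(\tfrac{1}{2}+is\right) = \bigl(e^{i\beta s \widehat{L}_{\lambda}}\widehat{\Omega},\; e^{itL_{\lambda}}\, e^{i\beta s \widehat{L}_{\lambda}}\Omega_{\eta}\bigr).$$
For $0 < |\lambda| < \lambda_0$, Assumption \ref{mixing} makes $(\sO, \tau_\lambda, \omega_\lambda)$ mixing, so Theorem \ref{Koopman} gives the weak-operator limit $e^{itL_{\lambda}} \to |\Omega_{\lambda}\rangle\langle \Omega_{\lambda}|$ as $t\to\infty$. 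Inserting this inside the inner product factorises the expression into a product of two inner products involving $\Omega_{\lambda}$, $\widehat{\Omega}$, $\Omega_{\eta}$, and $e^{\pm i\beta s\widehat{L}_{\lambda}}$, yielding precisely the explicit formula claimed for $\sF_\lambda(1/2+is)$ after moving the unitary $e^{i\beta s\widehat{L}_{\lambda}}$ across the inner product via self-adjointness of $\widehat{L}_{\lambda}$. In particular, $\sF_\lambda$ is well-defined on this line.

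To extend convergence to the interior strip I would apply Vitali's theorem to any open disc contained in $\mathfrak{S}(]0,1[)$ that meets the line $\textrm{Re}\,\alpha = 1/2$. Lemma \ref{lemma2}(i) supplies analyticity and the uniform bound $\sup_t\sup_{\alpha \in \mathfrak{S}([0,1])}|\sF_{\lambda, t}(\alpha)| < \infty$, while the previous step supplies pointwise convergence on a line through the disc; Vitali therefore delivers pointwise convergence throughout the disc and analyticity of the limit. Covering $\mathfrak{S}(]0,1[)$ by such discs yields pointwise convergence on the whole open strip and analyticity of $\sF_\lambda$ there. Continuity on $\mathfrak{S}([0,1[)$ is obtained by handling the boundary line $\textrm{Re}\,\alpha = 0$: for each fixed $\delta \in ]0,1[$, Lemma \ref{lemma2}(ii) gives a uniform Lipschitz bound on $\mathfrak{S}(]0,\delta[)$, which extends to $\mathfrak{S}([0,\delta])$ by the continuity provided by Lemma \ref{lemma2}(i). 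The family $\{\sF_{\lambda, t}\}_t$ is thus equicontinuous on $\mathfrak{S}([0,\delta])$, and since it converges pointwise in the interior, it is uniformly Cauchy on compact subsets up to and including the boundary. Letting $\delta \uparrow 1$ yields the continuous extension of $\sF_\lambda$ to $\mathfrak{S}([0,1[)$.

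The main obstacle is arguably the Koopmanism step, because it is the only place where the standing assumptions of the theorem enter essentially: Assumption \ref{mixing} delivers the weak-operator limit of $e^{itL_\lambda}$, while Assumption \ref{regularity} ensures $\widehat{L}_\lambda = L_R + \pi(H_S + \lambda V)$ makes sense and that the bounds of Lemma \ref{lemma2} hold. Everything afterwards is a complex-analytic packaging exercise: Vitali extends convergence from a line to a strip, and the uniform derivative bound of Lemma \ref{lemma2}(ii) converts this into continuity at the left boundary. The exclusion of $\textrm{Re}\,\alpha = 1$ from the domain of continuity is dictated by the fact that the derivative bound degenerates as $\textrm{Re}\,\alpha \to 1$, and one should not try to push through there.
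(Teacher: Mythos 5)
Your proposal follows essentially the same route as the paper's proof: Lemma \ref{lemma1} together with Assumption \ref{mixing} and Theorem \ref{Koopman} gives convergence and the product formula on the line $\mathrm{Re}\,\alpha = \tfrac12$, the uniform bound of Lemma \ref{lemma2}\,(i) plus Vitali's theorem propagates convergence and analyticity to $\mathfrak{S}(]0,1[)$, and the uniform derivative bound of Lemma \ref{lemma2}\,(ii) yields equicontinuity and hence continuity of the limit up to $\mathrm{Re}\,\alpha = 0$, which is exactly the paper's $\epsilon/3$-argument in qualitative form. The only cosmetic point is that Vitali, as stated, needs discs centred on the line $\mathrm{Re}\,\alpha = \tfrac12$ (which do cover the open strip), rather than discs merely meeting it, but this does not affect the argument.
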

\end{leftbar}
\begin{proof}
Lemma \ref{lemma1}  and Theorem \ref{Koopman} yield for $0 < |\lambda | < \lambda_0$ and $s \in \mathbb{R}$, 
\[ \lim_{t \rightarrow \infty} \sF_{\lambda, t} \left(\frac{1}{2} +is \right) = ( \widehat{\Omega}, e^{i \beta s\widehat{L}_{\lambda}} \Omega_{\lambda} ) ( \Omega_{\lambda} , e^{-i \beta s\widehat{L}_{\lambda}}\Omega_{\eta} ). \]
Lemma \ref{lemma2} \textit{(i)} and Vitali's convergence theorem imply that the limit $\sF_{\lambda}(\alpha)$ exists uniformly on compact subsets of $\mathfrak{S}(]0,1[)$ and that $\alpha \mapsto \sF_{\lambda}(\alpha)$ is analytic on $\mathfrak{S}(]0,1[)$. 
\\
\\
Let $K \subset \mathfrak{S}([0,1[)$ be compact. There exists $\delta \in ]0,1[$ and $k>0$ such that $K \subset \widehat{K} := [0, \delta] + i [-k,k]$. Define
\[ C_{\delta} = \sup_{t\in \mathbb{R}, \alpha \in \mathfrak{S}(]0,\delta[)} |\partial_{\alpha} \sF_{\lambda, t}(\alpha)| . \]
Lemma \ref{lemma2} \textit{(ii)} states that $C_{\delta} < \infty$. For sufficiently small $\epsilon$, $r = \epsilon / 12 C_{\delta} < \delta$, and uniform convergence of $\sF_{\lambda, t}(\alpha)$ on compact subsets gives that there exists $T >0$ such that $| \sF_{\lambda,t}(\alpha)  - \sF_{\lambda}(\alpha) | < \epsilon / 3$ for all $t> T$ and $\alpha \in \widehat{K}_r = \{ \alpha \in \widehat{K} : \textrm{Re} \alpha \geq r\}$. For any $\alpha \in \widehat{K}$ there exists $\alpha' \in \widehat{K}_r$ such that $| \alpha - \alpha' | < 2r$. Then $| \sF_{\lambda,t}(\alpha) - \sF_{\lambda, t}(\alpha')| < 2rC_{\delta} = \epsilon/6$. One has for any $s,t >T$ and $\alpha \in \widehat{K}$, 
\begin{align*}
| \sF_{\lambda, t}(\alpha) - \sF_{\lambda, s}(\alpha) | \leq | \sF_{\lambda, t}(\alpha) - \sF_{\lambda, t}(\alpha')| &+ | \sF_{\lambda,t}(\alpha') - \sF_{\lambda}(\alpha') | \\
& +  |\sF_{\lambda}(\alpha') - \sF_{\lambda, s}(\alpha')| + |\sF_{\lambda, s}(\alpha') - \sF_{\lambda, s}(\alpha) | < \epsilon. 
\end{align*}
Hence $\sF_{\lambda, t}$ converges uniformly on $K$ as $t \rightarrow \infty$ and the limiting function $\sF_{\lambda}$ is continuous. 
\end{proof}
We now take the small coupling-energy limit, $\lambda \rightarrow 0$. 
\begin{leftbar}
\begin{prop}
\label{llimit}
For $\gamma \in \mathbb{R}$, 
\[ \sF(i\gamma/\beta) = \lim_{\lambda \rightarrow 0} \sF_{\lambda}(i\gamma/\beta) = \omega_S(e^{-i\gamma H_S})\omega_{\beta}(^{i\gamma H_S}) . \]
\end{prop}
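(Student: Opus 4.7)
The plan is to first take the $\lambda \to 0$ limit on the line $\mathrm{Re}\,\alpha = 1/2$, where the explicit formula of Proposition~\ref{tlimit} makes the limit tractable, then to extend the convergence across the strip $\mathfrak{S}(]0,1[)$ and up to the imaginary axis $\mathrm{Re}\,\alpha = 0$ (where the characteristic function of $\mathbb{P}_R$ lives) by a Vitali-type argument paralleling the one used in Proposition~\ref{tlimit}.

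First I would take $\lambda \to 0$ in the formula
\[
\sF_\lambda\!\left(\tfrac{1}{2}+is\right)=(\widehat{\Omega},e^{i\beta s\widehat L_\lambda}\Omega_\lambda)(\Omega_\lambda,e^{-i\beta s\widehat L_\lambda}\Omega_\eta).
\]
Two ingredients control this. By Araki's perturbation theorem (Section~9.1), $\|\Omega_\lambda-\Omega_{eq}\|\to 0$. Since $\widehat L_\lambda-\widehat L_0=\lambda\pi(V)$ with $\pi(V)$ bounded, Duhamel's identity gives $\|e^{it\widehat L_\lambda}-e^{it\widehat L_0}\|\le |t||\lambda|\|\pi(V)\|$, so $e^{i\beta s\widehat L_\lambda}\to e^{i\beta s\widehat L_0}$ in norm for each fixed $s$. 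Hence the product on the right converges to $(\widehat{\Omega},e^{i\beta s\widehat L_0}\Omega_{eq})(\Omega_{eq},e^{-i\beta s\widehat L_0}\Omega_\eta)$. In the standard GNS representation, where $\Omega_{eq}=\rho_\beta^{1/2}\otimes\Omega_R$, $\Omega_\eta=\Id\otimes\Omega_R$, $\widehat\Omega=\rho_S\otimes\Omega_R$, and $\widehat L_0 = L_R + \pi(H_S)$ with commuting summands acting on distinct tensor factors, the reservoir contribution trivialises because $L_R\Omega_R = 0$, and a direct calculation of the remaining system-side inner products identifies the limit as an explicit entire function $\sF_0(\alpha)$ of $\alpha = \tfrac{1}{2}+is$ built from $\rho_S$, $\rho_\beta$, and $H_S$.

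I would then use Vitali's theorem to promote this convergence from the line $\mathrm{Re}\,\alpha = 1/2$ to the whole open strip $\mathfrak{S}(]0,1[)$. The global bound $|\sF_{\lambda,t}(\alpha)|\le 1+(\dim\sH_S - 1)\mathrm{Re}\,\alpha$ from Lemma~\ref{lemma2}(i) is uniform in both $t$ and $\lambda$ and passes to the $t\to\infty$ limit, so $\{\sF_\lambda\}_{0<|\lambda|<\lambda_0}$ is uniformly bounded on $\mathfrak{S}([0,1])$; applied to any sequence $\lambda_n\to 0$, Vitali then yields convergence and analyticity on $\mathfrak{S}(]0,1[)$, and by uniqueness of analytic continuation the limit must agree with $\sF_0(\alpha)$ throughout the open strip. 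To extend continuity to the boundary line $\mathrm{Re}\,\alpha = 0$ I would reuse the equicontinuity argument of Proposition~\ref{tlimit}: the derivative estimate of Lemma~\ref{lemma2}(ii) depends on $\lambda$ only through $\|H_S + \lambda V\|\le \|H_S\|+\lambda_0\|V\|$, so it is uniform in $\lambda\in(0,\lambda_0)$, and combined with interior uniform convergence this yields uniform convergence on compact subsets of $\mathfrak{S}([0,1[)$, in particular at $\alpha = i\gamma/\beta$. Evaluating $\sF_0(\alpha)$ at $\alpha = i\gamma/\beta$ and simplifying using $[H_S, e^{-\beta H_S}] = 0$ then produces $\omega_S(e^{-i\gamma H_S})\,\omega_\beta(e^{i\gamma H_S})$, as required.

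The main technical obstacle is the boundary extension to $\mathrm{Re}\,\alpha = 0$: Vitali on its own only controls the open strip, yet the characteristic function lives precisely on the boundary line. Everything there rests on the $\lambda$-uniformity of Lemma~\ref{lemma2}(ii), which fortunately does follow from its proof since the only $\lambda$-dependent quantity appearing is $\|H_S + \lambda V\|$. A secondary, more routine, issue is the joint convergence $e^{i\beta s\widehat L_\lambda}\Omega_\lambda\to e^{i\beta s\widehat L_0}\Omega_{eq}$, which splits cleanly into norm convergence of vectors (Araki) and norm convergence of propagators (bounded perturbation estimate), both under control.
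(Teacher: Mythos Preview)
Your approach is correct and takes a genuinely different route from the paper's. Both start from the explicit formula for $\sF_\lambda(\tfrac12+is)$ furnished by Proposition~\ref{tlimit}, but they reach the boundary line $\mathrm{Re}\,\alpha=0$ (where the characteristic function lives) by different mechanisms. The paper splits the product into two factors $\mathcal G_\lambda^{(1)}(s)$, $\mathcal G_\lambda^{(2)}(s)$ and uses the analytic continuation supplied by Araki's perturbation theory (the statement in Theorem~\ref{araki} that $z\mapsto e^{-z(L_0+\lambda\pi(V))}\Omega_{eq}$ is analytic on the strip $0<\mathrm{Re}\,z<\beta/2$ and continuous on its closure) to extend each factor in the variable $s$ into the strip $0<\mathrm{Im}\,s<\tfrac12$; the limit $\lambda\to 0$ is then taken \emph{directly} at $s=\gamma/\beta+i/2$, i.e.\ at $\alpha=i\gamma/\beta$. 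You instead take $\lambda\to 0$ on the line $\mathrm{Re}\,\alpha=\tfrac12$ first (using only $\Omega_\lambda\to\Omega_{eq}$ and a Duhamel bound on the propagators), then invoke Vitali together with the $\lambda$-uniform derivative estimate of Lemma~\ref{lemma2}(ii) to push the convergence from the open strip out to the boundary, exactly replaying the $t\to\infty$ argument of Proposition~\ref{tlimit}. Your route is more self-contained, since it recycles only the tools already built for Proposition~\ref{tlimit} and does not go back into the analyticity content of Araki's theorem; the paper's route is shorter once that analyticity is in hand and makes the final evaluation at $\alpha=i\gamma/\beta$ a one-line substitution rather than a two-stage limit. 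Both lead to the same explicit finite-dimensional trace formula, which collapses to $\omega_S(e^{-i\gamma H_S})\,\omega_\beta(e^{i\gamma H_S})$ after the standard $\rho_\beta^{1/2}$ cancellations.
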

\end{leftbar}
\begin{proof}
or $ s \in \mathbb{R}$, set $\widehat{\Omega}_{\lambda} = e^{-\beta(L_0 + \lambda \pi(V))/2 } \Omega_{eq}$ and 
\[ \mathcal{G}_{\lambda}^{(1)}(s)= (\widehat{\Omega}, e^{-i\beta s \widehat{L}_{\lambda} } \widehat{\Omega}_{\lambda}), \;\;\;\;\;\;\;\;\;\; \mathcal{G}_{\lambda}^{(2)}(s)= (\widehat{\Omega}_{\lambda}, e^{i\beta s \widehat{L}_{\lambda} } \Omega_{\eta}).\]
We write $\widehat{L}_{\lambda} = L_0 + \lambda \pi(V) - J \pi(H_S)J $ and notice that since $L_0 +  \lambda \pi(V) = L_R + \pi(H_S + \lambda V) - J \pi(H_S) J$, $L_0 + \lambda \pi(V)$ commutes with $ J\pi(H_S) J$. Hence 
\[  \mathcal{G}_{\lambda}^{(1)}(s)= (\widehat{\Omega}, e^{-i\beta s J \pi(H_S) J  } e^{-i\beta s (L_0 + \lambda \pi(V))} \widehat{\Omega}_{\lambda}) . \]
By Thm. \ref{Araki}, the map 
\[\mathbb{R} \ni s \mapsto  \mathcal{G}_{\lambda}^{(1)}(s) = (\widehat{\Omega}, e^{-i\beta s J \pi(H_S) J  } e^{-i\beta (s-i/2)  (L_0 + \lambda \pi(V))} \Omega_{eq}). \]
has an analytic continuation to the strip $0< \textrm{Im}s < \frac{1}{2}$ which is bounded and continuous on the closure. 
Thus, for $\gamma \in \mathbb{R}$, 
\[ \mathcal{G}_{\lambda}^{(1)} \left( \frac{\gamma}{\beta} + \frac{1}{2} i  \right)= (\widehat{\Omega}, e^{ (\beta/2 - i \gamma) J \pi(H_S) J } e^{-i\gamma (L_0 + \lambda \pi(V))} \Omega_{eq})\] 
and hence 
\begin{align}
\label{G1}
 \lim_{\lambda \rightarrow 0} \mathcal{G}_{\lambda}^{(1)} \left( \frac{\gamma}{\beta} + \frac{1}{2} i  \right)= (\widehat{\Omega}, Je^{ (\beta/2 + i \gamma)  \pi(H_S)  } J \Omega_{eq}) = \omega_{\beta}(  \rho_{\beta}^{-1/2} \rho_S e^{ (\beta/2 - i \gamma)  \pi(H_S)  }) = Z^{-1/2} \omega_S(e^{-i\gamma H_S})\
 \end{align}
where $ Z = \tr(e^{-\beta H_S})$. 
\\
\\
Since $\Omega_{\eta} = Z^{1/2}J \pi(e^{\beta/2})J\Omega_{eq}$, one has (with the notation of Lemma \ref{lemma1}), 
\begin{align*}
e^{i\beta s\widehat{L}_\lambda}\Omega_\eta & = U_M^{\beta s}e^{i\beta sL_R}\Omega_\eta \\
& =U_M^{\beta s}\Omega_\eta \\
& =Z^{1/2} U_M^{\beta s}J\pi(e^{\frac\beta2H_A})J\Omega_{eq} \\
& =Z^{1/2}J\pi(e^{\frac\beta2H_S})JU_M^{\beta s}\Omega_{eq}\\
&=Z^{1/2}J\pi(e^{\frac\beta2H_A})J e^{i\beta s\widehat{L}_\lambda}e^{-i\beta sL_R}\Omega_{eq} \\
& =Z^{1/2}J\pi(e^{\frac\beta2H_S})Je^{i\beta s\widehat{L}_\lambda}\Omega_{eq},
\end{align*}
and hence
\begin{align*}
e^{i\beta s\widehat{L}_\lambda}\Omega_\eta 
&=Z^{1/2}J\pi(e^{\frac\beta2H_S})Je^{i\beta sJ\pi(H_S)J}e^{i\beta s(L_0+\lambda\pi(V))}\Omega_{eq}\\
&=Z^{1/2}e^{i\beta(s-i/2)J\pi(H_S)J}e^{i\beta s(L_0+\lambda\pi(V))}\Omega_{eq}.
\end{align*}
Araki's perturbation theory (Thm \ref{Araki}) implies that the function 
\[ s \mapsto \mathcal{G}_{\lambda}^{(2)}(s) = Z^{1/2} ( \widehat{\Omega}_{\lambda} , e^{i\beta(s-i/2)J\pi(H_S)J}e^{i\beta s(L_0+\lambda\pi(V))}\Omega_{eq} ) \]
has an analytic extension to the strip $0<\textrm{Im} s < \frac12$ which is bounded and continuous on the closure. For $\gamma \in \mathbb{R}$, 
\[ \mathcal{G}_{\lambda}^{(2)}\left(\frac{\gamma}{\beta} + \frac12 i \right) = Z^{1/2} ( \widehat{\Omega}_{\lambda} ,  J\pi(e^{-i\gamma H_S})Je^{i\gamma(L_0 + \lambda \pi(V))}\widehat{\Omega}_{\lambda}). \]
Recalling that $\lim_{\lambda \rightarrow 0} \widehat{\Omega}_{\lambda} = \Omega_{eq}$,
\begin{align}
\label{G2}
\lim_{\lambda \rightarrow 0} \mathcal{G}_{\lambda}^{(2)}\left(\frac{\gamma}{\beta} + \frac12 i \right) = Z^{1/2} (\Omega_{eq}, J\pi(e^{-i\gamma H_S})J \Omega_{eq}) = Z^{1/2} \omega_{\beta}(e^{i \gamma H_S}). 
\end{align}
Since for $s \in \mathbb{R}$,
\[ \sF_{\lambda}\left( \frac12 + is\right) = \frac{1}{\| \widehat{\Omega}_{\lambda} \|} \mathcal{G}_{\lambda}^{(1)} (s) \mathcal{G}_{\lambda}^{(2)} (s) \]
analytic continuation to then line $s = \frac{\gamma}{\beta} + \frac12 i$ combined with Equations (\ref{G1}) and (\ref{G2}) gives the result.  
\end{proof}
\begin{proof}[Proof (of Thm \ref{main})]
By Proposition \ref{tlimit}, the map 
\[ \mathbb{R} \ni \alpha \mapsto \sF_{\lambda}(i\alpha)\]
is continuous. Hence there exists a unique Borel probability measure $\mathbb{P}_{R, \lambda}$ on $\mathbb{R}$ satisfying 
\[ \int_{\mathbb{R}} e^{i\gamma x} d\mathbb{P}_{R,\lambda}(x) = \sF_{\lambda}\left( i\gamma/\beta \right) \]
for all $\gamma \in \mathbb{R}$. Since 
\[ \int_{\mathbb{R}} e^{i\gamma x} d\mathbb{P}_{R,\lambda, t }(x) = \sF_{\lambda, t}\left( i\gamma/\beta \right) \]
taking the limit $t \rightarrow \infty$ and Proposition \ref{tlimit} give that 
\[ \lim_{t \rightarrow \infty} \mathbb{P}_{R, \lambda, t} = \mathbb{P}_{R, \lambda}. \]
Then by Proposition \ref{llimit}
\[ \sF(i\gamma/\beta) = \lim_{\lambda \rightarrow 0} \int_{\mathbb{R}} e^{i \gamma x} d\mathbb{P}_{R,\lambda}(x)  = \int_{\mathbb{R}} e^{i \gamma x} d\mathbb{P}_{S}(x)\]
which implies that $\mathbb{P}_R = \mathbb{P}_S$. 
\end{proof}
If we make an additional technical assumption, the proof of Theorem \ref{main} simplifies. 
\begin{leftbar}
\begin{ass}
\label{analytic}
Let $V$ be analytic for the dynamics $\tau_R$. i.e. the map
\[ \mathbb{R} \ni t \mapsto \tau_R^t(V) \in \sO\]
extends to an entire analytic function. 
\end{ass}
\end{leftbar}
Additionally, in this case we can prove convergence of all the moments. 
  \mdfsetup{linewidth=3pt, }
  \begin{mdframed}
  \begin{thm}
  \label{main2}
  Suppose Assumptions (\ref{mixing}), (\ref{regularity}) and (\ref{analytic}) hold and let $0 < |\lambda| < \lambda_0$. The weak limits 
  \begin{align*}
  \mathbb{P}_{R, \lambda} = \lim_{t \rightarrow \infty} \mathbb{P}_{R, \lambda, t}, \;\;\;\;\;\;\;\;\;\;\;\;\;\;\;\;\;\;\;\; \mathbb{P}_{R} = \lim_{\lambda \rightarrow 0} \mathbb{P}_{R, \lambda}
  \end{align*}
  exist and 
  \[\mathbb{P}_{R} = \mathbb{P}_{S}. \]
  Moreover, all the moments of $\mathbb{P}_{R, \lambda,t} / \mathbb{P}_{R, \lambda}$ converge to the corresponding moments of $\mathbb{P}_{R, \lambda} / \mathbb{P}_{R}$. 
  \end{thm}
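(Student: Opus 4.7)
The weak convergence part of Theorem \ref{main2} is immediate, since Assumption \ref{analytic} is strictly stronger than the hypotheses of Theorem \ref{main}, so the existence of the weak limits and the identity $\mathbb{P}_R = \mathbb{P}_S$ are already established. What remains is to upgrade weak convergence to convergence of all moments. My plan is to strengthen Lemma \ref{lemma2} by showing that, under Assumption \ref{analytic}, the map $\alpha \mapsto \sF_{\lambda, t}(\alpha)$ extends analytically to a larger strip $\mathfrak{S}([-\delta, 1+\delta])$ for some $\delta > 0$, with a bound uniform in $t \in \mathbb{R}$ and in $\lambda$ on a neighborhood of $0$.

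First I would exploit analyticity of $V$ for $\tau_R$ to extend the KMS-type identity used in the proof of Lemma \ref{lemma1}. Since $\Delta_\eta^{is} \pi(A) \Delta_\eta^{-is} = \pi(\tau_R^{-\beta s}(A))$ for reservoir observables, Assumption \ref{analytic} means this identity extends to complex $s$ when applied to $V$ (or to the perturbative ingredients built from $\tau_0^u(V)$ appearing in $\Gamma_\lambda(t)$). Using the cocycle identity $\Delta_{\eta\circ\tau_\lambda^{-t}|\eta} = \Gamma_\lambda(t)\Delta_\eta\Gamma_\lambda^*(t)$, this lets me rewrite $\sF_{\lambda,t}(\alpha)$ as an inner product involving a perturbative series in $\lambda$ whose terms are iterated integrals of analytically-continued reservoir translates of $V$. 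The resulting series will converge absolutely on a strip $\mathfrak{S}([-\delta, 1 + \delta])$, uniformly in $t$, because the integrand norms are controlled by $\|\tau_R^z(V)\|$ on a compact complex neighborhood of $\mathbb{R}$.

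Once the uniform bound $\sup_{t, |\lambda| < \lambda_0} \sup_{\alpha \in \mathfrak{S}([-\delta, 1+\delta])} |\sF_{\lambda,t}(\alpha)| < \infty$ is in hand, I will deduce uniform moment bounds through Cauchy's integral formula applied to a small disc around $\alpha = 0$: since the $n$-th moment of $\mathbb{P}_{R,\lambda,t}$ equals $\beta^{-n} \partial_\alpha^n \sF_{\lambda,t}(\alpha)|_{\alpha = 0}$, analyticity on such a disc with uniform sup-norm bound $M$ gives $\sup_{t,\lambda} |\int x^n d\mathbb{P}_{R,\lambda,t}| \leq \beta^{-n} n! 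M \delta^{-n}$. Uniform boundedness of $(n+1)$-th moments implies uniform integrability of the $n$-th moment function, which combined with the weak convergence from Theorem \ref{main} yields convergence of all moments. Vitali's theorem applied in this enlarged disc, together with the pointwise limit on $\mathbb{R}$ established in Proposition \ref{tlimit}, also directly gives convergence of $\partial_\alpha^n \sF_{\lambda,t}(0)$, providing an alternative route.

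The main obstacle will be carrying out the analytic continuation rigorously: I must verify that the unitary cocycle $\Gamma_\lambda(t)$ has a meaningful $\alpha$-rotated version $\Delta_\eta^\alpha \Gamma_\lambda(t) \Delta_\eta^{-\alpha}$ as a bounded operator with controlled norm for $\alpha$ in the strip. This amounts to showing that the Dyson expansion of $\Gamma_\lambda(t)$, with each $\pi(\tau_0^u(V))$ replaced by $\pi(\tau_S^u \otimes \tau_R^{u-i\beta\alpha}(V))$, still converges and defines an analytic $\sM$-valued function. Uniformity in $t$ requires bounding the iterated time-integrals using the sup-norm of $\tau_R^z(V)$ on horizontal strips, which is finite by Assumption \ref{analytic} and the continuity of the analytic extension on compact sets; uniformity in $\lambda$ near $0$ is standard once the $t$-uniform bound is established, since the series is a power series in $\lambda$ with analytic coefficients.
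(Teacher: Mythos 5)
Your overall skeleton (extend $\sF_{\lambda,t}$ analytically past the strip $\mathfrak{S}([0,1])$ with a bound uniform in $t$, then get moments by Cauchy estimates or Vitali together with the weak convergence already supplied by Theorem \ref{main}) is sound, and the last step of your argument (uniform moment bounds plus weak convergence, or Vitali on a disc around $\alpha=0$) is fine. The gap is in the mechanism you propose for the uniform-in-$t$ bound. You plan to analytically rotate the cocycle itself, i.e.\ to control $\Delta_\eta^{\alpha}\Gamma_\lambda(t)\Delta_\eta^{-\alpha}$ by replacing each factor $\pi(\tau_0^u(V))$ in the Dyson expansion of $\Gamma_\lambda(t)$ by its reservoir-analytic continuation and bounding the iterated integrals by $\sup_z\|\tau_R^z(V)\|$ on a horizontal strip. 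But those iterated integrals run over the simplex $0\le t_n\le\cdots\le t_1\le t$, so the resulting estimate is of order $\exp\bigl(|\lambda|\,t\,\sup_z\|\tau_R^z(V)\|\bigr)$: once you leave the real axis the rotated cocycle is no longer unitary, and this bound grows exponentially in $t$. Nothing in your outline removes that growth, and without a $t$-uniform bound on an $\alpha$-region containing $0$ neither the Cauchy estimates nor Vitali's theorem applies uniformly in $t$, so the moment convergence does not follow.

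The paper avoids exactly this trap by a structural rearrangement rather than by rotating $\Gamma_\lambda(t)$: Lemma \ref{lemma1} rewrites $\sF_{\lambda,t}\bigl(\tfrac12+is\bigr)=\bigl(e^{i\beta s\widehat{L}_\lambda}\widehat{\Omega},\,e^{itL_\lambda}e^{i\beta s\widehat{L}_\lambda}\Omega_\eta\bigr)$, so that \emph{all} of the $t$-dependence sits in the unitary $e^{itL_\lambda}$ (norm $1$ for every $t$) and the variable to be continued, $s$, sits only in $U_M^s(\lambda)=e^{i\beta s\widehat{L}_\lambda}e^{-i\beta sL_R}$ applied to the fixed vectors $\widehat\Omega$ and $\Omega_\eta$. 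The Dyson expansion of $U_M^s(\lambda)$ (Lemma \ref{dyson}) has integration range independent of the physical time $t$, so Assumption \ref{analytic} makes $(s,\lambda)\mapsto U_M^s(\lambda)$ entire with bounds on compacts that are automatically uniform in $t$; Vitali then yields convergence of all derivatives, hence of all moments, in both the $t\to\infty$ and $\lambda\to0$ limits. To repair your proposal you should perform the analytic continuation on the representation of Lemma \ref{lemma1} (i.e.\ on $U_M^s(\lambda)$), not on the cocycle $\Gamma_\lambda(t)$, after which your Cauchy-estimate/uniform-integrability conclusion goes through essentially as you describe.
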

  \end{mdframed}
  Recall from the proof of Lemma \ref{lemma1} that if $A \in \sM$, then $U_A^t \in \sM$. Therefore, 
  \[ U_M^t (\lambda) = e^{i \beta t (L_R + M) } e^{- i \beta t L_R}= e^{i \beta t \widehat{L}_{\lambda} } e^{- i \beta t L_R} \in \sM\]
   since $M = \pi(H_S) + \lambda \pi(V) \in \sM$. 
  \begin{leftbar}
  \begin{lemma}
  \label{dyson}
  Assume (\ref{analytic}). The map 
  \[ \mathbb{R}^2 \ni (t, \lambda) \mapsto U_M^t(\lambda) \in \sM\]
  extends to an entire analytic function on $\mathbb{C}^2$. 
  \end{lemma}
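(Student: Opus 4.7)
The plan is to represent $U_M^t(\lambda)$ by its Dyson series, show that each term in the series extends to a jointly entire $\sM$-valued function on $\mathbb{C}^2$, and finally establish locally uniform convergence of the series on $\mathbb{C}^2$. Analyticity of the limit will then follow by the vector-valued Morera/Weierstrass theorem.

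First, iterating the Cauchy problem $\partial_t U_M^t(\lambda) = i U_M^t(\lambda) M_t$, $U_M^0(\lambda) = \Id$, with $M_s := e^{isL_R}M e^{-isL_R}$, I would write
\[
U_M^t(\lambda) = \Id + \sum_{n \geq 1} i^n \int_0^t \!\! \int_0^{s_1}\!\! \cdots \int_0^{s_{n-1}} M_{s_n}(\lambda)\cdots M_{s_1}(\lambda)\,ds_n\cdots ds_1.
\]
Since $L_R$ commutes with $\pi(H_S\otimes \Id)$, one has $M_s(\lambda)=\pi(H_S)+\lambda\,\pi(\tau_R^s(V))$. Assumption \ref{regularity} (plus \ref{analytic}) means precisely that $s \mapsto \pi(\tau_R^s(V))$ extends to an entire $\sM$-valued analytic function, which I will denote $V_z$. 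Consequently $(z,\lambda)\mapsto M_z(\lambda)=\pi(H_S)+\lambda V_z$ is jointly entire on $\mathbb{C}^2$ with values in $\sM$.

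Second, after the change of variables $s_k = z u_k$, the $n$-th term becomes
\[
T_n(z,\lambda)= i^n z^n \int_{\Delta_n} M_{zu_n}(\lambda)\cdots M_{zu_1}(\lambda)\, du_n\cdots du_1,
\]
where $\Delta_n \subset [0,1]^n$ is the standard simplex. The integrand is jointly entire in $(z,\lambda)$ and the domain of integration is compact and independent of $(z,\lambda)$; by the standard vector-valued version of Morera's theorem combined with Fubini (applied to test functionals $\omega\in\sM^*$), each $T_n$ is jointly entire in $(z,\lambda)$. Moreover $T_n$ is clearly polynomial of degree at most $n$ in $\lambda$.

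Third, I would establish locally uniform convergence. Fix any compact set $K = K_1 \times K_2 \subset \mathbb{C}^2$, and let $R = \sup_{z\in K_1}|z|$ and $\rho = \sup_{\lambda\in K_2}|\lambda|$. Since $z\mapsto V_z$ is entire, $C_K := \sup_{|w|\leq R}\|V_w\| < \infty$, so for every $(z,\lambda)\in K$,
\[
\|M_{zu}(\lambda)\| \leq \|H_S\| + \rho\,C_K =: A_K, \qquad u\in[0,1].
\]
Hence $\|T_n(z,\lambda)\|\leq \tfrac{R^n A_K^n}{n!}$ uniformly on $K$, and the Dyson series converges absolutely and uniformly on $K$. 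The locally uniform limit of entire $\sM$-valued functions is entire, which gives the claim.

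The main technical point to watch is the interchange of integration and complex differentiation used in step two; the remainder is routine. Assumption \ref{analytic} is essential exactly here, since without it one cannot even make sense of $V_z$ for non-real $z$, and the bound $C_K$ collapses. The unboundedness of $L_R$ creates no obstruction because the Dyson representation bypasses any direct use of $e^{itL_R}$ on complex $t$.
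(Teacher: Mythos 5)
Your proposal is correct and follows essentially the same route as the paper: the paper also writes $U_M^s(\lambda)$ via its Dyson expansion over the rescaled simplex and concludes directly from Assumption (\ref{analytic}), while you simply spell out the term-wise joint analyticity and the factorial bound giving locally uniform convergence. The only cosmetic discrepancy is that the paper's $U_M^t(\lambda)=e^{i\beta t(L_R+M)}e^{-i\beta tL_R}$ carries the factor $\beta$ in the exponent (so the generator in your Cauchy problem should be $i\beta U_M^t(\lambda)M_{\beta t}$), which changes nothing in the analyticity argument; also note that Assumption (\ref{analytic}) alone suffices, without invoking (\ref{regularity}).
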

  \end{leftbar}
  \begin{proof}
  By the Dyson expansion (\cite{JOPP} Chapter 2), one obtains
  \[
\pi^{-1}(U_M^s(\lambda))=\sum_{n\geq 0}(i \beta s)^n\int_{0\leq t_n\leq\cdots\leq t_1\leq 1}\tau_R^{\beta s t_n}(H_{S} +\lambda V)\cdots 
\tau_R^{\beta s t_1}(H_{S} +\lambda V)d t_1\cdots d t_n,
\]
and the result follows immediately from the analyticity assumption. 
  \end{proof}
  Using only Lemma \ref{lemma1} and Lemma \ref{dyson} we can now proof Thm \ref{main2}.
  \begin{proof}[Proof (of Theorem \ref{main2}).]
  First note that 
  \[ 
e^{i\beta s {\widehat L}_\lambda}{\widehat \Omega}=U_M^s(\lambda) {\widehat \Omega}, \;\;\;\;\;\;\;\;\;\;
e^{i\beta s {\widehat L}_\lambda}\Omega_\eta=U_M^s(\lambda) \Omega_\eta.
\]
Lemma \ref{lemma1}, Lemma \ref{dyson} and analytic continuation from $s \in \mathbb{R}$ to $s \in \mathbb{C}$ give that for $\lambda \in \mathbb{R}$, $s \in \mathbb{C}$, 
\[ ( \Omega , \Delta_{\eta\circ \tau_{\lambda}^{-t}|\eta}^{\frac{1}{2}+\i s}\Omega ) = (
U_M^{\bar s}(\lambda) {\widehat \Omega}, e^{-i t L_\lambda}U_M^s(\lambda) \Omega_\eta ).\]
In particular, for $K \subset \mathbb{R} \times \mathbb{C}$ compact 
\begin{align}
\label{vbound}
\sup_{t\in \mathbb{R}, (\lambda, s)\in K}| ( \Omega|\Delta_{\eta\circ \tau_{\lambda}^{-t}|\eta}^{\frac{1}{2}+\i s}\Omega )|<\infty.
\end{align}
Now using Assumption (\ref{mixing}) and Thm. \ref{Koopman}, we compute 
\[ \lim_{t\rightarrow \infty} ( \Omega|\Delta_{\eta\circ \tau_{\lambda}^{-t}|\eta}^{\frac{1}{2}+\i s}\Omega ) 
=( U_M^{\bar s} (\lambda) \widehat \Omega, \Omega_\lambda) ( \Omega_\lambda, U_M^s( \lambda)\Omega_\eta).\]
Setting $s = \frac{\gamma}{\beta} + \frac{i}{2}$ yields that as $t \rightarrow \infty$, $\mathbb{P}_{R, \lambda, t}$ converges weakly to the unique Borel probability measure $\mathbb{P}_{R, \lambda}$ with characteristic function 
\[ \int_{\mathbb{R}} e^{i \gamma x}d {\mathbb P}_{R, \lambda}(x)= ( U_M^{\frac{\gamma}{\beta} -\frac{\i}{2}} (\lambda)\widehat \Omega, \Omega_\lambda) ( \Omega_\lambda, U_M^{\frac{\gamma}{\beta} +\frac{\i}{2}}( \lambda)\Omega_\eta ).\]
Furthermore, Vitali's convergence theorem and  Equation \ref{vbound} imply that the moments of $\mathbb{P}_{R, \lambda, t}$ converge to the corresponding moments of $\mathbb{P}_{R, \lambda}$ as $ t \rightarrow \infty$. 
\\
\\
For $s \in \mathbb{R}$ we can now compute 
\begin{align*}
\lim_{\lambda \rightarrow 0 } \Omega_{\lambda} & = \rho_{\beta}^{1/2} \otimes \Omega_R ,\\
\lim_{\lambda \rightarrow 0 } U_M^s(\lambda) \widehat{\Omega} & = \rho_{\beta}^{-is} \rho_S \otimes \Omega_R, \\
\lim_{\lambda \rightarrow 0 } U_M^s(\lambda)  \Omega_{\eta} & = \rho_{\beta}^{-is} \otimes \Omega_R ,
\end{align*}
and hence
\[ \lim_{\lambda \rightarrow 0 }  (U_M^{\bar s}(\lambda) {\widehat \Omega}, e^{-i t L_\lambda}U_M^s(\lambda) \Omega_\eta ) 
= \tr(\rho_S \rho_{\beta} ^{\frac{1}{2} + is})\tr(\rho_{\beta} ^{\frac{1}{2} - is}).\]
Again, by Vitali's convergence theorem and Equation \ref{vbound} we can extend this identity to all $s \in \mathbb{C}$. Letting $s = \frac{\gamma}{\beta} + \frac{i}{2}$ yields that $\mathbb{P}_{R, \lambda}$ converges weakly to the unique Borel probability measure $\mathbb{P}_R$ with characteristic function 
\[ \int_{\mathbb{R}}e^{i \gamma x} d\mathbb{P}_R(x) = \tr(\rho_S \rho_{\beta}^{i \gamma / \beta}) \tr(\rho_{\beta} \rho_{\beta}^{-i\gamma/\beta}) = \omega_S(e^{-i\gamma H_S}) \omega_{\beta}(e^{i \gamma H_S}) =  \int_{\mathbb{R}}e^{i \gamma x} d\mathbb{P}_S(x).\]
Hence $\mathbb{P}_R = \mathbb{P}_S$ and the convergence of the moments follows from Vitali's convergence theorem. 
  \end{proof}
 
 \section{Conclusion}
Thermodynamics is a theory which describes the bulk behaviour of bodies of matter which are composed of many smaller constituent particles. It is based on the existence of equilibrium states of these systems: states which are characterized by a few macroscopic parameters, and which are stable in time. Four laws give constraints on the macroscopic parameters which allow one to calculate the new equilibrium states when systems are combined or external forces are applied. Classical statistical mechanics uses probabilistic methods applied to large systems of particles whose microscopic dynamics are given by the laws of classical mechanics to derive the laws of thermodynamics. Quantum statistical mechanics aims to do the same thing, but when the microscopic dynamics are given by the laws of quantum mechanics. In the quantum case, there is the usual statistical uncertainty about the exact microscopic state of the system, but in addition each measurement of even a given microscopic state is probabilistic. Furthermore, the non-commutative structure means that the probability distribution of measurement outcomes depends on previous measurements. 
\\
\\
We focus on the first law of thermodynamics. It states that the internal energy of an isolated system (one which does no work and to which no heat is supplied) is conserved. We consider a specific set-up in which a small system is coupled to a thermal reservoir. Furthermore we assume that the system equilibrates after sufficiently large time to the reservoir temperature. The small system is modelled by the usual Hamiltonian formulation of quantum mechanics, but the reservoir is idealized as infinitely extended and modelled by a $\Cstar$-dynamical system. In this setting, one way to formulate the first law is in terms of the change in expected value of energy when measurements are made at time $0$ and $t$ of the system and reservoir: $\Delta Q_{S, \lambda, t}$ and $\Delta Q_{R, \lambda, t}$. The first law then states that 
\[ \lim_{\lambda \rightarrow 0} \lim_{t \rightarrow \infty} \Delta Q_{S, \lambda, t} =  \lim_{\lambda \rightarrow 0}  \lim_{t \rightarrow \infty} \Delta Q_{R, \lambda, t}.\]
This is well-known to hold. We can also consider probability measures on $\mathbb{R}$ which completely describe the energy fluctuations, i.e. the full counting statistics. The system full counting statistics is defined by applying the basic axioms of quantum mechanics to a two-time measurement protocol. The reservoir full counting statistics is defined by rewriting the finite dimensional FCS as the spectral measure of a relative modular operator, which is an object that still makes sense in the thermodynamic limit. This is one of the novel ideas in our approach. We then prove weak convergence as $t \rightarrow \infty$ and $\lambda \rightarrow 0$ of the measures themselves, which is a much stronger statement than simply proving convergence of the averages. This is done under very general assumptions: a dynamical assumption which guarantees the return to equilibrium behaviour, and a regularity assumption which is needed in order for $\Delta Q_{R, \lambda,t}$ to even be defined. If we add an extra technical analyticity assumption, the proof simplifies and we also obtain convergence of all the moments of the FCS measures. 
\\
\\
Despite this generality, there are still some physical systems that cannot be described by the class of models under study. Specifically, there are some system-reservoir couplings which are not given by a bounded perturbation of the form needed to apply Araki's perturbation theory of KMS states. To accommodate these, one needs to use a $W^*$-dynamical system and an affiliated unbounded perturbation. The modified perturbation theory of Derezi\'nski, Jak\v si\'c and Pillet \cite{DJP} is expected to allow the same method of proof to be used in this case. 

\end{document}